%% file: main.tex
\documentclass[a4paper,UKenglish,cleveref, autoref, thm-restate, numberwithinsect]{lipics-v2021}

\pdfoutput=1 
\hideLIPIcs  

\title{Tractable Hyperproperties for MDPs}

\author{Lina Gerlach}{RWTH Aachen University, Aachen, Germany}{gerlach@cs.rwth-aachen.de}{https://orcid.org/0009-0002-5506-6181}{}
\author{Tobias Winkler}{RWTH Aachen University, Aachen, Germany}{}{https://orcid.org/0000-0003-1084-6408}{}
\author{Erika {\'A}brah{\'a}m}{RWTH Aachen University, Aachen, Germany}{}{https://orcid.org/0000-0002-5647-6134}{}
\author{Borzoo Bonakdarpour}{Michigan State University, East Lansing, MI, USA}{}{https://orcid.org/0000-0003-1800-5419}{}
\author{Sebastian Junges}{Radboud University, Nijmegen, the Netherlands}{}{https://orcid.org/0000-0003-0978-8466}{}

\authorrunning{L.\ Gerlach, T.\ Winkler, E.\ {\'A}brah{\'a}m, B.\ Bonakdarpour, S.\ Junges} 

\relatedversion{} 

\acknowledgements{
We thank Oyendrila Dobe for fruitful discussions on earlier versions of this work.
We are grateful to Tim Quatmann for his continuous and reliable \storm support.
Lina Gerlach and Tobias Winkler are supported by the DFG RTG 2236/2 \textit{UnRAVeL}. 
Sebastian Junges is supported by the NWO VENI Grant ProMiSe (222.147).
This work is partially sponsored by the United States National Science Foundation (NSF) Award SaTC 2245114.
}

\nolinenumbers 

\input{packages}

\input{cmds}

\begin{document}

\maketitle

\input{abstract}

\input{intro}
\input{motivating-ex/motivating-ex}

\input{prelim}
\input{problem-statement}

\input{reach}
\input{reach_algo}
\input{reach_complexity}

\input{buechi}
\input{buechi_algo}
\input{buechi_complexity}

\input{conjunction}
\input{conjunction_algo}
\input{moa}
\input{conjunction_complexity}

\input{eval/eval}

\input{related-work}

\input{conclusion}

\bibliography{bibliography}

\appendix
\input{appendix/reach_full-algo}
\input{appendix/reach_algo_proofs}
\input{appendix/reach_compl_proofs}
\input{appendix/buechi_algo_proofs}

\input{appendix/buechi_compl_proofs}
\input{appendix/conjunction_algo_proofs}
\input{appendix/moa_proofs}
\input{appendix/conjunction_compl_proofs}

{\color{cavcolor}
\input{appendix/case-studies}
}

\end{document}

%% file: packages.tex
\usepackage{adjustbox}

\usepackage{graphicx} 
\usepackage[dvipsnames]{xcolor}

\usepackage{wrapfig}

\usepackage{float}

\usepackage{booktabs}
\usepackage{tabulary}
\usepackage{multirow}

\usepackage{multicol}
\usepackage{makecell}

\usepackage{enumitem}

\usepackage{mdframed}

\usepackage[linesnumbered,ruled,vlined]{algorithm2e}
\SetKwInput{Input}{Input}       
\SetKwInput{Output}{Output} 

\SetCommentSty{commentstyle}

\usepackage{tikz}
\usetikzlibrary{arrows,calc,shapes,shadows,decorations.pathmorphing,decorations.pathreplacing,automata,shapes.multipart,positioning}
\usetikzlibrary{fit,backgrounds} 

\tikzstyle{dist}=[circle,fill,inner sep=1.5pt] 
\tikzstyle{state}=[circle, draw, minimum size=2em] 
\tikzstyle{staterectangle}=[draw, rectangle, rounded corners]

\usepackage{mathtools}

\usepackage{amsthm}
\usepackage{thmtools}
\numberwithin{equation}{section}

\usepackage{amsmath,latexsym,amssymb,amsfonts,amscd,stmaryrd,textcomp} 
\usepackage{pifont}
\usepackage[T1]{fontenc}

\usepackage{epsdice} 
\usepackage{nicefrac}

\usepackage{ltl}

\usepackage{csquotes}

\usepackage{hyperref}
\hypersetup{%
	colorlinks=true,           
	allcolors=blue!70!black,   
	pdfstartview=Fit,          
	breaklinks=true}

\renewcommand{\cref}{\Cref} 
\usepackage{textcase} 

\usepackage{xspace} 
\usepackage{xargs} 
\usepackage{microtype}

%% file: cmds.tex
\newtheorem{assumption}{Assumption}

\newmdtheoremenv[backgroundcolor=white,linewidth=0.75pt,innertopmargin = 0mm]{framedproblem}{Problem}

\newcommand{\orcid}[1]{\href{https://orcid.org/#1}{\includegraphics[width=3mm]{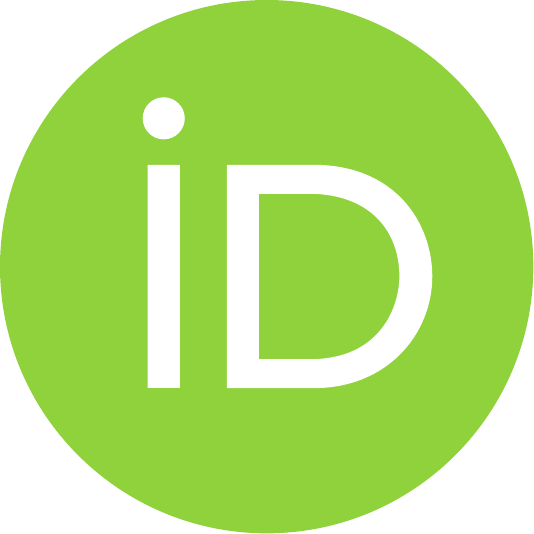}}}

\makeatletter
\newcommand\newtag[2]{#1\def\@currentlabel{#1}\label{#2}}
\makeatother

\newcommand\R{\mathbb R}

\newcommand\N{\mathbb N}

\newcommand\Q{\mathbb Q}
\newcommand\Qnn{\mathbb Q _{\geq 0}}

\newcommand{\Iverson}[1]{[#1]}
\newcommand{\Iff}{\mathbin{\textit{ iff }}}
\renewcommand{\iff}{\Leftrightarrow}
\renewcommand\implies{\Rightarrow}
\newcommand{\Implies}{\textit{ implies }}

\renewcommand{\phi}{\varphi}

\DeclareMathOperator*{\argmax}{arg\,max}

\DeclareMathOperator*{\opt}{opt}

\newcommand{\Expected}{\mathbb{E}}
\newcommand{\Prob}{\mathbb{P}}
\newcommand{\Distr}{\textit{Distr}}
\renewcommand{\Pr}{{\normalfont\text{Pr}}} 

\newcommand{\Limit}{\textit{Limit}}

\newcommand{\comp}{\bowtie}
\newcommand{\diseq}{\trianglelefteq}

\newcommand\mdp{\mathcal{M}}
\newcommand{\states}{S}
\newcommand\Act{\textit{Act}}
\newcommand\Trans{\mathbf{P}}

\newcommand{\quoti}[1][\mdp]{\widehat{#1}_{T_i}}
\newcommand{\quotT}[1][\mdp]{\widehat{#1}}

\newcommand{\processed}[1][\mdp]{\overline{#1}}
\newcommandx{\flip}[2][1=\mdp, 2=\state_{0}]{{#1}_{\circlearrowright#2}} 
\newcommand{\sflip}{\state_{\circlearrowright}}

\newcommand\indexc [1] [\mdp] {#1_c}

\newcommand\indexT [1] [\mdp] {#1_{\mathcal{T}}}

\newcommand\mdptup[1] [\refl]{(#1[S], \allowbreak #1[\Act], \allowbreak #1[\Trans])}

\newcommand\Paths[1][\mdp]{\textit{Paths}^{#1}}
\newcommandx\Pathsfrom[2][1=\mdp,2=\state]{\textit{Paths}^{#1}(#2)}
\newcommand\finPaths[1][\mdp]{\textit{Paths}_{\textit{fin}}^{#1}}
\newcommandx\finPathsfrom[2][1=\mdp, 2=\state]{\textit{Paths}_{\textit{fin}}^{#1}(#2)}
\newcommand{\last}{\textit{last}}
\newcommand{\lastact}{\textit{lastact}}

\newcommand{\state}{s}
\newcommand{\action}{\alpha}
\newcommand{\actiontwo}{\beta}

\newcommand{\rew}{\textit{R}}

\newcommand{\EC}{\textit{EC}}
\newcommand{\MEC}{\textit{MEC}}

\newcommand{\ec}{C}

\newcommand{\mec}{C}

\newcommand\dtmc{\mathcal{D}}
\newcommand\dtmctup[1] [\refl]{(#1[S], \allowbreak #1[\Trans])}

\newcommand\sched{\sigma}
\newcommand{\modes}{Q}
\newcommand{\mode}{q}
\newcommand{\modef}{\textit{mode}}
\newcommand{\start}{\textit{init}}
\newcommand{\act}{\textit{act}}

\newcommand{\Scheds}[1][\mdp]{\Sigma^{#1}}
\newcommand{\SchedsMD}[1][\mdp]{\Sigma^{#1}_{\textit{MD}}}
\newcommand{\SchedsMR}[1][\mdp]{\Sigma^{#1}_{\textit{MR}}}

\newcommand{\AP}{\textit{AP}}
\newcommand{\ap}{\textit{a}}

\newcommand{\IL}{\textit{IL}}
\newcommand{\VarSched}{V_{\textit{Sched}}}

\newcommand{\RelReach}{\textsf{\small RelReach}\xspace} 
\newcommand{\RelReachMD}{\textsf{\small RelReach$_{\textsf{MD}}$}\xspace} 
\newcommand{\RelBuechi}{\textsf{\small RelBuechi}\xspace} 
\newcommand{\RelBuechiMD}{\textsf{\small RelBuechi$_{\textsf{MD}}$}\xspace} 
\newcommand{\ConjRelReach}{\textsf{\small MORelReach}\xspace}

\newcommand{\ConjRelReachMD}{\textsf{\small MORelReach$_{\textsf{MD}}$}\xspace} 
\newcommand{\MORelReach}{\ConjRelReach} 
\newcommand{\MORelReachMD}{\ConjRelReachMD} 

\newcommand{\RelReachNotEq}{\textsf{\small RelReach}$_{\geq,>,\not\approx}$\xspace}
\newcommand{\RelBuechiNotEq}{\textsf{\small RelBuechi}$_{\geq,>,\not\approx}$\xspace}

\newcommand{\HyperPCTL}{\textsf{\small HyperPCTL}\xspace}

\newcommand{\PCTL}{\textsf{\small PCTL}\xspace}
\newcommand{\PHL}{\textsf{\small PHL}\xspace}

\newcommand\varstate{\hat{s}}
\newcommand\varsched{\hat{\sched}}

\newcommand\Globally{\LTLglobally}
\newcommand\Finally{\LTLfinally}

\newcommand{\PTIME}{\textsf{\small PTIME}\xspace}
\newcommand{\NP}{\textsf{\small NP}\xspace}
\newcommand{\coNP}{\textsf{\small coNP}\xspace}
\newcommand{\PSPACE}{\textsf{\small PSPACE}\xspace}
\newcommand{\EXPTIME}{\textsf{\small EXPTIME}\xspace}

\newcommand{\numconj}{l}
\newcommand{\numsum}{m}
\newcommand{\numsched}{n}

\newcommand{\init}{\textit{init}}

\newcommand{\statei}{\state_0}
\newcommand{\statea}{\state_1}
\newcommand{\stateb}{\state_2}

\newcommand{\attractor}{A}
\newcommand{\lb}[1]{\underline{#1}}
\newcommand{\ub}[1]{\overline{#1}}

\newcommand{\comb}{\mathscr{C}}
\newcommand{\relInd}{\mathit{ind}}

\newcommand{\conjpart}{\textit{ConjPart}} 
\newcommand{\indexconjpartelt}[1][\mdp]{#1_{X}}

\newcommand{\conjpartelt}{X}

\newcommand{\combined}[1][\mdp]{#1_{\comb}}
\newcommand{\processedcombined}[1][\mdp]{\processed[{\combined[#1]}]}

\newcommand{\class}[1]{\llbracket #1 \rrbracket}

\newcommand{\collapsedStates}{\states_{\textit{collapsed}}}
\newcommand{\mapPaths}{f}
\newcommand{\mapStates}{f}

\newcommand{\sdead}{\states_\bot}
\newcommand{\nnstates}{\states^{\textit{nn}}} 
\newcommand{\vis}{\mathit{vis}}

\newcommand{\InfS}[1][{\processedsched}]{\mathit{InfS}^{#1}}
\newcommand{\InfA}[1][{\processedsched}]{\mathit{InfA}^{#1}}
\newcommand{\FinS}[1][{\processedsched}]{\mathit{FinS}^{#1}}
\newcommand{\FinA}[1][{\processedsched}]{\mathit{FinA}^{#1}}
\newcommand{\speciald}{d^*}
\newcommand{\specialact}{\action^*}
\newcommand{\specialstate}{\state^*}

\newcommand{\TA}{\textbf{TA}\xspace}
\newcommand{\TAone}{\textbf{TA(1)}\xspace}
\newcommand{\TAtwo}{\textbf{TA(2)}\xspace}

\newcommand{\TS}{\textbf{TS}\xspace}
\newcommand{\PW}{\textbf{PW}\xspace}
\newcommand{\PWone}{\textbf{PW(1)}\xspace}
\newcommand{\PWtwo}{\textbf{PW(2)}\xspace}

\newcommand{\SD}{\textbf{SD}\xspace}
\newcommand{\VN}{\textbf{VN}\xspace}
\newcommand{\RT}{\textbf{RT}\xspace}

\newcommand{\IsraeliJalfon}{\textsc{IJ}\xspace}

\newcommand{\Init}{\textit{Init}}

\definecolor{mGreen}{rgb}{0,0.6,0}
\definecolor{mGray}{rgb}{0.5,0.5,0.5}
\definecolor{mPurple}{rgb}{0.58,0,0.82}
\definecolor{backgroundColour}{rgb}{0.95,0.95,0.92}
\lstdefinestyle{CStyle}{
	backgroundcolor=\color{backgroundColour},   
	commentstyle=\color{mGreen},
	keywordstyle=\color{magenta},
	numberstyle=\tiny\color{mGray},
	stringstyle=\color{mPurple},
	basicstyle=\footnotesize,
	breakatwhitespace=false,         
	breaklines=true,                 
	captionpos=b,                    
	keepspaces=true,                 
	numbers=left,                    
	numbersep=2pt,                  
	showspaces=false,                
	showstringspaces=false,
	showtabs=false,                  
	tabsize=1,
	language=C
}

\newcommand{\TO}{\textbf{TO}}
\newcommand{\OOM}{\textbf{OOM}}

\newcommand{\HyperProb}{\textsc{HyperProb}\xspace} 
\newcommand{\HyperPaynt}{\textsc{HyperPAYNT}\xspace} 
\newcommand{\storm}{\textsc{Storm}\xspace}
\newcommand{\uppaalsmc}{\textsc{\small UPPAAL-SMC}\xspace}
\newcommand{\PRISM}{\textsc{PRISM}\xspace}

\definecolor{DarkGreen}{rgb}{0.0, 0.2, 0.13}
\definecolor{Blue}{rgb}{0.0, 0.0, 1.0}
\definecolor{LightSeaGreen}{rgb}{0.13, 0.7, 0.67}

\colorlet{biasupper}{blue}
\definecolor{biaslower}{RGB}{ 87 171  39}

\colorlet{cavcolor}{black}

%% file: abstract.tex
\begin{abstract}

\emph{Probabilistic hyperproperties} describe probabilistic relations between multiple sets of executions in a stochastic system.
Prominent examples include information-theoretic characterizations of security and privacy policies. 
However, model checking for existing probabilistic hyperlogics, such as \HyperPCTL and \PHL, is undecidable in Markov decision processes (MDPs).
In this paper, we study an underexplored problem: the verification of fragments of probabilistic hyperproperties that relate the probabilities of different events to each other, possibly across independent executions of an MDP. Representative verification questions include: \emph{Can two different target states be reached from the same initial state with the same probability?} (different events), \emph{Can a given target state be reached from two different initial states with the same probability?} (same event, independent executions), and natural combinations of these forms.
Besides reachability, our \emph{relational} probabilistic properties cover safety, B\"uchi, and coB\"uchi objectives. They can also be combined conjunctively, thereby generalizing standard \emph{multi-objective} MDP properties.
We provide efficient algorithms for relevant classes of relational properties, while proving computational hardness and completeness results for others.
An implementation of our approach outperforms solvers for more general probabilistic hyperlogics by orders of magnitude on the subset of their benchmarks that lies within our fragment.

\end{abstract}

%% file: intro.tex
\section{Introduction}
\label{sec:intro}

\emph{Markov decision processes} (MDPs) are a standard modeling formalism for systems that combine probabilistic branching with nondeterministic choices.
At each state of an MDP, a nondeterministic choice is made among available actions, each of which induces a probability distribution over successor states.
A \emph{scheduler} resolves this nondeterminism; schedulers may randomize over actions and may use memory, i.e., depend on the execution history.

Classic verification questions for MDPs, such as
\begin{quote}
\emph{``Can some scheduler reach a target state with at least a given probability?''}
\end{quote}
focus on optimizing the probability of a \emph{single} event.
In contrast, in this paper, we consider \emph{relational properties}.
A natural example question is:
\begin{quote}
\emph{``Do all schedulers reach state $s$ with \underline{the same} probability as state $t$?''}
\end{quote}
In particular, relational properties can \emph{compare} the probabilities of two different events.
This goes beyond standard verification problems, including existing multi-objective frameworks, but is captured by \emph{probabilistic hyperlogics}~\cite{abrahamHyperPCTLTemporal2018,dimitrovaProbabilisticHyperproperties2020,abrahamProbabilisticHyperproperties2020}; see~\cref{sec:related-work} for a detailed discussion.
We define relational properties formally in \Cref{sec:problem-statement} on \cpageref{sec:problem-statement}.%

We show that relational properties form a practical, tractable subclass of probabilistic hyperproperties:
Relational properties can be verified algorithmically---often in polynomial time---whereas model checking probabilistic hyperproperties is undecidable in general~\cite{dobeModelChecking2022,dimitrovaProbabilisticHyperproperties2020}.
A key reason for this tractability is that, once a scheduler is fixed, all probabilities appearing in a relational property (e.g., reachability or Büchi probabilities) can be evaluated independently and directly on the original MDP.
This is not the case for general probabilistic hyperproperties.
For example, evaluating hyperproperties like
\begin{quote}
	\emph{``Does it hold that, with probability 1, all pairs of schedulers either both visit a given state $s$, or neither visits $s$, at every time step?''}
\end{quote}
requires additional expensive constructions such as \emph{self-composition}~(see, e.g.,~\cite{abrahamHyperPCTLTemporal2018,dimitrovaProbabilisticHyperproperties2020}), even when schedulers are fixed.
Nevertheless, as illustrated in~\cref{sec:examples}, many practically relevant probabilistic hyperproperties fall into one of our tractable classes of relational properties.

\paragraph*{Results}
In this paper, we consider several variations of relational properties involving 
reachability, safety, B\"uchi or coB\"uchi objectives.
Examples include {\color{cavcolor} asking for a single scheduler ensuring that a set of states is 
reached with higher probability than a different set of states (inequality properties), or that these 
probabilities are approximately or exactly the same (equality properties).
These probabilities can be evaluated from the same or from different initial states, and also under 
the same or different schedulers; the latter allows us to express, e.g., that any pair of schedulers 
induce roughly the same reachability probability.
}%
More precisely, we consider the scheduler synthesis problem for conjunctions of comparisons (e.g., $\leq$, $=$, $\approx_\epsilon$, etc.) between (weighted) sums of probabilities.
We formally introduce our class of relational properties in \cref{sec:problem-statement} and subsequently focus on three fragments:
\begin{enumerate}
    \item The conjunction-free fragment involving only reachability and safety\footnote{In our formalism, we can readily reduce safety to reachability and coBüchi to Büchi; see \Cref{sec:problem-statement}.} objectives~(\cref{sec:reach}). This fragment was already covered in~\cite{gerlachEfficientProbabilistic2025}.
    \item The conjunction-free fragment over B\"uchi and coB\"uchi objectives~(\cref{sec:buechi}).
    \item The B\"uchi-free fragment~(\cref{sec:conjunction}), i.e., conjunctions over comparisons between weighted sums of reachability objectives.
\end{enumerate}
We describe motivating examples covering all three fragments in~\cref{sec:examples}. 
For each fragment we propose a model-checking algorithm and investigate the complexity of the model-checking problem, as outlined in the following.

\subparagraph*{Verifying relational properties.}
Given an MDP and a relational property from one of the three 
fragments, we provide an algorithm to decide whether or not the MDP satisfies the property.
{\color{cavcolor}
For relational reachability properties, the key insight is that (possibly randomized memoryful) witness schedulers can be constructed by translating the given property to expected reward computations in a series of mildly transformed MDPs~(\cref{sec:reach_algo}). 
For inequality properties, it suffices to optimize the total expected reward. 
For (approximate) equality, the main idea is to construct a randomized memoryful scheduler from the schedulers witnessing the corresponding two inequality properties. 
}

Relational B\"uchi properties can be reduced back to relational reachability properties on a variation 
of the standard maximal end 
component (MEC) quotient of the MDP~(\cref{sec:buechi_algo}).
Note that, while there is a standard construction for transferring a scheduler \emph{optimizing} some B\"uchi objective to a scheduler optimizing some reachability probability on the MEC quotient~\cite{baierPrinciplesModel2008}, our setting here is more challenging because we want to transfer \emph{arbitrary} schedulers while preserving the exact probabilities; we use insights from multi-objective model checking~\cite{baierCertificatesWitnesses2024}.

For verifying conjunctive relational reachability properties, we extend the approach from conjunction-free reachability properties, which results in a multi-objective achievability query~(\cref{sec:conj_algo}).
Since the constructed multi-objective queries are, to the best of our knowledge, not covered by existing work, we give an explicit verification procedure in~\cref{sec:moa}.

\subparagraph*{Computational complexity.} 
The algorithms for relational reachability or B\"uchi properties outlined above are exponential only in the number of different target sets that occur in the property, i.e., the algorithms are \emph{fixed-parameter tractable}~\cite{groheDescriptiveParameterized1999}. 
The problem for reachability objectives is in general \PSPACE-hard and for B\"uchi objectives it is 
\NP-complete, but in both cases it can be solved in \PTIME in the size of the input under one of various (mild) assumptions. 
For conjunctive relational reachability properties, the complexity results differ depending on whether we include the comparison operator $\not\approx_\epsilon$, which requires two values to be at least $\epsilon$ apart and thus corresponds to a \emph{disjunction} inside the conjunction.
If we omit $\not\approx_\epsilon$, the algorithm is again only exponential in the number of different target sets.
If we include $\not\approx_\epsilon$, however, the problem is still \NP-complete if all probability operators share the same, absorbing target set.
\cref{tab:complexity_overview_selected} gives an overview of the model-checking complexities of selected fragments over general schedulers.

For all three fragments, {\color{cavcolor} when restricting the schedulers to be \emph{memoryless and deterministic} (MD), several types of equality properties are strongly \NP-hard.
We also list various fragments where we can compute MD schedulers in polynomial time.
}

We study the computational complexity of the model-checking problem for the three fragments in greater detail in \cref{sec:reach_complexity,sec:buechi_complexity,sec:conj_complexity}, respectively.
We compare and contrast our complexity results in \cref{tab:complexity_overview_full}.

\begin{table}[t]
    \caption{Complexity of selected classes of relational properties over general (memoryful, randomized) schedulers, where $\epsilon>0$.
    The upper half of the table presents single-objective properties, the lower half multi-objective ones.
    }
    \label{tab:complexity_overview_selected}
    \setlength\tabcolsep{0pt}
    \centering
    \begin{tabular*}{\linewidth}{@{\extracolsep{\fill}} l  l }
        \toprule
         \bf Property class & \bf  Complexity 
         \\
        \midrule
        $\exists \sched .\ \sum_{i=1}^{m} \Pr^{\sched}_{\state}(\Finally T_i) \geq 1$
        & \PSPACE-hard, in \EXPTIME [Th.~\ref{th:general_EXPTIME}]
        \\
        $\exists \sched .\ \sum_{i=1}^{m} \Pr^{\sched}_{\state}(\Globally\Finally T_i) \geq 1$
        & \NP-hard [Th.~\ref{th:buechi_NP-complete}]
        \\
        $\exists \sched .\ \sum_{i=1}^{m} \Pr^{\sched}_{\state}(\Finally T_i) \geq m$
        & \PSPACE-hard~\cite{randourPercentileQueries2017}, in \EXPTIME [Th.~\ref{th:general_EXPTIME}]
        \\
        $\exists \sched .\ \sum_{i=1}^{m} \Pr^{\sched}_{\state}(\Globally\Finally T_i) \geq m$
        & \PTIME~\cite{randourPercentileQueries2017} [Rem.~\ref{remark:buechi_sim-a-s}]
        \\
        \midrule 
        $\exists \sched .\ \Pr^{\sched}_{\state}(\Finally T_1) = \ldots = \Pr^{\sched}_{\state}(\Finally T_m)$
        & \PSPACE-hard, in \EXPTIME [Th.~\ref{th:conj_exptime}]
        \\
        $\exists \sched .\ \Pr^{\sched}_{\state_1}(\Finally T) = \ldots = \Pr^{\sched}_{\state_m}(\Finally T)$
        & strongly \NP-complete [Th.~\ref{th:conj_special}\ref{item:conj_fixed-param}]
        \\
        $\exists \sched .\ \Pr^{\sched}_{\state_1}(\Finally T) \not\approx_\epsilon \ldots \not\approx_\epsilon \Pr^{\sched}_{\state_m}(\Finally T)$
        & \PTIME [Th.~\ref{th:conj_special}\ref{item:conj_fixed-param}]
        \\
        $\exists \sched_1, \ldots, \sched_m .\ \Pr^{\sched_1}_{\state}(\Finally T_1) = \ldots = \Pr^{\sched_m}_{\state}(\Finally T_m)$
        & \PTIME [Th.~\ref{th:conj_special}\ref{item:conj_independent}]
        \\
        $\exists \sched_1, \ldots, \sched_m .\ \Pr^{\sched_1}_{\state}(\Finally T_1) \not\approx_{\epsilon} \ldots \not\approx_{\epsilon} \Pr^{\sched_m}_{\state}(\Finally T_m)$
        & \PTIME [Th.~\ref{th:conj_special}\ref{item:conj_independent}]
        \\
        \bottomrule
    \end{tabular*}
\end{table}

\paragraph*{Contributions and Structure} 

This paper significantly extends~\cite{gerlachEfficientProbabilistic2025}, which is restricted to 
relational \emph{reachability} properties (\cref{sec:reach}), while this work additionally considers relational 
\emph{B\"uchi} properties (\cref{sec:buechi}) and \emph{multi-objective} relational reachability properties (\cref{sec:conjunction}).
Additionally, we extend existing work on multi-objective achievability queries to the setting of 
(1) expected reward objectives that collect \emph{positive or negative} finite reward on any path 
combined with 
(2) comparison operators $\{>,\geq,\approx_\epsilon,\not\approx_\epsilon \mid \epsilon \in \Qnn\}$.

{\color{cavcolor}
In summary, in this paper we formally introduce the class of relational properties and present model-checking algorithms for these properties, which go beyond standard probabilistic and multi-objective properties while remaining tractable 
(see~\Cref{sec:related-work} for a discussion of related work). 
The tractability is in sharp contrast with the more general probabilistic hyperlogics.
}%
Our key contributions are efficient model checking algorithms (\cref{sec:reach_algo,sec:buechi_algo,sec:conj_algo}) and a study of the complexity landscape for relational properties (\cref{sec:reach_complexity,sec:buechi_complexity,sec:conj_complexity}). 
In \cref{sec:eval}, we provide a prototypical implementation on top of \storm~\cite{henselProbabilisticModel2022} 
and experimentally demonstrate that our approach {\color{cavcolor} can solve some standard benchmarks for probabilistic hyperlogics orders of magnitudes faster than the state of the art~\cite{andriushchenkoDeductiveController2023,dobeHyperProbModel2021}}.

%% file: motivating-ex/motivating-ex.tex
\section{Motivating Examples}
\label{sec:examples}

To motivate the application of relational probabilistic properties,  we present 
three examples, covering all three fragments of relational properties we consider.
We illustrate the main steps of our algorithms and foreshadow some of our complexity results.


\input{motivating-ex/von-neumann}

\subsection{Simulating a Coin by Random Biased Bits}
\label{ex:motivating_ex}
\label{ex:reach_motivating-ex}

{\color{cavcolor}
We wish to simulate an unbiased, perfectly random coin flip using an infinite stream of possibly \emph{biased} random bits,
each of which is $0$ with an unknown but fixed probability $0<p<1$ and otherwise $1$.
The following simple solution is due to von Neumann~\cite{vonneumannVariousTechniques1951}:
Extract the first two bits from the stream; if they are different, return the value of the first; otherwise try again.
Now, consider a variation of the problem where the stream comprises random bits with 
\emph{different}, unknown biases $p_0, p_1, {\ldots}$ which are, however, all known to lie in an 
interval $[\underline{p}, \overline{p}] \subset (0,1)$~\cite{gerlachEfficientProbabilistic2025}.
Is von Neumann's solution still applicable in this new situation?
To address this question for a concrete interval $[\underline{p}, \overline{p}]$, say $[0.59,0.61]$, 
we may model the situation as shown in~\Cref{fig:illustrativeExample} and formalize the 
corresponding relational reachability property as:
\begin{align}
    \forall \sched .\
    \Pr_{s_0}^{\sched}(\Finally \{01\}) \,\approx_{\epsilon}\, \Pr_{s_0}^{\sched}(\Finally \{10\})
    \tag{$\dagger$}\label{eq:vonNeumannExampleProperty}
\end{align}
where $\sched$ is a universally quantified scheduler, 
$s_0$ is the initial state, and $\approx_{\epsilon}$ means approximate equality up to an absolute error of $\epsilon \geq 0$.
Note that the universal quantification in \eqref{eq:vonNeumannExampleProperty} is over \emph{general} schedulers that may use both unbounded memory and randomization.
This is essential to model the problem properly:
Without randomization, all biases would be either $\underline{p}$ or $\overline{p}$ and a bounded-memory scheduler would induce an ultimately periodic stream of biases.

Using the techniques presented in this paper, we can establish automatically that, as expected, von Neumann's trick continues to work \enquote{approximately} in the new setting.
More precisely, in~\cref{sec:eval_new} we use our algorithm to automatically verify that \eqref{eq:vonNeumannExampleProperty} is false for $\epsilon = 0$ (exact equality), but holds if we relax the constraint to $\epsilon = 0.1$. 
}%
Our algorithm proceeds as follows~(see \cref{sec:reach_algo}):
First, we unfold the MDP w.r.t.\ the target states $01$ and $10$, as depicted in~\cref{fig:illustrativeExampleUnfolded}.
Then, we define reward structures $\rew_{01}, \rew_{10}$ on the unfolded MDP $\mdp'$, collecting reward 1 in states $01$ and $10$, respectively.
In order to check whether the desired property holds, we can then check whether $\forall \sched \in \Scheds[\mdp'] .\ \Expected^{\mdp',\sched}_{\state_0}(\rew_{01} - \rew_{10}) \approx_\epsilon 0$, which is equivalent to checking that $\min_{\sched \in \Scheds[\mdp']} \Expected^{\mdp',\sched}_{\state_0}(\rew_{01} - \rew_{10}) \geq - \epsilon \land \max_{\sched \in \Scheds[\mdp']} \Expected^{\mdp',\sched}_{\state_0}(\rew_{01} - \rew_{10}) \leq \epsilon$.
Note that, in this example, all probability operators share the same scheduler variable and initial state and all target states are absorbing.
Our algorithm runs in time polynomial in the size of the input since both target states are absorbing~(\cref{th:general_PTIME}\ref{item:absorbing}).
For $\epsilon=0$, our algorithm actually finds a \emph{memoryless deterministic} counterexample in polynomial time since both probability operators have the same scheduler variable and initial state and both target states are absorbing~(\cref{th:MD_PTIME}\ref{item:absorbing}).
In fact, checking the property over memoryless deterministic schedulers is equivalent to checking it over general schedulers~(\cref{th:MD_suffice}), independent of $\epsilon$.

\input{motivating-ex/israeli-jalfon}

\subsection{Israeli \& Jalfon's Self-Stabilizing Protocol}
\label{ex:buechi_motivating-ex}

Israeli \& Jalfon's self-stabilizing protocol operates on a ring of $N$ processes that pass on 
tokens between them~\cite{israeliTokenManagement1990}.
At each step in time, exactly one of the processes currently possessing a token can be scheduled.
If a process with a token is scheduled, it passes this token on to either its left or right neighbor with equal probability.
If a process possesses several tokens, they are merged into a single token.
We can model this as an MDP $\mdp_o$ as shown in \cref{fig:israeli-jalfon} for $N=3$.
The classical question for this setting is whether from all initial states (i.e., all possible distributions of tokens among the processes) and under all schedulers, the protocol almost-surely reaches a \emph{stable} state, i.e., a state where exactly one of the processes has a token. 
Indeed, the protocol satisfies this property. 
However, this property is also satisfied in an asymmetric variant of the protocol where process $N$ malfunctions and does not pass on the token, which can be modeled as an MDP $\mdp_a$ as shown in \cref{fig:israeli-jalfon_asym} for $N=3$.
In this setting, we still reach a stable configuration with probability 1 from any initial state under any scheduler, but with probability 1 we end up in the state where only process $N$ has a token, hence only process $N$ has a token infinitely often.
This exemplifies that asking whether a stable configuration is reached does not check whether the token is passed around between the processes \emph{in a fair manner}.
Let us thus check the following, more precise, property, where $\statei \in \states$ ranges over all states and $Q_i$ is the set of states where process $i$ has a token: 
\begin{align*}
    \bigwedge_{\statei \in \states} 
    \bigwedge_{i=1}^{N-1} 
    \forall \sched .\ 
    \Pr^{\sched}_{\statei}(\Globally \Finally Q_i) = \Pr^{\sched}_{\statei}(\Globally \Finally Q_{i+1})~,
\end{align*}
which is a conjunction of relational B\"uchi properties.

Let us illustrate how our algorithm proceeds to check the conjunct with $\state_0 = 111$ and $i=2$ for both variations~(cf.~\cref{sec:buechi_algo}):
We first build a variation of the standard MEC quotient that takes into account the target sets. 
Note that, in this example, both MDPs have only a single MEC which does not contain other end components; see~\cref{fig:buechi_running-ex} for an MDP with a more complex structure of the end components.
For the original protocol we construct $\quotT[\mdp_o]$ by collapsing the single MEC $\{001, 010, 100\}$ and adding a transition from the collapsed state to a sink state that reflects that, when staying in this MEC we must see both $Q_2$ and $Q_3$ infinitely often. Observe that it is not possible to see only $Q_2$, only $Q_3$, or neither $Q_2$ nor $Q_3$ infinitely often.
We define the \emph{success sets} for $Q_2$ and $Q_3$ as the set of all sink states reflecting that $Q_2$, resp.~$Q_3$ is visited infinitely often, i.e., $U_{Q_2} = U_{Q_3} = \{\bot^{Q_2,Q_3}\}$.
The model for the asymmetric variant only has a single-state MEC $\{001\}$; we construct the MEC quotient $\quotT[\mdp_a]$ by removing the self-loop and instead adding a transition to a sink state that reflects that while staying in this MEC we must see $Q_3$ infinitely often. Then, the success set of $Q_2$ is $U_{Q_2} = \emptyset$ since there is no sink state that reflects visiting $Q_2$ infinitely often, and the success set for $Q_3$ is defined as $U_{Q_3} = \{\bot^{Q_3}\}$.
The desired relational B\"uchi property for $\mdp_o$ and $\mdp_a$ is thus equivalent to the following relational reachability properties: 
\begin{align*}
    &
    \forall \sched \in \Scheds[{\quotT[\mdp_o]}] .\ 
    \Pr^{\quotT[\mdp_o],\sched}_{111}(\Finally \{\bot^{Q_2,Q_3}\}) = \Pr^{\quotT[\mdp_o],\sched}_{111}(\Finally \{\bot^{Q_2,Q_3}\}) 
    ~ 
    \textit{, and} 
    \\ &
    \forall \sched \in \Scheds[{\quotT[\mdp_a]}] .\ 
    \Pr^{\quotT[\mdp_a],\sched}_{111}(\Finally \emptyset) = \Pr^{\quotT[\mdp_a],\sched}_{111}(\Finally \{\bot^{Q_3}\}) 
    ~\textit{, respectively.}
\end{align*}
Thus, the property holds for $\mdp_o$ (the standard protocol) but not for $\mdp_a$ (the asymmetric variant), as also confirmed by the experimental evaluation of our algorithm in ~\cref{sec:eval_new}.

Our algorithm runs in time exponential in the number of probability operators~(\cref{th:buechi_correctness}).
Since both probability operators have the same scheduler variable and initial state~(\cref{th:buechi_MD}\ref{item:buechi_MD_n-comb}), it induces a \emph{memoryless deterministic} counterexample scheduler for the asymmetric variant (which can be falsified), 
and checking the property over general schedulers is equivalent to checking it over memoryless deterministic schedulers~(\cref{cor:buechi_MD_suffice}), for both the original protocol and the asymmetric variant.

\subsection{Simulating a Die by Biased Coins}
\label{ex:conj_motivating-ex}

\input{motivating-ex/dice-simulation}

We would like to investigate whether we can extend Von Neumann's trick (see \cref{ex:motivating_ex}) towards simulating a 6-sided die by repeatedly tossing biased coins: Instead of generating a distribution with equal probability for \emph{two} outcomes we now want to generate a distribution with equal probability for \emph{six} outcomes.
There exist different protocols for simulating a die via \emph{unbiased} coin throws, e.g., the Knuth-Yao algorithm~\cite{knuthComplexityNonuniform1976} and the fast dice roller~\cite{lumbrosoOptimalDiscrete2013}. 
For fair coins, both can be modeled as DTMCs where every state either has a uniform successor distribution over two states or is absorbing, as shown in \cref{fig:knuth-yao-fair,fig:fast-dice-roller-fair}. 
As before, let us now assume that we do not have a biased coin but instead can choose any bias from a fixed interval $[\underline{p},\overline{p}]$ every time we flip a coin. 
We extend the DTMCs modeling the protocols for unbiased coins to MDPs modeling the protocols for biased coins by replacing every action in the DTMC with two actions, one for the lower and one for the upper bias bound, as shown in \cref{fig:knuth-yao-biased,fig:fast-dice-roller-biased}.
Note that these MDPs can also be viewed as interval DTMCs~\cite{jonssonSpecificationRefinement1991}; inspired by this observation we briefly address relational properties for interval DTMCs in~\cref{rem:interval} on p.~\pageref{rem:interval}.

For both protocols, we now wish to check whether for every coin throw we can choose a bias such that the probability of each outcome $\epsdice{1}, \epsdice{2}, \epsdice{3}, \epsdice{4}, \epsdice{5}, \epsdice{6}$ is approximately the same.
Formally, we can state this as the following multi-objective relational reachability property
\begin{align*}
    \exists \sched .\ 
    \Pr^{\sched}_{\state_0}(\Finally \{\epsdice{1}\}) \approx_\epsilon \Pr^{\sched}_{\state_0}(\Finally \{\epsdice{2}\}) 
    ~\land~ \ldots ~\land~
    \Pr^{\sched}_{\state_0}(\Finally \{\epsdice{6}\}) \approx_\epsilon \Pr^{\sched}_{\state_0}(\Finally \{\epsdice{1}\})
    ~.
\end{align*}
If the interval $[\underline{p},\overline{p}]$ contains $0.5$, then the property is satisfied trivially because of the correctness of the original protocol with fair coins.
If we again consider the interval $[0.59,0.61]$, as for Von Neumann's trick (\cref{ex:reach_motivating-ex}), the picture is less clear.
Let us illustrate how our algorithm proceeds to check the property for each MDP~(cf.~\cref{sec:conj_algo}):
Similarly to the approach for relational reachability properties, we first unfold the MDP w.r.t.\ all target states $\epsdice{1}, \ldots, \epsdice{6}$ and define reward structures $\rew_{\epsdice{1}}, \ldots, \rew_{\epsdice{6}}$ that collect reward 1 on visiting the respective target state.
Then, we define reward structures capturing each conjunct by letting $\rew^{1} = \rew_{\epsdice{1}} - \rew_{\epsdice{2}}$, \ldots, $\rew^6 = \rew_{\epsdice{6}} - \rew_{\epsdice{1}}$.
Thus, the desired property is equivalent to the following multi-objective achievability query on the unfolded MDP $\mdp'$: $\exists \sched \in \Scheds[\mdp'] .\ \bigwedge_{j=1}^{6} \Expected^{\mdp',\sched}_{\state_0}(\rew^{j}) \approx_\epsilon 0$.
Note that, in this example, all probability operators share the same scheduler variable and initial states.
Checking this multi-objective query for various $\epsilon$ yields that, for the fast dice roller, the desired property holds for $\epsilon\geq0.13$ and for Knuth-Yao only for $\epsilon\geq0.15$, so the latter model is a coarser simulation in this sense.

We can decide whether the desired property holds in polynomial time in the size of the input since all target states are absorbing~(\cref{th:conj_special}\ref{item:conj_absorb}).

%% file: motivating-ex/von-neumann.tex
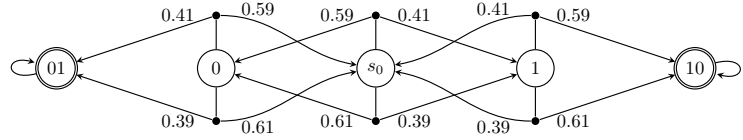
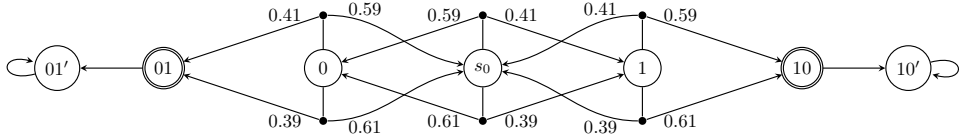
\begin{figure}[t]
    \centering
    \begin{subfigure}[t]{\textwidth}
    \centering
        \scalebox{.7}{
    \begin{tikzpicture}[on grid,node distance=10mm and 30mm,semithick,>=stealth]
        \node[state] (start) {$s_0$};
        \node[dist,above=of start] (startl) {};
        \node[dist,below=of start] (startu) {};
        \node[state,left=of start] (0) {$0$};
        \node[state,right=of start] (1) {$1$};
        \node[dist,above=of 0] (0l) {};
        \node[dist,below=of 0] (0u) {};
        \node[dist,above=of 1] (1l) {};
        \node[dist,below=of 1] (1u) {};
        \node[state,accepting,left=of 0] (01) {$01$};
        \node[state,accepting,right=of 1] (10) {$10$};
        
        \draw[-] (start) -- (startl);
        \draw[-] (start) -- (startu);
        \draw[->] (startl) -- node[near start,above] {$0.59$} (0);
        \draw[->] (startl) -- node[near start,above] {$0.41$} (1);
        \draw[->] (startu) -- node[near start,below] {$0.61$} (0);
        \draw[->] (startu) -- node[near start,below] {$0.39$} (1);
        \draw[-] (0) -- (0l);
        \draw[-] (0) -- (0u);
        \draw[-] (1) -- (1l);
        \draw[-] (1) -- (1u);
        \draw[->] (0l) edge[out=0,in=170] node[near start,above] {$0.59$} (start);
        \draw[->] (0l) -- node[near start,above] {$0.41$} (01);
        \draw[->] (0u) edge[out=0,in=190] node[near start,below] {$0.61$} (start);
        \draw[->] (0u) -- node[near start,below] {$0.39$} (01);
        \draw[->] (1l) edge[out=180,in=10] node[near start,above] {$0.41$} (start);
        \draw[->] (1l) -- node[near start,above] {$0.59$} (10);
        \draw[->] (1u) edge[out=180,in=-10] node[near start,below] {$0.39$} (start);
        \draw[->] (1u) -- node[near start,below] {$0.61$} (10);

        \path[-latex', draw]
            (01) edge[loop left] (01)
            (10) edge[loop right] (10);
    \end{tikzpicture}
    }
    \caption{Modeling Von Neumann's trick.}
    \label{fig:illustrativeExample}
    \end{subfigure}
    \begin{subfigure}[t]{\textwidth}
    \centering
        \scalebox{.7}{
    \begin{tikzpicture}[on grid,node distance=10mm and 30mm,semithick,>=stealth]
        \node[state] (start) {$s_0$};
        \node[dist,above=of start] (startl) {};
        \node[dist,below=of start] (startu) {};
        \node[state,left=of start] (0) {$0$};
        \node[state,right=of start] (1) {$1$};
        \node[dist,above=of 0] (0l) {};
        \node[dist,below=of 0] (0u) {};
        \node[dist,above=of 1] (1l) {};
        \node[dist,below=of 1] (1u) {};
        \node[state,accepting,left=of 0] (01) {$01$};
        \node[state,accepting,right=of 1] (10) {$10$};
        \node[state,left=20mm of 01] (01p) {$01'$};
        \node[state,right=20mm of 10] (10p) {$10'$};
        
        \draw[-] (start) -- (startl);
        \draw[-] (start) -- (startu);
        \draw[->] (startl) -- node[near start,above] {$0.59$} (0);
        \draw[->] (startl) -- node[near start,above] {$0.41$} (1);
        \draw[->] (startu) -- node[near start,below] {$0.61$} (0);
        \draw[->] (startu) -- node[near start,below] {$0.39$} (1);
        \draw[-] (0) -- (0l);
        \draw[-] (0) -- (0u);
        \draw[-] (1) -- (1l);
        \draw[-] (1) -- (1u);
        \draw[->] (0l) edge[out=0,in=170] node[near start,above] {$0.59$} (start);
        \draw[->] (0l) -- node[near start,above] {$0.41$} (01);
        \draw[->] (0u) edge[out=0,in=190] node[near start,below] {$0.61$} (start);
        \draw[->] (0u) -- node[near start,below] {$0.39$} (01);
        \draw[->] (1l) edge[out=180,in=10] node[near start,above] {$0.41$} (start);
        \draw[->] (1l) -- node[near start,above] {$0.59$} (10);
        \draw[->] (1u) edge[out=180,in=-10] node[near start,below] {$0.39$} (start);
        \draw[->] (1u) -- node[near start,below] {$0.61$} (10);

        \draw[->] (10) edge (10p);
        \draw[->] (01) edge (01p);

        \path[-latex', draw]
            (01p) edge[loop left] (01p)
            (10p) edge[loop right] (10p);
    \end{tikzpicture}
    }
    \caption{Unfolding the MDP w.r.t.\ states $01$ and $10$.}
    \label{fig:illustrativeExampleUnfolded}
    \end{subfigure}
    \caption{\color{cavcolor}Simulating unbiased coins by random biased bits. }
\end{figure}

%% file: motivating-ex/israeli-jalfon.tex
\colorlet{pone}{biaslower}

\begin{figure}[t]
    \centering
    \begin{subfigure}[b]{0.49\textwidth}
    \centering
    \scalebox{.7}{
        \begin{tikzpicture}[
            node distance=20mm and 25mm,
            on grid,semithick,>=stealth, every node/.style={scale=0.8}, 
            p1/.style={color=pone}, p2/.style={color=Blue}, p3/.style={color=LightSeaGreen}]
            \node[state] (111) {$111$};
            \node[state, below left=of 111, yshift=5mm] (011) {$011$};
            \node[state, below =of 111, yshift=5mm] (101) {$101$};
            \node[state, below right=of 111, yshift=5mm] (110) {$110$};
            \node[state, below=of 011, yshift=-10mm] (001) {$001$};
            \node[state, below=of 101, yshift=-10mm] (010) {$010$};
            \node[state, below=of 110, yshift=-10mm] (100) {$100$};

            \node[dist, above=of 011, yshift=-12mm, p1, xshift=10mm] (111p1) {};
            \node[dist, above=of 101, yshift=-14mm, p2] (111p2) {};
            \node[dist, above=of 110, yshift=-12mm, p3, xshift=-10mm] (111p3) {};

            \node[dist, below left=of 011, yshift=7mm, xshift=22mm, p2] (011p2) {};
            \node[dist, below right=of 011, yshift=7mm, xshift=-22mm, p3] (011p3) {};
            \node[dist, below left=of 101, yshift=7mm, xshift=22mm, p1] (101p1) {};
            \node[dist, below right=of 101, yshift=7mm, xshift=-22mm, p3] (101p3) {};
            \node[dist, below left=of 110, yshift=7mm, xshift=22mm, p1] (110p1) {};
            \node[dist, below right=of 110, yshift=7mm, xshift=-22mm, p2] (110p2) {};

            \node[dist, below =of 001, yshift=10mm, p3, xshift=5mm] (001p3) {};
            \node[dist, below =of 010, yshift=10mm, p2] (010p2) {};
            \node[dist, below =of 100, yshift=10mm, p1, xshift=-5mm] (100p1) {};

            \path[-]
            (111) edge[p1, bend right=10] (111p1)
            (111) edge[p2] (111p2)
            (111) edge[p3, bend left=10] (111p3);

            \path[->]
            (111p1) edge[p1] (011)
            (111p2) edge[p2] (101)
            (111p3) edge[p3] (110);

            \path[-]
            (011) edge[p2, bend right] (011p2)
            (011) edge[p3, bend right=10] (011p3)
            (101.250) edge[p1, bend left] (101p1)
            (101.280) edge[p3, bend right] (101p3)
            (110) edge[p1, bend left=10] (110p1)
            (110) edge[p2, bend left] (110p2);

            \path[->]
            (011p2) edge[p2] (001)
            (011p2) edge[p2] (101)
            (011p3) edge[p3] (010)
            (011p3) edge[p3] (110)
            (101p1) edge[p1] (001)
            (101p1) edge[p1] (011)
            (101p3) edge[p3] (100)
            (101p3) edge[p3] (110)
            (110p1) edge[p1] (010)
            (110p1) edge[p1] (011)
            (110p2) edge[p2] (100)
            (110p2) edge[p2] (101);

            \path[-]
            (001) edge[p3, bend right] (001p3)
            (010) edge[p2] (010p2)
            (100) edge[p1, bend left] (100p1);

            \path[->]
            (001p3) edge[p3] (010)
            (001p3) edge[p3, bend right=10] (100)
            (010p2) edge[p2, bend left] (001)
            (010p2) edge[p2, bend right] (100)
            (100p1) edge[p1, bend left=10] (001)
            (100p1) edge[p1] (010);
        \end{tikzpicture}
        }
    \caption{Original (\href{https://www.prismmodelchecker.org/casestudies/self-stabilisation.php\#ij}{PRISM model}).}
    \label{fig:israeli-jalfon}
    \end{subfigure}
    \begin{subfigure}[b]{0.49\textwidth}
    \centering
    \scalebox{.7}{
        \begin{tikzpicture}[
            initial text=,initial where=right,node distance=20mm and 25mm,on grid,semithick,>=stealth, every node/.style={scale=0.8}, 
            p1/.style={color=pone}, p2/.style={color=Blue}, p3/.style={color=LightSeaGreen}]
            \node[state] (111) {$111$};
            \node[state, below left=of 111, yshift=5mm] (011) {$011$};
            \node[state, below =of 111, yshift=5mm] (101) {$101$};
            \node[state, below right=of 111, yshift=5mm] (110) {$110$};
            \node[state, below=of 011, yshift=-10mm] (001) {$001$};
            \node[state, below=of 101, yshift=-10mm] (010) {$010$};
            \node[state, below=of 110, yshift=-10mm] (100) {$100$};

            \node[dist, above=of 011, yshift=-12mm, p1, xshift=10mm] (111p1) {};
            \node[dist, above=of 101, yshift=-14mm, p2] (111p2) {};

            \node[dist, below left=of 011, yshift=7mm, xshift=22mm, p2] (011p2) {};
            \node[dist, below left=of 101, yshift=7mm, xshift=22mm, p1] (101p1) {};
            \node[dist, below left=of 110, yshift=7mm, xshift=22mm, p1] (110p1) {};
            \node[dist, below right=of 110, yshift=7mm, xshift=-22mm, p2] (110p2) {};

            \node[dist, below =of 010, yshift=10mm, p2] (010p2) {};
            \node[dist, below =of 100, yshift=10mm, p1, xshift=-5mm] (100p1) {};

            \path[-]
            (111) edge[p1, bend right=10] (111p1)
            (111) edge[p2] (111p2)
            ;

            \path[->]
            (111p1) edge[p1] (011)
            (111p2) edge[p2] (101)
            ;

            \path[-]
            (011) edge[p2, bend right] (011p2)
            (101.250) edge[p1, bend left] (101p1)
            (110) edge[p1, bend left=10] (110p1)
            (110) edge[p2, bend left] (110p2);

            \path[->]
            (011p2) edge[p2] (001)
            (011p2) edge[p2] (101)
            (101p1) edge[p1] (001)
            (101p1) edge[p1] (011)
            (110p1) edge[p1] (010)
            (110p1) edge[p1] (011)
            (110p2) edge[p2] (100)
            (110p2) edge[p2] (101);

            \path[-]
            (010) edge[p2] (010p2)
            (100) edge[p1, bend left] (100p1);

            \path[->]
            (010p2) edge[p2, bend left] (001)
            (010p2) edge[p2, bend right] (100)
            (100p1) edge[p1, bend left=10] (001)
            (100p1) edge[p1] (010);

            \path[->]
            (001) edge[loop below] (001);
        \end{tikzpicture}
        }
    \caption{Asymmetric variant.}
    \label{fig:israeli-jalfon_asym}
    \end{subfigure}
    \begin{subfigure}[b]{0.49\textwidth}
    \centering
    \scalebox{.7}{
        \begin{tikzpicture}[
            node distance=20mm and 25mm,
            on grid,semithick,>=stealth, every node/.style={scale=0.8}, 
            p1/.style={color=pone}, p2/.style={color=Blue}, p3/.style={color=LightSeaGreen}]
            \node[state] (111) {$111$};
            \node[state, below left=of 111, yshift=5mm] (011) {$011$};
            \node[state, below =of 111, yshift=5mm] (101) {$101$};
            \node[state, below right=of 111, yshift=5mm] (110) {$110$};
            
            \node[staterectangle, below=of 101, yshift=-10mm] (s) {$001,010,100$};
            \node[staterectangle, below=of 110, yshift=-10mm] (b23) {$\bot^{Q_2,Q_3}$};

            \node[dist, above=of 011, yshift=-12mm, p1, xshift=10mm] (111p1) {};
            \node[dist, above=of 101, yshift=-14mm, p2] (111p2) {};
            \node[dist, above=of 110, yshift=-12mm, p3, xshift=-10mm] (111p3) {};

            \node[dist, below left=of 011, yshift=7mm, xshift=22mm, p2] (011p2) {};
            \node[dist, below right=of 011, yshift=7mm, xshift=-22mm, p3] (011p3) {};
            \node[dist, below left=of 101, yshift=7mm, xshift=22mm, p1] (101p1) {};
            \node[dist, below right=of 101, yshift=7mm, xshift=-22mm, p3] (101p3) {};
            \node[dist, below left=of 110, yshift=7mm, xshift=22mm, p1] (110p1) {};
            \node[dist, below right=of 110, yshift=7mm, xshift=-22mm, p2] (110p2) {};

            \path[-]
            (111) edge[p1, bend right=10] (111p1)
            (111) edge[p2] (111p2)
            (111) edge[p3, bend left=10] (111p3);

            \path[->]
            (111p1) edge[p1] (011)
            (111p2) edge[p2] (101)
            (111p3) edge[p3] (110);

            \path[-]
            (011) edge[p2, bend right] (011p2)
            (011) edge[p3, bend right=10] (011p3)
            (101.250) edge[p1, bend left] (101p1)
            (101.280) edge[p3, bend right] (101p3)
            (110) edge[p1, bend left=10] (110p1)
            (110) edge[p2, bend left] (110p2);

            \path[->]
            (011p2) edge[p2] (s)
            (011p2) edge[p2] (101)
            (011p3) edge[p3] (s)
            (011p3) edge[p3] (110)
            (101p1) edge[p1] (s)
            (101p1) edge[p1] (011)
            (101p3) edge[p3] (s)
            (101p3) edge[p3] (110)
            (110p1) edge[p1] (s)
            (110p1) edge[p1] (011)
            (110p2) edge[p2] (s)
            (110p2) edge[p2] (101);

            \path[->]
            (s) edge (b23)
            (b23) edge[loop above] (b23);
        \end{tikzpicture}
        }
    \caption{MEC quotient of original model.}
    \label{fig:israeli-jalfon-unfolded}
    \end{subfigure}
    \begin{subfigure}[b]{0.49\textwidth}
    \centering
    \scalebox{.7}{
        \begin{tikzpicture}[
            initial text=,initial where=right,node distance=20mm and 25mm,on grid,semithick,>=stealth, every node/.style={scale=0.8}, 
            p1/.style={color=pone}, p2/.style={color=Blue}, p3/.style={color=LightSeaGreen}]
            \node[state] (111) {$111$};
            \node[state, below left=of 111, yshift=5mm] (011) {$011$};
            \node[state, below =of 111, yshift=5mm] (101) {$101$};
            \node[state, below right=of 111, yshift=5mm] (110) {$110$};
            
            \node[staterectangle, below=of 011, yshift=-10mm, xshift=10mm] (001) {$001$};
            \node[staterectangle, below=of 011, yshift=-10mm, xshift=-10mm] (b3) {$\bot^{Q_3}$};
            
            \node[state, below=of 101, yshift=-10mm] (010) {$010$};
            \node[state, below=of 110, yshift=-10mm] (100) {$100$};

            \node[dist, above=of 011, yshift=-12mm, p1, xshift=10mm] (111p1) {};
            \node[dist, above=of 101, yshift=-14mm, p2] (111p2) {};

            \node[dist, below left=of 011, yshift=7mm, xshift=22mm, p2] (011p2) {};
            \node[dist, below left=of 101, yshift=7mm, xshift=22mm, p1] (101p1) {};
            \node[dist, below left=of 110, yshift=7mm, xshift=22mm, p1] (110p1) {};
            \node[dist, below right=of 110, yshift=7mm, xshift=-22mm, p2] (110p2) {};

            \node[dist, below =of 010, yshift=10mm, p2] (010p2) {};
            \node[dist, below =of 100, yshift=10mm, p1, xshift=-5mm] (100p1) {};

            \path[-]
            (111) edge[p1, bend right=10] (111p1)
            (111) edge[p2] (111p2)
            ;

            \path[->]
            (111p1) edge[p1] (011)
            (111p2) edge[p2] (101)
            ;

            \path[-]
            (011) edge[p2, bend right] (011p2)
            (101.250) edge[p1, bend left] (101p1)
            (110) edge[p1, bend left=10] (110p1)
            (110) edge[p2, bend left] (110p2);

            \path[->]
            (011p2) edge[p2] (001)
            (011p2) edge[p2] (101)
            (101p1) edge[p1] (001)
            (101p1) edge[p1] (011)
            (110p1) edge[p1] (010)
            (110p1) edge[p1] (011)
            (110p2) edge[p2] (100)
            (110p2) edge[p2] (101);

            \path[-]
            (010) edge[p2] (010p2)
            (100) edge[p1, bend left] (100p1);

            \path[->]
            (010p2) edge[p2, bend left] (001)
            (010p2) edge[p2, bend right] (100)
            (100p1) edge[p1, bend left=10] (001)
            (100p1) edge[p1] (010);

            \path[->]
            (001) edge (b3)
            (b3) edge[loop below] (b3);
        \end{tikzpicture}
        }
    \caption{MEC quotient for asymmetric variant.}
    \label{fig:israeli-jalfon_asym-unfolded}
    \end{subfigure}
    \caption{Modeling Israeli \& Jalfon's self-stabilising protocol for 3 processes as an MDP (left), and its asymmetric variant (right).
    Note that states $110$, $010$, and $100$ are not reachable in the asymmetric variant.
    Taking a \textcolor{pone}{green} action corresponds to scheduling process 1, \textcolor{Blue}{blue} to process 2, and \textcolor{LightSeaGreen}{cyan} to process 3.
    }
\end{figure}
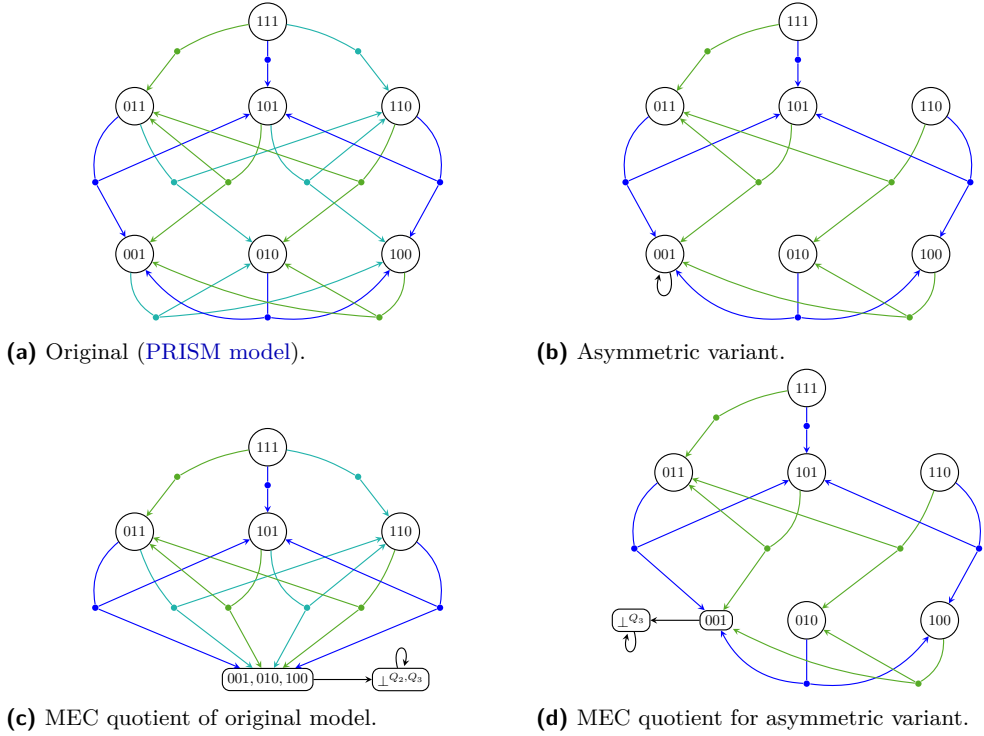

%% file: motivating-ex/dice-simulation.tex
\begin{figure}[t]
    \centering
    \begin{subfigure}[c]{\textwidth}
    \centering
    \begin{subfigure}[b]{0.49\textwidth}
        \centering
        \scalebox{.7}{
        \begin{tikzpicture}[
            node distance=5mm and 25mm,on grid,semithick,>=stealth, every node/.style={scale=0.8}, 
            pl/.style={color=biaslower}, pu/.style={color=biasupper}]
            \node[state] (s0) {$s_0$};
            \node[state, above right=of s0, yshift=12.5mm] (s1) {$s_1$};
            \node[state, below right=of s0, yshift=-12.5mm] (s2) {$s_2$};
            \node[state, above right=of s1] (s3) {$s_3$};
            \node[state, below right=of s1] (s4) {$s_4$};
            \node[state, above right=of s2] (s5) {$s_5$};
            \node[state, below right=of s2] (s6) {$s_6$};

            \node[above right=of s3] (d1) {$\epsdice{1}$};
            \node[above right=of s4] (d2) {$\epsdice{2}$};
            \node[below right=of s4] (d3) {$\epsdice{3}$};
            \node[above right=of s5] (d4) {$\epsdice{4}$};
            \node[below right=of s5] (d5) {$\epsdice{5}$};
            \node[below right=of s6] (d6) {$\epsdice{6}$};

            \node[dist, right=of s0, xshift=-20mm] (dist0) {};
            
            \node[dist, right=of s1, xshift=-20mm] (dist1) {};
            \node[dist, right=of s2, xshift=-20mm] (dist2) {};

            \node[dist, right=of s4, xshift=-20mm] (dist4) {};
            \node[dist, right=of s5, xshift=-20mm] (dist5) {};

            \node[dist, above=of s3, yshift=2.5mm] (dist3) {};
            \node[dist, below=of s6, yshift=-2.5mm] (dist6) {};

            \path[-]
            (s0) edge (dist0)
            (s1) edge (dist1)
            (s2) edge (dist2)
            (s4) edge (dist4)
            (s5) edge (dist5);

            \path[-]
            (s3) edge (dist3)
            (s6) edge (dist6);

            \path[->]
            (dist0) edge[bend left] node[left, pos=0.3] {$0.5$} (s1)
            (dist0) edge[bend right] node[left] {$0.5$} (s2)
            (dist1) edge[bend left] (s3)
            (dist1) edge[bend right] (s4)
            (dist2) edge[bend left] (s5)
            (dist2) edge[bend right] (s6)
            (dist4) edge[bend left] (d2)
            (dist4) edge[bend right] (d3)
            (dist5) edge[bend left] (d4)
            (dist5) edge[bend right] (d5)
            ;

            \path[->]
            (dist3) edge[bend left] (d1)
            (dist3) edge[bend right] (s1)
            (dist6) edge[bend right] (d6)
            (dist6) edge[bend left] (s2)
            ;

            \path[->]
            (d1) edge[loop right] (d1)
            (d2) edge[loop right] (d2)
            (d3) edge[loop right] (d3)
            (d4) edge[loop right] (d4)
            (d5) edge[loop right] (d5)
            (d6) edge[loop right] (d6);
        \end{tikzpicture}
        }
        \subcaption{Knuth-Yao \cite{knuthComplexityNonuniform1976}.}
        \label{fig:knuth-yao-fair}
    \end{subfigure}
    \begin{subfigure}[b]{0.49\textwidth}
        \centering
        \scalebox{.7}{
        \begin{tikzpicture}[
            initial text=,initial where=right,node distance=5mm and 25mm,on grid,semithick,>=stealth, every node/.style={scale=0.8}, 
            pl/.style={color=biaslower}, pu/.style={color=biasupper}]
            \node[state] (s0) {$s_0$};
            \node[state, above right=of s0, yshift=12.5mm] (s1) {$s_1$};
            \node[state, below right=of s0, yshift=-12.5mm] (s2) {$s_2$};
            \node[state, above right=of s1] (s3) {$s_3$};
            \node[state, below right=of s1] (s4) {$s_4$};
            \node[state, above right=of s2] (s5) {$s_5$};
            \node[state, below right=of s2] (s6) {$s_6$};

            \node[above right=of s3] (d1) {$\epsdice{1}$};
            \node[above right=of s4] (d2) {$\epsdice{2}$};
            \node[below right=of s4] (d3) {$\epsdice{3}$};
            \node[above right=of s5] (d4) {$\epsdice{4}$};
            \node[below right=of s5] (d5) {$\epsdice{5}$};
            \node[below right=of s6] (d6) {$\epsdice{6}$};

            \node[dist, right=of s0, xshift=-20mm] (dist0) {};
            
            \node[dist, right=of s1, xshift=-20mm] (dist1) {};
            \node[dist, right=of s2, xshift=-20mm] (dist2) {};

            \node[dist, right=of s4, xshift=-20mm] (dist4) {};
            \node[dist, above=of s5, yshift=2.5mm] (dist5) {};

            \node[dist, right=of s3, xshift=-20mm] (dist3) {};
            \node[dist, right=of s6, xshift=-20mm] (dist6) {};

            \path[-]
            (s0) edge (dist0)
            (s1) edge (dist1)
            (s2) edge (dist2)
            (s4) edge (dist4)
            (s5) edge (dist5);

            \path[-]
            (s3) edge (dist3)
            (s6) edge (dist6);

            \path[->]
            (dist0) edge[bend left] node[left, pos=0.3] {$0.5$} (s1)
            (dist0) edge[bend right] node[left] {$0.5$} (s2)
            (dist1) edge[bend left] (s3)
            (dist1) edge[bend right] (s4)
            (dist2) edge[bend left] (s5)
            (dist2) edge[bend right] (s6)
            (dist3) edge[bend left] (d1)
            (dist3) edge[bend right] (d2)
            (dist4) edge[bend left] (d3)
            (dist4) edge[bend right] (d4)
            (dist5) edge[bend left=5] (s1)
            (dist5) edge[bend right] (s2)
            (dist6) edge[bend left] (d5)
            (dist6) edge[bend right] (d6)
            ;

            \path[->]
            (d1) edge[loop right] (d1)
            (d2) edge[loop right] (d2)
            (d3) edge[loop right] (d3)
            (d4) edge[loop right] (d4)
            (d5) edge[loop right] (d5)
            (d6) edge[loop right] (d6);
        \end{tikzpicture}   
        }
        \subcaption{Fast dice roller \cite{lumbrosoOptimalDiscrete2013} (reduced model from~\cite{dBLP:journals/jar/MertensKQW25}).}
        \label{fig:fast-dice-roller-fair}
    \end{subfigure}
    \end{subfigure}
    \begin{subfigure}[c]{\textwidth} 
    \centering
    \begin{subfigure}[b]{0.49\textwidth}
        \centering
        \scalebox{.7}{
        \begin{tikzpicture}[
            node distance=5mm and 25mm,on grid,semithick,>=stealth, every node/.style={scale=0.8}, 
            pl/.style={color=biaslower}, pu/.style={color=biasupper}]
            \node[state] (s0) {$s_0$};
            \node[state, above right=of s0, yshift=12.5mm] (s1) {$s_1$};
            \node[state, below right=of s0, yshift=-12.5mm] (s2) {$s_2$};
            \node[state, above right=of s1] (s3) {$s_3$};
            \node[state, below right=of s1] (s4) {$s_4$};
            \node[state, above right=of s2] (s5) {$s_5$};
            \node[state, below right=of s2] (s6) {$s_6$};

            \node[above right=of s3] (d1) {$\epsdice{1}$};
            \node[above right=of s4] (d2) {$\epsdice{2}$};
            \node[below right=of s4] (d3) {$\epsdice{3}$};
            \node[above right=of s5] (d4) {$\epsdice{4}$};
            \node[below right=of s5] (d5) {$\epsdice{5}$};
            \node[below right=of s6] (d6) {$\epsdice{6}$};

            \node[dist, above right=of s0, pu, xshift=-15mm, yshift=5mm] (d0u) {};
            \node[dist, below right=of s0, pl, xshift=-15mm, yshift=-5mm] (d0l) {};
            
            \node[dist, above right=of s1, pu, xshift=-20mm] (d1u) {};
            \node[dist, below right=of s1, pl, xshift=-20mm] (d1l) {};
            \node[dist, above right=of s2, pu, xshift=-20mm] (d2u) {};
            \node[dist, below right=of s2, pl, xshift=-20mm] (d2l) {};

            \node[dist, above right=of s4, pu, xshift=-20mm] (d4u) {};
            \node[dist, below right=of s4, pl, xshift=-20mm] (d4l) {};
            \node[dist, above right=of s5, pu, xshift=-20mm] (d5u) {};
            \node[dist, below right=of s5, pl, xshift=-20mm] (d5l) {};

            \node[dist, above left=of s3, pu, yshift=2.5mm, xshift=25mm] (d3u) {};
            \node[dist, above right=of s3, pl, yshift=2.5mm, xshift=-25mm] (d3l) {};
            \node[dist, below left=of s6, pl, yshift=-2.5mm, xshift=25mm] (d6l) {};
            \node[dist, below right=of s6, pu, yshift=-2.5mm, xshift=-25mm] (d6u) {};

            \path[-]
            (s0) edge[pl] (d0l)
            (s0) edge[pu] (d0u)
            (s1) edge[pl] (d1l)
            (s1) edge[pu] (d1u)
            (s2) edge[pl] (d2l)
            (s2) edge[pu] (d2u)
            (s4) edge[pl] (d4l)
            (s4) edge[pu] (d4u)
            (s5) edge[pl] (d5l)
            (s5) edge[pu] (d5u);

            \path[-]
            (s3) edge[pl] (d3l)
            (s3) edge[pu] (d3u)
            (s6) edge[pl] (d6l)
            (s6) edge[pu] (d6u);

            \path[->]
            (d0l) edge[pl, bend left] node[left, pos=0.3] {$\underline{p}$} (s1)
            (d0l) edge[pl, bend right] node[left] {$1-\underline{p}$} (s2)
            (d0u) edge[pu, bend left] node[above] {$\overline{p}$} (s1)
            (d0u) edge[pu, bend right] node[right] {$1-\overline{p}$} (s2)
            (d1l) edge[pl, bend left] (s3)
            (d1l) edge[pl, bend right] (s4)
            (d1u) edge[pu, bend left] (s3)
            (d1u) edge[pu, bend right] (s4)
            (d2l) edge[pl, bend left] (s5)
            (d2l) edge[pl, bend right] (s6)
            (d2u) edge[pu, bend left] (s5)
            (d2u) edge[pu, bend right] (s6)
            (d4l) edge[pl, bend left] (d2)
            (d4l) edge[pl, bend right] (d3)
            (d4u) edge[pu, bend left] (d2)
            (d4u) edge[pu, bend right] (d3)
            (d5l) edge[pl, bend left] (d4)
            (d5l) edge[pl, bend right] (d5)
            (d5u) edge[pu, bend left] (d4)
            (d5u) edge[pu, bend right] (d5)
            ;

            \path[->]
            (d3l) edge[pl, bend left] node[below, pos=0.2] {$\underline{p}$} (d1)
            (d3l) edge[pl, bend right=40] (s1)
            (d3u) edge[pu, bend left] node[below, pos=0.1] {$\overline{p}$} (d1)
            (d3u) edge[pu, bend right] (s1)
            (d6l) edge[pl, bend right] node[above, pos=0.1] {$\underline{p}$} (d6)
            (d6l) edge[pl, bend left] (s2)
            (d6u) edge[pu, bend right] node[above, pos=0.2] {$\overline{p}$} (d6)
            (d6u) edge[pu, bend left=40] (s2)
            ;

            \path[->]
            (d1) edge[loop right] (d1)
            (d2) edge[loop right] (d2)
            (d3) edge[loop right] (d3)
            (d4) edge[loop right] (d4)
            (d5) edge[loop right] (d5)
            (d6) edge[loop right] (d6);
        \end{tikzpicture}
        }
        \subcaption{Knuth-Yao with biased coins. }
        \label{fig:knuth-yao-biased}
    \end{subfigure}
    \begin{subfigure}[b]{0.49\textwidth}
        \centering
        \scalebox{.7}{
        \begin{tikzpicture}[
            initial text=,initial where=right,node distance=5mm and 25mm,on grid,semithick,>=stealth, every node/.style={scale=0.8}, 
            pl/.style={color=biaslower}, pu/.style={color=biasupper}]
            \node[state] (s0) {$s_0$};
            \node[state, above right=of s0, yshift=12.5mm] (s1) {$s_1$};
            \node[state, below right=of s0, yshift=-12.5mm] (s2) {$s_2$};
            \node[state, above right=of s1] (s3) {$s_3$};
            \node[state, below right=of s1] (s4) {$s_4$};
            \node[state, above right=of s2] (s5) {$s_5$};
            \node[state, below right=of s2] (s6) {$s_6$};

            \node[above right=of s3] (d1) {$\epsdice{1}$};
            \node[above right=of s4] (d2) {$\epsdice{2}$};
            \node[below right=of s4] (d3) {$\epsdice{3}$};
            \node[above right=of s5] (d4) {$\epsdice{4}$};
            \node[below right=of s5] (d5) {$\epsdice{5}$};
            \node[below right=of s6] (d6) {$\epsdice{6}$};

            \node[dist, above right=of s0, pu, xshift=-15mm, yshift=5mm] (d0u) {};
            \node[dist, below right=of s0, pl, xshift=-15mm, yshift=-5mm] (d0l) {};
            
            \node[dist, above right=of s1, pu, xshift=-20mm] (d1u) {};
            \node[dist, below right=of s1, pl, xshift=-20mm] (d1l) {};
            \node[dist, above right=of s2, pu, xshift=-20mm] (d2u) {};
            \node[dist, below right=of s2, pl, xshift=-20mm] (d2l) {};

            \node[dist, above right=of s3, pu, xshift=-20mm, yshift=-2.5mm] (d3u) {};
            \node[dist, below right=of s3, pl, xshift=-20mm, yshift=2.5mm] (d3l) {};
            \node[dist, above right=of s4, pu, xshift=-20mm, yshift=-7.5mm] (d4u) {};
            \node[dist, below right=of s4, pl, xshift=-20mm, yshift=-2.5mm] (d4l) {};
            
            \node[dist, above left=of s5, pu, yshift=2.5mm, xshift=25mm] (d5u) {};
            \node[dist, above right=of s5, pl, yshift=2.5mm, xshift=-25mm] (d5l) {};
            
            \node[dist, above right=of s6, pu, xshift=-20mm, yshift=-2.5mm] (d6u) {};
            \node[dist, below right=of s6, pl, xshift=-20mm, yshift=2.5mm] (d6l) {};

            \path[-]
            (s0) edge[pl] (d0l)
            (s0) edge[pu] (d0u)
            (s1) edge[pl] (d1l)
            (s1) edge[pu] (d1u)
            (s2) edge[pl] (d2l)
            (s2) edge[pu] (d2u)
            (s4) edge[pl] (d4l)
            (s4) edge[pu] (d4u)
            (s5) edge[pl] (d5l)
            (s5) edge[pu] (d5u);

            \path[-]
            (s3) edge[pl] (d3l)
            (s3) edge[pu] (d3u)
            (s6) edge[pl] (d6l)
            (s6) edge[pu] (d6u);

            \path[->]
            (d0l) edge[pl, bend left] node[left, pos=0.3] {$\underline{p}$} (s1)
            (d0l) edge[pl, bend right] node[left] {$1-\underline{p}$} (s2)
            (d0u) edge[pu, bend left] node[above] {$\overline{p}$} (s1)
            (d0u) edge[pu, bend right] node[right] {$1-\overline{p}$} (s2)
            (d1l) edge[pl, bend left] (s3)
            (d1l) edge[pl, bend right] (s4)
            (d1u) edge[pu, bend left] (s3)
            (d1u) edge[pu, bend right] (s4)
            (d2l) edge[pl, bend left] (s5)
            (d2l) edge[pl, bend right] (s6)
            (d2u) edge[pu, bend left] (s5)
            (d2u) edge[pu, bend right] (s6)
            (d3l) edge[pl, bend left] (d1)
            (d3l) edge[pl, bend right] (d2)
            (d3u) edge[pu, bend left] (d1)
            (d3u) edge[pu, bend right] (d2)
            (d4l) edge[pl, bend left] (d3)
            (d4l) edge[pl, bend right] (d4)
            (d4u) edge[pu, bend left] (d3)
            (d4u) edge[pu, bend right] (d4)
            (d5l) edge[pl, bend left=5] node[above, pos=0.1] {$\underline{p}$} (s1) %
            (d5l) edge[pl, bend right=20] (s2) 
            (d5u) edge[pu, bend left=5] node[below] {$\overline{p}$} (s1) %
            (d5u) edge[pu, bend right] (s2) 
            (d6l) edge[pl, bend left] (d5)
            (d6l) edge[pl, bend right] (d6)
            (d6u) edge[pu, bend left] (d5)
            (d6u) edge[pu, bend right] (d6)
            ;

            \path[->]
            (d1) edge[loop right] (d1)
            (d2) edge[loop right] (d2)
            (d3) edge[loop right] (d3)
            (d4) edge[loop right] (d4)
            (d5) edge[loop right] (d5)
            (d6) edge[loop right] (d6);
        \end{tikzpicture}      
        }
        \subcaption{Fast dice roller with biased coins.}
        \label{fig:fast-dice-roller-biased}
    \end{subfigure}
    \end{subfigure}
    \caption{Models for different approaches for simulating a die via repeated coin tosses.
    \ref{fig:knuth-yao-fair},\ref{fig:fast-dice-roller-fair}: using fair coins, all distributions are uniform.
    \ref{fig:knuth-yao-biased},\ref{fig:fast-dice-roller-biased}: using biased coins with probability of outcome 0 in the interval $[\underline{p},\overline{p}]$, the \textcolor{biaslower}{green} action always corresponds to the coin with bias $\underline{p}$ and the \textcolor{biasupper}{blue} action to the coin with bias $\overline{p}$. 
    The probability distribution is depicted fully for $s_0$ and otherwise omitted for readability; we adhere to the following notational convention:
    The `upper' transition always corresponds to outcome 0 (i.e., probability $\underline{p}$ / $\overline{p}$), the `lower' one to outcome 1; if there is no `upper' and `lower' transition the transition probabilities for outcome 0 are indicated.
    }
    \label{fig:dice-simulation}
\end{figure}
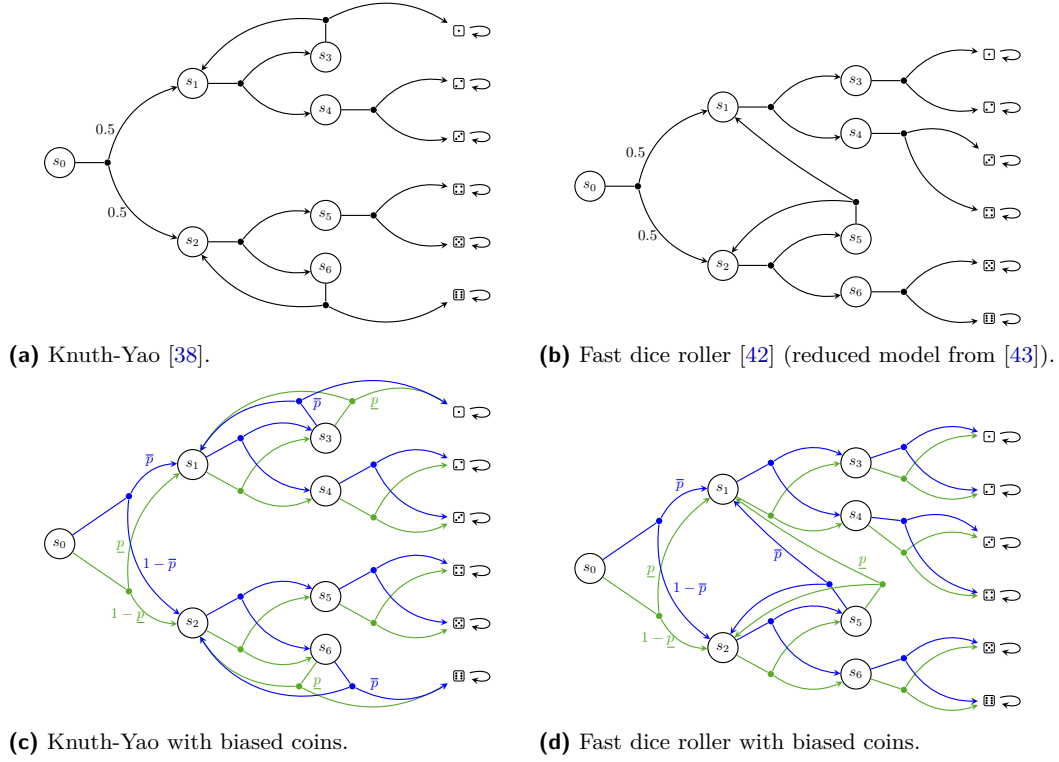

%% file: prelim.tex
\section{Preliminaries: MDPs, End Components, Schedulers}
\label{sec:prelim}

We fix some general notation first:
For a logical statement $P$, we use \emph{Iverson brackets} $\Iverson{P}$ to denote the function evaluating to 1 if $P$ holds and to 0 otherwise.
{\color{cavcolor}
We use $\N$, $\Q$, $\Qnn$, and $\R$ to denote the sets of non-negative integers, rationals, non-negative rationals, and real numbers, respectively. 
For $r,r', \epsilon \in \R$ with $\epsilon \geq 0$, we write $r \approx_\epsilon r'$ iff $|r-r'| \leq \epsilon$, and $r \not\approx_\epsilon r'$ iff $|r - r'| > \epsilon$.
The set $\Distr(V)$ of \emph{probability distributions} over a finite set $V$ contains all functions $\mu \colon V \rightarrow [0,1]$ s.t.\ $ \sum_{v \in V}\mu(v) = 1$.
}%

The remaining definitions in this section closely follow~\cite[Ch.~10]{baierPrinciplesModel2008}.
We are primarily concerned with \emph{Markov decision processes}, which we define formally as follows:

{\color{cavcolor}
\begin{definition}
    \label{def:MDP}
    A \emph{Markov decision process (MDP)} is a triple $\mdp = \mdptup$, where
    $\states$ is a non-empty finite set of \emph{states},
    $\Act$ is a non-empty finite set of \emph{actions}, and
    $\Trans \colon \states \times \Act \times \states \rightarrow [0,1]$ is a \emph{transition 
    probability function} s.t.\ for all $\state \in \states$ the set of its \emph{enabled actions}
	$
		\Act(\state)= \left\{\action\in\Act \mid \sum_{\state' \in \states} \Trans(\state, 
		\action, \state') =1\right\}
	$
	is non-empty and 
	$\sum_{\state' \in \states} \Trans(\state, \action, \state') = 0$
 for all $\action \in \Act \setminus \Act(s)$.
\end{definition}
We use $|\mdp|$ to denote the size of (an explicit encoding of) an MDP $\mdp = \mdptup$.
A state $\state \in \states$ is \emph{absorbing} if $\Trans(\state, \action, \state) = 1$ for all $\action \in \Act(\state)$.
A set of states $S' \subseteq \states$ is absorbing if all $\state \in S'$ are absorbing.
}%
The set of \emph{successor states} of $\state \in \states$ under $\action \in \Act$ is defined as $\textit{Succ}(\state,\action) = \{ \state' \in \states \mid \Trans(\state,\action, \state') >0\}$.

{\color{cavcolor}
An \emph{(infinite) path} of an MDP $\mdp$ is an infinite sequence of states and actions 
$\pi = s_0 \action_0 s_1 \action_1 \ldots$ such that for all $i \geq 0$ we have 
$\Trans(s_i, \action_i, s_{i+1})>0$.
A \emph{finite path} is a finite prefix of an infinite path ending in a state.
We use $\Paths$ (respectively, $\finPaths$) to denote the set of all infinite (respectively, finite) paths of $\mdp$, and for some state $\state \in \states$ we use $\Pathsfrom$ (respectively, $\finPathsfrom$) to denote the set of all infinite (respectively, finite) paths of $\mdp$ starting at $\state$.
For a finite or infinite path $\pi$, we use $\pi(i) := s_i$ to denote the $i^{th}$ state of $\pi$.
For a finite path $\pi = s_0 \action_0 \ldots \action_{n-1} s_n$, we define $\last(\pi) := s_n$,} 
$\lastact(\pi) := \action_{n-1}$, {\color{cavcolor}and $|\pi| := n$.} 
We say that an infinite path $\pi$ \emph{stays in $(S',A) \subseteq \states \times \Act$} iff for all $i$ we have $s_i \in S'$ and $\action_i \in A$, and analogously for finite paths.

\begin{definition}[End component]%
    \label{def:EC}%
    An \emph{end component (EC)} of an MDP $\mdp = \mdptup$ is a tuple $\ec = (S',A) \subseteq \states \times \Act$ such that
    \begin{enumerate}
        \item $S' \neq \emptyset$, $A \neq \emptyset$ and $A \subseteq \bigcup_{s' \in S'} \Act(s')$,
        \item $C$ is closed, i.e., for each $s' \in S'$, $\action \in A$ we have $\textit{Succ}(s',\action) \subseteq S'$, and
        \item $C$ is strongly connected, i.e., for each pair of states $s',s'' \in S'$ there exists a path from $s'$ to $s''$ that stays in $C$.
    \end{enumerate}
    A \emph{maximal end component} (\emph{MEC}) is an EC that is not contained in any other EC.
\end{definition}

We use $\EC(\mdp)$ and $\MEC(\mdp)$ to denote the set of all ECs and MECs, respectively, of an MDP $\mdp$.
Furthermore, we let $\states_{\MEC} = \{ \state \in \states \mid \exists C=(S',A) \in \MEC(\mdp) .\ s \in S' \}$ be the set of states that are contained in some MEC.

{\color{cavcolor}
A \emph{scheduler} resolves the nondeterminism in an MDP.
\begin{definition}[Scheduler]%
    \label{def:MDPsched}%
    A \emph{scheduler} for an MDP $\mdp = \mdptup$ is a function $\sched \colon \finPaths \allowbreak\to\Distr(\Act)$ with $\sched(\pi)(\action) = 0$ for all $\pi \in \finPaths$ and $\action \in \Act \setminus \Act(\last(\pi))$.
\end{definition}

A scheduler is \emph{memoryless}} if for all pairs of finite paths $\pi, \pi' \in \finPaths$ with $\last(\pi) = \last(\pi')$ we have $\sched(\pi) = \sched(\pi')$, and \emph{memoryful} (oder history-dependent) otherwise.
We often identify a memoryless scheduler with a function $\sched \colon \states \to \Distr(\Act)$.
{\color{cavcolor}
A scheduler is \emph{deterministic} if $\sched(\pi)(\action) \in \{0,1\}$ for all $\pi \in \finPaths$ and all $\alpha\in\Act$, and \emph{randomized} otherwise.
The set of all \emph{general} (i.e., history-dependent randomized) schedulers for an MDP $\mdp$ is denoted by $\Scheds$, the set of all memoryless randomized (MR) schedulers by $\SchedsMR$ and the set of all memoryless deterministic (MD) schedulers by $\SchedsMD$.
}
%

{\color{cavcolor}
Applying a scheduler to an MDP induces a 
\emph{discrete-time Markov chain (DTMC)}, which is an MDP where the set of actions is a singleton. 
We usually omit the actions and define a DTMC as a tuple $\dtmc = \dtmctup$.
}%

For an MDP $\mdp$, a scheduler $\sched$, and a state $\state$, we write $\Pr^{\mdp, \sched}_{\state}$ for the probability measure over infinite paths of the DTMC induced by $\sched$ on $\mdp$, assuming initial state $\state$ (see~\cite[Ch.~10]{baierPrinciplesModel2008} for details).
When the MDP is clear from the context, we simply write $\Pr^\sched_\state$.
{\color{cavcolor}
For example, for a target set $T \subseteq \states$, we use $\Pr^\sched_\state(\Finally T)$ to denote the \emph{reachability probability} of $T$ from $\state$ under $\sched$,
}%
$\Pr^\sched_\state(\Globally T)$ to denote the probability of staying in $T$ forever (\emph{safety}),
$\Pr^\sched_\state(\Globally \Finally T)$ to denote the probability of visiting $T$ infinitely often (\emph{B\"uchi}), and
$\Pr^\sched_\state(\Finally \Globally T)$ to staying in $T$ forever from some point on (\emph{coB\"uchi}).

%% file: problem-statement.tex
\section{Problem Statement: Checking MDPs Against Relational Properties}
\label{sec:problem-statement}

In this section, we formally introduce \emph{relational properties}, which constitute a subclass of more general probabilistic hyperproperties~\cite{abrahamProbabilisticHyperproperties2020,dimitrovaProbabilisticHyperproperties2020} (see \cref{sec:related-work} for a detailed comparison).
We begin by defining the syntax and semantics of relational properties, and then formally state the model-checking problem addressed in this paper.

\subparagraph*{Syntax.}
Let $\AP$ be a set of \emph{atomic propositions}, $\IL$ a set of \emph{initial state labels}, and $\VarSched$ a set of \emph{scheduler variables}.
A \emph{relational formula} $\phi$ adheres to the following grammar:
\begin{align*}
    &\textit{Propositional formula:} & \Psi & ~\Coloneqq~ \ap ~\mid~ \neg \Psi ~\mid~ \Psi \wedge \Psi ~\mid~ \Psi \vee \Psi \\
    &\textit{Path formula:} & \psi & ~\Coloneqq~ \Finally \Psi ~\mid~ \Globally \Psi ~\mid~ \Globally\Finally \Psi ~\mid~ \Finally\Globally \Psi \\
    &\textit{Expression:} & E & ~\Coloneqq~ \Prob^{\varsched}_{l}(\psi) ~\mid~ E + E ~\mid~ q \cdot E ~\mid~ q \\ 
    &\textit{Unquantified relational formula:} & \Phi & ~\Coloneqq~ E \comp E ~\mid~ \Phi \wedge \Phi  \\
    &\textit{Relational formula:} & \phi & ~\Coloneqq~ \exists \varsched .\ \phi ~\mid~ \Phi 
\end{align*}
where 
$\ap \in \AP$,
$l \in \IL$, 
$\varsched \in \VarSched$, 
$q \in \Q$, and
$\comp \,\in \{ >, \geq, \approx_\epsilon, \neq,\not\approx_\epsilon, \leq, < \mid \epsilon \in \Qnn\}$.
A relational formula is \emph{closed} if every occurrence of a scheduler variable $\varsched$ is in the scope of a quantifier $\exists \varsched$.

\begin{remark}[On quantifiers in relational formulas]
    In relational formulas, we allow purely existential quantifier prefixes only; in particular, we do not allow quantifier alternations.
    However, formulas with a universal quantifier prefix (without alternations; e.g., \eqref{eq:vonNeumannExampleProperty} on page~\pageref{eq:vonNeumannExampleProperty}) are readily reducible to the existential form via negation.
    Importantly, even though universally quantified formulas appear in several examples throughout the paper, we reserve the term \emph{relational formula} for the existential variant as defined above.
    This distinction is relevant for our complexity results (e.g., an $\NP$-complete problem for existential formulas becomes $\coNP$-complete for universally quantified formulas).
\end{remark}

\subparagraph*{Semantics.}
A relational formula is evaluated over an MDP together with a function labeling each state with a set of atomic propositions as well as a mapping from initial state labels to states.
Let $\mdp$ be an MDP, $L \colon \states \to 2^{\AP}$ a state labeling function, and $i \colon \IL \to \states$ a mapping of initial state labels.
$\mdp$ with $L$ and $i$ satisfies a closed relational formula $\phi = \exists \varsched_1 \ldots \exists \varsched_\numsched .\ \Phi$, written $(\mdp, L, i) \models \phi$, iff there exists some assignment of the scheduler variables $\mathcal{I} = (\varsched_i \mapsto \sched_i)_{i=1,\ldots,\numsched}$ such that $\phi[\mathcal{I}]$ holds, where $\phi[\mathcal{I}]$ corresponds to $\phi$ with the scheduler variables instantiated according to $\mathcal{I}$ and the probability, temporal, Boolean, arithmetic and comparison operators are interpreted in the usual way, see, e.g.,~\cite[Ch.~5.1.2 and Ch.~10.2]{baierPrinciplesModel2008}.

\subparagraph*{Model-checking problem.}
In this paper, we address the following problem.
\begin{framedproblem}
    \label{prob:relational}
    Given an MDP $\mdp = \mdptup$ with a state labeling function $L \colon \states \to 2^{\AP}$ and a mapping of initial state labels $i \colon \IL \to \states$,
    as well as a closed relational formula $\phi$, 
    decide whether $(\mdp, \textit{L}, \textit{i}) \models \phi$.
\end{framedproblem}

In the remainder, we assume relational formulas to be given in some `normal form', and consider their `instantiation' w.r.t.\ a given MDP, i.e., we quantify directly over schedulers for a given MDP and instantiate the initial state labels and atomic propositions with states and sets of states, respectively. 

\begin{lemma}[Normal form for relational properties]
    \label{le:normal-form}
    Given an MDP $\mdp = \mdptup$ with a state labeling function $L \colon \states \to 2^{\AP}$ and a mapping of initial state labels $i \colon \IL \to \states$, as well as a closed relational formula $\phi$, there exist
    \begin{itemize}
        \item natural numbers $\numconj,\numsum,\numsched$,
        \item rational coefficients $q_{1,1}, \ldots, q_{\numsum,\numconj}$,
        \item rational bounds $q_1, \ldots, q_\numconj$,
        \item (not necessarily distinct) initial states $\state_{1,1}, \ldots, \state_{\numsum,\numconj}$,
        \item a set of indices $\{k_{1,1}, k_{1,2} \ldots, k_{\numsum-1,\numconj}, k_{\numsum,\numconj}\} = \{1, \ldots, \numsched\}$,
        \item temporal operators $\heartsuit_{1,1}, \ldots, \heartsuit_{\numsum, \numconj} \in \{ \Finally, \Globally\Finally\}$,
        \item (not necessarily distinct) target sets $T_{1,1}, \ldots, T_{\numsum,\numconj}$, and 
        \item comparison operators $\comp_1, \ldots, \comp_\numconj \;\in\! \{>, \geq, \approx_{\epsilon}, \not\approx_{\epsilon} \mid \epsilon \in \Q_{\geq 0} \}$,
    \end{itemize}
    such that $
    \displaystyle \quad 
    (\mdp, \textit{L}, \textit{i}) \models \phi
    \; \mathrel{\text{iff}} \;
    \exists \sched_1, \ldots, \sched_\numsched \in \Scheds .\
        \bigwedge_{j=1}^{\numconj} \sum_{i=1}^{\numsum} q_{i,j} 
        \Pr^{\sched_{k_{i,j}}}_{\state_{i,j}}(\mathrel{\heartsuit_{i,j}} T_{i,j}) \comp_j q_j
        ~.
    $
\end{lemma}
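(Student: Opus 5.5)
We prove the statement by a sequence of semantics-preserving syntactic rewritings of $\phi$, performed with respect to the fixed $(\mdp,L,i)$.

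\textbf{Quantifier prefix and atoms.} Since $\phi$ is closed and only existential quantifiers are allowed, we strip the prefix $\exists\varsched_1\ldots\exists\varsched_\numsched$. We rename variables so that the scheduler variables occurring in the matrix $\Phi$ are exactly $\varsched_1,\ldots,\varsched_\numsched$. Vacuous quantifiers can be dropped, because $\Scheds$ is non-empty. Every subterm $\Prob^{\varsched}_l(\psi)$ is instantiated with initial state $i(l)$. Every propositional formula $\Psi$ is replaced by the target set $\{s\in\states \mid s \models_L \Psi\}$, computed in the standard way. This yields an equisatisfiable statement over concrete states and target sets.

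\textbf{Temporal operators.} The operators $\Globally$ and $\Finally\Globally$ are removed by complementation:
\[
\Pr^\sched_s(\Globally T)=1-\Pr^\sched_s(\Finally(\states\setminus T)), \qquad \Pr^\sched_s(\Finally\Globally T)=1-\Pr^\sched_s(\Globally\Finally(\states\setminus T)).
\]
Both identities hold for every scheduler, since the corresponding events are complementary on infinite paths. We substitute accordingly and obtain expressions that are constant-plus-linear in probabilities with $\heartsuit\in\{\Finally,\Globally\Finally\}$.

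\textbf{Comparisons and arithmetic.} Each atom $E\comp E'$ is rewritten as $E-E'\comp 0$. The expression $E-E'$ is then normalized by distributing scalar multiplication over sums, collecting rational constants, and moving the constant to the right-hand side as $q_j$. This gives the form $\sum_i q_{i,j}\Pr(\cdot)\comp_j q_j$, where we pad with zero coefficients so that every conjunct has the same number $\numsum$ of summands. Since every scheduler variable must occur somewhere, the index set $\{k_{i,j}\}$ equals $\{1,\ldots,\numsched\}$.

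The comparison operators are reduced to the allowed set as follows.
\begin{itemize}
\item $<$ and $\leq$ are replaced by $>$ and $\geq$ after multiplying both sides by $-1$.
\item Since $E-E'\approx_\epsilon 0$ is equivalent to $|\sum q_{i,j}\Pr - q_j|\le\epsilon$, the operator $\approx_\epsilon$ is kept, with the constant absorbed as the right-hand side $q_j$. The relation $\approx_\epsilon$ compares against $q_j$ rather than $0$, so this is fine.
\item $\neq$ is exactly $\not\approx_0$, since $|r-r'|>0$ iff $r\ne r'$.
\end{itemize}
Conjunctions of atoms are flattened into $\bigwedge_{j=1}^{\numconj}$.

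\textbf{Main obstacle.} The only delicate point is correctness bookkeeping, not any single step. We must check that each rewriting preserves the truth value under every scheduler assignment $\mathcal I$. The complement identities hold pointwise per $\mathcal I$, and so does the arithmetic normalization. The existential quantification is therefore unaffected, which gives the claimed equivalence. A secondary subtlety is padding: a summand can be filled with coefficient $0$ and an arbitrary state, target set, and operator, whose probability term then contributes nothing.
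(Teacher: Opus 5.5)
Your proof is correct and takes the same route as the paper's sketch: complementation to turn $\Globally$ and $\Finally\Globally$ into $\Finally$ and $\Globally\Finally$, multiplying by $-1$ for $<$ and $\leq$, and subtracting the right-hand side so that the signed coefficients absorb it. Beyond that sketch, you also spell out steps the paper leaves implicit: instantiating labels and propositional formulas, dropping vacuous quantifiers, reading $\neq$ as $\not\approx_0$, and zero-padding the summands. All of these are sound.
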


\begin{proof}
    Safety objectives can be reduced to reachability objectives, and coB\"uchi-objectives to B\"uchi objectives.    
    Properties with $\comp\,\in\!\!\{<,\leq\}$ can be transformed to the above form by multiplying coefficients with $-1$.
    Properties like $\exists \sched .\ \Pr^\sched_s(\Finally T) = \Pr^\sched_s(\Finally T)$ can be brought into the desired form by subtracting the right-hand-side on both sides of the equality, since we allow positive and negative coefficients.
\end{proof}

In this paper, we focus on the following three increasingly complex fragments of relational properties.
We start by analyzing \emph{relational reachability} formulas~(\cref{sec:reach}), 
i.e., properties of the form
\begin{align*} 
    \exists \sched_1, \ldots, \sched_\numsched \in \Scheds .\
    \sum_{i=1}^{\numsum} q_{i} 
    \Pr^{\sched_{k_{i}}}_{\state_{i}}(\Finally T_{i}) \comp q
    ~,
    \tag{\RelReach}
\end{align*}
which corresponds to the normal form of \cref{le:normal-form} with $\numconj = 1$ and $\heartsuit_{i,1} = \Finally$ for all $i=1,\ldots,\numsum$.
Afterwards, we extend this towards \emph{relational B\"uchi} formulas~(\cref{sec:buechi}), i.e.,
\[
    \exists \sched_1, \ldots, \sched_\numsched \in \Scheds .\
        \sum_{i=1}^{\numsum} q_{i} 
        \Pr^{\sched_{k_{i}}}_{\state_{i}}(\Globally \Finally T_{i}) \comp q
    ~,
    \tag{\RelBuechi}
\]
and finally to \emph{multi-objective relational reachability} formulas~(\cref{sec:conjunction}), i.e.,
\[
    \exists \sched_1, \ldots, \sched_\numsched \in \Scheds .\
        \bigwedge_{j=1}^{\numconj} \sum_{i=1}^{\numsum} q_{i,j} 
        \Pr^{\sched_{k_{i,j}}}_{\state_{i,j}}(\Finally T_{i,j}) \comp_j q_j
    ~.
    \tag{\ConjRelReach}
\]

Note that, by \Cref{le:normal-form}, \ConjRelReach already covers the full class of relational formulas restricted to reachability objectives.
Together with our results on \RelBuechi, a further extension of \ConjRelReach to mixtures of reachability and Büchi---and thus to general relational formulas as defined at the beginning of this section---is rather straightforward, but somewhat technical; we thus omit this case.

\input{interval}

%% file: interval.tex
\begin{remark}[Excursus: Interval DTMCs]
\label{sec:interval}
\label{rem:interval}

The MDPs from \cref{ex:conj_motivating-ex}, depicted in~\cref{fig:dice-simulation}~(p.~\pageref{fig:dice-simulation}), can be viewed as \emph{interval DTMCs}~\cite{jonssonSpecificationRefinement1991}, leading us to the question whether we can lift our results for relational properties on MDPs to relational properties on interval DTMCs.
Given some interval DTMC $\mathcal{I}$, we can create a corresponding MDP $\mdp_{\mathcal{I}}$ with the same state set as follows (for details on this reduction see, e.g.,~\cite{senModelCheckingMarkov2006}).
For every state $\state$ of $\mathcal{I}$, find the `basic feasible solutions' $\mu^{\state}_1, \ldots, \mu^{\state}_{n_\state} \in \Distr(\states)$ of the set of successor functions from $\state$, i.e., functions $\mu^{\state}_1, \ldots, \mu^{\state}_{n_\state}$ such that every possible successor function for $\state$ can be represented as a convex combination of $\mu^{\state}_1, \ldots, \mu^{\state}_{n_\state}$.
Then, we associate $\state$ in $\mdp_{\mathcal{I}}$ with actions $\action_1, \ldots, \action_{n_{\state}}$ and let $\Trans(\state, \action_i, \state') = \mu^{\state}_i(\state')$.
In general, this reduction introduces an exponential blow-up: For each state $\state$ in $\mdp_{\mathcal{I}}$, the number of actions $n_{\state}$ is bounded exponentially in its number of successors in $\mathcal{I}$.

Intuitively, \emph{dynamic semantics} for the interval DTMC correspond to memoryful schedulers for the constructed MDP, and \emph{static semantics} to memoryless ones (see, e.g.,~\cite{chenComplexityModel2013} for details on the different semantics).
For example, for any interval DTMC $\mathcal{I}$ and the corresponding MDP $\mdp_{\mathcal{I}}$, 
following the notation from~\cite{chenComplexityModel2013}, we have:
\begin{itemize}
    \item Under static semantics, an interval DTMC $\mathcal{I}$ is interpreted as an infinite set of DTMCs $[\mathcal{I}]$. 
    Then,
    there exists some instantiation $\dtmc \in [\mathcal{I}]$ s.t.\ 
    $\sum_i q_i \Pr^{\dtmc}_{\state_i}(\Finally T_i) \comp q$
    iff 
    there exists some \emph{memoryless} scheduler $\sched \in \SchedsMR[{\mdp_{\mathcal{I}}}]$ s.t.\ $\sum_i q_i \Pr^{\mdp_{\mathcal{I}}, \sched}_{\state_i}(\Finally T_i) \comp q$.

    \item Under dynamic semantics, an interval DTMC $\mathcal{I}$ is interpreted as an \emph{interval MDP} $\lceil \mathcal{I} \rceil$, which is defined like an MDP, but states might have infinitely many successors. 
    Then,
    there exists some scheduler $\tau \in \Scheds[\lceil \mathcal{I} \rceil]$ s.t.\ 
    $\sum_i q_i \Pr^{\lceil \mathcal{I} \rceil, \tau}_{\state_i}(\Finally T_i) \comp q$
    iff
    there exists some (memoryful) scheduler $\sched \in \Scheds[{\mdp_{\mathcal{I}}}]$ s.t.\ $\sum_i q_i \Pr^{\mdp_{\mathcal{I}}, \sched}_{\state_i}(\Finally T_i) \comp q$.
\end{itemize}

Hence, we can solve relational properties on interval DTMCs under dynamic semantics by applying our techniques for relational properties on MDPs.
Note that we can \emph{not} transfer our complexity results since the transformation from an interval DTMC to an MDP outlined above introduces an exponential blowup.

\end{remark}

%% file: reach.tex
\section{Relational Reachability}
\label{sec:reach}

An earlier version of this section appeared in \cite{gerlachEfficientProbabilistic2025}.
We consider the following problem:

{\color{cavcolor}
\begin{framedproblem}[\RelReach]
    \label{prob:relreach}
    Given an MDP $\mdp = \mdptup$, decide whether
    \[
        \exists \sched_1, \ldots, \sched_\numsched \in \Scheds .~
        \sum_{i=1}^{\numsum} q_i \cdot \Pr^{\sched_{k_i}}_{\state_{i}}(\Finally T_i) ~\comp~ q
        ~,
        \qquad\text{where}
    \]
    
    \begin{itemize}
        \item $\numsum,\numsched$ are natural numbers,
        \item $q_1, \ldots, q_{\numsum}$ are rational coefficients,
        \item $q$ is a rational bound,
        \item $\state_1, \ldots, \state_\numsum \in \states$ are (not necessarily distinct) initial states,
        \item $\{k_1, \ldots, k_\numsum\} = \{1, \ldots, \numsched\}$ is a set of indices,
        \item $T_1, \ldots, T_\numsum \subseteq \states$ are (not necessarily distinct) target sets, and 
        \item $\comp~ \in \{ >, \geq, \approx_\epsilon, \not\approx_\epsilon 
        \mid \epsilon \in \Qnn\}$ is a comparison operator.
    \end{itemize}
\end{framedproblem}
}

Recall that the motivating example from \cref{ex:reach_motivating-ex} is (the negation of) a relational reachability property.
{\color{cavcolor}
Below, we provide some further examples properties:
\begin{itemize}
    \item \emph{The probability of reaching $T$ from $s$ is \enquote{approximately scheduler-independent}:}
    \[\forall \sigma_1 \forall \sigma_2.~\Pr_s^{\sigma_1}(\Finally T) \approx_\epsilon \Pr_s^{\sigma_2}(\Finally T) ~.\]
    \item \emph{The probability to reach $T$ from $s_1$ is at least twice the probability of reaching $T$ from $s_2$, no matter the scheduler:}
        \[
            \forall \sigma .~ \Pr_{s_1}^{\sigma}(\Finally T) \geq 2\cdot \Pr_{s_2}^{\sigma}(\Finally T)~.
        \]
    \item \emph{There exists a scheduler reaching $T$ from $s_1$ with probability at least 10\% higher than reaching $T$ from $s_2$:}
        \[
            \exists \sigma .~ \Pr_{s_1}^{\sigma}(\Finally T) \geq \Pr_{s_2}^{\sigma}(\Finally T) + 0.1~.
        \]
    \item \emph{There is a scheduler that, \emph{in expectation}, visits more (different) targets from $\{T_1,\ldots,T_k\}$ than from $\{U_1,\ldots,U_\ell\}$:}
    \[\exists \sigma.~\Pr_s^\sigma(\Finally T_1) + \ldots + \Pr_s^\sigma(\Finally T_k) > \Pr_s^\sigma(\Finally U_1) + \ldots + \Pr_s^\sigma(\Finally U_\ell)~.\]
\end{itemize}
}

\subparagraph*{Outline of this section.}
In \cref{sec:reach_algo}, we show how to efficiently solve relational reachability properties by reducing them to the computation of expected rewards on the \emph{goal unfolding} of the MDP.
In~\cref{sec:reach_complexity}, we study the complexity of \RelReach under both general and MD schedulers.
Under general schedulers, the problem is in \EXPTIME but fixed-parameter tractable; under MD schedulers it is strongly \NP-hard. 
We identify further fragments where our algorithm runs in polynomial time and/or returns MD schedulers.
\cref{tab:complexity_overview_m=2} gives an overview on the complexity of selected fragments that compare two reachability probabilities, illustrating the border between \PTIME and strong \NP-hardness for MD schedulers.

\begin{table}[t]
    \caption{{\color{cavcolor}
        Complexity of selected classes of simple relational reachability properties over MD schedulers, where $\epsilon > 0$ and $\diseq~ \in \{ \geq, >, \not\approx_{\epsilon'} \mid \epsilon' \in \Qnn \}$.
        Over general schedulers, all variants considered here are in \PTIME (\Cref{th:general_PTIME}).
        }
    }
    \label{tab:complexity_overview_m=2}
    \setlength\tabcolsep{0pt}
    \centering
    \begin{tabular*}{\linewidth}{@{\extracolsep{\fill}} l  l }
        \toprule
         \bf Property class & \bf  Complexity over MD schedulers 
         \\
        \midrule
         $\exists \sched .\ \Pr^{\sched}_{\state}(\Finally T_1) =
       \Pr^{\sched}_{\state}(\Finally T_2)$
        & strongly \NP-complete [Th.~\ref{th:MD_NP-complete}\ref{item:1sched1state}] 
        \\
        $\exists \sched .\ \Pr^{\sched}_{\state}(\Finally T_1) \approx_\epsilon 
       \Pr^{\sched}_{\state}(\Finally T_2)$
        & \NP-complete [Th.~\ref{th:MD_NP-complete}\ref{item:1sched1state}] 
        \\
        $\exists \sched .\ \Pr^{\sched}_{\state}(\Finally T_1) \diseq \Pr^{\sched}_{\state}(\Finally T_2)$
        & in \NP [Th.~\ref{th:MD_NP}]; \PTIME if $T_1, T_2$ absorb. [Th.~\ref{th:MD_PTIME}\ref{item:MD_n-comb}]
        \\
        \midrule 
        $\exists \sched .\ \Pr^{\sched}_{\state_1}(\Finally T_1) = \Pr^{\sched}_{\state_2}(\Finally T_2)$
        & strongly \NP-complete [Th.~\ref{th:MD_NP-complete}\ref{item:1sched2state}]
        \\
        $\exists \sched .\ \Pr^{\sched}_{\state_1}(\Finally T_1) \approx_\epsilon \Pr^{\sched}_{\state_2}(\Finally T_2)$
        & \NP-complete [Th.~\ref{th:MD_NP-complete}\ref{item:1sched2state}]
        \\
        $\exists \sched .\ \Pr^{\sched}_{\state_1}(\Finally T_1) \diseq \Pr^{\sched}_{\state_2}(\Finally T_2)$
        & in \NP [Th.~\ref{th:MD_NP}]
        \\
        \midrule 
        $\exists \sched_1, \sched_2 .\ \Pr^{\sched_1}_{\state_1}(\Finally T_1) = \Pr^{\sched_2}_{\state_2}(\Finally T_2)$
        & strongly \NP-complete [Th.~\ref{th:MD_NP-complete}\ref{item:2sched2state}]
        \\
        $\exists \sched_1, \sched_2 .\ \Pr^{\sched_1}_{\state_1}(\Finally T_1) \approx_\epsilon \Pr^{\sched_2}_{\state_2}(\Finally T_2)$
        & \NP-complete [Th.~\ref{th:MD_NP-complete}\ref{item:2sched2state}]
        \\
        $\exists \sched_1, \sched_2 .\ \Pr^{\sched_1}_{\state_1}(\Finally T_1) \diseq \Pr^{\sched_1}_{\state_2}(\Finally T_2)$
        & \PTIME [Th.~\ref{th:MD_PTIME}\ref{item:MD_independent}]
        \\
        \bottomrule
    \end{tabular*}
\end{table}

%% file: reach_algo.tex
\subsection{Verifying Relational Reachability Properties}
\label{sec:reach_algo}

{\color{cavcolor}
Assume an arbitrary MDP $\mdp = \mdptup$ and a \RelReach property 
\begin{align}
    \exists \sched_1, \ldots, \sched_n \in \Scheds .~ \sum_{i=1}^{m} q_i \cdot \Pr^{\sched_{k_i}}_{\state_{i}}(\Finally T_i) ~\comp~ q 
    \label{eq:genRelReachProp}
\end{align}
as in~\cref{prob:relreach}.
In the following, we outline a four-step procedure that checks whether property (\ref{eq:genRelReachProp}) holds and, if yes, constructs (possibly memoryful randomized) witness schedulers.
The procedure is summarized in \cref{alg:linear_general_simplified} for the comparison relation $\approx_\epsilon$. 

\begin{example}
    \label{ex:runningExIntro}
    The MDP in \Cref{fig:runningEx} (left) together with the property
    \[
        \exists \sched .~~ \Pr^{\sched}_{s_1}(\Finally T) -\nicefrac{1}{2} \cdot \Pr^{\sched}_{s_1}(\Finally T') - \nicefrac{1}{2} \cdot \Pr^{\sched}_{s_2}(\Finally T') ~\approx_\epsilon~ 0
        ~,
    \]
    (\emph{Does there exist a scheduler such that the probability of reaching $T$ from $s_1$ is approximately equal to the mean of the probabilities of reaching $T'$ from $s_1$ and $T'$ from $s_2$?}) will serve as a running example throughout the section.
\end{example}

\subparagraph*{Step 1: Collect combinations of initial states and schedulers.}
\label{step1}
We start by analyzing the relationship of schedulers and states in the property. 

\begin{definition}[State-scheduler combinations]
    \label{def:comb}
    We define 
    $\comb = \{ (s_i, \sched_{k_i}) \mid i=1, \ldots, m\}$, the set of all different combinations of initial states and schedulers that occur in the property~\eqref{eq:genRelReachProp}.
    Furthermore, for every $c = (s, \sched) \in \comb$, we define $\relInd(c)$ as the set of indices $i$ such that $(s_i, \sched_{k_i}) = c$ and $\indexc[\mathcal{T}] = \{T_i \mid i \in \relInd(c)\}$.
\end{definition}

\begin{example}
    In the property from \Cref{ex:runningExIntro} we have $\comb = \{c_1=(s_1,\sched), c_2=(s_2,\sched)\}$.
    Furthermore, $\relInd(c_1) = \{1,2\}$, $\relInd(c_2) = \{3\}$, $\mathcal{T}_{c_1}=\{T,T'\}$, and $\mathcal{T}_{c_2}=\{T'\}$. 
\end{example}

Notice that $n \leq |\comb| \leq m$.
State-scheduler combinations allow introducing fresh scheduler variables, one per combination:
\begin{restatable}{lemma}{whyComb}
    \label{thm:whyComb}
    Let $\comb = \{c_1,\ldots,c_{k}\}$.
    Then property \eqref{eq:genRelReachProp} is equivalent to
    \[
        \exists \sched_{c_1},\ldots,\sched_{c_k}.~ \sum\nolimits_{i=1}^{k}  \Big[ \sum\nolimits_{j \in \relInd(c_i)} q_j \cdot \Pr_{s_j}^{\sched_{c_i}}(\Finally T_j) \Big]
        ~\comp~ q
        ~.
    \]
\end{restatable}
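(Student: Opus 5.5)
The proof shows both implications directly, by translating between scheduler assignments. The one fact it relies on is that $\Pr^{\sched}_{s}(\Finally T)$ depends only on how $\sched$ behaves on finite paths starting in $s$.

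\emph{From the original property to the combination form.} Suppose $\sched_1,\ldots,\sched_n$ witness \eqref{eq:genRelReachProp}. For each combination $c_i = (s,\sched_{k})\in\comb$, define $\sched_{c_i} := \sched_k$. By \cref{def:comb}, every index $j$ lies in exactly one $\relInd(c_i)$, namely the one with $(s_j,\sched_{k_j}) = c_i$. The sums $\sum_i\sum_{j\in\relInd(c_i)}$ therefore just regroup the terms of $\sum_{j=1}^m$. Each term $q_j\Pr^{\sched_{c_i}}_{s_j}(\Finally T_j)$ equals $q_j\Pr^{\sched_{k_j}}_{s_j}(\Finally T_j)$, so both sides take the same value and the comparison $\comp\, q$ is preserved.

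\emph{From the combination form to the original property.} Suppose $\sched_{c_1},\ldots,\sched_{c_k}$ witness the regrouped property. Several combinations may share the same original scheduler variable $\sched_\ell$ but have different initial states. We glue them into one scheduler.

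Fix $\ell\in\{1,\ldots,n\}$. Since the combinations are pairs with distinct states whenever the scheduler is the same, for each state $s$ there is at most one combination $c=(s,\sched_\ell)$. Define $\sched_\ell(\pi) := \sched_{c}(\pi)$ if $\pi(0)=s$ and $c=(s,\sched_\ell)\in\comb$. On paths starting in any other state, let $\sched_\ell$ choose some arbitrary enabled action.

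This is a well-defined scheduler, because every finite path has a unique first state. Moreover, $\Pr^{\sched_\ell}_{s}$ coincides with $\Pr^{\sched_c}_{s}$: the induced probability measure on paths from $s$ is determined by the scheduler's values on $\finPathsfrom[\mdp][s]$ only (cylinder sets from $s$ use only those values). Hence for every index $j$ with $(s_j,\sched_{k_j}) = c_i$ we get $\Pr^{\sched_{k_j}}_{s_j}(\Finally T_j) = \Pr^{\sched_{c_i}}_{s_j}(\Finally T_j)$. After the same regrouping as before, the two sums are equal.

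The only step needing care is the gluing in the second direction. There one must justify that the combinations belonging to a single $\sched_\ell$ have pairwise distinct initial states, and that the measure $\Pr^{\sched}_s$ depends only on $\sched$ restricted to paths from $s$. The latter follows from the standard cylinder-set construction of the induced Markov chain~\cite[Ch.~10]{baierPrinciplesModel2008}.
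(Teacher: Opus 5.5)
Your proof is correct and uses the same idea as the paper: combinations that share a scheduler variable must have distinct initial states, so a single memoryful scheduler can be glued together by branching on the first state of the path, and this preserves each reachability probability. Your write-up is more complete than the paper's sketch. It spells out the easy direction, the regrouping of the sum via the partition $\{\relInd(c)\}_{c\in\comb}$ of the indices, and why $\Pr^{\sched}_s$ depends only on $\sched$ restricted to paths starting in $s$.
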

\begin{proof}[Proof (Sketch)]
    Quantifying over each state-scheduler combination individually is justified because schedulers may use memory and thus remember the initial state, see \cref{app:whyComb} for details. 
\end{proof}

\begin{example}
    Applying \Cref{thm:whyComb} to the property from \Cref{ex:runningExIntro} yields the following equivalent property (over general, memoryful schedulers):
    \[
        \exists \sched_{c_1}, \sched_{c_2} .~
        \underbrace{\left[1 \cdot \Pr^{\sched_{c_1}}_{s_1}(\Finally T) -\tfrac{1}{2} \cdot \Pr^{\sched_{c_1}}_{s_1}(\Finally T')\right]}_{\text{combination } c_1 \,=\, (s_1,\textcolor{gray}{\sched})}
        +
        \underbrace{\left[-\tfrac{1}{2} \cdot \Pr^{\sched_{c_2}}_{s_2}(\Finally T')\right]}_{\text{combination } c_2 \,=\, (s_2,\textcolor{gray}{\sched})}
        ~\approx_\epsilon~
        0
    ~.
    \]
\end{example}
}

\begin{figure}[t]
    \centering
    \begin{minipage}[t]{0.34\textwidth}
    \scalebox{.7}{
        \begin{tikzpicture}[on grid,node distance=15mm and 20mm,semithick,>=stealth]
            \node[state] (s1) {$s_1$};
            \node[state,below=of s1] (s2) {$s_2$};
            \node[state,accepting,right=of s1,label=90:{$T'\,{=}\,\{t_2\}$}] (t2) {$t_2$};
            \node[dist,above =of s1] (a) {};
            \node[state,accepting,left=of s1,label=90:{$T\,{=}\,\{t_1\}$}] (t1) {$t_1$};
            \draw[->] (s1) edge node[left] {$\beta$} (s2);
            \draw[-] (s1) edge[bend left=15] node[left] {$\alpha$} (a);
            \draw[->] (s2) edge node[below] {$\alpha$} (t2);
            \draw[->] (s2) edge[loop below] node[left] {$\beta$} (s2);
            \draw[->] (t2) edge[loop below] (t2);
            \draw[->] (t1) edge (s1);
            \draw[->] (a) edge[bend left=15] node[right] {$\tfrac 1 3$} (s1);
            \draw[->] (a) edge node[above] {$\tfrac 1 3$}(t1);
            \draw[->] (a) edge node[above] {$\tfrac 1 3$} (t2);
        \end{tikzpicture}
        }
    \end{minipage}
    \hfill
    \begin{minipage}[t]{0.65\textwidth}
    \scalebox{.7}{
        \begin{tikzpicture}[on grid,node distance=15mm and 20mm,semithick,>=stealth]
            \node[staterectangle] (s1) {$s_1,\emptyset$};
            \node[staterectangle,below=of s1] (s2) {$s_2,\emptyset$};
            \node[staterectangle,accepting,right=of s1,label={90:\textcolor{red}{$-\frac 1 2$}}] (t2) {$t_2,\emptyset$};
            \node[dist,above =of s1] (a) {};
            \node[staterectangle,accepting,left=of s1,label={-90:\textcolor{red}{$+1$}}] (t1) {$t_1,\emptyset$};
            \node[staterectangle,below=of t2] (t2new) {$t_2,\{T'\}$};
            
            \node[staterectangle,right=65mm of s1] (s1') {$s_1,\{T\}$};
            \node[staterectangle,below=of s1'] (s2') {$s_2,\{T\}$};
            \node[staterectangle,accepting,right=25mm of s1',label={90:\textcolor{red}{$-\frac 1 2$}}] (t2') {$t_2,\{T\}$};
            \node[dist,above =of s1'] (a') {};
            \node[staterectangle,left=25mm of s1'] (t1') {$t_1,\{T\}$};
            \node[staterectangle,below=of t2'] (t2new') {$t_2,\{T,T'\}$};
            
            \draw[->] (s1) edge node[left] {$\beta$} (s2);
            \draw[-] (s1) edge[bend left=15] node[left] {$\alpha$} (a);
            \draw[->] (s2) edge node[below] {$\alpha$} (t2);
            \draw[->] (s2) edge[loop below] node[left] {$\beta$} (s2);
            \draw[->] (t2) edge (t2new);
            \draw[->] (t2new) edge[loop below] (t2new);
            \draw[->] (t1) edge[out=75,in=155] (s1');
            \draw[->] (a) edge[bend left=15] node[right] {$\tfrac 1 3$} (s1);
            \draw[->] (a) edge node[above] {$\tfrac 1 3$}(t1);
            \draw[->] (a) edge node[above] {$\tfrac 1 3$} (t2);
            
            \draw[->] (s1') edge node[left] {$\beta$} (s2');
            \draw[-] (s1') edge[bend left=15] node[left] {$\alpha$} (a');
            \draw[->] (s2') edge node[below] {$\alpha$} (t2');
            \draw[->] (s2') edge[loop below] node[left] {$\beta$} (s2');
            \draw[->] (t2') edge (t2new');
            \draw[->] (t2new') edge[loop below] (t2new');
            \draw[->] (t1') edge (s1');
            \draw[->] (a') edge[bend left=15] node[right] {$\tfrac 1 3$} (s1');
            \draw[->] (a') edge node[above] {$\tfrac 1 3$}(t1');
            \draw[->] (a') edge node[above] {$\tfrac 1 3$} (t2');
        \end{tikzpicture}
        }
    \end{minipage}
    \caption{{\color{cavcolor}
    An MDP (left) and its goal unfolding with \textcolor{red}{rewards} (right).
    }}
    \label{fig:runningEx}
\end{figure}
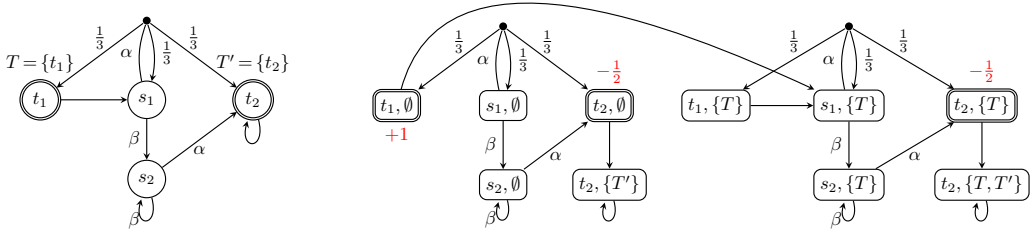

{\color{cavcolor}
\subparagraph*{Step 2: Unfold targets and set up reward structures.}
\label{step2}
Next we process each combination $c \in \comb$ individually.
We rely on two established techniques from the literature:
Including reachability information in the state space \cite{quatmannVerificationMultiobjective2023,forejtQuantitativeMultiobjective2011} 
and encoding reachability probabilities as \emph{expected rewards} (e.g.,~\cite[pp.~51~ff.]{quatmannVerificationMultiobjective2023}).
For the sake of completeness, we detail these steps nonetheless:
}

\begin{definition}[{Goal unfolding w.r.t.\ $\mathcal{T}$}]
    \label{def:goalUnfolding}
    Let $\mathcal{T}\subseteq 2^{\states}$ be a set of sets of states of $\mdp$ with $\mathcal{T}\neq\emptyset$.
    The \emph{goal unfolding} $\indexT$ of $\mdp$ w.r.t.\ $\mathcal{T}$ 
    is the
    MDP $\indexT = (\indexT[\states], \Act, \indexT[\Trans])$ where $\indexT[\states] = \states \times 2^{\mathcal{T}}$, and
    $\indexT[\Trans]$ is defined as follows:
    For $\state, \state' \in \states$, $\mathcal{T}',\mathcal{T''} \subseteq \mathcal{T}$, and $\action \in \Act$,
    \[
        \indexT[\Trans]\big((\state, \mathcal{T}'), \action, (\state', \mathcal{T}'')\big)
        ~=~ 
        \begin{cases}
            \Trans(\state, \action, \state') & \text{if } 
            \mathcal{T}'' = \mathcal{T}' \cup \mathcal \{ T \in \mathcal{T} \mid s \in T \} \\
        0 & \text{otherwise}~.
        \end{cases}
    \]
\end{definition}

For $c\in \comb$, we use $\indexc$ to denote the goal unfolding of $\mdp$ w.r.t.\ $\indexc[\mathcal{T}] = \{T_i \mid i \in \relInd(c)\}$.
{\color{cavcolor}
For a combination $c = (s,\textcolor{gray}{\sched})$ we use $\state_{c}$ to denote the state $(\state, \emptyset)$ in $\mdp_c$.

\begin{example}
    \label{ex:unfolding}
    The goal unfolding of the MDP in \Cref{fig:runningEx} (left) w.r.t.\ combination $c_1 = (s_1,\textcolor{gray}{\sched})$, for which we have $\mathcal T_{c_1} = \{T,T'\}$, is depicted in \Cref{fig:runningEx} (right).
\end{example}

\begin{definition}[Reward structure for state-scheduler combination]
    \label{def:rewForComb}
    Let $c \in \comb$.
    We define the reward structure $\indexc[\rew] \colon \states_c \to \Q$ on the goal unfolding $\indexc$ by $\indexc[\rew] = \sum_{T \in \mathcal T_c} q_T \cdot \rew_{T}$, 
    where $q_T = \sum_{i \in \{\relInd(c)\mid T = T_i\}} q_i$ and
    \[
        \rew_T \colon \indexc[\states] \to \Q,~
        (\state, \mathcal T) \mapsto
        \begin{cases}
            1 & \text{if } \state \in T \wedge T \not\in \mathcal T \\
            0 & \text{otherwise}~.
        \end{cases}
    \] 
\end{definition}
Intuitively, we collect reward equal to the sum of the coefficients occurring together with a target $T \in \mathcal T_c$ when we visit $T$ \emph{for the first time}.
For any $\sched \in \Scheds[\indexc]$, the reward function $\indexc[\rew]$ can be naturally lifted to $\indexc^\sched$ and further to infinite paths of $\indexc^\sched$ by letting $\indexc[\rew](\pi) = \sum_{i=0}^{\infty} \indexc[\rew](\pi(i))$ for $\pi \in \Paths[\indexc]$; this is well-defined because we collect reward only finitely often on any path.
The \emph{expected reward} of $\indexc[\rew]$ on $\indexc$ from $s_c$ under some $\sched \in \Scheds[\indexc]$ is then defined as the expectation of the function $\indexc[\rew](\pi) = \sum_{i=0}^{\infty} \indexc[\rew](\pi(i))$.
Then, we can reduce our query to a number of expected reward queries (see \cref{app:weightedReachProbsAsExpectedRew} for the proof):

\begin{restatable}{lemma}{weightedReachProbsAsExpectedRew}
    \label{thm:weightedReachProbsAsExpectedRew}
    For every combination $c = (s, \textcolor{gray}{\sched}) \in \comb$ and $\opt 
    \in \{\min,\max\}$:
    \[
        \opt_{\sched \in \Scheds[\mdp]} \sum_{j \in \relInd(c)} q_j \cdot \Pr_{s}^{\mdp,\sched}(\Finally T_j)
        ~=~
        \opt_{\sched \in \Scheds[\indexc]} \Expected^{\indexc, \sched}_{s_c}(\indexc[\rew])
        ~.
    \]
\end{restatable}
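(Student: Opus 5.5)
The plan is to establish a stronger, pointwise statement and then take optima on both sides. Concretely, I will show that there is a bijection between schedulers of $\mdp$ and schedulers of $\indexc$ which preserves the relevant quantity: for every $\sched \in \Scheds[\mdp]$ there is $\sched' \in \Scheds[\indexc]$ with
\[
\Expected^{\indexc,\sched'}_{s_c}(\indexc[\rew]) = \sum_{j \in \relInd(c)} q_j \cdot \Pr_{s}^{\mdp,\sched}(\Finally T_j),
\]
and conversely. The optimum over one set then equals the optimum over the other (as suprema/infima, which are attained by standard results; attainment is in any case not needed for the equality of optima).

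\textbf{Scheduler correspondence.} The second component of a state of $\indexc$ is a deterministic function of the history of first components, namely the set of $T \in \indexc[\mathcal{T}]$ visited strictly before the current state. Hence the projection $\pi \mapsto \pi|_1$, which forgets second components, is a bijection between $\finPathsfrom[{\indexc}][s_c]$ and $\finPathsfrom[\mdp][s]$. This bijection preserves transition probabilities by \cref{def:goalUnfolding}, and the same holds for infinite paths. I will set $\sched'(\pi) = \sched(\pi|_1)$ and, conversely, $\sched(\rho) = \sched'(\rho^{\uparrow})$, where $\rho^{\uparrow}$ is the unique lift of $\rho$. Induction over cylinder sets then shows that the induced measures correspond: $\Pr^{\indexc,\sched'}_{s_c}(\Pi) = \Pr^{\mdp,\sched}_{s}(\Pi|_1)$ for measurable sets $\Pi$ of infinite paths.

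\textbf{Reward equals the weighted sum of indicators.} Fix an infinite path $\pi$ of $\indexc$ from $s_c$ and a target $T \in \indexc[\mathcal{T}]$. I claim that $\rew_T(\pi) = \sum_i \rew_T(\pi(i)) = \Iverson{\pi|_1 \models \Finally T}$. If $T$ is never visited, every term is $0$. Otherwise, let $i$ be the first visit. Then $T$ is not in the second component at position $i$, so that term contributes $1$. At every later position $T$ is already in the second component, since it was added on leaving position $i$, so all later terms contribute $0$. Since $\indexc[\rew](\pi) = \sum_{T} q_T \rew_T(\pi)$ is a finite combination of bounded functions, linearity of expectation and the measure correspondence give
\[
\Expected^{\indexc,\sched'}_{s_c}(\indexc[\rew]) = \sum_{T} q_T \Pr^{\mdp,\sched}_s(\Finally T).
\]
Regrouping with $q_T = \sum_{j\in\relInd(c),\,T_j = T} q_j$ yields $\sum_{j\in\relInd(c)} q_j \Pr^{\mdp,\sched}_s(\Finally T_j)$. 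Because rewards are bounded and collected only finitely often, the sum over time is absolutely bounded by $\sum_T |q_T|$, so no convergence issues arise even with negative coefficients.

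\textbf{Main obstacle.} The only delicate point is the measure-theoretic transfer in the scheduler correspondence, i.e., verifying that cylinder probabilities match under the path bijection. This is a routine induction on path length and will be handled by a direct computation on cylinders. Uniqueness of the measure extension then concludes the argument.
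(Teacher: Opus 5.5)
Your proof is correct and follows the same route as the paper's. The paper also transfers schedulers between $\mdp$ and the goal unfolding $\indexc$ ``by construction'', uses that $\rew_T$ pays exactly once, on the first visit to $T$, so that its expectation equals $\Pr(\Finally T)$, and then applies linearity of expectation and regroups the coefficients into $q_T$. Your version only makes the path and measure correspondence explicit. One small wording point: the scheduler map is not literally a bijection, since schedulers of $\indexc$ also act on paths starting in states $(s',\mathcal{T}')$ with $\mathcal{T}'\neq\emptyset$; however, your two value-preserving maps, one in each direction, are all the argument needs.
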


\begin{example}
    Following \Cref{ex:unfolding}, the (non-zero) rewards $\rew_c$ for $c = (s_1,\textcolor{gray}{\sched})$ are given in \textcolor{red}{red} next to the states in \Cref{fig:runningEx} (right).
\end{example}

\subparagraph*{Step 3: Compute expected rewards.}
\label{step3}
The next step is to compute, for each individual scheduler-state combination $c$, the maximal and minimal rewards occurring in \Cref{thm:weightedReachProbsAsExpectedRew}.
Again, we rely on existing techniques from the literature~\cite{putermanMarkovDecision1994,karmarkarNewPolynomialtime1984,quatmannVerificationMultiobjective2023}.\footnote{\color{cavcolor}Note that we (must) rely on results supporting positive and negative reward, since we allow positive and negative coefficients.} 
We refer to \cref{app:exRewPtimeMD} for the proof.

\begin{restatable}{lemma}{exRewPtimeMD}
    \label{thm:exRewPtimeMD}
    Let $c \in \comb$ and $\opt \in \{\max,\min\}$.
    The optimal expected reward of $\indexc[\rew]$ from $\indexc[\state]$, $\opt_{\sched \in \Scheds[\indexc]} \Expected^{\indexc, \sched}_{\indexc[\state]}(\indexc[\rew]) \in \Q$, is computable in time polynomial in the size of $\mdp_c$.
    Moreover, the optimum is attained by an MD scheduler $\sigma \in \SchedsMD[\mdp_c]$.
\end{restatable}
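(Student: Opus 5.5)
The plan is to reduce the statement to the standard theory of total expected rewards in MDPs with rewards of arbitrary sign. The property of the goal unfolding $\indexc$ that makes this work is that every path collects non-zero reward only finitely often. Indeed, $\rew_T$ is non-zero only at states $(\state,\mathcal T')$ with $\state\in T$ and $T\notin\mathcal T'$. After such a state is left, $T$ is added to the second component, and this component is monotone along paths. Consequently, every path collects at most $|\indexc[\mathcal T]|$ non-zero rewards. The total reward is therefore bounded in absolute value by $\sum_{T}|q_T|$, so the expectation is well defined and finite for every scheduler, and the optimum is a finite real number.

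\textbf{Step 1: isolate the reward-collecting part.} For each $\mathcal T'$, the states $\states\times\{\mathcal T'\}$ are only left toward layers with strictly larger second component. Hence the unfolding is a DAG of layers, each layer being a copy of $\mdp$. Inside a layer, reward is collected at most upon entering a state of some $T\notin\mathcal T'$, and then the path moves to a higher layer.

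Call a state \emph{terminal} if it has the form $(\state,\mathcal T')$ such that from it no positive-reward or negative-reward state is reachable at all, computed by graph analysis. I would set the value of terminal states to $0$. For the remaining states, the optimal total reward equals the optimum of a stochastic shortest path or total reward problem with bounded total reward.

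\textbf{Step 2: compute the optimum.} I would proceed layer by layer in reverse topological order over $2^{\indexc[\mathcal T]}$. Within a layer, the problem is an optimal reachability-weighted problem: reaching the exit state into layer $\mathcal T''$ yields the reward of the entry state plus the value already computed for that entry state, and staying forever in the layer yields $0$. This is a weighted reachability problem, that is, maximising (or minimising) $\sum_x w_x\Pr(\Finally x)$ over absorbing exits $x$ with rational weights $w_x$, with weight $0$ for remaining forever.

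Such a problem is solved by the standard LP. First remove end components, or equivalently add an explicit stop option worth $0$ in every MEC, so that the LP has a unique solution. Its optimum is attained by an MD scheduler (\cite{putermanMarkovDecision1994,baierPrinciplesModel2008}) and is computable by linear programming in polynomial time via \cite{karmarkarNewPolynomialtime1984}. Alternatively, one can directly cite the result of \cite[pp.~51~ff.]{quatmannVerificationMultiobjective2023} that optimal total rewards with mixed-sign rewards, finitely collected on every path, are attained by MD schedulers and are LP-computable. Stitching the per-layer MD choices together gives an MD scheduler on $\indexc$, because each state lies in exactly one layer. Polynomiality is in $|\mdp_c|$, which already contains all layers.

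\textbf{Main obstacle.} The delicate point is that with rewards of both signs, end components of $0$-reward states make the Bellman equations have non-unique solutions, and that a memoryful scheduler might seemingly do better by delaying. I would handle this by the MEC collapse within each layer, adding the zero-value stop option. Then I would argue as follows. Every scheduler's value equals $\sum_x w_x\Pr(\Finally x)$, because reward is collected only at exits. The achievable vectors of exit probabilities on the collapsed MDP coincide with those on the original layer, since MD exit distributions can be realised in both directions. After this reduction the vector set is a polytope whose vertices are induced by MD schedulers, so a linear objective is optimised at an MD scheduler.
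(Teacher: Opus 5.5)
Your proposal is correct, and it uses the same key device as the paper. You collapse MECs, give each collapsed MEC a zero-valued way to stay forever, and use that reward is only collected outside end components. The resulting reachability-reward problem then has a unique Bellman solution, an LP formulation, and MD optimal schedulers.

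The paper does this once, on the MEC quotient of all of $\mdp_c$, and solves a single LP. It cites \cite{quatmannVerificationMultiobjective2023} for uniqueness and \cite{putermanMarkovDecision1994} for MD optimality. You instead sweep the layers $\states\times\{\mathcal{T}'\}$ in reverse topological order, solve one weighted-reachability LP per layer, and stitch the per-layer MD choices together. Since no end component of $\mdp_c$ crosses layers, your per-layer collapses together are exactly the paper's quotient, so the layering is not needed. It buys a more self-contained argument for MD optimality, but it costs a step you gloss over.

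The layer LPs are solved sequentially, with weights $w_x$ produced by earlier LPs. A priori, generic bounds on LP solutions let the encoding size of these weights grow multiplicatively along a chain of up to $|\mathcal{T}_c|$ layers. So ``polynomial per layer'' does not by itself give time polynomial in $|\mdp_c|$. This is repairable: the $w_x$ are exact optimal values, hence entries of the solution of one linear system induced by an optimal MD scheduler on the quotient of $\mdp_c$, and so their size is polynomial in $|\mdp_c|$. Alternatively, you can avoid the issue by solving one LP as the paper does.

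When stitching, also make three things explicit:
\begin{itemize}
\item The weight of an exit $x$ includes the optimisation over the action taken at $x$ itself.
\item Each per-layer MD scheduler is optimal from all entry states simultaneously.
\item The upper-bound direction of the transfer between a layer and its collapse must cover arbitrary schedulers, not only MD ones.
\end{itemize}
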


\begin{example}
    \label{ex:expectedRewards}
    Reconsider the MDP in \Cref{fig:runningEx} (right).
    The maximal expected reward from initial state $(s_1,\emptyset)$ is $\tfrac 1 4$ and is attained by the MD strategy that always chooses $\alpha$ in $(s_1,\emptyset)$ and thus eventually reaches either $(t_2,\emptyset)$ or $(t_1,\emptyset)$ with probability $\tfrac 1 2$ each.
    In the latter case, the strategy then selects $\beta$ in $(s_1,\{T\})$ to reach $(s_2,\{T\})$ and remain there forever, not collecting any further reward.
    Overall, this strategy collects a total expected reward of $\tfrac 1 2 \cdot 1 + \tfrac 1 2 \cdot (-\tfrac 1 2) = \tfrac 1 4$.
    The minimal expected reward is easily seen to be $-\tfrac 1 2$.
\end{example}

\begin{remark}[Approximate vs exact]  
    \label{remark:approx_exact}
    In practice, \emph{exact} computation of the optimal expected reward via LP as suggested by \Cref{thm:exRewPtimeMD} (and its proof) may be significantly slower than approximation~\cite{hartmannsPractitionersGuide2023}.
    Fortunately, it is possible to amend our algorithm to \emph{approximate} expected reward computation.
    To retain soundness, it is crucial to employ a procedure such as \emph{Sound Value Iteration}~\cite{quatmannSoundValue2018} that yields guaranteed \emph{under-} and \emph{over-approximations} of the true result.
    Appropriate handling of such approximations is detailed in \cref{app:reach_full-algo}.
    Note that approximation inherently leads to incompleteness, i.e., the algorithm may return \enquote{inconclusive} in some cases.
    Further, for each $c \in \comb$ with $|\relInd(c)|>1$ we can view $\opt_{\sched \in \Scheds[\mdp]} \sum_{j \in \relInd(c)} q_j \cdot \Pr_{s}^{\mdp,\sched}(\Finally T_j)$ as a \emph{weighted-sum optimization problem} and employ multi-objective model-checking techniques~\cite{quatmannVerificationMultiobjective2023}. 
    (For $|\relInd(c)|=1$ this is a single-objective model-checking query.)
\end{remark}

\subparagraph*{Step 4: Aggregate results and check relational property.}
\label{step4}
We now combine the optimal expected rewards $\opt_{\sched \in \Scheds[\indexc]} \Expected^{\indexc, \sched}_{s_c}(\indexc[\rew])$ for each state-scheduler combination $c \in \comb$ obtained via the previous two steps. 
We exemplify this for the comparison relation $\approx_\epsilon$, the other relations are handled similarly (see \cref{app:reach_full-algo}).

\begin{restatable}{lemma}{cruxApproxEqual}
    \label{thm:cruxApproxEqual}
    For each $c \in \comb$ and $\opt \in \{\max,\min\}$ let
    \[
        v_c^{\opt}
        ~=~
        \opt_{\sched \in \Scheds[\indexc]} \Expected^{\indexc, \sched}_{s_c}(\indexc[\rew])
        ~.
    \]
    Furthermore, let $v^{\opt} = \sum_{c \in \comb} v_c^{\opt}$.
    Then, assuming that the comparison operator $\comp$ in the property \eqref{eq:genRelReachProp} is $\approx_\epsilon$ for some $\epsilon \geq 0$:
    \[
        q \in [v^{\min} - \epsilon, v^{\max} + \epsilon] ~\Iff~ \text{property \eqref{eq:genRelReachProp} holds} ~.
    \]
\end{restatable}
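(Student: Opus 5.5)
By \Cref{thm:whyComb}, property \eqref{eq:genRelReachProp} is equivalent to the existence of independent schedulers $\sched_{c}$, one per $c \in \comb$, such that $\sum_{c \in \comb} f_c(\sched_c) \approx_\epsilon q$. Here we write $f_c(\sched) = \sum_{j \in \relInd(c)} q_j \Pr^{\mdp,\sched}_{s_c}(\Finally T_j)$. Let $V_c = \{ f_c(\sched) \mid \sched \in \Scheds\}$ be the set of values achievable for combination $c$. The plan is to show that each $V_c$ is exactly the closed interval $[v_c^{\min}, v_c^{\max}]$. Once this holds, the set of achievable sums is the Minkowski sum of these intervals, which is $[v^{\min}, v^{\max}]$. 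The property then holds iff some $x \in [v^{\min}, v^{\max}]$ satisfies $|x - q| \le \epsilon$. Since both sets are intervals, this is equivalent to $q \in [v^{\min}-\epsilon, v^{\max}+\epsilon]$, which is the claim.

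For the inclusion $V_c \subseteq [v_c^{\min}, v_c^{\max}]$, I would invoke \Cref{thm:weightedReachProbsAsExpectedRew}. It identifies $\min$ and $\max$ of $f_c$ over $\Scheds[\mdp]$ with $v_c^{\min}$ and $v_c^{\max}$, so every value lies between them. By \Cref{thm:exRewPtimeMD}, both optima are attained, by MD schedulers $\sched_c^{\min}$ and $\sched_c^{\max}$ on $\indexc$. These transfer to (memoryful) schedulers $\tau^{\min}, \tau^{\max}$ on $\mdp$ achieving $f_c(\tau^{\min}) = v_c^{\min}$ and $f_c(\tau^{\max}) = v_c^{\max}$. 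The transfer works because the unfolding component of the state is a deterministic function of the history, so a memoryful scheduler on $\mdp$ can track it. Alternatively, it follows from attainment in the equality of \Cref{thm:weightedReachProbsAsExpectedRew}.

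The reverse inclusion, i.e.\ convexity of $V_c$, is the main step. For $\lambda \in [0,1]$, I would define a scheduler $\tau_\lambda$ that, at the first step, randomizes: with probability $\lambda$ it follows $\tau^{\max}$ for the whole run, and otherwise it follows $\tau^{\min}$. This is realized as a memoryful randomized scheduler. At the initial path $s_c$, it plays the mixture $\lambda\,\tau^{\max}(s_c) + (1-\lambda)\,\tau^{\min}(s_c)$. On longer histories, it plays the conditional mixture, with weights proportional to $\lambda$ and $1-\lambda$ times the probability of the observed history under $\tau^{\max}$ and $\tau^{\min}$, respectively. The standard argument then gives $\Pr^{\tau_\lambda}_{s_c} = \lambda \Pr^{\tau^{\max}}_{s_c} + (1-\lambda)\Pr^{\tau^{\min}}_{s_c}$ as measures on paths. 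Hence $f_c(\tau_\lambda) = \lambda v_c^{\max} + (1-\lambda) v_c^{\min}$, by linearity of $f_c$ in the measure.

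The obstacle is making this mixing rigorous in the paper's scheduler model, which has no explicit initial coin flip. I would handle it by checking by induction on path length that cylinder probabilities under $\tau_\lambda$ equal the convex combination. Finally, I would combine the per-$c$ schedulers via \Cref{thm:whyComb}. For the "if" direction, choose $x \in [v^{\min}, v^{\max}] \cap [q-\epsilon, q+\epsilon]$ and write $x = \lambda v^{\max} + (1-\lambda) v^{\min}$. Using the same $\lambda$ for every $c$ gives $\sum_c f_c(\tau_{c,\lambda}) = x$.
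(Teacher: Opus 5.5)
Your proof is correct and follows essentially the paper's argument. Achievable values are bounded by $v^{\min}$ and $v^{\max}$, which is the paper's contrapositive direction. Any intermediate value is hit exactly by a $\lambda$-mixture of extremal schedulers with a common $\lambda$ across combinations, as in the paper's $[\sched^{\leq}_i \oplus_\lambda \sched^{\geq}_i]$ construction. The differences are only presentational. You characterize the whole achievable set as the interval $[v^{\min},v^{\max}]$ instead of case-splitting on the witnesses for $\geq q-\epsilon$ and $\leq q+\epsilon$, and you verify the mixing construction by induction on cylinders rather than citing it.
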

\Cref{thm:cruxApproxEqual} relies on the fact that any value in the interval of achievable probabilities
$[v_c^{\min}, v_c^{\max}]$ can be achieved by constructing the \emph{convex combination} (e.g.,~\cite[p.\ 71]{quatmannVerificationMultiobjective2023}) of the minimizing and the maximizing scheduler.

\begin{proof}[Proof of \NoCaseChange{\cref{thm:cruxApproxEqual}} (Sketch)]
    The interesting direction is ``$\Rightarrow$'':
    By \Cref{thm:whyComb,thm:weightedReachProbsAsExpectedRew} there exist schedulers $\sched_1^{\geq},\ldots,\sched_n^{\geq}$ and $\sched_1^{\leq},\ldots,\sched_n^{\leq}$ for $\mdp$ such that   
    \begin{align*}
        \ub{v} := \sum_{i=1}^{m} q_i \cdot \Pr^{\sched^{\geq}_{k_i}}_{\state_{i}}(\Finally T_i)
        ~\geq~
        q - \epsilon~, \quad
        \lb{v} := \sum_{i=1}^{m} q_i \cdot \Pr^{\sched^{\leq}_{k_i}}_{\state_{i}}(\Finally T_i)
        ~\leq~
        q + \epsilon ~ .
    \end{align*}
    If $\ub{v} \leq q + \epsilon$ \emph{or} $\lb{v} \geq q - \epsilon$, then $\sched_1^{\geq},\ldots,\sched_n^{\geq}$ or $\sched_1^{\leq},\ldots,\sched_n^{\leq}$ are already witnessing schedulers and there is nothing else to show.
    Otherwise, $\ub{v} > q + \epsilon$ \emph{and} $\lb{v} < q - \epsilon$.
    Thus there exists $\lambda \in (0,1)$ such that $q = \lambda \lb v + (1-\lambda) \ub v$.
    The schedulers $\sched_i^{\lambda} = [\sched^{\leq}_i \oplus_\lambda \sched^{\geq}_i]$, $i = 1,\ldots,n$ \cite[p.~71]{quatmannVerificationMultiobjective2023} then witness satisfaction of property \eqref{eq:genRelReachProp} with \emph{exact} equality ($\approx_0$).
    See \cref{app:cruxApproxEqual} for more details.
\end{proof}

\begin{example}
    We wrap up our running example by proving that the property from \Cref{ex:runningExIntro} indeed holds in the MDP in \Cref{fig:runningEx} (left).
    For combination $c_1 = (s_1,\textcolor{gray}{\sched})$  we have already established in \Cref{ex:expectedRewards} that $v_{c_1}^{\max} = \tfrac 1 4$ and $v_{c_1}^{\min} = -\tfrac 1 2$.
    For the other combination $c_2$ one easily finds $v_{c_2}^{\max} = 0$ and $v_{c_2}^{\min} = -\tfrac 1 2$. Summing up these values yields $v^{\min} = -1$ and $v^{\max} = \tfrac 1 4$.
    Since $0 \in [v^{\min}, v^{\max}]$, the property is satisfiable, even with exact equality $\approx_0$.
\end{example}

We remark that for $\comp~ \in \{\geq, >\}$, it actually suffices to compute \emph{only} $v^{\max}$ rather than both $v^{\max}$ and $v^{\min}$ as in \Cref{thm:cruxApproxEqual}.
}
{\color{cavcolor}
\subparagraph*{Overall algorithm.}
}

\begin{algorithm}[t]
    \caption{{\color{cavcolor}Solving \RelReach with $\approx_\epsilon$}} 
    \label{alg:linear_general_simplified}
    \label{alg:reach}
    \Input{
        MDP $\mdp = \mdptup$ and a \RelReach property
        \\ \medskip 
        \phantom{Input: } 
        $\displaystyle \exists \sched_1, \ldots, \sched_n \in \Scheds .~ \sum_{i=1}^{m} q_i \cdot \Pr^{\sched_{k_i}}_{\state_{i}}(\Finally T_i) \approx_\epsilon q$
        \DontPrintSemicolon\tcp*{See \cref{prob:relreach}}
        \medskip
    }
    \Output{Whether the property is true in $\mdp$}
    \tcp{Step 1: Loop over all state-scheduler combinations:}
    \For{$c = (s,\sched) \in \comb = \{ (s_i, \sched_{k_i}) \mid i=1, \ldots, m\}$}{
        \tcp{Step 2: Unfold and define reward structures:}
        $\indexc \gets$ unfold $\mdp$ w.r.t.\ target sets for $c$ \tcp*{See \Cref{def:goalUnfolding}}
        $\indexc[\rew] \gets$ reward structure on $\indexc$ for $c$ \tcp*{See \Cref{def:rewForComb}}
        \tcp{Step 3: Compute (or approximate) expected rewards:}
        $\indexc[{v}]^{\max} \gets  {\displaystyle\max_{\sched \in \Scheds[\indexc]}} \Expected^{\indexc, \sched}_{s_c}(\indexc[\rew]) ~;~
        \indexc[{v}]^{\min} \gets  {\displaystyle\min_{\sched \in \Scheds[\indexc]}} \Expected^{\indexc, \sched}_{s_c}(\indexc[\rew])$
        \label{line:linear_general_simplified_step3}
    }
    \tcp{Step 4: Aggregate results from state-scheduler combinations and check:}
    $v^{\max} \gets \sum_{c \in \comb} \indexc[{v}]^{\max} ~;~ {v}^{\min}  \gets \sum_{c \in \comb} \indexc[{v}]^{\min}$ \;
    \label{line:linear_general_simplified_step4}
    \Return{$q \in [v^{\min} - \epsilon, v^{\max} + \epsilon]$}
    \label{line:linear_general_simplified_return}
\end{algorithm}

{\color{cavcolor}
The algorithm resulting from Steps 1--4 is stated explicitly as \cref{alg:linear_general_simplified} for $\approx_\epsilon$ and in full generality in \cref{app:reach_full-algo}.

\begin{theorem}[Correctness and time complexity]
    \label{thm:correctness}
    \cref{alg:linear_general_simplified}
    adheres to its input-output specification.
    It can be implemented with worst-case running time of $\mathcal{O}(m \cdot \textit{poly}(2^m \cdot |\mdp|))$, where $m$ is the number of probability operators in the property.
\end{theorem}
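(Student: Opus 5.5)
Correctness follows by chaining the lemmas of this section in the order in which the algorithm uses them. By \Cref{thm:whyComb}, property \eqref{eq:genRelReachProp} is equivalent to the variant with one fresh scheduler $\sched_c$ per combination $c \in \comb$. Each summand in this variant depends only on its own scheduler, so the combinations can be optimized independently.

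By \Cref{thm:weightedReachProbsAsExpectedRew}, for each $c$ the set of values $\sum_{j\in\relInd(c)} q_j \Pr^{\sched_c}_{s}(\Finally T_j)$ has infimum and supremum equal to $v_c^{\min}$ and $v_c^{\max}$. These are attained (by \Cref{thm:exRewPtimeMD}) by MD schedulers on $\indexc$, which translate back to finite-memory schedulers on $\mdp$. Hence the set of achievable aggregate values has minimum $v^{\min}$ and maximum $v^{\max}$. \Cref{thm:cruxApproxEqual} then says exactly that the return value on line~\ref{line:linear_general_simplified_return} is true iff the property holds. I would briefly recheck its ``$\Leftarrow$'' direction: if some schedulers achieve a value $v$ with $|v-q|\le\epsilon$, then $v^{\min}\le v\le v^{\max}$, so $q\in[v^{\min}-\epsilon,v^{\max}+\epsilon]$. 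The ``$\Rightarrow$'' direction is the convex-combination argument already sketched there. So the input–output specification holds.

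For the running time, I would bound each loop iteration. There are $|\comb|\le m$ iterations. Since $|\indexc[\mathcal{T}]|\le|\relInd(c)|\le m$, the goal unfolding $\indexc$ has $|\states|\cdot 2^{|\mathcal{T}_c|}\le 2^m|\states|$ states and the same actions. Its transition function has size $\mathcal{O}(2^m\cdot|\mdp|)$ and can be built in time polynomial in $2^m\cdot|\mdp|$, since the successor label set is determined by $\mathcal T'$ and $s$. The reward structure $\indexc[\rew]$ assigns to each state a sum of at most $m$ coefficients, so it takes polynomial time with rationals of polynomial bit size. By \Cref{thm:exRewPtimeMD}, computing $v_c^{\max}$ and $v_c^{\min}$ takes time polynomial in $|\indexc|$, hence $\textit{poly}(2^m\cdot|\mdp|)$. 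The final aggregation sums at most $m$ rationals whose bit sizes are polynomially bounded, since they come from LP solutions, and performs one interval check. The total is $\mathcal{O}(m\cdot\textit{poly}(2^m\cdot|\mdp|))$.

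The main obstacle is making sure the encoding sizes stay controlled: the coefficients $q_j$ and the optimal values must have bit length polynomial in the input. This follows because the optimal values are solutions of an LP of polynomial size (in $|\indexc|$ and the coefficient encodings), so standard LP bounds apply. A second point is that \Cref{thm:exRewPtimeMD} must apply to rewards of mixed sign. Here I would note that on every path reward is collected at most $m$ times (only on the first visit to each target), so the expected total reward is finite and bounded for all schedulers. That is exactly the setting that lemma covers.
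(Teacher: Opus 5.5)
Your proposal is correct and follows the paper's proof. Correctness comes from \Cref{thm:cruxApproxEqual}, together with the scheduler-splitting and reward-encoding lemmas it builds on. The running time follows from the $2^m\cdot|\mdp|$ bound on each goal unfolding, polynomial-time exact expected-reward computation, and at most $|\comb|\le m$ combinations. Your remarks on the bit size of the LP-derived values, on attainment of the optima, and on mixed-sign rewards being collected only finitely often on every path are sound details that the paper leaves implicit.
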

\begin{proof}
    \Cref{thm:cruxApproxEqual} establishes correctness.
    Regarding time complexity, notice that 
    for each $c \in \comb$, the size of the goal unfolding $\indexc$ is bounded by $2^m \cdot |\mdp|$ (Step 2).
    Exact computation of expected rewards is possible in time polynomial in the size of the MDP $\indexc$ (Step 3).
    Steps 2 and 3 have to be executed for at most $|\comb|\leq m$ state-scheduler combinations.
\end{proof}
}

%% file: reach_complexity.tex
\subsection{Complexity of Relational Reachability}
\label{sec:reach_complexity}

{\color{cavcolor}
In this section, we analyze the computational complexity of the \RelReach problem over general and over memoryless deterministic schedulers, respectively.
We also identify restricted variants of \RelReach that are decidable in polynomial time.

\subsubsection{General Schedulers}

The runtime analysis from \cref{thm:correctness} yields an \EXPTIME upper bound for the complexity of the \RelReach problem.

\begin{theorem}
    \label{th:general_EXPTIME}
    Problem \RelReach is \PSPACE-hard and decidable in \EXPTIME.
\end{theorem}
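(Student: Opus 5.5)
The statement has two parts, and I would prove them separately.

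The \EXPTIME upper bound follows from \cref{thm:correctness}. \Cref{alg:linear_general_simplified}, and its generalisation to all comparison operators in \cref{app:reach_full-algo}, decides \RelReach correctly. It runs in time $\mathcal{O}(m \cdot \textit{poly}(2^m \cdot |\mdp|))$. Since $m$ is at most the size of the property, this bound is exponential in the input size. For the other operators $>,\geq,\not\approx_\epsilon$ I would spell out the final check. For $\geq$ and $>$ it compares $v^{\max}$ with $q$. For $\not\approx_\epsilon$ it asks whether $v^{\max} > q+\epsilon$ or $v^{\min} < q-\epsilon$. In each case, correctness again follows from \cref{thm:whyComb}, \cref{thm:weightedReachProbsAsExpectedRew} and \cref{thm:exRewPtimeMD}, exactly as in \cref{thm:cruxApproxEqual}.

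For \PSPACE-hardness I would reduce from a known \PSPACE-hard multi-objective problem. The natural candidate is the simultaneous almost-sure reachability query used by Randour et al.~\cite{randourPercentileQueries2017} for percentile queries:
\[
\exists \sched .\ \bigwedge_{i=1}^{m} \Pr^{\sched}_{\state}(\Finally T_i) = 1 ,
\]
which is \PSPACE-hard when $m$ is part of the input. Each probability is at most $1$, so the conjunction is equivalent to the single relational reachability property
\[
\exists \sched .\ \sum_{i=1}^{m} \Pr^{\sched}_{\state}(\Finally T_i) \geq m .
\]
This is an instance of \RelReach with one scheduler variable, one initial state, all coefficients $1$, bound $q=m$, and comparison operator $\geq$. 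It matches the third row of \cref{tab:complexity_overview_selected}. The translation is clearly polynomial.

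The main obstacle is pinning down the exact source result. I need to confirm that the multi-target almost-sure reachability problem, for general schedulers and non-absorbing targets, is indeed \PSPACE-hard in the cited work. If that formulation is not directly available, the fallback is to reduce from QBF or from generalized reachability games. The idea would be to encode each clause or variable constraint as a target set that must be visited almost surely. Visiting targets in a non-absorbing way then forces the scheduler to remember which targets it has already seen. This is the source of the exponential memory, and hence of the hardness, which is consistent with the $2^m$ factor of the goal unfolding.
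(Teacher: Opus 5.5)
Your proposal is correct and matches the paper's proof. \EXPTIME membership comes from the running-time bound of \cref{thm:correctness}, and \PSPACE-hardness comes from expressing simultaneous almost-sure reachability of $T_1,\ldots,T_m$ (which \cite[Th.~2]{randourPercentileQueries2017} shows to be \PSPACE-complete) as $\exists \sched .\ \sum_{i=1}^{m} \Pr^{\sched}_{\state}(\Finally T_i) \geq m$; the paper uses that cited result directly, so your QBF fallback is unnecessary.
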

\begin{proof}
    For \PSPACE-hardness observe that \emph{simultaneous almost-sure reachability} of $m$ target sets $T_1,\ldots,T_m$ (which is known to be \PSPACE-complete~\cite[Th.\ 2]{randourPercentileQueries2017}) is expressible as the \RelReach property \enquote{$\exists \sigma .\ \Pr_{s}^\sigma(\Finally T_1) + \ldots \Pr_{s}^\sigma(\Finally T_m) \geq m$}.

    Membership in \EXPTIME follows from \cref{thm:correctness}.
\end{proof}
}

We leave tighter complexity bounds for \RelReach as an open problem.
In particular, it remains unclear whether \RelReach belongs to \PSPACE:
While the exponentially large goal unfolding need not be stored entirely in memory---e.g., by processing the individual MDP copies sequentially in a bottom-up manner (cf.~\cite[Proof of Th.\ 2]{randourPercentileQueries2017})---it is unclear whether the intermediate expected rewards needed to determine the final answer can be represented using only polynomially many bits.

{\color{cavcolor}
However, we observe that \RelReach can be solved in \PTIME if the number of probability operators $\numsum$ is fixed. 
Hence, \RelReach is \emph{fixed-parameter tractable} \cite{groheDescriptiveParameterized1999} with parameter $\numsum$. 
The following theorem generalizes this observation and further states that the exponential blow-up of the goal unfolding can be avoided if all target states are absorbing, or if each probability operator is evaluated under a different scheduler (i.e., if $n=m$).
We refer to \cref{app:general_PTIME} for the proof.

\begin{restatable}{theorem}{generalPTIME}
    \label{th:general_PTIME}
    The following special cases of \RelReach are in \PTIME:
    \begin{enumerate}[label=(\alph*)]
        \item \label{item:fixed-param}
        The number of \emph{different} target sets $|\{ T_1,\ldots,T_m\}|$ is \emph{at most a constant}.

        \item \label{item:absorbing}
        The target sets $T_1,\ldots,T_m$ are all \emph{absorbing}.

        \item \label{item:independent}
        $n=m$, i.e., each probability operator in the property has \emph{its own quantifier}.
    \end{enumerate}
\end{restatable}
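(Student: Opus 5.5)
All three cases run \cref{alg:linear_general_simplified} (or its general version in \cref{app:reach_full-algo}) unchanged; only the unfolding in Step~2 is replaced or bounded. Correctness therefore still follows from \cref{thm:whyComb,thm:weightedReachProbsAsExpectedRew,thm:cruxApproxEqual}. What has to be shown in each case is that the per-combination MDP and its reward structure have polynomial size. Once that holds, \cref{thm:exRewPtimeMD} computes $v_c^{\max}$ and $v_c^{\min}$ in polynomial time. Since $|\comb|\le m$, the aggregation in Step~4 is polynomial as well.

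For \ref{item:fixed-param}, the state space of $\indexc$ is $\states\times 2^{\mathcal{T}_c}$, and $\mathcal{T}_c\subseteq\{T_1,\ldots,T_m\}$ is a set of \emph{distinct} target sets. If there are at most $K$ distinct targets for a constant $K$, then $|\indexc|\le 2^K\cdot|\mdp|$. The bound $2^m$ in \cref{thm:correctness} was only an overcount, since repeated targets are already merged through the coefficients $q_T$ in \cref{def:rewForComb}. So this case needs only a refined count.

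For \ref{item:independent}, the condition $n=m$ makes the indices $k_i$ pairwise distinct. Then every combination $c$ has $|\relInd(c)|=1$, so $|\mathcal{T}_c|=1$, and the unfolding has at most $2|\states|$ states. One could even skip the unfolding and compute $\opt_\sched q_i\Pr^\sched_{s_i}(\Finally T_i)$ directly as a standard optimal reachability probability, flipping $\max$ and $\min$ when $q_i<0$.

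Case \ref{item:absorbing} is where real work is needed, because there the number of distinct targets is unbounded. My plan is to avoid the unfolding entirely. Since each $T_i$ is absorbing, for every path and every state $t$ the first visit to $T_i$ is the same event as eventually residing in $t$ forever, for some $t\in T_i$. Hence $\Pr^\sched_s(\Finally T)=\sum_{t\in T}\Pr^\sched_s(\Finally\{t\})$, and the events $\Finally\{t\}$ for distinct absorbing $t$ are pairwise disjoint. I would define a reward on the original $\mdp$ by $R_c(t)=\sum_{j\in\relInd(c),\,t\in T_j}q_j$, collected once upon entering $t$. To collect it only once, I would redirect every absorbing target state $t$ to a fresh copy $t'$ that carries no reward and loops forever, which is linear in size. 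Then $\sum_{j\in\relInd(c)}q_j\Pr^\sched_s(\Finally T_j)=\Expected^\sched_s(R_c)$ for every scheduler $\sched$. Because this identity holds schedulerwise, it can replace \cref{thm:weightedReachProbsAsExpectedRew} directly, and the reward is finite on every path. The main obstacle is to check that \cref{thm:exRewPtimeMD} still applies to this modified MDP, meaning that its rewards are collected only finitely often on each path, as in the unfolding. This holds because reward sits only on target states and each target is entered at most once. With that in place, the rest of the argument is identical to the general algorithm.
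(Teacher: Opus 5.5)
Your proposal is correct and follows the paper's proof. For (a) and (c), the paper also bounds the size of the goal unfolding $\mdp_c$, and your count through the distinct targets in $\mathcal{T}_c$ is, if anything, more precise than the paper's remark that $m$ is the only exponential parameter. For (b), the paper only observes that, with absorbing targets, linearly many states of $\mdp_c$ are reachable. Your redirected MDP is, up to harmless extra copies, exactly that reachable fragment: $t'$ plays the role of $(t,\{T\in\mathcal{T}_c \mid t\in T\})$, and your reward coincides with the reward of $\mdp_c$ at $(t,\emptyset)$. So the re-verification of \cref{thm:weightedReachProbsAsExpectedRew,thm:exRewPtimeMD} that you flag as the main obstacle comes for free.
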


\subsubsection{Memoryless Deterministic Schedulers}

We now consider \RelReachMD, the \RelReach problem over MD schedulers. \RelReachMD is in \NP because we can non-deterministically guess schedulers and verify whether they are witnesses in polynomial time by computing the (exact) reachability probabilities in the induced DTMC~\cite[Ch.~10]{baierPrinciplesModel2008}.
Further, \RelReach is \emph{strongly} \NP-hard\footnote{\color{cavcolor}A problem is strongly \textsf{NP}-hard if it is \textsf{NP}-hard even if all numerical quantities (here: rational transition probabilities) in a given input instance are encoded in unary.
}~\cite{gareyStrongNPCompleteness1978}
over MD schedulers already for simple variants with equality.

\begin{restatable}{theorem}{MDNPComplete}
    \label{th:MD_NP-complete}
    \label{th:MD_NP}
    \RelReachMD is strongly \NP-complete.
    Strong \NP-hardness already holds for the following special cases:
    For a given MDP $\mdp$, 
    initial states $\state_1,\state_2 \in \states$,
    target sets $T_1, T_2 \subseteq \states$,
    decide if
    \begin{enumerate}[label=\textnormal{(\alph*)},itemsep=2pt]
        \item 
        \label{item:1sched1state}
        \label{item:1sched2state}
        $
            \exists \sched \in \SchedsMD .\ 
            \Pr^{\sched}_{\state_1}(\Finally T_1) - \Pr^{\sched}_{\state_2}(\Finally T_2) = 0 \ .
        $
        
        \item 
        \label{item:2sched2state}
        $
            \exists \sched_1, \sched_2 \in \SchedsMD .\ 
            \Pr^{\sched_1}_{\state_1}(\Finally T_1) - \Pr^{\sched_2}_{\state_2}(\Finally T_2) = 0 \ .
        $
    \end{enumerate}
    Strong \NP-hardness of \ref{item:1sched1state} and \ref{item:2sched2state} holds irrespective of whether 
    $\state_1 = \state_2$
    and whether $T_1$ and/or $T_2$ are absorbing.
    Moreover, \ref{item:1sched1state} and \ref{item:2sched2state} with relation $\approx_\epsilon$, $\epsilon > 0$, are \NP-hard.
\end{restatable}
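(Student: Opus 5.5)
Membership in \NP is already argued above the statement: guess MD schedulers (polynomially many bits), compute the exact reachability probabilities in the induced DTMCs by solving linear equation systems in polynomial time, and check the comparison. Since the numbers are rationals of polynomial bit size, strong membership is immediate. The real content is therefore strong \NP-hardness. I would reduce from a strongly \NP-complete number problem; the natural candidate is \textsc{3-Partition}, but for equality with a single sum the cleanest source is a \emph{subset-sum-like problem that remains hard in unary}. Because plain \textsc{Subset Sum} is only weakly \NP-hard, I would start from \textsc{Exact Cover by 3-Sets} or from a product-style encoding where deterministic choices multiply probabilities. Alternatively, I would encode a 3-SAT instance directly with unary probabilities.

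Concretely, for \ref{item:1sched1state} I would build a chain MDP from $\state_1$. It consists of gadgets $g_1,\dots,g_k$, one per variable or item. At gadget $g_i$ the scheduler chooses an action $\alpha$ or $\beta$. Under each action, the process moves to $T_1$ with a small probability, to $T_2$ with another small probability, and to the next gadget with the remaining probability; all of these probabilities are constants such as $\nicefrac12,\nicefrac14$ or $\nicefrac{1}{k}$. An MD scheduler then fixes one bit per gadget. Each bit contributes a term of the form $(\prod_{j<i} \text{continue}_j)\cdot(\Delta_i)$ to the difference $\Pr(\Finally T_1)-\Pr(\Finally T_2)$.

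To avoid dependence on the continuation products, I would instead branch uniformly at $\state_1$: with probability $\nicefrac{1}{k}$ the process enters gadget $i$, where the single choice sends it to $T_1$ or $T_2$ with weights derived from the instance. For 3-SAT, clause gadgets would read variable choices. Since MD schedulers are memoryless, reusing the same variable state in several clauses enforces consistency. This memorylessness is the key point: a general scheduler could randomize and trivially achieve equality. Equality of $\Pr(\Finally T_1)$ and $\Pr(\Finally T_2)$ would then encode that every clause is satisfied, for instance because each clause gadget balances only when at least one literal is true. A counting trick such as ``exactly one of three'' via \textsc{1-in-3-SAT} makes equality the precise condition.

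The remaining variants follow by cheap modifications. To get $\state_1\neq\state_2$, prepend copies of the initial state. To make targets absorbing, add self-loops, since only first visits matter. For \ref{item:2sched2state}, I would make the second side scheduler-independent: $\Pr^{\sched_2}_{\state_2}(\Finally T_2)$ is a fixed constant given by a DTMC, so the problem reduces to $\exists\sched_1$ with $\Pr^{\sched_1}_{\state_1}(\Finally T_1)$ equal to a target value. This requires a one-sided encoding in which all clauses contribute positively, so that the value equals the target exactly when all constraints hold. For $\approx_\epsilon$, I would scale so that all non-solution values are separated from the target by a gap $\ge 1/\mathrm{poly}$, then pick $\epsilon$ below the gap; this only gives ordinary \NP-hardness, as claimed.

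The main obstacle I expect is designing the gadget so that equality holds \emph{iff} the combinatorial condition is satisfied, with all probabilities in unary. Without carries, distinct clauses' imbalances could cancel each other. I would first try \textsc{1-in-3-SAT}, where each clause gadget has imbalance $0$ exactly when one literal is true and otherwise has an imbalance of fixed sign. This sign-definiteness rules out cancellation without needing exponentially large weights.
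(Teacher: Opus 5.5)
Your membership argument is fine, but the hardness part has a genuine gap. You never commit to a reduction, and the design you describe in most detail cannot work. Suppose, as in your sketch, the run enters a shared variable state $x_i$ and the choice there sends it towards $T_1$ or $T_2$. Shared states are what you need for memorylessness to enforce consistency. But then what happens after entering $x_i$ does not depend on which clause led there. So for every MD scheduler $\sched$, the difference $\Pr^{\sched}_{\state_1}(\Finally T_1)-\Pr^{\sched}_{\state_1}(\Finally T_2)$ equals $\sum_i p_i\, f_i(\sched(x_i))$. Here $p_i$ is the total probability of entering $x_i$, and $f_i$ depends only on the choice at $x_i$. This is an affine function of the assignment, and the clause structure has disappeared. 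In particular, no clause gadget of this kind is sign-definite for 1-in-3-SAT. A linear function of the three literal values that vanishes whenever exactly one literal is true takes opposite signs on ``no literal true'' and ``two literals true'' (unless it is identically zero). So a violated clause of the first kind cancels one of the second kind. The same objection hits your one-sided encoding for case (b): surplus true literals in one clause compensate for a falsified clause. Moreover, you explicitly intend to avoid exponentially large weights, so all probabilities share a polynomially bounded common denominator. Then a pseudo-polynomial dynamic program can enumerate the values of such an affine sum. Hence this kind of construction cannot give strong \NP-hardness unless \PTIME equals \NP.

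The missing idea is that, with constant transition probabilities, the large numbers (equivalently, the nonlinearity) must come from the graph structure, namely from path length. A chain of $d$ consecutive $\nicefrac{1}{2}$-branches yields probability $2^{-d}$. This is exactly the continuation-product effect you discarded in your first chain design. The paper exploits it via a reduction from Hamiltonian path:
\begin{itemize}
\item A fresh state $s_0$ moves with probability $\nicefrac{1}{2}$ each to $v_{\init}$ and to a chain. That chain reaches an absorbing $s_b$ with probability $2^{-|V|}$ under every scheduler.
\item At each vertex, an edge action moves to the successor with probability $\nicefrac{1}{2}$ (else to a sink), and an extra action moves to an absorbing $s_a$.
\item An MD scheduler follows a simple path from $v_{\init}$, since a revisit means $s_a$ is never reached. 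Hence $\Pr^{\sched}_{s_0}(\Finally\{s_a\})=2^{-(n+1)}$, where $n$ counts the vertices visited after $v_{\init}$.
\item Equality with $2^{-|V|}$ forces $n=|V|-1$, i.e., a Hamiltonian path.
\end{itemize}
All probabilities are $\nicefrac{1}{2}$ or $1$, so hardness is strong. The variant $\state_1\neq\state_2$ comes from dropping $s_0$. The two-scheduler variant follows because the $s_b$-side is scheduler-independent. Any $\epsilon<2^{-(|V|+1)}$ handles $\approx_\epsilon$. That $\epsilon$ is exponentially small, which is why only ordinary \NP-hardness follows there. Your remark that a $1/\mathrm{poly}$ gap yields only ordinary hardness is inconsistent: such a gap would yield strong hardness. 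Finally, your worry that Subset Sum is only weakly \NP-hard is misplaced. Strong hardness concerns only the magnitudes in the \emph{produced} instance. So encoding binary Subset Sum weights $a_i$ structurally would also work: use trees of $\nicefrac{1}{2}$-branches reaching $T_1$ with probability $a_i2^{-b}$, and the target sum towards $T_2$.
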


\begin{proof}[Proof (Sketch)]
    We show strong \NP-hardness by giving a pseudo-polynomial transformation from the \emph{Hamiltonian path} problem, which is known to be strongly \NP-hard \cite{gareyStrongNPCompleteness1978}, inspired by \cite{footePolynomialTime}.
    \NP-hardness of the cases for approximate equality follows by an analogous transformation, but for $\epsilon$ the transformation is only polynomial, not pseudo-polynomial, hence establishing \NP-hardness but not strong \NP-hardness.
    We refer to \cref{app:MD_NP-complete} for the full proof.
\end{proof}

Note that the hardness of the problem does not rely on whether all probability operators are evaluated under the same scheduler and from the same initial state. 

We observe that Steps 1--4 detailed in \cref{sec:reach_algo} construct memoryful randomized witness schedulers in case of approximate equality, if they exist.
Memory and/or randomization are necessary for constructing a scheduler that exactly achieves some specified reachability probability, in general.

\begin{restatable}{theorem}{generalNecessary}
    \label{th:general_necessary}
    Memory and randomization are necessary for \RelReach with (approximate or exact) equality.
\end{restatable}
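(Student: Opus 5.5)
We prove \cref{th:general_necessary} with two small counterexample MDPs: one where every memoryless scheduler (even a randomized one) fails, and one where every deterministic scheduler (even a memoryful one) fails. In both we use a single scheduler variable and a single initial state, so the result already holds for the simplest fragment. We also make the separation robust, so it covers $\approx_\epsilon$ for some fixed $\epsilon>0$ and not only exact equality.

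\emph{Randomization is necessary.} Take an MDP whose initial state $s_0$ has two actions: $\alpha$ leads surely to an absorbing state $t_1$, and $\beta$ leads surely to an absorbing state $t_2$. Consider
\[
\exists \sched .\ \Pr^{\sched}_{s_0}(\Finally \{t_1\}) - \Pr^{\sched}_{s_0}(\Finally \{t_2\}) \approx_\epsilon 0
\]
with, say, $\epsilon = \nicefrac{1}{4}$. A deterministic scheduler, memoryful or not, has only one decision, at the first step in $s_0$. That decision makes the difference equal to either $1$ or $-1$, so the property fails. The scheduler choosing $\alpha$ and $\beta$ with probability $\nicefrac12$ each in $s_0$ gives difference $0$. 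Hence randomization is necessary, even with unbounded memory.

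\emph{Memory is necessary.} Take an MDP with states $s_0,t_1,t_2$, where $t_1$ and $t_2$ are absorbing. From $s_0$, action $\alpha$ moves to $t_1$ with probability $\nicefrac12$ and back to $s_0$ with probability $\nicefrac12$; action $\beta$ moves surely to $t_2$. Use the same property. For a memoryless randomized scheduler choosing $\alpha$ with probability $p$ in $s_0$, the process leaves $s_0$ almost surely when $p<1$. Each step it moves to $t_1$ with probability $p/2$ and to $t_2$ with probability $1-p$. Hence
\[
\Pr(\Finally t_1) = \frac{p/2}{p/2+1-p} = \frac{p}{2-p},
\]
and the difference equals $(2p-2)/(2-p)$, which is never positive. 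When $p=1$ the difference is $1$. So the difference $0$ is not attainable, and for small enough $\epsilon$ no memoryless scheduler works. Unfortunately this case does fail: $p=1$ gives $1$, $p$ close to $1$ gives values close to $0$, so by continuity some $p$ attains $0$. This candidate is therefore wrong, and the example needs redesigning.

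\emph{Revised memory example.} I expect the design of a correct memory example to be the main obstacle. The target value must lie outside the set of values achievable by memoryless randomized schedulers, but inside the convex hull of values achievable by memoryful ones. Since the memoryless value set is typically connected, we exploit a \emph{non-absorbing} target instead. Take states $s_0$ and $t$, with $T_1=\{t\}$ and $T_2=\{u\}$. From $s_0$, action $\alpha$ goes to $t$ and action $\beta$ goes to an absorbing state $u$; from $t$ the only action returns to $s_0$. A memoryless randomized scheduler choosing $\alpha$ with probability $p$ yields:
\begin{itemize}
\item for $p=1$, the path cycles through $s_0$ and $t$ forever, so $\Pr(\Finally t)=1$ and $\Pr(\Finally u)=0$;
\item for $p<1$, $\Pr(\Finally u)=1$ and $\Pr(\Finally t)=p$.
\end{itemize}
A memoryful scheduler that first plays $\alpha$ and then, on returning to $s_0$, plays $\beta$ achieves $\Pr(\Finally t)=1$ and $\Pr(\Finally u)=1$. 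Now consider the property
\[
\Pr(\Finally t) + \Pr(\Finally u) \approx_\epsilon 2
\]
with $\epsilon<\nicefrac12$. Every memoryless scheduler gives sum $1+p \leq 2$, and the sum is close to $2$ only when $p$ is close to $1$ with $p<1$. That is still attainable in the limit, so we must use exact equality $=2$, which fails for every memoryless scheduler, including $p=1$ (sum $1$). Exact equality $=2$ holds for the memoryful deterministic scheduler above, so memory is necessary for exact equality. For $\approx_\epsilon$ we still need a gap rather than a limit, so the final step is to adjust the example to add such a gap, for instance by weighting the two probability operators unevenly.
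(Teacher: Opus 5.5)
Your final argument is correct and matches the paper's. For randomization you use a one-shot split into two absorbing states, and for memory you use exactly the paper's MDP ($s\to t_1\to s$, $s\to t_2$), with $\Pr(\Finally t)+\Pr(\Finally u)=2$ in place of the paper's $\Pr(\Finally t_1)=\Pr(\Finally t_2)$; both memory separations work only for exact equality (your discarded first candidate does fail, but its difference is actually $(3p-2)/(2-p)$, which vanishes at $p=\nicefrac{2}{3}$). The paper likewise never shows memory necessity for $\approx_\epsilon$ with $\epsilon>0$, so your unfinished last step is not required, although your uneven-weight idea does work on the same MDP: $\Pr(\Finally t)-2\Pr(\Finally u)\approx_\epsilon 0$ with $\epsilon<1$ is achievable by a memoryful randomized scheduler, since memoryless randomized schedulers can only reach $1$ or values in $[-2,-1)$.
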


\begin{proof}[Proof (Sketch)]
    Memory: Consider the MDP in \cref{fig:memory-necessary}.
    Over general schedulers, the property
    \enquote{$\exists \sched_1 .\ \Pr^{\sched}_{s}(\Finally \{t_1\}) = \Pr^{\sched}_{s}(\Finally \{t_2\})$} holds, 
    but over memoryless schedulers it does not.
    Randomization: Consider the MDP in \cref{fig:randomization-necessary}.
    Over general schedulers, the property 
    \enquote{$\exists \sched .\ \Pr^\sched_s(\Finally \{t\}) \approx_{0.1} 0.5$} holds, but over deterministic schedulers it does not.
    See \cref{app:general_necessary} for details.
\end{proof}
}

\begin{figure}[t]
    \centering
    \begin{subfigure}[b]{0.49\textwidth}
        \centering
        \scalebox{.7}{
    \begin{tikzpicture}
        \node[state] (s) {$s$};
        
        \node[state, below left= of s] (t1) {$t_1$}; 
        \node[state, below right= of s] (t2) {$t_2$}; 

       \path[-latex', draw]
            (s) edge[bend left] node[right] {$\alpha$} (t1)
            (s) edge node[right] {$\beta$} (t2);

        \path[-latex', draw]
            (t1) edge[bend left] node[right] {$\gamma$} (s)
            (t2) edge[loop right] (t2);
    \end{tikzpicture}
    }
    \caption{$\exists \sched .\ \Pr^{\sched}_{s}(\Finally \{t_1\}) = \Pr^{\sched}_{s}(\Finally \{t_2\})$}
    \label{fig:memory-necessary}
    \end{subfigure}%
    \begin{subfigure}[b]{0.49\textwidth}
        \centering
        \scalebox{.7}{
    \begin{tikzpicture}
        \node[state] (sinita) {$s$};  
        
        \node[state, below left= of sinita] (t) {\phantom{$s_{2}$}}; 
        \node at (t) (th) {$t$};
        \node[state, below right= of sinita] (sx) {$s_\bot$};

        \path[-latex', draw]
            (sinita) edge node[left] {$\alpha$} (t)
            (sinita) edge node[right] {$\beta$} (sx);

        \path[-latex', draw]
            (t) edge[loop left] (t)
            (sx) edge[loop right] (sx);
    \end{tikzpicture}
    }
    \caption{$\exists \sched .\ \Pr^\sched_s(\Finally \{t\}) \approx_{0.1} 0.5$ 
    }
    \label{fig:randomization-necessary}
    \end{subfigure}%
    \caption{
    {\color{cavcolor}
        MDPs where memory and/or randomization are necessary for relational reachability properties with equality.
        }
    }
\end{figure}
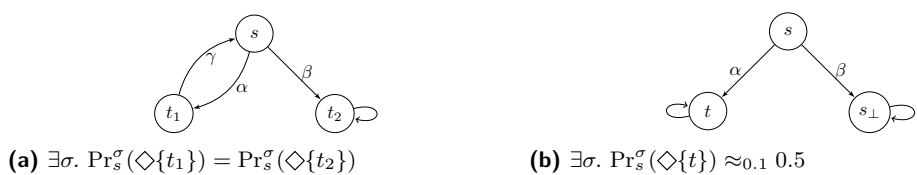

{\color{cavcolor}
Note that \cref{th:MD_NP-complete,th:general_necessary} only make statements about properties with (approximate) equality.
Let us now consider \RelReach with inequality or disequality ($\comp~ \in \{\geq, >, \not\approx_\epsilon \mid \epsilon \geq 0 \}$).
Recall that, for inequality, we only have to check the maximizing (or, for $\not\approx_\epsilon$, possibly also the minimizing) schedulers for the goal unfoldings and transform them back to schedulers for the original MDP.
This transformation introduces memory in general:
If several probability operators are associated with the same scheduler but different initial states, then, intuitively, it might be necessary to switch behavior depending on the initial state.
Further, if several probability operators are evaluated under the same scheduler but for different target sets, then it might be necessary to switch behavior depending on which target sets were already visited. 

The next example illustrates that memory may be necessary for relational reachability properties with disequality even for just a single, absorbing target.

\begin{example}
    \label{ex:oneschedtwostate-exists-greater-memory}
    Consider the MDP depicted in \cref{fig:oneschedtwostate-exists-greater-memory} 
    and the property
    \[
        \exists \sched \in \Scheds .\ \Pr^{\sched}_{s_2}(\Finally \{ t \})  < \Pr^{\sched}_{s_1}(\Finally \{ t \}) \ .
    \]
    Here, both probability operators are evaluated under the \emph{same scheduler} but \emph{different initial states}, and have the same, absorbing target set.
    Over MD schedulers this property cannot be satisfied: For all MD schedulers it holds that $\Pr^{\sched}_{s_1}(\Finally \{ t \}) = \Pr^{\sched}_{s_2}(\Finally \{ t \})$.
    In contrast, there does exist a memoryful scheduler such that the probability of reaching $t$ from $\statea$ exceeds the probability of reaching $t$ from $\stateb$, namely the scheduler that chooses $\alpha$ at $\statea$ if the execution was started at $\statea$, and $\beta$ otherwise.
    This also implies that 
    \[
        \exists \sched \in \Scheds .\ \Pr^{\sched}_{s_1}(\Finally \{ t \}) \neq \Pr^{\sched}_{s_2}(\Finally \{ t \})  \ 
    \]
    can be satisfied over general schedulers but not over MD schedulers.
\end{example}
}

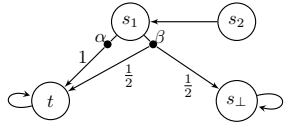
\begin{figure}
    \centering
    \scalebox{.7}{
    \begin{tikzpicture}[on grid,node distance=15mm and 20mm,semithick,>=stealth]
        \node[state] (sinita) {$\statea$};   
        \node[state, right= of sinita] (sinitb) {$\stateb$};   
        
        \node[state, below left= of sinita, xshift=0.5cm] (t) {\phantom{$s_{2}$}}; 
        \node at (t) (th) {$t$};
        \node[state, below right= of sinita] (sx) {$s_\bot$}; 

        \node[dist, below left of=sinita, node distance=4ex] (alpha) {};
        \node[dist, below right of=sinita, node distance=4ex] (beta) {};
        \path[-, draw]
            (sinita) edge node[left] {$\alpha$} (alpha)
            (sinita) edge node[right] {$\beta$} (beta);

        \path[-latex', draw]
            (alpha) edge node[pos=0.25, left] {$1$} (t)
            (beta) edge node[pos=0.25, below] {$\frac{1}{2}$} (t)
            (beta) edge node[pos=0.5, below] {$\frac{1}{2}$} (sx);

        \path[-latex', draw]
            (sinitb) edge (sinita)
            (t) edge[loop left] (t)
            (sx) edge[loop right] (sx);
    \end{tikzpicture}
    }
    \caption{{\color{cavcolor}MDP where memory is necessary for a relational reachability property with in- or disequality.}}
    \label{fig:oneschedtwostate-exists-greater-memory}
\end{figure}

{\color{cavcolor}
\noindent However, under some mild restrictions on the target sets and/or the structure of the state-scheduler combinations, the procedure detailed in \cref{sec:reach_algo} returns MD schedulers in \PTIME, if they exist. 
Correctness (\cref{thm:correctness}) then directly implies that MD schedulers suffice in these cases.
It remains future work to determine whether there are variants of \RelReachMD with disequality that are \NP-hard. 
}%
Let \RelReachNotEq denote the fragment of \RelReach restricted to comparison operators $\{\geq,>,\not\approx_{\epsilon} \mid \epsilon\geq 0\}$.
{\color{cavcolor}

\begin{restatable}{theorem}{MDPTIME}
    \label{th:MD_PTIME}
    For the following special cases of \RelReachNotEq, the procedure in \cref{sec:reach_algo} runs in polynomial time and induces MD schedulers, if it returns true:
    \begin{enumerate}[label=(\alph*)]
        \item \label{item:MD_independent}
        $n=m$, i.e., each probability operator in the property has \emph{its own quantifier}. 

        \item \label{item:MD_n-comb}
        Probability operators with the same scheduler variable have the same initial state 
        (formally, $\forall i,i' .\ \sched_{k_i} {=} \sched_{k_{i'}} \implies \state_i {=} \state_{i'}$)
        and all target sets are absorbing.

        \item \label{item:MD_sign}
        Probability operators with the same scheduler variable have equally signed coefficients and the same target sets
        (formally, $\forall i,i' .\ \sched_{k_i} {=} \sched_{k_{i'}} \implies ((q_i \geq 0 \iff q_{i'} \geq 0) \wedge T_i {=} T_{i'})$).
    \end{enumerate}
\end{restatable}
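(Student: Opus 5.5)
The plan is to use that, for $\comp\; \in \{\geq, >, \not\approx_\epsilon\}$, Step~4 of the procedure in \cref{sec:reach_algo} never forms convex combinations. For $\geq$ and $>$ the property holds iff $v^{\max} \comp q$, so the witnesses are the maximizing schedulers of the individual combinations. For $\not\approx_\epsilon$ the property holds iff $v^{\max} > q+\epsilon$ or $v^{\min} < q-\epsilon$, so the witnesses are either all the maximizing or all the minimizing schedulers. Hence it suffices to show two things in each of the three cases. First, the optimal values $v^{\opt}_c$ are computable in time polynomial in $|\mdp|$ and the property size. Second, for each scheduler variable $\sched_k$ there is a single MD scheduler on $\mdp$ that simultaneously attains $v^{\opt}_c$ for every combination $c\in\comb$ using $\sched_k$, for both $\opt=\max$ and $\opt=\min$. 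Correctness then follows from \cref{thm:correctness} together with \cref{thm:whyComb}.

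\emph{Case \ref{item:MD_independent} ($n=m$).} Each scheduler variable occurs in exactly one combination $c$, and $\mathcal T_c=\{T\}$ is a singleton. The goal unfolding $\indexc$ then has only $2|\states|$ states, so \cref{thm:exRewPtimeMD} gives the optimum in polynomial time, attained by an MD scheduler $\sched_c$ on $\indexc$. Reward is only collected in the layer $\states\times\{\emptyset\}$ (on the first visit to $T$). Hence the MD scheduler $\sched(s) := \sched_c((s,\emptyset))$ on $\mdp$ induces the same distribution over the prefix up to the first visit of $T$, and therefore the same value $q\cdot\Pr_s(\Finally T)$. No consistency issue arises, because each variable belongs to a single combination.

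\emph{Case \ref{item:MD_n-comb}.} Again each scheduler variable belongs to exactly one combination, because it has a unique initial state. Since all targets are absorbing, the unfolding can be avoided. Redirect every target state to a fresh absorbing sink via a unique action, and give it reward $\sum_{T\in\mathcal T_c,\, s\in T} q_T$; all other states get reward $0$. A path visits at most one target state before being trapped, so this reward equals $\indexc[\rew]$ pathwise, and the modified MDP has polynomial size. Apply \cref{thm:exRewPtimeMD} to obtain an optimal MD scheduler. Absorbing states have only self-loop actions, so the choices at non-target states define an MD scheduler on $\mdp$ with the same expected reward. A short argument is needed that the original absorbing self-loops do not matter; this holds because reward is counted once in the original semantics as well.

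\emph{Case \ref{item:MD_sign}.} This case has genuine sharing: several combinations $(s_{c},\sched_k)$ use the same variable with different initial states, but all with the same target $T$ and coefficients of the same sign. Then $\sum_{c} v^{\max}_c$ for that variable is $\sum_c q_c\max_\sched \Pr_{s_c}(\Finally T)$ if the coefficients are nonnegative, and $\sum_c q_c\min_\sched \Pr_{s_c}(\Finally T)$ otherwise (symmetrically for $v^{\min}$). The standard fact~\cite[Ch.~10]{baierPrinciplesModel2008} is that there is one MD scheduler that maximizes (resp.\ minimizes) $\Pr_s(\Finally T)$ from \emph{all} states simultaneously, and it is computable by a polynomial-size LP. This single scheduler realizes every combination's optimum at once. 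The main obstacle is exactly this consistency step: the procedure as stated computes per-combination optima on separate unfoldings, so I must argue that these optima coincide with the values of one uniformly optimal MD scheduler. The sign condition is what prevents a maximizing and a minimizing requirement from conflicting on the same variable, and \cref{ex:oneschedtwostate-exists-greater-memory} shows this fails without it.
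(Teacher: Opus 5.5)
Your proposal is correct and follows essentially the same route as the paper: a single target set per combination in (a), one combination per scheduler variable together with a linear-size unfolding in (b), and a uniformly optimal MD scheduler combined with the sign condition in (c). Your sink construction in (b) is exactly the reachable fragment of the goal unfolding, so the procedure itself still runs in polynomial time. Your explicit back-translation via the $\emptyset$-layer in (a), and your direct treatment of the general cases rather than only $n=1$, spell out what the paper leaves as ``analogous''.
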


\begin{corollary}
    \label{th:MD_suffice}
    For the \RelReach variants from \cref{th:MD_PTIME}, MD schedulers suffice.
\end{corollary}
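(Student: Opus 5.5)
The corollary follows from \cref{th:MD_PTIME} combined with the correctness of the procedure from \cref{sec:reach_algo} (\cref{thm:correctness}, in its general form from \cref{app:reach_full-algo} covering $\geq$, $>$ and $\not\approx_\epsilon$). Fix one of the variants \ref{item:MD_independent}--\ref{item:MD_sign} of \RelReachNotEq. We need to show that the property holds over general schedulers if and only if it holds over MD schedulers.

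The direction from MD to general is trivial, because $\SchedsMD \subseteq \Scheds$. For the converse, assume the property holds for some general schedulers $\sched_1,\ldots,\sched_n \in \Scheds$. By \cref{thm:correctness}, the procedure returns true on this input. By \cref{th:MD_PTIME}, whenever it returns true on one of these variants, it induces MD schedulers on $\mdp$, and these are witnesses of the property. These witnesses exist because the procedure's answer rests on optimal schedulers of the goal unfoldings: for $\geq$ and $>$ only the maximizers are needed, and for $\not\approx_\epsilon$ we use whichever of the maximizer or minimizer lies outside the $\epsilon$-ball around $q$. By \cref{thm:exRewPtimeMD}, these optimizers can be taken MD on $\mdp_c$, and for these variants no convex combination of schedulers is ever needed.

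The only delicate point is that the statement of \cref{th:MD_PTIME} says the procedure ``induces MD schedulers''. We read this as: the induced schedulers are MD witnesses for the original property on $\mdp$, attaining the same values as the computed optima $v^{\opt}$. The reasoning in each case is as follows. In case \ref{item:MD_independent}, each scheduler variable belongs to a single combination with a single target set, so the unfolding is trivial up to the first visit and an MD scheduler of $\mdp_c$ projects to an MD scheduler of $\mdp$. In case \ref{item:MD_n-comb}, absorbing targets mean that the first visit to a target ends the run, so no memory about the visited targets is needed. In case \ref{item:MD_sign}, all operators sharing a scheduler have the same target set and equally signed coefficients, so a single MD optimizer for that target (maximizing or minimizing according to the sign) serves every initial state at once. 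With this reading, the MD witnesses achieve exactly the values of the general optimum, and the corollary follows immediately.
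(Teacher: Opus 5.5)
Your proposal is correct and follows the paper's own argument. The paper obtains the corollary in one line: \cref{thm:correctness} guarantees the procedure returns true whenever the property holds over general schedulers, and \cref{th:MD_PTIME} guarantees that in these cases it then yields MD witnesses. Your case analysis reproduces the proof of \cref{th:MD_PTIME} from \cref{app:MD_PTIME}. One detail to sharpen: in case (b), absorption alone is not what removes the need for memory. The same-initial-state hypothesis is what makes each scheduler variable correspond to a single state-scheduler combination, and this avoids the memory of the initial state that \cref{thm:whyComb} otherwise introduces.
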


The proof of \cref{th:MD_PTIME} can be found in \cref{app:MD_PTIME}. 
We focus here on the intuition.
Firstly, %
if $n=m$ (\ref{item:MD_independent}),  
then all probability operators are independent in the sense that we only need to solve $n=m$ independent single-objective queries, for which there exist optimal MD schedulers~\cite{putermanMarkovDecision1994}.\footnote{\color{cavcolor}Note that this reasoning does not work for $\approx_\epsilon$, because there we may need to combine the optimal MD schedulers into memoryful randomized schedulers to obtain a witness. }

For \ref{item:MD_n-comb}, we consider the statement for $n=1$. 
If all probability operators are associated with the same scheduler and initial state and all target states are sinks, then
MD schedulers suffice as there is nothing to remember: We know which state we started from (since there is only a single initial state), and we know which target sets have already been visited (since all targets are sinks).   

Lastly, consider \ref{item:MD_sign} for $n=1$ and non-negative coefficients.
In this case,
since all target sets are the same, there must exist an MD scheduler maximizing all reachability probabilities $\Pr^\sched_{\state_i}(\Finally T)$ at the same time~\cite{putermanMarkovDecision1994}.
Since all coefficients are non-negative, this scheduler also maximizes the weighted sum over the probabilities. We can analogously reason about minimizing the weighted sum.

}

\paragraph*{Summary of \cref{sec:reach}.} 
The key insight is that solving \RelReach can be reduced to expected reward computations on the goal unfoldings with respect to the relevant target sets for each state-scheduler combination.
\RelReach can be solved in time exponential only in the number of different target sets that occur in the property.
\RelReach is in general \PSPACE-hard, but can be solved in polynomial time in the size of the input if the number of target sets is at most a constant, the target sets are absorbing or each probability operator has its own scheduler quantifier.
\RelReachMD is strongly \NP-hard, but there are several fragments where we can compute MD schedulers in polynomial time.

%% file: buechi.tex
\section{Relational B\"uchi}
\label{sec:buechi}

We extend our approach from reachability to B\"uchi objectives: 

\begin{framedproblem}[\RelBuechi]
    \label{prob:relbuechi}
    Given an MDP $\mdp = \mdptup$, decide whether
    \[
        \exists \sched_1, \ldots, \sched_\numsched \in \Scheds .\
        \sum_{i=1}^{\numsum} 
        q_i \Pr^{\sched_{k_i}}_{\state_i}\left(\Globally \Finally T_i\right) \comp q
        ~,
        \qquad\text{where}
    \]
    \begin{itemize}
        \item $\numsum, \numsched$ are natural numbers, 
        \item $q_1, \ldots, q_\numsum$ are rational coefficients,
        \item $q$ is a rational bound,
        \item $\state_1, \ldots, \state_\numsum \in \states$ are (not necessarily distinct) initial states,
        \item $\{k_1, \ldots, k_\numsum\} = \{1, \ldots, n\}$ is a set of indices,
        \item $T_1, \ldots, T_\numsum \subseteq \states$ are (not necessarily distinct) target sets, and 
        \item $\comp~ \in \{ >, \geq, \approx_\epsilon, \not\approx_\epsilon 
        \mid \epsilon \in \Qnn\}$ is a comparison relation.
    \end{itemize}    
\end{framedproblem}

The motivating example from~\cref{ex:buechi_motivating-ex} is (the negation of) a relational B\"uchi property.

\subparagraph*{Outline of this section.}
In \cref{sec:buechi_algo}, we show how to solve relational B\"uchi queries by reducing them to relational reachability queries on a variation of the standard \emph{MEC quotient} inspired by \cite{baierCertificatesWitnesses2024,baierFoundationsProbabilityraising2024}.
In \cref{sec:buechi_complexity}, we prove that the problem \RelBuechi is strongly NP-complete over both general and MD schedulers and present special cases that can be solved in \PTIME or where MD scheduler suffice. 
\cref{tab:buechi_complexity_overview_m=2} gives an overview of the border between \PTIME and strong \NP-hardness for MD schedulers, for analogous cases as given in~\cref{tab:complexity_overview_m=2}, differences to the complexity results for \RelReach are indicated in \textbf{bold}.

\begin{table}[t]
    \caption{Complexity of selected classes of simple relational B\"uchi properties over MD schedulers, where $\epsilon > 0$ and $\diseq~ \in \{ \geq, >, \not\approx_{\epsilon'} \mid \epsilon' \in \Qnn \}$, 
    for the cases presented in \cref{tab:complexity_overview_m=2} for \RelReach. Differences to the results for \RelReach are indicated in \textbf{bold}.
    Over general schedulers, all variants considered here are in \PTIME (\Cref{th:buechi_general_PTIME}\ref{item:buechi_fixed-param}). 
    }
    \label{tab:buechi_complexity_overview_m=2}
    \setlength\tabcolsep{0pt}
    \centering
    \begin{tabular*}{\linewidth}{@{\extracolsep{\fill}} l  l }
        \toprule
         \bf Property class & \bf  Complexity over MD schedulers 
         \\
        \midrule
         $\exists \sched .\ \Pr^{\sched}_{\state}(\Globally \Finally T_1) =
       \Pr^{\sched}_{\state}(\Globally \Finally T_2)$
        & strongly \NP-complete [Th.~\ref{th:buechi_MD_NP-complete}\ref{item:buechi_1sched1state}] 
        \\
        $\exists \sched .\ \Pr^{\sched}_{\state}(\Globally \Finally T_1) \approx_\epsilon 
       \Pr^{\sched}_{\state}(\Globally \Finally T_2)$
        & \NP-complete [Th.~\ref{th:buechi_MD_NP-complete}\ref{item:buechi_1sched1state}] 
        \\
        $\exists \sched .\ \Pr^{\sched}_{\state}(\Globally \Finally T_1) \diseq \Pr^{\sched}_{\state}(\Globally \Finally T_2)$
        & in \NP [Th.~\ref{th:buechi_MD_NP}]; \PTIME if $T_1, T_2$ absorb. [Cor.~\ref{cor:buechi_MD_absorb}]
        \\
        \midrule 
        $\exists \sched .\ \Pr^{\sched}_{\state_1}(\Globally \Finally T_1) = \Pr^{\sched}_{\state_2}(\Globally \Finally T_2)$
        & strongly \NP-complete [Th.~\ref{th:buechi_MD_NP-complete}\ref{item:buechi_1sched2state}]
        \\
        $\exists \sched .\ \Pr^{\sched}_{\state_1}(\Globally \Finally T_1) \approx_\epsilon \Pr^{\sched}_{\state_2}(\Globally \Finally T_2)$
        & \NP-complete [Th.~\ref{th:buechi_MD_NP-complete}\ref{item:buechi_1sched2state}]
        \\
        $\exists \sched .\ \Pr^{\sched}_{\state_1}(\Globally \Finally T_1) \diseq \Pr^{\sched}_{\state_2}(\Globally \Finally T_2)$
        & in \NP [Th.~\ref{th:buechi_MD_NP}]
        \\
        \midrule 
        $\exists \sched_1, \sched_2 .\ \Pr^{\sched_1}_{\state_1}(\Globally \Finally T_1) = \Pr^{\sched_2}_{\state_2}(\Globally \Finally T_2)$
        & strongly \NP-complete [Th.~\ref{th:buechi_MD_NP-complete}\ref{item:buechi_2sched2state}]
        \\
        $\exists \sched_1, \sched_2 .\ \Pr^{\sched_1}_{\state_1}(\Globally \Finally T_1) \approx_\epsilon \Pr^{\sched_2}_{\state_2}(\Globally \Finally T_2)$
        & \NP-complete [Th.~\ref{th:buechi_MD_NP-complete}\ref{item:buechi_2sched2state}]
        \\
        $\exists \sched_1, \sched_2 .\ \Pr^{\sched_1}_{\state_1}(\Globally \Finally T_1) \diseq \Pr^{\sched_2}_{\state_2}(\Globally \Finally T_2)$
        & \textbf{in \NP [Th.~\ref{th:buechi_MD_NP}]; \PTIME if $T_1, T_2$ absorb. [Cor.~\ref{cor:buechi_MD_absorb}]}
        \\
        \bottomrule
    \end{tabular*}
\end{table}

%% file: buechi_algo.tex
\subsection{Verifying Relational B\"uchi Properties}
\label{sec:buechi_algo}

Assume an MDP $\mdp = \mdptup$ and 
a \RelBuechi property
\begin{align}
    \exists \sched_1, \ldots, \sched_\numsched \in \Scheds .\
        \sum_{i=1}^{\numsum} 
        q_i \Pr^{\sched_{k_i}}_{\state_i}\left(\Globally \Finally T_i\right) \comp q
    \label{eq:genRelBuechiProp}
\end{align}
as in~\cref{prob:relbuechi}.
\begin{assumption}
\label{ass:distinct-act}
  To simplify notation, in the following we assume w.l.o.g.~that all states in $\mdp$ have disjoint sets of actions.
\end{assumption}
We show how to reduce property~\eqref{eq:genRelBuechiProp} to a relational reachability property on a variation of the standard \emph{MEC quotient} of $\mdp$. 
\cref{alg:buechi} gives an overview of our approach.

\begin{example}
    \label{ex:buechi_running}
    As a running example, we use the MDP depicted in \cref{fig:buechi_running-ex_mdp} along with the \RelBuechi property
    \[
        \exists \sched_1,\sched_2 .\ 
        \Pr^{\sched_1}_{s}(\Globally\Finally T_1) + \Pr^{\sched_1}_{s}(\Globally\Finally T_2) - \Pr^{\sched_2}_{s}(\Globally\Finally T_1) = 0~,
    \]
    which intuitively asks whether there exist two schedulers $\sched_1$, $\sched_2$ such that the sum of the probabilities of visiting $T_1$ and $T_2$, respectively, infinitely often under $\sched_1$ is the same as the probability of visiting $T_1$ infinitely often under $\sched_2$.  
\end{example}

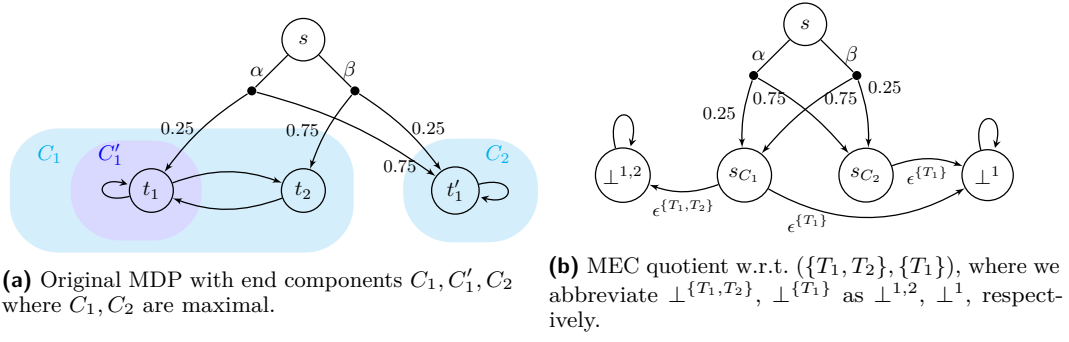
\begin{figure}
    \centering
    \begin{subfigure}[c]{0.48\textwidth}
        \centering
        \scalebox{.8}{
    \begin{tikzpicture}[
            node distance=25mm and 25mm,
            on grid,semithick,>=stealth,
            ]
        \node[state] (s) {$s$};
        \node[dist, below left=12mm of s] (alpha) {};
        \node[dist, below right=12mm of s] (beta) {};
        
        \node[state, below left= of s] (t1) {$t_1$}; 
        \node[state, below= of s] (t2) {$t_2$}; 
        
        \node[state, below right= of s] (t1p) {$t'_1$}; 

        \path[-, draw]
            (s) edge node[left] {$\alpha$} (alpha)
            (s) edge node[right] {$\beta$} (beta);
            
        \path[-latex', draw]
            (alpha) edge[bend right=10] node[font=\footnotesize, left] {$0.25$} (t1)
            (alpha) edge[bend left=10] node[font=\footnotesize, below, pos=0.8] {$0.75$} (t1p)
            (beta) edge[bend right=10] node[font=\footnotesize, left] {$0.75$} (t2)
            (beta) edge[bend left=10] node[font=\footnotesize, right] {$0.25$} (t1p)
            (t1) edge[bend left=20] (t2)
            (t2) edge[bend left=20] (t1);

        \path[-latex', draw]
            (t1) edge[loop left] (t1)
            (t1p) edge[loop right] (t1p);

        \begin{scope}[on background layer]
        \node (box12) [inner sep=13pt, fill=cyan!15, line width=.8pt,rounded corners=0.8cm,fit = {([xshift=-15mm] t1.west) ([yshift=2mm] t2.north) ([yshift=-2mm] t2.south) (t2.east)}, label={[yshift=6mm, xshift=10mm]left:{\textcolor{cyan}{$\mec_1$}}}] {};
        \node (box1) [inner sep=13pt, fill=blue!15, line width=.8pt,rounded corners=0.8cm,fit = {([xshift=-5mm] t1.west) (t1.east) (t1.north) (t1.south)}, label={[yshift=6mm, xshift=10mm]left:{\textcolor{blue}{$\ec'_1$}}}] {};
        \node (box1p) [inner sep=13pt, fill=cyan!15, line width=.8pt,rounded corners=0.8cm,fit = {(t1p.west) ([xshift=5mm] t1p.east) (t1p.north) (t1p.south)}, label={[yshift=6mm, xshift=-10mm]right:{\textcolor{cyan}{$\mec_2$}}}] {};
        \end{scope}[on background layer]
    \end{tikzpicture}
    }
    \caption{Original MDP with end components $C_1, C'_1, C_2$ where $C_1, C_2$ are maximal.}
    \label{fig:buechi_running-ex_mdp}
    \end{subfigure}
    \quad
    \begin{subfigure}[c]{0.48\textwidth}
        \centering
        \scalebox{.8}{
        \begin{tikzpicture}[
            node distance=25mm and 20mm,
            on grid,semithick,>=stealth,
            ]
        \node[state] (s) {$s$};
        \node[dist, below left=12mm of s] (alpha) {};
        \node[dist, below right=12mm of s] (beta) {};
        
        \node[state, below = of s, xshift=-10mm] (sC1) {$s_{\mec_1}$}; 
        \node[state, below= of s, xshift=10mm] (sC2) {$s_{\mec_2}$}; 

        \node[state, left= of sC1, inner sep=2pt] (bot12) {$\bot^{1,2}$}; 
        \node[state, right= of sC2] (bot1) {$\bot^{1}$}; 
    
        \path[-, draw]
            (s) edge node[left] {$\alpha$} (alpha)
            (s) edge node[right] {$\beta$} (beta);
            
        \path[-latex', draw]
            (alpha) edge[bend right=10] node[font=\footnotesize, left] {$0.25$} (sC1)
            (alpha) edge[bend left=10] node[font=\footnotesize, below, pos=0.1] {$0.75$} (sC2)
            (beta) edge[bend right=10] node[font=\footnotesize, below, pos=0.1] {$0.75$} (sC1)
            (beta) edge[bend left=10] node[font=\footnotesize, right, pos=0.1] {$0.25$} (sC2)
            (sC1) edge[bend left=20] node[font=\footnotesize, below] {$\epsilon^{\{T_1, T_2\}}$} (bot12)
            (sC1) edge[bend right=30] node[font=\footnotesize, below, pos=0.2] {$\epsilon^{\{T_1\}}$} (bot1)
            (sC2) edge[bend left=20] node[font=\footnotesize, below] {$\epsilon^{\{T_1\}}$} (bot1);

        \path[-latex', draw]
            (bot1) edge[loop above] (bot1)
            (bot12) edge[loop above] (bot12);
    \end{tikzpicture}
    }
    \caption{MEC quotient w.r.t.\ $(\{T_1, T_2\}, \{T_1\})$, where we abbreviate $\bot^{\{T_1, T_2\}}$, $\bot^{\{T_1\}}$ as $\bot^{1,2}$, $\bot^1$, respectively.}
    \label{fig:buechi_running-ex_quotient}
    \end{subfigure}
    \caption{
    MDP from the running example introduced in~\cref{ex:buechi_running}, 
    where $T_1 = \{t_1, t'_1\}$, $T_2 = \{t_2\}$.
    Edges without label correspond to transitions with probability 1.}
    \label{fig:buechi_running-ex}
\end{figure}

\subparagraph*{Step 1: Collect combinations of initial states and schedulers. }
We define the set $\comb$ of combinations of initial states and schedulers, and $\relInd(c)$ and $\indexc[\mathcal{T}]$ for $c \in \comb$ as before (see \cref{def:comb}).

\begin{example}
    Continuing \cref{ex:buechi_running}, we have $\comb = \{(\state,\sched_1), (\state,\sched_2)\}$, with $\mathcal{T}_{(\state,\sched_1)} = \{ T_1, T_2\}$ and $\mathcal{T}_{(\state,\sched_2)} = \{T_1\}$, and $\relInd(\state,\sched_1) = \{1,2\}$ and $\relInd(\state, \sched_2) = \{3\}$.
\end{example}

\subparagraph*{Step 2: Build MEC quotient and define success sets.}
For each $c \in \comb$, we analyze for each MEC and each subset $\mathcal{T}$ of $\indexc[\mathcal{T}]$ whether we can enforce to see the target sets from $\mathcal{T}$ infinitely often and all other target sets from $\indexc[\mathcal{T}]$ only finitely often while staying inside this MEC. 
Let us stress that our goal is to visit \emph{at least one} state from \emph{each} $T \in \mathcal{T}$ infinitely often, which is a stronger condition than requiring to visit at least one state from $\bigcup_{T \in \mathcal{T}} T$ infinitely often.
Since each path of an MDP stays inside some MEC in the limit almost-surely, we analyze in which MECs we can ensure to visit only the target sets from a given $\mathcal{T} \subseteq \indexc[\mathcal{T}]$ infinitely often.

\begin{definition}
    Let $c \in \comb$.
    For
    $\mathcal{T} \subseteq \indexc[\mathcal{T}]$
    we define 
    \begin{align*}
        \EC_{\mathcal{T}} &= \left\{ \ec'=(S',A') \in \EC(\mdp) \mid \left(\forall T \in \mathcal{T} .\ S' \cap T \neq \emptyset \right) \land 
        \left(\forall T \in \indexc[\mathcal{T}] \setminus \mathcal{T} .\ S' \cap T = \emptyset \right) \right\},
        \\
        \MEC_{\mathcal{T}} &= \left\{ \mec \in \MEC(\mdp) \mid 
        \exists \ec' \in \EC(\mdp) .\ \ec' \subseteq \mec \land {}
        \ec' \in \EC_{\mathcal{T}} 
        \right\} 
        ~.
    \end{align*}
\end{definition}

\begin{example}
    For the MDP from \cref{fig:buechi_running-ex_mdp}, we have $\EC_{\{T_1\}} = \{C'_1, C_2\}$, $\EC_{\{T_2\}} = \emptyset$, $\EC_{\{T_1, T_2\}} = \{C_1\}$ and 
    $\MEC_{\{T_1\}} = \{C_1, C_2\}$, $\MEC_{\{T_2\}} = \emptyset$, $\MEC_{\{T_1, T_2\}} = \{C_1\}$.
\end{example}

\begin{remark}
    Since the number of ECs of an MDP might be exponential in the number of states, it might be inefficient to construct $\MEC_{\mathcal{T}}$ for a given $\mathcal{T} \subseteq 2^\states$ by computing the set of all sub-ECs for every MEC $\mec$. 
    Instead, in our prototypical implementation~(see~\cref{sec:implement}), we check for each MEC $\mec$ whether it is contained in $\MEC_{\mathcal{T}}$ by removing the states from $\bigcup_{T \not \in \mathcal{T}} T$ from $\mec$ and then checking whether the remaining subsystem has an MEC containing at least one state from each $T \in \mathcal{T}$.
\end{remark}

We build the MEC quotient with a separate sink state $\bot^{\mathcal{T}}$ for every subset $\mathcal{T}$ of some $\indexc[\mathcal{T}]$ for some $c \in \comb$ with $\MEC_{\mathcal{T}} \neq \emptyset$.
For each MEC $\mec \in \MEC_{\mathcal{T}}$ we add a transition from the collapsed state $\state_\mec$ to $\bot^{\mathcal{T}}$.
Intuitively, moving to a sink $\bot^{\mathcal{T}}$ represents staying in some MEC forever and seeing exactly the target sets contained in $\mathcal{T}$ infinitely often. 
This definition differs from the standard definition of an MEC quotient~(see, e.g., \cite{alfaroFormalVerification1997,baierFoundationsProbabilityraising2024}) only in the introduction of several different sink states.
Before giving the formal definition of the MEC quotient, we establish a correspondence between the states of the original MDP and the states of the MEC quotient.

\begin{definition}[Collapsed states]
\label{def:collapsedStates}
    Let $\collapsedStates := (\states \setminus \states_{\MEC}) \cup \{\state_\mec \mid \mec \in \MEC(\mdp)\}$.
    We define $\mapStates \colon \states \to \collapsedStates$ as follows:
    For $\state \in \states$, let $\mapStates(\state) = \state_C$ if there exists some $\mec \in \MEC(\mdp)$ with $s \in \mec$ and $\mapStates(\state) = \state$ otherwise.
\end{definition}

\begin{definition}[MEC quotient $\quotT$]
    \label{def:MEC-quotient_set}
    Let $\collapsedStates$ and $\mapStates \colon \states \to \collapsedStates$ as defined in~\cref{def:collapsedStates}.
    The \emph{MEC quotient of $\mdp$ w.r.t.\ $(\mathcal{T}_c)_{c \in \comb}$} is the MDP
    $\quotT = \mdptup[\quotT]$ with 
    \begin{itemize}
        \item $\quotT[\states] = \collapsedStates \cup \overbrace{\bigcup_{c \in \comb} \{\bot^{\mathcal{T}} \mid {\mathcal{T}} \subseteq \mathcal{T}_c, \MEC_{\mathcal{T}} \neq \emptyset \}}^{\states_\bot}$.

        \item $\quotT[\Act] = \Act \cup \bigcup_{c\in \comb} \{ \epsilon^{\mathcal{T}} \mid {\mathcal{T}} \subseteq \mathcal{T}_c \}$
    
        \item
        We define the transition function $\quotT[\Trans] \colon \quotT[\states] \times \quotT[\Act] \times \quotT[\states] \to [0,1]$ as follows, distinguishing between collapsed states, sink states and non-MEC states: 
        \begin{itemize}
            \item For $\state \in \states \setminus \states_{\MEC}$, $\action \in \Act$ and $\quotT[\state] \in \quotT[\states] \setminus \states_{\bot}$ we let $\quotT[\Trans](\state, \action, \quotT[\state]) = \sum_{s' \in \mapStates^{-1}(\quotT[\state])} \Trans(\state, \action, s')$.
        
            \item For $\mec=(S',A) \in MEC(\mdp)$, $\action \in \Act$ and $\quotT[\state] \in \quotT[\states] \setminus \states_{\bot}$:
            \[
                \quotT[\Trans](\state_\mec, \action, \quotT[\state]) = 
                \begin{cases}
                    \sum_{s' \in \mapStates^{-1}(\quotT[\state])} \Trans(\state, \action, s') & \text{if } \action \not\in A \land \exists \state' \in S' .\ \action \in \Act(\state') \\
                    0 & \text{otherwise}~.
                \end{cases}
            \]
    
            \item For $\mec=(S',A) \in MEC(\mdp)$, and $\mathcal{T} \subseteq \mathcal{T}_c$ for some $c \in \comb$ with $\mec \in \MEC_{\mathcal{T}}$ we let $\quotT[\Trans](s_\mec, \epsilon^{\mathcal{T}}, \bot^{\mathcal{T}}) = 1$ and $\quotT[\Trans](\bot^{\mathcal{T}}, \epsilon^{\mathcal{T}}, \bot^{\mathcal{T}}) = 1$.
        \end{itemize} 

        \item All other transition probabilities are 0.
    \end{itemize}
\end{definition}

\begin{remark}
    Note that the well-definedness of the transition function of $\quotT$ relies on our assumption~(p.~\pageref{ass:distinct-act}) that each state has a distinct set of actions.
    We could alternatively relax this assumption and set $\quotT[\Act] = \{ (\state,\action) \mid \action \in \Act(\state), \state \in \states\}$; we chose the current form to improve readability.
\end{remark}

Observe that $|\quotT[\states]|$ is exponential in the number of objectives $m$, in fact already $|\states_\bot|$ is exponential in the number of different objectives $m$.

\begin{example}
    \label{ex:MEC-quotient}
    The MEC quotient of the MDP from \cref{fig:buechi_running-ex_mdp} w.r.t.\ $(\{T_1, T_2\}, \{T_1\})$ is depicted in \cref{fig:buechi_running-ex_quotient}.
\end{example}

\begin{remark}
    Dividing the set of target sets by state-scheduler combination, i.e., defining $\sdead$ as in~\cref{def:MEC-quotient_set} instead of letting $\sdead = \{ \bot^{\mathcal{T}} \mid \mathcal{T} \subseteq \{T_1, \ldots, T_m\}, \MEC_{\mathcal{T}} \neq \emptyset \}$, is not necessary for the correctness of our reduction from \RelBuechi to \RelReach, but important for efficiency.
    If there is only a single target (or a constant number of targets) per state-scheduler combination, then $|\sdead|$ is linear $m$ for our definition, while it would be exponential in $m$ for the alternative definition of $\sdead$.
    In general, however, dividing the sink states by state-scheduler combination does not improve the upper bound on the size of $\sdead$ below $2^m$.
\end{remark}

\begin{definition}[Success Set]
    \label{def:success-set}
    Let $\quotT[\mdp]$ be as defined in \cref{def:MEC-quotient_set}.
    For $i=1,\ldots,m$, the \emph{success set of $T_i$ in $\quotT$} is defined as
    \begin{align*} 
        U_{T_i} 
        &= \{ \bot^{\mathcal{T}} \mid \exists c \in \comb .\ T_i \in \mathcal{T} \subseteq \mathcal{T}_c \wedge \MEC_{\mathcal{T}} \neq \emptyset \} 
        ~.
    \end{align*}%
\end{definition}

\begin{example}
    \label{ex:buechi_success-sets}
   Continuing our example from \cref{ex:MEC-quotient}, we have $U_{T_1} = \{\bot^{\{T_1, T_2\}}, \bot^{\{T_1\}}\}$ and $U_{T_2} = \{\bot^{\{T_1, T_2\}}\}$.
\end{example}

\subparagraph*{Step 3: Solve relational reachability query.}
We can now reduce the relational B\"uchi query on the original MDP to a relational reachability query for the success sets on the MEC quotient~(see~\cref{def:success-set} and \cref{def:MEC-quotient_set}, respectively).
We use ideas from~\cite{baierCertificatesWitnesses2024,baierFoundationsProbabilityraising2024} for transferring witnessing schedulers between the original MDP and the MEC quotient.
Recall that $\mapStates$ maps states of $\mdp$ to states of $\quotT$.

\begin{restatable}{theorem}{relBuechiToRelReach}
    \label{th:relBuechi-to-relReach}
    Let $\quotT[\mdp]$ and $U_{T_1}, \ldots, U_{T_m}$ as defined in \cref{def:MEC-quotient_set} and \cref{def:success-set}, respectively. Then,
    \begin{align*}
        & \exists \sched_1, \ldots, \sched_\numsched \in \Scheds .\
        \sum_{i=1}^{\numsum} 
        q_i \Pr^{\sched_{k_i}}_{\state_i} \left( \Globally \Finally T_i \right)
        \comp q
        \\ \Iff \quad &
        \exists \quotT[\sched_1], \ldots, \quotT[\sched_\numsched]  \in \Scheds[\quotT] .\ 
        \sum_{i=1}^{\numsum} 
        q_i \Pr^{\quotT,\quotT[\sched_{k_i}]}_{\mapStates(\state_i)}(\Finally U_{T_i}) \comp q
        ~.
    \end{align*}
\end{restatable}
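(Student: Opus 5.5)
The plan is to prove both directions by an explicit, probability-preserving transfer of schedulers between $\mdp$ and $\quotT$, carried out separately for each state-scheduler combination. First I reduce to single combinations. Exactly as in \cref{thm:whyComb}, schedulers may use memory to remember their initial state, so both sides of the equivalence are equivalent to the same statement with one fresh scheduler $\sched_c$ per $c\in\comb$. It therefore suffices to show the following for each $c=(s,\textcolor{gray}{\sched})$. Every vector $\bigl(\Pr^{\sched}_{s}(\Globally\Finally T)\bigr)_{T\in\mathcal{T}_c}$ achievable in $\mdp$ is achievable as $\bigl(\Pr^{\quotT,\quotT[\sched]}_{\mapStates(s)}(\Finally U_T)\bigr)_{T\in\mathcal{T}_c}$ in $\quotT$, and conversely. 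The linear combination with the $q_i$ then carries over verbatim.

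For ($\Rightarrow$), fix $\sched\in\Scheds$. I use the standard fact that under any scheduler, almost every path $\pi$ has as its set of infinitely often visited states and actions an end component $\mathit{Inf}(\pi)$, contained in a unique MEC $\mec$. I classify such paths by the pattern $\mathcal{T}'=\{T\in\mathcal{T}_c\mid \mathit{Inf}(\pi)\cap T\neq\emptyset\}$. Then $\mathit{Inf}(\pi)\in\EC_{\mathcal{T}'}$, so $\mec\in\MEC_{\mathcal{T}'}$ and the transition $s_\mec\xrightarrow{\epsilon^{\mathcal{T}'}}\bot^{\mathcal{T}'}$ exists. The quotient scheduler $\quotT[\sched]$ follows the method of \cite{baierCertificatesWitnesses2024}: outside MECs it copies $\sched$ on the lumped history. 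At a collapsed state $s_\mec$, reached after a quotient history $\hat\pi$, it randomizes over (i) the exit actions $\action\notin A$ of $\mec$, each with the conditional probability under $\sched$ that the current sojourn in $\mec$ ends by taking $\action$, and (ii) the actions $\epsilon^{\mathcal{T}'}$, each with the conditional probability that $\sched$ stays in $\mec$ forever with pattern $\mathcal{T}'$. Both conditional probabilities are taken given the set of $\mdp$-histories that map to $\hat\pi$. An induction over the lumped history then shows that the two processes induce the same distribution over exits. Hence $\Pr^{\sched}_{s}(\Globally\Finally T)=\Pr^{\quotT[\sched]}_{\mapStates(s)}(\Finally U_T)$ for every $T\in\mathcal{T}_c$, because $\bot^{\mathcal{T}'}\in U_T$ iff $T\in\mathcal{T}'$.

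For ($\Leftarrow$), fix $\quotT[\sched]$. The $\mdp$-scheduler again copies $\quotT[\sched]$ outside MECs. On entering $\mec$ with quotient history $\hat\pi$, it samples the action that $\quotT[\sched]$ would take at $s_\mec$. If this is an exit action $\action$ enabled in some $s'\in\mec$, it moves inside $\mec$ to $s'$ almost surely, using an MD attractor strategy within $\mec$, and then plays $\action$. If it is $\epsilon^{\mathcal{T}'}$, it moves almost surely into some $\ec'\in\EC_{\mathcal{T}'}$ with $\ec'\subseteq\mec$ and then plays uniformly among the actions of $\ec'$ forever. Doing so visits every state of $\ec'$ infinitely often, hence exactly the targets in $\mathcal{T}'$.

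I expect the main obstacle to be two points. The first is making the conditional-probability construction in ($\Rightarrow$) well defined and showing that the induced measures coincide. My approach there is to define $\quotT[\sched]$ through the probability measure of $\sched$ on cylinder sets of lumped paths, then verify agreement on cylinders and extend by Carathéodory. The second point concerns sinks $\bot^{\mathcal{T}'}$ with $\mathcal{T}'\subseteq\mathcal{T}_{c'}$ for some $c'\neq c$, which $\quotT[\sched_c]$ can also reach. From these sinks, targets in $\mathcal{T}_c\setminus\mathcal{T}_{c'}$ are not tracked, so for ($\Leftarrow$) I must show that the witnessing end component can be chosen consistently with $\mathcal{T}_c$. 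If this fails in general, I would first check whether it suffices to redirect such a choice to a sink $\bot^{\mathcal{T}''}$ with $\mathcal{T}''\subseteq\mathcal{T}_c$ realizable in the same MEC. Failing that, I would restrict the argument to sinks indexed by subsets of $\mathcal{T}_c$.
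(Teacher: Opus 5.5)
Your ($\Rightarrow$) direction is the paper's construction (\cref{th:relBuechi-to-relReach_Rightarrow}, following \cite{baierFoundationsProbabilityraising2024}). For ($\Leftarrow$) your route is more direct than the paper's. The paper first treats only memoryless randomized quotient schedulers (\cref{th:relBuechi-to-relReach_Leftarrow_MR}), using a scheduler $\sigma_{\textsf{leave},C}$ from \cite{baierCertificatesWitnesses2024} that leaves $C$ with prescribed exit frequencies. It then lifts this to arbitrary $\widehat{\sigma}$ via the flip-extension. Because the $U_{T_j}$ are absorbing, every achievable value of the scalar weighted sum is attained by a mixture of two MD schedulers, hence by an MR scheduler on the flip-extension (\cref{le:flip-ext_MR}). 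The MR case is then applied to the flip-extension of $\mdp$, whose quotient coincides with the flip-extension of $\widehat{\mdp}$.

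Your simulation works differently. You sample $\widehat{\sigma}(\hat\pi)$ at every entry into $C$, including the start and re-entries through exit actions with successors in $C$. You keep the sample in stochastic memory, as \cref{def:scheduler_explicit} permits, and then walk to the exit state or into $C'$. This handles history-dependent $\widehat{\sigma}$ directly and preserves the whole vector of probabilities, so you need neither the flip-extension nor $\sigma_{\textsf{leave},C}$. Playing uniformly inside $C'$ is also the right choice. A memoryless deterministic strategy confined to $C'$, as the paper's $\sigma_{C,\mathcal{T}}$ is stated to be, may visit only part of $C'$ and miss targets of $\mathcal{T}'$.

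The obstacle you flag is a genuine gap, and it cannot be closed: with sinks shared across combinations as in \cref{def:MEC-quotient_set} and \cref{def:success-set}, the equivalence is false. Take two states $t,u$, each with a single self-loop, let $T_1=\{t\}$ and $T_2=\{t,u\}$, and consider
\[
  \exists \sigma_1,\sigma_2 .\ \Pr^{\sigma_1}_{t}(\Globally\Finally T_1)-\Pr^{\sigma_2}_{t}(\Globally\Finally T_2)\geq 1 .
\]
In $\mdp$ the left-hand side always equals $1-1=0$. In $\widehat{\mdp}$, the MEC $C_t$ lies in $\MEC_{\{T_1\}}$ relative to $c_1=(t,\sigma_1)$. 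So $s_{C_t}$ has an $\epsilon^{\{T_1\}}$-edge to $\bot^{\{T_1\}}\in U_{T_1}\setminus U_{T_2}$. If both quotient schedulers take this edge, the value is $1-0=1$, even though $C_t$ has no sub-EC avoiding $T_2$.

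Neither of your fallbacks rescues the statement. Redirecting $\widehat{\sigma}_2$ to the sink realizable for $c_2$, namely $\bot^{\{T_2\}}$, changes $\Pr(\Finally U_{T_2})$ from $0$ to $1$. Restricting to sinks indexed by subsets of $\mathcal{T}_c$ proves a different statement and does not even isolate the defect. With $\mathcal{T}_c=\{T_1,T_2\}$ and $\mathcal{T}_{c'}=\{T_1\}$ in the same MDP, the edge to $\bot^{\{T_1\}}$ exists only because of $c'$.

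The paper's proof does not address this either. \cref{th:relBuechi-to-relReach_Leftarrow_MR} asserts ``by construction'' that $C\in\MEC_{\mathcal{T}}$ yields a sub-EC visiting exactly $\mathcal{T}$ among $\mathcal{T}_c$. That holds only if the membership was established relative to $c$ itself.

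The repair belongs in the construction, and there are two options. You can let sinks record exact patterns relative to the whole family $\{T_1,\ldots,T_m\}$, with $\EC_{\mathcal{T}}$ defined relative to that family; this is the undivided sink set from the remark after \cref{ex:MEC-quotient}. Alternatively, you can use a separate copy of the quotient for each combination $c$, with sinks and success sets relative to $\mathcal{T}_c$ only. This is legitimate after the split of \cref{thm:whyComb} and retains the efficiency of per-combination sinks.

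With either change, your two constructions go through as written. The only adjustment is that in the first variant you classify paths by their pattern relative to all of $\{T_1,\ldots,T_m\}$. For $T\in\mathcal{T}_c$, whether the reached sink lies in $U_T$ is then decided exactly by whether $T$ is visited infinitely often.
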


\begin{example}
    Our running example from~\cref{ex:buechi_running} is equivalent to the following \RelReach property on the MEC quotient $\quotT$ w.r.t.\ $(\{T_1, T_2\}, \{T_1\})$ (depicted in~\cref{fig:buechi_running-ex_quotient}):
    \[
        \exists \sched_1, \sched_2 \in \Scheds[\quotT] .\ 
        \Pr^{\quotT,\sched_1}_{s}(\Finally U_{T_1}) + \Pr^{\quotT,\sched_1}_{s}(\Finally U_{T_2}) - \Pr^{\quotT,\sched_2}_{s}(\Finally U_{T_1}) = 0
        ~,
    \]
    with $U_{T_1}, U_{T_2}$ as defined in \cref{ex:buechi_success-sets}.
    This property can be satisfied by letting both $\sched_1$ and $\sched_2$ choose $\alpha$ and $\beta$ with equal probability in $s$, and choose $\epsilon^{\{T_1\}}$ in both $s_{C_1}$ and $s_{C_2}$.
\end{example}

\begin{proof}[Proof of \NoCaseChange\cref{th:relBuechi-to-relReach}]
    Recall from \cref{thm:whyComb} that we can split a relational reachability query into a separate query for each state-scheduler combination. 
    We can do the same for relational B\"uchi properties with analogous reasoning.
    Let $\comb = \{c_1, \ldots, c_l\}$ be the set of state-scheduler combinations, then we have
    \begin{align*}
        &\exists \sched_1, \ldots, \sched_\numsched \in \Scheds .\
        \sum_{i=1}^{\numsum} 
        q_i \Pr^{\sched_{k_i}}_{\state_i}\left(\Globally \Finally T_i\right) \comp q
        \\ \Iff\quad &
        \exists \sched_{c_1}, \ldots, \sched_{c_l} \in \Scheds .\
        \sum_{i=1}^{l} 
        \sum_{j \in \relInd(c_i)}
        q_j \Pr^{\sched_{c_i}}_{\state_j}\left(\Globally \Finally T_j\right) \comp q
        ~.
    \end{align*}

    In order to show the claim, it suffices to show that for any $c \in \comb$, we can transfer any scheduler $\sched_c \in \Scheds$ to a scheduler $\quotT[\sched_c] \in \Scheds[\quotT]$ such that the weighted sum of B\"uchi objectives under $\sched_c$ matches the weighted sum of corresponding reachability objectives under $\quotT[\sched_c]$, and vice versa.
    More precisely, we show that for $c \in \comb$ and $q_c \in \Q$, it holds that
    \begin{align*}
        &\exists \sched \in \Scheds .\
        \sum_{j \in \relInd(c)}
        q_j \Pr^{\mdp,\sched}_{\state_j}\left(\Globally \Finally T_j \right) = q_c
        \\ \Iff\quad & 
        \exists \quotT[\sched] \in \Scheds[\quotT] .\ 
        \sum_{j \in \relInd(c)}
        q_j \Pr^{\quotT, \quotT[\sched]}_{\mapStates(\state_j)}\left(\Finally U_{T_j} \right) = q_c
    \end{align*}

    We show both directions of the claim separately in the following lemmas, \cref{th:relBuechi-to-relReach_Rightarrow} and \cref{th:relBuechi-to-relReach_Leftarrow}, using ideas from \cite{baierCertificatesWitnesses2024,baierFoundationsProbabilityraising2024}. 
\end{proof}

\begin{restatable}[``$\Rightarrow$'' of \cref{th:relBuechi-to-relReach}]{lemma}{relBuechiToTelReachRightarrow}
    \label{th:relBuechi-to-relReach_Rightarrow}
    Let $c \in \comb$ and $q_c \in \Q$, then
    \begin{align*}
        &\exists \sched \in \Scheds .\
        \sum_{j \in \relInd(c)}
        q_j \Pr^{\mdp,\sched}_{\state_j}\left(\Globally \Finally T_j \right) = q_c
        \\ \Implies\quad & 
        \exists \quotT[\sched] \in \Scheds[\quotT] .\ 
        \sum_{j \in \relInd(c)}
        q_j \Pr^{\quotT, \quotT[\sched]}_{\mapStates(\state_j)}\left(\Finally U_{T_j} \right) = q_c
        ~.
    \end{align*}
\end{restatable}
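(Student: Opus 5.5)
The plan is to decompose the B\"uchi probabilities under $\sched$ according to the end component in which a path eventually stays. We then build a scheduler $\quotT[\sched]$ on $\quotT$ that simulates $\sched$ at the level of \emph{projected} histories. Fix $c=(s,\sched)\in\comb$. All $j\in\relInd(c)$ share the initial state $s=\state_j$.

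\textbf{Decomposition in $\mdp$.} For an infinite path $\pi$, let $\Limit(\pi)$ be the set of state-action pairs occurring infinitely often in $\pi$. By the standard result (cf.~\cite[Ch.~10]{baierPrinciplesModel2008}), $\Pr^{\mdp,\sched}_s$-almost surely $\Limit(\pi)$ is an EC, hence contained in a unique MEC $\mec$. For $\mathcal{T}\subseteq\indexc[\mathcal{T}]$ and $\mec\in\MEC(\mdp)$, let $E_{\mathcal{T},\mec}$ be the event that $\Limit(\pi)\subseteq\mec$ and $\mathcal{T}=\{T\in\indexc[\mathcal{T}] \mid T \text{ is visited infinitely often}\}$. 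Up to a null set, these events partition the paths, and $\Globally\Finally T_j$ holds on $E_{\mathcal{T},\mec}$ iff $T_j\in\mathcal{T}$. Hence
\[
\sum_{j\in\relInd(c)} q_j\Pr^{\mdp,\sched}_s(\Globally\Finally T_j)=\sum_{\mathcal{T},\mec}\Pr^{\mdp,\sched}_s(E_{\mathcal{T},\mec})\cdot\sum_{j\in\relInd(c),\,T_j\in\mathcal{T}}q_j .
\]
Suppose $\Pr(E_{\mathcal{T},\mec})>0$. Then some path in $E_{\mathcal{T},\mec}$ has $\Limit(\pi)$ an EC in $\EC_{\mathcal{T}}$ contained in $\mec$. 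So $\mec\in\MEC_{\mathcal{T}}$, and the sink $\bot^{\mathcal{T}}$ together with the action $\epsilon^{\mathcal{T}}$ at $\state_\mec$ exist in $\quotT$. It therefore suffices to construct $\quotT[\sched]$ with
\[
\Pr^{\quotT,\quotT[\sched]}_{\mapStates(s)}\big(\Finally(\state_\mec\,\epsilon^{\mathcal{T}}\,\bot^{\mathcal{T}})\big)=\Pr^{\mdp,\sched}_s(E_{\mathcal{T},\mec})
\]
for all such pairs, and which never uses the other $\epsilon$-actions. Summing over $\mathcal{T}\ni T_j$ and over $\mec$ then gives $\Pr^{\quotT,\quotT[\sched]}_{\mapStates(s)}(\Finally U_{T_j})=\Pr^{\mdp,\sched}_s(\Globally\Finally T_j)$. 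Sinks $\bot^{\mathcal{T}'}$ that stem from other combinations $c'$ are never entered, so they contribute nothing.

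\textbf{Simulation.} Extend $\mapStates$ to a projection of paths of $\mdp$ to paths of $\quotT$. A maximal segment inside a MEC $\mec=(S',A)$ collapses to a single occurrence of $\state_\mec$, and actions in $A$ are erased. By \cref{ass:distinct-act}, an exiting action $\action\notin A$ identifies the state of $\mec$ it is taken from. Once a path leaves a MEC it cannot return to it, since otherwise a larger EC would exist. Consequently every quotient history $\hat h$ ending in $\state_\mec$ corresponds to the measurable event ``the projection starts with $\hat h$''. Define $\quotT[\sched](\hat h)$ as the conditional distribution, under $\Pr^{\mdp,\sched}_s$ given this event, of what happens next:
\begin{itemize}
\item If $\last(\hat h)\notin\states_{\MEC}$, it is the conditional law of $\sched$'s next action.
\item If $\last(\hat h)=\state_\mec$, it puts weight on each exiting action $\action$ equal to the conditional probability that the path eventually leaves $\mec$ for the first time via $\action$. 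It puts weight on each $\epsilon^{\mathcal{T}}$ equal to the conditional probability of $E_{\mathcal{T},\mec}$ intersected with never leaving $\mec$ again.
\end{itemize}
These weights sum to $1$: after entering $\mec$, a path either exits at some point or stays in $\mec$ forever, and staying forever lies almost surely in some $E_{\mathcal{T},\mec}$ with $\mec\in\MEC_{\mathcal{T}}$. If the conditioning event has probability $0$, the choice is arbitrary.

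\textbf{Main step and obstacle.} The core of the proof is an induction on the length of $\hat h$. The claim is that $\Pr^{\quotT,\quotT[\sched]}_{\mapStates(s)}(\mathrm{Cyl}(\hat h))$ equals the $\Pr^{\mdp,\sched}_s$-probability that the projection starts with $\hat h$. The inductive step uses that, in $\mdp$, leaving $\mec$ from $\state'$ via $\action$ moves to $\state''$ with probability $\Trans(\state',\action,\state'')$. In $\quotT$ the same action from $\state_\mec$ reaches $\mapStates(\state'')$ with the aggregated probability, by \cref{def:MEC-quotient_set}. The cylinder equality then yields the required equality for the sink events, and hence the lemma.

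I expect the main obstacle to be making this aggregation rigorous. The quotient history forgets the internal behaviour of $\sched$ inside the MEC, including how long the path stays and which exit state it uses. One must check that conditioning on the coarser projected event and then continuing is consistent: the law of the next exit successor depends only on the exit action, not on the forgotten internal history. I would handle this by conditioning on the first-exit stopping time and then summing over internal paths, following the argument of~\cite{baierCertificatesWitnesses2024}.
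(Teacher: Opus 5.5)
Your proposal follows essentially the same route as the paper's proof (adapted from Baier et al., Lem.~2.4). There, $\widehat{\sigma}$ at $s_C$ takes each $\alpha\notin A$ with the conditional probability that $\sigma$ next leaves $C$ via $\alpha$, and takes $\epsilon^{\mathcal{T}}$ with the conditional probability of staying in $C$ forever while seeing exactly the targets in $\mathcal{T}$ infinitely often; the cylinder equality is then proved by induction, and one sums over MECs and over $\mathcal{T}\ni T_j$.

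One correction: your claim that a path cannot return to a MEC after leaving it is false. An action $\alpha\notin A$ may land back in $S'$ with positive probability, which gives a self-loop at $s_C$ in $\widehat{\mathcal{M}}$, and excursions through non-MEC states can also lead back. So ``leaving $C$ via $\alpha$'' must mean the first step that uses an action outside $A$, not the first step to a state outside $S'$. With that reading the argument goes through unchanged, since neither the cylinder correspondence nor the decomposition of $E_{\mathcal{T},C}$ by the last entry into $C$ needs the claim.
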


\begin{proof}[Proof (Sketch)]
    We follow \cite[Lem.\ 2.4]{baierFoundationsProbabilityraising2024}.
    Given $\sched \in \Scheds$, we construct $\quotT[\sched] \in \Scheds[\quotT]$ as follows:
    On $\states \setminus \states_{\MEC}$, $\quotT[\sched]$ follows $\sched$.
    On states $\state_\mec$ for $\mec=(S',A) \in \MEC(\mdp)$, $\quotT[\sched]$ mimics whether $\sched$ leaves $\mec$ via some $\action \not\in A$, or stays in $\mec$ forever.
    More specifically, $\quotT[\sched]$ takes $\action\not\in A$ with the probability with which $\sched$ leaves $\mec$ via $\action$, and takes $\epsilon^{\mathcal{T}}$ (and thus transitions to sink $\bot^{\mathcal{T}}$) with the probability with which $\sched$ stays in $\mec$ and visits exactly the target sets from $\mathcal{T}$ infinitely often for $\mathcal{T} \subseteq \indexc[\mathcal{T}]$.
    %
    The full proof can be found in \cref{app:relBuechi-to-relReach_Rightarrow_proof}.
\end{proof}

For the other direction, we first show the claim for the case that the given scheduler for the quotient $\quotT$ is memoryless, see~\cref{app:relBuechi-to-relReach_Leftarrow_proof} for the proof.
\begin{restatable}{lemma}{relBuechiToTelReachLeftarrowMR}
    \label{th:relBuechi-to-relReach_Leftarrow_MR}
    Let $c \in \comb$ and $q_c \in \Q$, then
        \begin{align*}
            &\exists \quotT[\sched] \in \SchedsMR[\quotT] .\ 
            \sum_{j \in \relInd(c)}
            q_j \Pr^{\quotT, \quotT[\sched]}_{\mapStates(\state_j)}\left(\Finally U_{T_j} \right) = q_c
            \\ \Implies\quad &
            \exists \sched \in \Scheds .\
            \sum_{j \in \relInd(c)}
            q_j \Pr^{\mdp,\sched}_{\state_j}\left(\Globally \Finally T_j \right) = q_c
            ~.
        \end{align*}
\end{restatable}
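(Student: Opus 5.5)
Given a memoryless randomized scheduler $\quotT[\sched]$ on the quotient, I will build a (memoryful, randomized) scheduler $\sched$ on $\mdp$ such that, for every initial state $\state_j$ with $j \in \relInd(c)$, the probability under $\sched$ of the event ``$\Globally\Finally T_j$'' equals $\Pr^{\quotT,\quotT[\sched]}_{\mapStates(\state_j)}(\Finally U_{T_j})$. The weighted sums then coincide, which gives the claim. We may assume that $\quotT[\sched]$ only uses sink actions $\epsilon^{\mathcal{T}}$ with $\mathcal{T} \subseteq \indexc[\mathcal{T}]$. Sinks $\bot^{\mathcal{T}'}$ belonging to other combinations can be rerouted without changing the relevant success-set probabilities only if they match; to be safe, I instead refine the target: reaching $\bot^{\mathcal{T}'}$ counts for $T_j$ iff $T_j \in \mathcal{T}'$. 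So for each sink I need $\sched$ to realise a behaviour that visits exactly those $T \in \indexc[\mathcal{T}]$ lying in $\mathcal{T}'\cap\indexc[\mathcal{T}]$ infinitely often. This is available because an EC of $\EC_{\mathcal{T}'}$ also lies in $\EC_{\mathcal{T}'\cap \indexc[\mathcal{T}]}$ restricted to $\indexc[\mathcal{T}]$, provided $\mathcal{T}' \subseteq \mathcal{T}_{c'}$ for the same combination. I expect this bookkeeping to be minor, and will handle it by noting that for a sink $\bot^{\mathcal{T}'}$ with $\mathcal{T}' \subseteq \indexc[\mathcal{T}]$ the membership $T_j\in\mathcal{T}'$ is exactly the success condition.

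The construction follows the standard approach of \cite{baierCertificatesWitnesses2024,baierFoundationsProbabilityraising2024}. Outside MECs, $\sched$ copies $\quotT[\sched]$. Consider a MEC $\mec=(S',A)$ and let $p_\action$ be the probability that $\quotT[\sched]$ chooses the exiting action $\action$ at $\state_\mec$, and $p_{\mathcal{T}}$ the probability that it chooses $\epsilon^{\mathcal{T}}$. On entering $\mec$, $\sched$ performs a one-shot randomisation (stored in memory) according to the distribution $(p_\action)_\action \cup (p_{\mathcal{T}})_{\mathcal{T}}$.
\begin{itemize}
\item If the outcome is exit action $\action$ (enabled at some $\state' \in S'$), $\sched$ plays an MD scheduler inside $\mec$ that reaches $\state'$ almost surely using only actions of $A$. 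Such a scheduler exists by strong connectivity of $\mec$. It then plays $\action$.
\item If the outcome is $\mathcal{T}$, then $\mec\in\MEC_{\mathcal{T}}$, so there is an EC $\ec'\subseteq\mec$ in $\EC_{\mathcal{T}}$. The scheduler $\sched$ first reaches $\ec'$ almost surely inside $\mec$, and then plays the uniform randomised scheduler on the actions of $\ec'$. This visits every state of $\ec'$ infinitely often almost surely, and $\ec'$ never leaves itself. Hence exactly the target sets in $\mathcal{T}$ (among $\indexc[\mathcal{T}]$) are seen infinitely often.
\end{itemize}
Since $\quotT[\sched]$ is memoryless, each new entry into any MEC triggers a fresh randomisation with the same distribution, which matches the Markov chain induced on $\quotT$.

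It remains to prove the probability correspondence. I would define a map from the paths of $\mdp^\sched$ to the paths of $\quotT^{\quotT[\sched]}$: collapse each MEC segment to $\state_\mec$ followed by the chosen exit action, or by $\epsilon^{\mathcal{T}}\bot^{\mathcal{T}}$ for a segment that stays forever. I would then show that this map pushes $\Pr^{\mdp,\sched}_{\state_j}$ forward to $\Pr^{\quotT,\quotT[\sched]}_{\mapStates(\state_j)}$. This is checked on cylinder sets by induction on the number of MEC visits: each MEC segment terminates in the prescribed way with exactly probability $p_\action$ or $p_{\mathcal{T}}$, and the exit distributions agree with $\quotT[\Trans](\state_\mec,\action,\cdot)$ by definition. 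Under this map, the event ``$\Globally\Finally T_j$'' (up to a null set) corresponds to reaching some $\bot^{\mathcal{T}}$ with $T_j\in\mathcal{T}$, i.e.\ to $\Finally U_{T_j}$. Paths that never settle in a MEC have probability $0$ on both sides, because in $\mdp$ almost all paths eventually stay in an EC.

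I expect the main obstacle to be this measure-theoretic correspondence. In particular, I must ensure that paths which leave and re-enter MECs infinitely often are excluded almost surely on both sides, which holds by the standard end-component limit theorem, and that the a.s.-reach phases terminate with probability one.
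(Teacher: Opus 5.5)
Your construction is correct in its core, but it follows a different route from the paper's. On entering $\mec$, the paper only randomises between a single ``leave'' mode $\mode_0$ (with probability $1-\sum_{\mathcal{T}}\quotT[\sched](\state_\mec)(\epsilon^{\mathcal{T}})$) and the stay modes $\mode_{\mathcal{T}}$. In $\mode_0$ it plays a memoryless scheduler $\sched_{\textsf{leave},\mec}$, delegated to \cite{baierCertificatesWitnesses2024}, which must leave $\mec$ almost surely while reproducing the quotient's exit probabilities in the sense of \eqref{eq:leave-C-assurance}. It then proves $\Pr^{\quotT,\quotT[\sched]}(\quotT[\pi])=\sum_{\pi\in\quotT[\mapPaths](\quotT[\pi])}\Pr^{\mdp,\sched}(\pi)$ by induction on the length of quotient paths. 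You instead sample the exit action itself on entry and walk to its source state with an MD attractor inside $(S',A)$. This is essentially the construction of \cref{th:relBuechi-to-relReach_Leftarrow_MD}, lifted to MR quotient schedulers by pre-sampling. It costs one memory mode per exit action, which is harmless since any $\sched\in\Scheds$ is allowed. In return the exit distribution becomes trivially independent of the entry state, so your cylinder-set induction is elementary and self-contained. The paper keeps a smaller, uniform memory structure at the price of the technically involved $\sched_{\textsf{leave},\mec}$. Your uniform randomisation on $\ec'$ is also the right choice. The paper's MD scheduler that merely reaches and stays in $\ec'$ need not see every $T\in\mathcal{T}$ infinitely often: think of a state with two actions leading to $b\in T_1$ and $b'\in T_2$, both of which return.

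Two details need adjusting. First, an exit action $\action\notin A$ may lead back into $S'$, in which case $\quotT[\Trans](\state_\mec,\action,\state_\mec)>0$. So the fresh randomisation must be triggered whenever a non-$A$ action lands in some MEC, including the one just left, and also at the start if $\state_j\in\states_{\MEC}$. This is exactly what the paper's $\modef$ and $\start$ do. Second, drop the sink bookkeeping in your first paragraph. The claimed inclusion is false, because an EC witnessing $\mec\in\MEC_{\mathcal{T}'}$ for another combination $c'$ only avoids $\mathcal{T}_{c'}\setminus\mathcal{T}'$. Take a single-state MEC $\{s\}$ with $s\in T_1\cap T_2$, $\indexc[\mathcal{T}]=\{T_1,T_2\}$ and $\mathcal{T}_{c'}=\{T_1\}$. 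Then $\epsilon^{\{T_1\}}$ is enabled at $\state_{\{s\}}$ and $\bot^{\{T_1\}}\in U_{T_1}\setminus U_{T_2}$, yet no behaviour of $\mdp$ sees $T_1$ but not $T_2$ infinitely often. Such quotient values are not realisable at all, so you must restrict to quotient schedulers that only use $\epsilon^{\mathcal{T}}$ with $\mathcal{T}\subseteq\indexc[\mathcal{T}]$. The paper's proof makes the same restriction tacitly, when it asserts for every mode $\mode_{\mathcal{T}}$ a sub-EC avoiding $\indexc[\mathcal{T}]\setminus\mathcal{T}$.
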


We can then prove that \cref{th:relBuechi-to-relReach_Leftarrow_MR} suffices to show ``$\Leftarrow$'' of \cref{th:relBuechi-to-relReach}.

\begin{restatable}[``$\Leftarrow$'' of \cref{th:relBuechi-to-relReach}]{lemma}{relBuechiToTelReachLeftarrow}
    \label{th:relBuechi-to-relReach_Leftarrow}
    Let $c \in \comb$ and $q_c \in \Q$, then
    \begin{align*}
        &\exists \quotT[\sched] \in \Scheds[\quotT] .\ 
        \sum_{j \in \relInd(c)}
        q_j \Pr^{\quotT, \quotT[\sched]}_{\mapStates(\state_j)}\left(\Finally U_{T_j} \right) = q_c
        \tag{1}
        \label{eq:1}
        \\ \Implies\quad & 
        \exists \sched \in \Scheds .\
        \sum_{j \in \relInd(c)}
        q_j \Pr^{\mdp,\sched}_{\state_j}\left(\Globally \Finally T_j \right) = q_c
        ~.
        \tag{2}
        \label{eq:2}
    \end{align*}
\end{restatable}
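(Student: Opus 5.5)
We reduce the general case to \cref{th:relBuechi-to-relReach_Leftarrow_MR}, i.e., we show that whenever some (possibly memoryful, randomized) scheduler $\quotT[\sched]$ on $\quotT$ satisfies \eqref{eq:1}, there is also a \emph{memoryless randomized} scheduler on $\quotT$ satisfying \eqref{eq:1}. The key observation is that in $\quotT$ all sink states $\bot^{\mathcal{T}}$ are absorbing and the success sets $U_{T_j}$ consist only of such sinks. Hence the left-hand side of \eqref{eq:1} is a linear function of the vector of reachability probabilities $\big(\Pr^{\quotT,\quotT[\sched]}_{\mapStates(\state_j)}(\Finally \{\bot^{\mathcal{T}}\})\big)_{\bot^{\mathcal{T}} \in \sdead}$. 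This holds because each $\Pr(\Finally U_{T_j})$ is the sum of the probabilities of reaching the (pairwise disjoint, absorbing) sinks in $U_{T_j}$.

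The first issue is that $\relInd(c)$ may contain several indices with possibly different initial states $\state_j$. However, all indices in $\relInd(c)$ share the same state-scheduler combination $c=(\state,\cdot)$, so $\state_j = \state$ for all $j \in \relInd(c)$. Thus only a single initial state $\mapStates(\state)$ is involved. The quantity in \eqref{eq:1} then equals $\sum_{\bot^{\mathcal{T}} \in \sdead} w_{\mathcal{T}} \cdot \Pr^{\quotT,\quotT[\sched]}_{\mapStates(\state)}(\Finally \{\bot^{\mathcal{T}}\})$ with weights $w_{\mathcal{T}} = \sum_{j \in \relInd(c),\, T_j \in \mathcal{T}} q_j$. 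Equivalently, it is the expected total reward of a reward structure that assigns reward $w_{\mathcal{T}}$ upon the single entry into $\bot^{\mathcal{T}}$.

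Next we invoke the standard result from multi-objective MDP theory (e.g.\ \cite{quatmannVerificationMultiobjective2023}, or the LP characterization of Etessami et al.\ / Forejt et al.) that, for reachability of \emph{absorbing} target states from a single initial state, the set of achievable vectors $\big(\Pr_{\mapStates(\state)}^{\sched}(\Finally \{\bot^{\mathcal{T}}\})\big)_{\mathcal{T}}$ under general schedulers coincides with that under memoryless randomized schedulers. The proof goes through expected state-action frequencies: the frequencies $x_{\hat s,\action}$ of the given scheduler satisfy the flow equations of the LP, and the memoryless randomized scheduler $\hat s \mapsto x_{\hat s,\action}/\sum_{\action'} x_{\hat s,\action'}$ induces the same inflow into each absorbing sink.

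The main obstacle is that states lying in end components of $\quotT$ may have infinite expected frequencies. In the MEC quotient, however, the only remaining end components are the sink self-loops, since all MECs of $\mdp$ have been collapsed and every collapsed state has only exiting actions or $\epsilon$-actions. Consequently, the frequencies at all non-sink states are finite, and the standard argument goes through. Having obtained a memoryless randomized $\quotT[\sched]'$ with the same sink-reachability vector, and hence the same value $q_c$, we apply \cref{th:relBuechi-to-relReach_Leftarrow_MR} to obtain \eqref{eq:2}.
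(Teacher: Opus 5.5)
Your proof is correct, but it takes a different route from the paper. The paper never shows that memoryless randomized schedulers suffice on $\quotT$ itself. It uses only the scalar structure of the objective. Because the $U_{T_j}$ are absorbing, the weighted sum is a single expected total reward on $\quotT$. Every achievable value therefore lies between the values of an MD minimizer and an MD maximizer, so it is attained by a convex combination of these two MD schedulers. That combination needs one bit of memory, so the paper passes to the flip-extension, where it becomes memoryless randomized (\cref{le:flip-ext_MR}). It then uses that the flip-extension of the quotient equals the quotient of the flip-extension, applies \cref{th:relBuechi-to-relReach_Leftarrow_MR} to the flip-extended MDP, and folds the resulting scheduler back into one memoryful scheduler on $\mdp$.

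You instead show directly, via expected state-action frequencies, that the whole vector of sink-reachability probabilities from $\mapStates(\state)$ is realized by a memoryless randomized scheduler on $\quotT$. This lets you apply \cref{th:relBuechi-to-relReach_Leftarrow_MR} to $\mdp$ itself. Your route avoids the flip-extension and the commutation of flip and quotient. It is also stronger, since it preserves all sink probabilities at once rather than one linear functional.

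The price is that you depend on the occupation-measure argument together with the fact that $\quotT$ has no end components other than the sink self-loops. You only sketch that fact. A full argument lifts a sink-free end component of $\quotT$ to an end component of $\mdp$ that contains a state outside all MECs or an exiting action of some MEC, contradicting maximality. That hypothesis is essential, not a technicality. The ``standard result'' you cite is false for exact values without it: the example in \cref{fig:moa_memory-necessary} shows an MDP where memoryless randomized schedulers cannot reproduce a memoryful scheduler's exact reachability probability. Your argument is in fact the mechanism of \cref{le:moa_preproc_MR-suffice}, with the $\epsilon^{\mathcal{T}}$-transitions into $\bot^{\mathcal{T}}$ playing the role of the $\dagger$-transitions there.
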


\begin{proof}[Proof of \NoCaseChange\cref{th:relBuechi-to-relReach_Leftarrow} (Sketch)]
    For $c \in \comb$, let $\indexc[\state]$ be the unique state in $\mdp$ with $\state_j = \indexc[\state]$ for all $j \in \relInd(c)$.
    We define the \emph{flip-extension} $\flip[\mdp][]$ of $\mdp$ as the product of $\mdp$ with the memory structure of schedulers that are the convex combination of two schedulers for $\mdp$, with a fresh initial state $\sflip$. We analogously define $\flip[(\quotT)][]$ with a fresh initial state $\mapStates(\sflip) = \sflip$.
    Then, every memoryless scheduler for $\flip[\mdp][]$ can directly be interpreted as the convex combination of two schedulers for $\mdp$, and vice versa; analogously for $\flip[(\quotT)][]$.
    Further, $\flip[(\quotT)][ ] = \quotT[{\flip[\mdp][]}]$ 
    and hence the following claim
        \begin{align*}
            & 
            \exists \quotT[\sched]' \in \SchedsMR[{\flip[(\quotT)][]}] .\ 
            \sum_{j \in \relInd(c)}
            q_j \Pr^{\flip[(\quotT)][], \quotT[\sched]'}_{\sflip}\left(\Finally U_{T_j} \right) = q_c
            \tag{1'}
            \label{eq:1p}
            \\ \Implies\quad &
        \exists \sched' \in \Scheds[{\flip[\mdp][]}] .\ 
        \sum_{j \in \relInd(c)}
        q_j \Pr^{\flip[\mdp][], \sched'}_{\sflip}\left(\Globally \Finally  T_j \right) = q_c
        \tag{2'}
        \label{eq:2p}
    \end{align*}
    is an instance of \cref{th:relBuechi-to-relReach_Leftarrow_MR} with $\mdp=\flip[\mdp][]$ and $\quotT = \quotT[{\flip[\mdp][]}]$.

    The key observation is now that the sets $U_{T_j}$ are absorbing and hence any value achievable for the weighted sum $\sum_{j \in \relInd(c)} q_j \Pr^{\quotT, \quotT[\sched]}_{\mapStates(\indexc[\state])}\left(\Finally U_{T_j} \right)$ can be achieved by the convex combination of two MD schedulers~(see reasoning for~\cref{thm:cruxApproxEqual}).
    Thus, any value achievable for the weighted sum of reachability probabilities on the quotient can be achieved by a \emph{memoryless} randomized scheduler on the flip-extension of the quotient, and vice versa, i.e., `$\eqref{eq:1} \Iff \eqref{eq:1p}$'. 
    We further have `$\eqref{eq:2} \Iff \eqref{eq:2p}$'.
    Putting everything together yields the desired claim `$\eqref{eq:1} \Implies \eqref{eq:2}$'. 
    The full proof can be found in \cref{app:relBuechi-to-relReach_Leftarrow_proof}.
\end{proof}

We stress that the claim from \cref{th:relBuechi-to-relReach_Leftarrow_MR} is an implication, not an equivalence. 
We further observe that the construction from \cref{th:relBuechi-to-relReach_Leftarrow_MR} introduces memory but not randomization.
Given some \emph{memoryless deterministic} scheduler witnessing the constructed relational reachability property on $\quotT$, however, we can construct a witness for the original relational B\"uchi property that is again memoryless deterministic. 
This observation will be used later for a statement about the complexity of \RelBuechi in~\cref{th:buechi_NP-complete}.

\begin{corollary}
    \label{th:relBuechi-to-relReach_Leftarrow_MD}
    Let $c \in \comb$ and $q_c \in \Q$, then
        \begin{align*}
            & 
            \exists \quotT[\sched] \in \SchedsMD[\quotT] .\ 
            \sum_{j \in \relInd(c)}
            q_j \Pr^{\quotT, \quotT[\sched]}_{\mapStates(\state_j)}\left(\Finally U_{T_j} \right) = q_c
            \\ \Implies\quad &
            \exists \sched \in \SchedsMD .\
            \sum_{j \in \relInd(c)}
            q_j \Pr^{\mdp,\sched}_{\state_j}\left(\Globally \Finally T_j \right) = q_c
        \end{align*}
\end{corollary}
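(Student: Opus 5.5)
Given an MD scheduler $\quotT[\sched]$ on the quotient, I will build an MD scheduler $\sched$ on $\mdp$ directly, rather than going through the memoryful construction of \cref{th:relBuechi-to-relReach_Leftarrow_MR}. The goal is the stronger statement $\Pr^{\mdp,\sched}_{\state_j}(\Globally\Finally T_j) = \Pr^{\quotT,\quotT[\sched]}_{\mapStates(\state_j)}(\Finally U_{T_j})$ for every $j \in \relInd(c)$; the weighted sum then agrees termwise. All $j \in \relInd(c)$ share the initial state $\indexc[\state]$, so a single memoryless scheduler for $\mdp$ can serve all of them.

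\textbf{Non-MEC states.} For $\state \in \states\setminus\states_{\MEC}$, let $\sched(\state) = \quotT[\sched](\state)$. This is legal because the actions of $\state$ are unchanged in the quotient.

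\textbf{MEC states, exit case.} Let $\mec=(S',A)$ be a MEC and suppose $\quotT[\sched](\state_\mec) = \action$ with $\action \notin A$, where $\action$ is enabled at a unique state $\state_\action \in S'$ by \cref{ass:distinct-act}. At $\state_\action$ choose $\action$. At every other state of $S'$, pick an action of $A$ along a shortest path, inside $\mec$, to $\state_\action$ (an attractor/MD reachability strategy). Since $\mec$ is strongly connected, every state of $S'$ reaches $\state_\action$ almost surely under these MD choices. Hence $\mec$ is left almost surely, via $\action$, with the same successor distribution as in $\quotT$ under the collapsing map $\mapStates$.

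\textbf{MEC states, staying case.} Suppose instead $\quotT[\sched](\state_\mec) = \epsilon^{\mathcal{T}}$, so $\mec \in \MEC_{\mathcal{T}}$ and there is an EC $\ec'=(S'',A'') \subseteq \mec$ with $\ec' \in \EC_{\mathcal{T}}$. Outside $S''$, choose actions of $A$ that reach $S''$ almost surely within $\mec$ (again possible by strong connectivity). Inside $S''$, choose MD actions from $A''$ so that every state of $S''$ is visited infinitely often with probability 1. A single deterministic action per state need not keep all of $S''$ in one recurrent class, and this is the main obstacle. To handle it, I would restrict $\ec'$ first: pick one state from each $T\in\mathcal{T}$ within $S''$, and replace $\ec'$ by a minimal sub-EC of $\ec'$ that contains all of them. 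Alternatively, I can simply require that every path stays in some EC that sees each $T \in \mathcal{T}$ and avoids all targets outside $\mathcal{T}$. The first attempt I would make is the following. Choose a state $u \in S''$, use MD attractor strategies inside $\ec'$, and argue that the bottom SCCs of the induced chain restricted to $S''$ sit in $\ec'$. If a bottom SCC misses some $T$, I would instead fix a cycle through representatives of all $T\in\mathcal{T}$.

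The honest fallback for that step relies on a fact about $\EC_{\mathcal{T}}$: its defining conditions are closed under union of overlapping ECs, and for a single-state-per-target witness one can take a strongly connected sub-EC. Concretely, take a deterministic spanning structure: in $\ec'$, fix a state $r$ and use an MD strategy reaching $r$ almost surely from all of $S''$ while staying in $\ec'$. Under this strategy every recurrent class contains $r$, so there is a single recurrent class. The states of $S''$ off that class are left, so those targets might be missed. I would fix this by choosing $r$ together with the ordering so that the targets are on the recurrent class, and this is exactly the point that needs care.

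\textbf{Conclusion.} Given these gadgets, the induced Markov chains of $\sched$ on $\mdp$ and of $\quotT[\sched]$ on $\quotT$ correspond via $\mapStates$. Transient behaviour maps step by step. Each exit from a MEC happens almost surely with the same distribution. Each staying MEC $\mec$ with label $\mathcal{T}$ traps the run almost surely, and the run then sees exactly the target sets in $\mathcal{T}$ infinitely often, because $\ec'$ avoids the targets in $\indexc[\mathcal{T}]\setminus\mathcal{T}$. Since the absorption probabilities into each $\bot^{\mathcal{T}}$ coincide with the probabilities of ending in such a trapped MEC, $\Pr_{\state_j}(\Globally\Finally T_j)=\Pr(\Finally U_{T_j})$ for every $j\in\relInd(c)$, which gives the claim.
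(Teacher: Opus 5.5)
Your handling of non-MEC states and of the exit case is correct and matches the paper's construction. The gap is the step you flag yourself in the staying case, and it cannot be closed. When $|\mathcal{T}|\ge 2$, there need not exist an MD scheduler that remains in an EC and visits every $T\in\mathcal{T}$ infinitely often, because generalized B\"uchi objectives require memory.

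Here is a counterexample. Let $\mdp$ have states $u,v,w$. State $u$ has two actions $a,b$, leading with probability $1$ to $v$ and $w$ respectively, and $v$ and $w$ each have a single action back to $u$; the whole MDP is one MEC $\mec$. Take one combination $c=(u,\sched)$ with $T_1=\{v\}$, $T_2=\{w\}$, $q_1=q_2=1$ and $q_c=2$. Since $\mec\in\EC_{\{T_1,T_2\}}$, the MD quotient scheduler that chooses $\epsilon^{\{T_1,T_2\}}$ at $\state_\mec$ reaches $\bot^{\{T_1,T_2\}}\in U_{T_1}\cap U_{T_2}$ almost surely and attains $2$. On $\mdp$, however, every MD scheduler fixes $a$ or $b$ at $u$, so $\Pr^{\sched}_{u}(\Globally\Finally T_1)+\Pr^{\sched}_{u}(\Globally\Finally T_2)=1$; only a scheduler that alternates between $a$ and $b$ attains $2$.

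Each of your proposed repairs fails on this example:
\begin{itemize}
\item The only sub-EC containing representatives of both sets is $\mec$ itself.
\item Every cycle through $v$ and $w$ passes $u$ twice with different actions.
\item No choice of $r$ or ordering puts both $v$ and $w$ into the recurrent class.
\end{itemize}

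So the corollary as formulated does not hold in general. The paper's own sketch makes the same move: it asserts that an MD scheduler going to some $\ec'\in\EC_{\mathcal{T}}$ sees exactly the sets in $\mathcal{T}$ infinitely often, which is justified only when $|\mathcal{T}|\le 1$.

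What your construction does prove is the restricted statement in which $\quotT[\sched]$ only uses labels $\mathcal{T}\subseteq\mathcal{T}_c$ with $|\mathcal{T}|\le 1$. In that case an MD attractor inside $\ec'$ towards one state of the single target set suffices (or any action of $\ec'$ if $\mathcal{T}=\emptyset$), and the rest of your correspondence argument goes through. In the memoryful setting of \cref{th:relBuechi-to-relReach_Leftarrow_MR} the analogous step can instead be repaired by choosing uniformly among the actions of $\ec'$, which visits all its states infinitely often almost surely. That randomization, however, is exactly what an MD witness cannot use.
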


\begin{proof}[Proof (Sketch)]
    Let $\quotT[\sched] \in \SchedsMD[\quotT]$, then for each MEC $\quotT[\sched]$ must deterministically decide to either stay in the MEC forever or leave the MEC via a selected action.
    We construct $\sched \in \SchedsMD$ as follows:
    Outside MECs, $\sched$ behaves like $\quotT[\sched]$.
    Upon entering an MEC $\mec=(S',A)$, if $\quotT[\sched]$ chooses to go to a sink $\bot^{\mathcal{T}}$, then $\sched$ switches to a memoryless deterministic scheduler going to a sub-EC $\ec'$ of $\mec$ with $\ec' \in \EC_{\mathcal{T}}$, i.e., we ensure to visit exactly the target sets from $\mathcal{T}$ infinitely often (such a sub-EC and therefore such a scheduler must exist by construction).   
    Otherwise, if $\quotT[\sched]$ chooses some outgoing action $\action \in \Act \setminus A$ with $\action \in \Act(s')$ for some $s' \in S'$, then $\sched$ switches to a memoryless deterministic scheduler deterministically going to $s'$ and taking $\action$.
\end{proof}

\subparagraph*{Overall algorithm.}
The decision procedure resulting from Steps 1--3 is stated in \cref{alg:buechi}. 

\begin{algorithm}[t]
    \caption{Solving \RelBuechi by reduction to \RelReach} 
    \label{alg:buechi}
    \Input{%
        MDP $\mdp = \mdptup$ and a \RelBuechi property 
        \\ \medskip 
        \phantom{Input: } 
        $\displaystyle 
            \exists \sched_1, \ldots, \sched_\numsched \in \Scheds .\
        \sum_{i=1}^{\numsum} 
        q_i \Pr^{\sched_{k_i}}_{\state_i}\left(\Globally \Finally T_i\right) \comp q
        $
        \DontPrintSemicolon\tcp*{See \cref{prob:relbuechi}}
        \medskip
        }
    \Output{Whether the property is true in $\mdp$}
    \tcp{Step 1: Collect state-scheduler combinations: }
    {$\comb \gets \{ (\state_i, \sched_{k_i}) \mid i=1,\ldots,m\}$ \;}
    \tcp{Step 2: Build the MEC quotient and define success sets:}
    $\quotT \gets$ MEC quotient of $\mdp$ w.r.t.\ $(\mathcal{T}_c)_{c \in \comb}$
    \tcp*{See \Cref{def:MEC-quotient_set}}
    \For{$i = 1, \ldots, \numsum$}{
        \tcp{Collect the sinks in $\quotT$ that represent visiting a set of target state sets $\mathcal{T}$ infinitely often that contains $T_i$}
        $U_{T_i} \gets \{ \bot^{\mathcal{T}} \mid \exists c \in \comb .\ T_i \in \mathcal{T} \subseteq \mathcal{T}_c \wedge \MEC_{\mathcal{T}} \neq \emptyset \}$
        \tcp*{See \Cref{def:success-set}}
    }
    \tcp{Step 3: Solve \RelReach query:}
    \Return{\textsf{Solve(}$\exists \quotT[\sched_1], \ldots, \quotT[\sched_\numsched]  \in \Scheds[\quotT] .\ 
        \sum_{i=1}^{\numsum} 
        q_i \Pr^{\quotT,\quotT[\sched_{k_i}]}_{\mapStates(\state_i)}(\Finally U_{T_i}) \comp q$\textsf{)}} 
        \tcp*{See~\cref{sec:reach}}
\end{algorithm}

\begin{theorem}[Correctness and time complexity]
    \label{th:buechi_correctness}
    \cref{alg:buechi} adheres to its input-output specification and can be implemented with a worst-case running time of $\mathcal{O}(\numsum\cdot \textit{poly}(2^\numsum\cdot |\mdp|))$, where $\numsum$ is the number of probability operators in the property.
\end{theorem}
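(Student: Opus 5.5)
Correctness follows by chaining \cref{th:relBuechi-to-relReach} with the correctness of the \RelReach procedure. \cref{alg:buechi} builds $\quotT$ and the success sets $U_{T_1},\ldots,U_{T_\numsum}$ exactly as in \cref{def:MEC-quotient_set,def:success-set}. It then solves the \RelReach instance on $\quotT$ with initial states $\mapStates(\state_i)$, the same coefficients, the same bound, the same comparison operator and the same scheduler indices $k_i$. By \cref{th:relBuechi-to-relReach}, this instance holds iff the original \RelBuechi property holds. The call to the \RelReach solver in Step 3 is correct by \cref{thm:correctness} for $\approx_\epsilon$ and by its general version (\cref{app:reach_full-algo}) for $>,\geq,\not\approx_\epsilon$. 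Hence the returned answer is correct.

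For the running time, I first bound the size of the quotient. The number of sink states is $|\sdead| \le \sum_{c\in\comb} 2^{|\mathcal{T}_c|} \le \numsum\cdot 2^\numsum$, and the number of fresh actions $\epsilon^{\mathcal{T}}$ obeys the same bound. The collapsed part has at most $|\states|$ states, and its transitions are inherited from $\mdp$. Therefore $|\quotT| \in \mathcal{O}(|\mdp| + \numsum\cdot 2^\numsum)$, with the added transitions $\state_\mec \to \bot^{\mathcal{T}}$ contributing at most $|\MEC(\mdp)|\cdot\numsum 2^\numsum$ entries.

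Computing the MEC decomposition takes polynomial time. The main obstacle I expect is deciding $\mec\in\MEC_{\mathcal{T}}$ without enumerating sub-ECs, since there can be exponentially many. I would use the test from the remark: delete from $\mec$ the states of $\bigcup_{T\in\mathcal{T}_c\setminus\mathcal{T}}T$, recompute the MECs of the remaining sub-MDP, and check whether one of them meets every $T\in\mathcal{T}$. This test is correct in both directions. Any EC of $\EC_{\mathcal{T}}$ inside $\mec$ avoids the deleted states, so it lies in some MEC of the restricted sub-MDP. That MEC also meets every $T\in\mathcal{T}$ and avoids the other targets, so it belongs to $\EC_{\mathcal{T}}$. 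Conversely, any such MEC is itself a witness. Each test runs in polynomial time in $|\mdp|$, and there are at most $|\MEC(\mdp)|\cdot \numsum 2^\numsum$ tests. So Step 2 costs $\textit{poly}(2^\numsum\cdot|\mdp|)$.

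Finally, for Step 3 I apply \cref{thm:correctness} to $\quotT$. It gives a running time of $\mathcal{O}(\numsum\cdot\textit{poly}(2^\numsum\cdot|\quotT|))$. Substituting the size bound, $2^\numsum\cdot|\quotT| \le 2^\numsum(|\mdp|+\numsum 2^\numsum) \le \textit{poly}(2^\numsum\cdot|\mdp|)$. This holds because $\numsum 2^{2\numsum}$ is polynomial in $2^\numsum$. Summing the costs of Steps 1–3 yields the claimed bound $\mathcal{O}(\numsum\cdot\textit{poly}(2^\numsum\cdot|\mdp|))$.
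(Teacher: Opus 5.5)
Your argument is correct relative to the paper's stated results. Its correctness half is the chain the paper leaves implicit (\cref{th:relBuechi-to-relReach}, then the \RelReach procedure). The running-time accounting, however, takes a different route.

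The paper uses that the success sets $U_{T_j}$ consist of absorbing sinks. By the reasoning of \cref{th:general_PTIME}\ref{item:absorbing}, the goal unfoldings of $\quotT$ then only add a copy of each sink, and Step~3 costs $\mathcal{O}(\numsum\cdot\textit{poly}(|\quotT|))$. The sole exponential ingredient is the number of sinks, at most $2^\numsum$ because they are indexed by subsets of $\{T_1,\ldots,T_\numsum\}$ (your $\numsum 2^\numsum$ is looser but harmless).

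You instead use \cref{thm:correctness} as a black box, pay the $2^\numsum$ unfolding factor a second time, and absorb it since $\numsum 2^{2\numsum}$ is polynomial in $2^\numsum$. This suffices for the stated bound. The paper's observation is sharper, though, and it is what \cref{th:buechi_general_PTIME}\ref{item:buechi_single-target} and the \NP-membership argument of \cref{th:buechi_NP-complete} reuse.

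In return, you cover a step the paper's two-line proof passes over, namely Step~2. A MEC may have exponentially many sub-ECs in $|\mdp|$, so deciding $\mec\in\MEC_{\mathcal{T}}$ needs an argument like your restrict-and-recompute test with its two-direction justification. (Your final substitution drops the $|\MEC(\mdp)|\cdot\numsum 2^\numsum$ sink edges you counted earlier; this is harmless since $|\MEC(\mdp)|\le|\mdp|$.)

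One caveat applies equally to your proof and the paper's. Both rest on \cref{th:relBuechi-to-relReach}, which fails in two edge cases.

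First, suppose two operators share a scheduler variable but start in different states of one MEC $\mec$. Then $\mapStates$ merges both initial states into $\state_\mec$, and the quotient scheduler loses the initial-state memory exploited in \cref{thm:whyComb}. Take the single MEC with $s_1\rightleftarrows s_2$, $s_1\rightleftarrows t$ and a self-loop on $t$. The property $\exists\sched.\ \Pr^{\sched}_{s_1}(\Globally\Finally\{t\})-\Pr^{\sched}_{s_2}(\Globally\Finally\{t\})>0$ holds in $\mdp$: from $s_1$ stay at $t$, from $s_2$ cycle through $s_1s_2$. Its image on $\quotT$ is identically $0$.

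Second, your per-$c$ test makes visible that $\MEC_{\mathcal{T}}$ is defined relative to $\mathcal{T}_c$, while the sink $\bot^{\mathcal{T}}$ is shared among combinations. Take the single MEC $t_1\rightleftarrows t_2$ with $T_i=\{t_i\}$, $\mathcal{T}_c=\{T_1\}$ and $\mathcal{T}_{c'}=\{T_1,T_2\}$. The edge to $\bot^{\{T_1\}}$ created for $c$ lets the quotient scheduler of $c'$ reach $U_{T_1}$ but not $U_{T_2}$, although every scheduler of $\mdp$ visits both targets infinitely often.

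Both defects disappear at no asymptotic cost if Steps~2--3 are run separately for each original $c\in\comb$, with its own sinks and initial state $\mapStates(\state_c)$. The results are then aggregated as in Step~4 of \cref{sec:reach_algo}. The per-combination \cref{th:relBuechi-to-relReach_Rightarrow,th:relBuechi-to-relReach_Leftarrow} justify exactly this version.
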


\begin{proof}
    Since all sets $U_{T_j}$ are absorbing, we can solve the \RelReach query from our reduction in \cref{th:relBuechi-to-relReach} in time $\mathcal{O}(m \cdot \textit{poly}(|\quotT|))$ by the reasoning for \cref{thm:correctness}.
    The size of $\quotT$ is bounded by $|\mdp| + 2^m$.
\end{proof}

%% file: buechi_complexity.tex
\subsection{Complexity of Relational B\"uchi}
\label{sec:buechi_complexity}

Let us now analyze the computational complexity of the \RelBuechi problem, both over general and memoryless deterministic schedulers, and additionally identify fragments of the problem that are decidable in polynomial time or where memoryless deterministic schedulers suffice.

\subsubsection{General Schedulers}

\begin{restatable}{theorem}{buechiNPComplete}
    \label{th:buechi_NP-complete}
    Problem \RelBuechi is strongly \NP-complete.
\end{restatable}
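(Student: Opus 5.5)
We prove membership in \NP and strong \NP-hardness separately. For membership, we work on the MEC quotient $\quotT$ of \cref{def:MEC-quotient_set} and use \cref{th:relBuechi-to-relReach}. The quotient may be exponentially large, but only because of the sinks $\bot^{\mathcal{T}}$. A memoryless deterministic scheduler on $\quotT$ picks, in every collapsed state $\state_\mec$, either one exiting action or one action $\epsilon^{\mathcal{T}}$. Such a scheduler is therefore described by polynomially many choices, each of polynomial size: a subset $\mathcal{T}\subseteq\indexc[\mathcal{T}]$ takes at most $m$ bits.

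The certificate consists of such MD schedulers for every $c \in \comb$: a maximizing one $\ub{\sched}_c$, and additionally a minimizing one $\lb{\sched}_c$ when $\comp$ is $\approx_\epsilon$ or $\not\approx_\epsilon$. The verifier proceeds in three steps.
\begin{enumerate}
    \item For each selected $\epsilon^{\mathcal{T}}$ at $\state_\mec$, it checks that $\mec\in\MEC_{\mathcal{T}}$. Following the remark after the definition of $\EC_{\mathcal{T}}$, it deletes the states of all $T\in\indexc[\mathcal{T}]\setminus\mathcal{T}$ from $\mec$ and tests whether some MEC of the remainder meets every $T\in\mathcal{T}$. This takes polynomial time.
    \item It computes the reachability probabilities of the sets $U_{T_j}$ in the finite DTMC induced on the reachable part of $\quotT$. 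Only polynomially many sinks are reachable, so this is polynomial.
    \item It aggregates the values $v^{\max}$ (and $v^{\min}$) over $\comb$ exactly as in \cref{thm:cruxApproxEqual} and compares them with $q$.
\end{enumerate}

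Soundness holds because, by the convex-combination argument of \cref{thm:cruxApproxEqual}, every value between two achievable values is achievable. Hence any value interval certified in this way can be realised on $\quotT$, and then on $\mdp$ via \cref{th:relBuechi-to-relReach}. Completeness holds because the sets $U_{T_j}$ are absorbing. The weighted sum per combination is then a total expected reward collected once on entering a sink, so \cref{thm:exRewPtimeMD} (applied with a trivial unfolding) yields optimal MD schedulers on $\quotT$. These optimal schedulers serve as certificates whenever the property holds.

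For strong \NP-hardness, we reduce from 3-SAT. 3-SAT has no numerical data, so any polynomial reduction with polynomially bounded coefficients establishes strong hardness. Let the formula have variables $x_1,\dots,x_n$ and clauses $C_1,\dots,C_k$.
\begin{itemize}
    \item The MDP is a single deterministic cycle of variable gadgets. In gadget $i$, the scheduler chooses to move either to a state $p_i$ or to a state $\bar p_i$, and both continue to gadget $i+1$; the last gadget returns to the first.
    \item We introduce the target sets $P_i=\{p_i\}$ and $\bar P_i=\{\bar p_i\}$, and for each clause the set $D_j$ of literal states that make $C_j$ true.
    \item The property uses a single scheduler and a single initial state:
    \[
        \exists\sched.\ \sum_{j=1}^{k}\Pr^\sched(\Globally\Finally D_j)-(k+1)\sum_{i=1}^{n}\bigl(\Pr^\sched(\Globally\Finally P_i)+\Pr^\sched(\Globally\Finally \bar P_i)\bigr)\;\geq\; k-(k+1)n.
    \]
\end{itemize}

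The key step is a pathwise bound. Every infinite path passes through gadget $i$ infinitely often, so it visits at least one of $p_i,\bar p_i$ infinitely often. The pathwise value of the integrand is therefore at most $k-(k+1)n$. Equality holds exactly when the path visits precisely one literal per variable infinitely often, so that these literals form a consistent assignment, and all $D_j$ are visited infinitely often, so that the assignment is satisfying. A path that visits both $p_i$ and $\bar p_i$ infinitely often for some $i$ loses at least $k+1$, which no clause gain can compensate.

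It follows that the expectation reaches the bound only if almost every path attains equality, and any such path yields a satisfying assignment. Conversely, given a satisfying assignment, the MD scheduler that always takes the corresponding literals attains the bound. All coefficients are bounded by $k+1$ in absolute value, so the reduction is polynomial in the unary encoding.

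The main obstacle is this pathwise bound: we must make sure that the subtraction weights truly forbid inconsistent sub-ECs, i.e., end components containing both $p_i$ and $\bar p_i$, and that almost-sure attainment of the bound can be turned into a single consistent path. The rest of the argument is bookkeeping.
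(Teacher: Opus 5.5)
Your proof is correct and has the same overall structure as the paper's. Membership uses polynomially describable MD certificates, justified by the absorbing success sets $U_{T_j}$ in $\quotT$. Strong hardness uses a SAT reduction on a deterministic cycle of literal-choice states, rewarding clause B\"uchi probabilities and penalizing literal B\"uchi probabilities. The two routes differ in both key steps.

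For membership, the paper transfers the MD optimizers on the quotient back to MD schedulers on $\mdp$ via \cref{th:relBuechi-to-relReach_Leftarrow_MD}, so its certificates live on $\mdp$ and are checked by ordinary B\"uchi computations in a polynomial DTMC. You keep the certificates on $\quotT$, which obliges you to handle its exponential size explicitly: only polynomially many sinks are reachable under an MD scheduler, and enabledness of each chosen $\epsilon^{\mathcal{T}}$ needs the polynomial $\MEC_{\mathcal{T}}$ test. In exchange you avoid the MD back-translation and make the verifier fully concrete, whereas the paper leaves it implicit.

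For hardness, the paper proves the ``$\Leftarrow$'' direction by first arguing that deterministic schedulers suffice, again through the quotient reduction and \cref{th:relBuechi-to-relReach_Leftarrow_MD}, and then reading off $0/1$ probabilities. You instead combine a pathwise upper bound with the fact that an expectation attaining a pointwise upper bound forces almost-sure equality. This applies directly to arbitrary randomized, memoryful schedulers and needs no sufficiency result, so it is more elementary and self-contained.

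One small remark: the penalty weight $k+1$ is more than needed. Equality in the pathwise bound is required anyway, so weight $1$, as in the paper, already excludes paths that see both $p_i$ and $\bar p_i$ infinitely often.
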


\begin{proof}[Proof (Sketch)]    
    For membership in \NP, we show that it suffices to guess a polynomial number of MD schedulers in order to guess a witness for the query.
    Given some MDP $\mdp$ and \RelBuechi property $\phi$, let $\phi'$ be the constructed \RelReach property.
    The key observation is that all target sets $U_{T_i}$ of $\phi'$ on $\quotT$ are absorbing and therefore for each $c \in \comb$, the goal unfolding of $\quotT$ w.r.t.\ the target sets for $c$, $\indexc[\mathcal{T}]' = \{U_{T_i} \mid i \in \relInd(c)\}$, corresponds to $\quotT$ with an additional copy of each sink in some $U_{T_i}$ for some $i \in \relInd(c)$.
    An MD scheduler for $\indexc[(\quotT)]$ can thus be translated to an MD scheduler for $\quotT$ while preserving the reachability probabilities.
    By \cref{th:relBuechi-to-relReach_Leftarrow_MD}, an MD witness for $\phi'$ on $\quotT$ can be translated to an MD witness for $\phi$ on $\mdp$.
    Details can be found in \cref{app:buechi_NP-complete}.

    Strong \NP-hardness follows by reduction from SAT~\cite{karpReducibilityCombinatorial1972}, see \cref{app:buechi_NP-complete} for details. 
\end{proof}

Note that the reasoning for the membership in \NP from the proof of~\cref{th:buechi_NP-complete} does not imply that MD schedulers suffice for \RelBuechi.

Under some mild restrictions on the target sets, the procedure detailed in \cref{sec:buechi_algo} runs in polynomial time.
The following theorem generalizes \cref{th:general_PTIME}, see~\cref{app:buechi_general_PTIME} for the proof.

\begin{restatable}{theorem}{buechiGeneralPTIME}
    \label{th:buechi_general_PTIME}
    The following special cases of \RelBuechi are in \PTIME:
    \begin{enumerate}[label=(\alph*)]
        \item \label{item:buechi_fixed-param} 
        The number of \emph{different} target sets $|\{ T_1,\ldots,T_m\}|$ is \emph{at most a constant}.        
        \item \label{item:buechi_absorb}
        The target sets $T_1,\ldots,T_m$ are all \emph{absorbing}.

        \item \label{item:buechi_single-target}
        Probability operators with the same scheduler variable and the same initial state have the same target sets 
        (formally, $\forall i,i' .\ (\sched_{k_i} {=} \sched_{k_{i'}} \wedge \state_i = \state_{i'}) \implies T_i {=} T_{i'}$).
    \end{enumerate}
\end{restatable}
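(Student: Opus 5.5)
The plan is to use \cref{alg:buechi} unchanged and argue that in each of the three cases it runs in polynomial time. By \cref{th:relBuechi-to-relReach}, the \RelBuechi query is equivalent to a \RelReach query on $\quotT$ with target sets $U_{T_1},\ldots,U_{T_m}$. Each $U_{T_i}$ consists only of sink states $\bot^{\mathcal{T}}$, which are absorbing by \cref{def:MEC-quotient_set}. So the last step of \cref{alg:buechi} is a \RelReach instance with absorbing targets, and \cref{th:general_PTIME}\ref{item:absorbing} solves it in time polynomial in $|\quotT|$ and the property size. It therefore remains to show two things in each case: that $|\quotT|$, and in particular $|\sdead|$, is polynomial, and that $\quotT$ together with the sets $U_{T_i}$ can be built in polynomial time.

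\textbf{Building the quotient.} MECs are computable in polynomial time by the standard algorithms. For a given MEC $\mec$ and a given $\mathcal{T}\subseteq\indexc[\mathcal{T}]$, I would decide $\mec\in\MEC_{\mathcal{T}}$ as in the remark after the definition of $\EC_{\mathcal{T}}$, with one step I still have to justify:
\begin{itemize}
\item First remove from $\mec$ all states in $\bigcup_{T\in\indexc[\mathcal{T}]\setminus\mathcal{T}}T$.
\item Then compute the MECs of the remaining sub-MDP and check whether one of them meets every $T\in\mathcal{T}$.
\end{itemize}
The step to justify is that this test is exact. Any $\ec'\in\EC_{\mathcal{T}}$ with $\ec'\subseteq\mec$ lives in the pruned sub-MDP, so it lies inside one of its MECs. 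Conversely, such a MEC is itself an EC of $\mdp$ that avoids the excluded targets; if it meets every $T\in\mathcal{T}$, it belongs to $\EC_{\mathcal{T}}$. Each test is polynomial, so the remaining issue is how many pairs $(\mec,\mathcal{T})$ must be tested.

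\textbf{Case \ref{item:buechi_fixed-param}.} Every $\indexc[\mathcal{T}]$ is a subset of a constant-size family. Hence there are only constantly many subsets $\mathcal{T}$, and $|\sdead|$ is constant. We can simply enumerate all subsets and all MECs.

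\textbf{Case \ref{item:buechi_single-target}.} Here each $\indexc[\mathcal{T}]$ is a singleton $\{T\}$. So the only subsets are $\emptyset$ and $\{T\}$, giving $|\sdead|\le 2|\comb|\le 2m$, and enumeration is again polynomial.

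\textbf{Case \ref{item:buechi_absorb}.} This is the case I expect to be the main obstacle, because $\indexc[\mathcal{T}]$ may have size linear in $m$ and naive subset enumeration is exponential. The key observation is that an absorbing state $t$ forms a MEC $(\{t\},\Act(t))$ on its own: no EC containing $t$ can contain any other state, since all actions at $t$ loop back to $t$. Consequently, every MEC either contains no target state at all, or is a singleton $\{t\}$ with $t$ absorbing.
\begin{itemize}
\item A MEC of the first kind lies in $\MEC_{\mathcal{T}}$ only for $\mathcal{T}=\emptyset$.
\item A singleton MEC $\{t\}$ lies in $\MEC_{\mathcal{T}}$ only for $\mathcal{T}=\{T\in\indexc[\mathcal{T}]\mid t\in T\}$.
\end{itemize}
So rather than enumerating subsets, I would iterate over pairs of a MEC and a combination $c$, and create exactly the sinks forced in this way. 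This yields $|\sdead|\le|\comb|\cdot(|\states|+1)$, and the sets $U_{T_i}$ can be read off directly from these sinks. Together with \cref{th:general_PTIME}\ref{item:absorbing}, this gives polynomial time in all three cases.
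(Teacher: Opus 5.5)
Your proof is correct. For cases \ref{item:buechi_fixed-param} and \ref{item:buechi_single-target} it matches the paper's argument: the quotient has only polynomially many sinks, and the final \RelReach query has absorbing targets. You add two useful details. First, you justify the $\MEC_{\mathcal{T}}$ test. Second, you count sinks by the number $K$ of \emph{distinct} target sets, giving $|\sdead|\le 2^K$, which is more precise than the paper's appeal to the $2^m$ bound of \cref{th:buechi_correctness}.

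For case \ref{item:buechi_absorb} you take a genuinely different route. The paper does not build the quotient at all. If every target state is absorbing, then $\Pr^{\sched}_{\state}(\Globally\Finally T)=\Pr^{\sched}_{\state}(\Finally T)$ for every scheduler. So the \RelBuechi query is literally a \RelReach query with absorbing targets on the original MDP, and \cref{th:general_PTIME}\ref{item:absorbing} applies directly. You instead keep \cref{alg:buechi} and show, using the fact that every MEC meeting a target is an absorbing singleton, that each pair of a MEC and a combination forces exactly one sink. Hence $\quotT$ stays polynomial.

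Your route gives a uniform statement: the general procedure itself is polynomial in all three cases, with no special-case dispatch. The cost is a more delicate implementation. Step 2 must not enumerate subsets of $\indexc[\mathcal{T}]$. Also, \cref{def:MEC-quotient_set} literally puts an action $\epsilon^{\mathcal{T}}$ into $\quotT[\Act]$ for \emph{every} $\mathcal{T}\subseteq\indexc[\mathcal{T}]$. You should state explicitly that only the polynomially many $\epsilon^{\mathcal{T}}$ whose sink $\bot^{\mathcal{T}}$ exists are created; otherwise the explicit encoding of $\quotT$ is exponential in this case. The paper's one-line reduction sidesteps all of this.
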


We note that \ref{item:buechi_single-target} includes the case that $n=m$ (case~\ref{item:independent} from \cref{th:general_PTIME}), where every probability operator is associated with a different scheduler variable.

\begin{remark}
    \label{remark:buechi_sim-a-s}
    A \RelBuechi query of the form $\exists \sched .\ \sum_{i=1}^{\numsum} \Pr_{\statei}^{\sched}(\Globally \Finally T_i) \geq \numsum$ 
    is a \emph{multi-dimensional percentile query} with a $\limsup$ payoff function in the sense of~\cite{randourPercentileQueries2017}.
    More precisely, it is equivalent to the query $\exists \sched .\ \bigwedge_{i=1}^{\numsum} \Pr_{\statei}^{\sched}(\limsup w_i \geq 1) \geq 1$ where $w_i$ assigns weight 1 to all outgoing actions of $T_i$.
    Such queries can be solved in \PTIME by \cite[Th.\ 8]{randourPercentileQueries2017}, this case is not covered by~\cref{th:buechi_general_PTIME}.
    
    However, 
    \cref{alg:buechi} does not necessarily run in polynomial time for queries of the form $\exists \sched .\ \sum_{i=1}^{\numsum} \Pr_{\statei}^{\sched}(\Globally \Finally T_i) \geq \numsum$, since the number of sink states and thus the size of the MEC quotient may still be exponential in the number of different target sets.
    
    Observe that the analogous special case of \RelReach is \PSPACE-hard, as the formula from the proof of \PSPACE-hardness for \RelReach is of this form (cf.\ \cref{th:general_EXPTIME}).
    \RelReach being `harder' than \RelBuechi can intuitively be explained as follows: In order to reach $\numsum$ targets we need to remember which targets we have already seen, but if we want to visit them all infinitely often this is not necessary.
\end{remark}

\subsubsection{Memoryless Deterministic Schedulers}

Let \RelBuechiMD denote the \RelBuechi problem over MD schedulers.
We can show strong \NP-completeness analogously to \cref{th:MD_NP-complete}, see~\cref{app:buechi_MP_NP_complete} for details. 

\begin{restatable}{theorem}{buechiMDNPComplete}
    \label{th:buechi_MD_NP-complete}
    \label{th:buechi_MD_NP}
    \RelBuechiMD is strongly \NP-complete.
    Strong \NP-hardness already holds for the following special cases:
    For a given MDP $\mdp$, 
    initial states $\state_1,\state_2 \in \states$,
    target sets $T_1, T_2 \subseteq \states$,
    decide if
    \begin{enumerate}[label=\textnormal{(\alph*)},itemsep=2pt]
        \item 
        \label{item:buechi_1sched1state}
        \label{item:buechi_1sched2state}
        $
            \exists \sched \in \SchedsMD .\ 
            \Pr^{\sched}_{\state_1}(\Globally \Finally T_1) - \Pr^{\sched}_{\state_2}(\Globally \Finally T_2) = 0 \ .
        $
        
        \item 
        \label{item:buechi_2sched2state}
        $
            \exists \sched_1, \sched_2 \in \SchedsMD .\ 
            \Pr^{\sched_1}_{\state_1}(\Globally \Finally T_1) - \Pr^{\sched_2}_{\state_2}(\Globally \Finally T_2) = 0 \ .
        $
    \end{enumerate}
    Strong \NP-hardness of \ref{item:buechi_1sched1state} and \ref{item:buechi_2sched2state} holds irrespective of whether 
    $\state_1 = \state_2$
    and whether $T_1$ and/or $T_2$ are absorbing.
    Moreover, \ref{item:buechi_1sched1state} and \ref{item:buechi_2sched2state} with relation $\approx_\epsilon$, $\epsilon > 0$, are \NP-hard.
\end{restatable}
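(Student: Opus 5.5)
Membership in \NP is immediate: we guess MD schedulers $\sched_1,\ldots,\sched_\numsched \in \SchedsMD$ (polynomial size), compute the induced DTMCs, and evaluate each $\Pr^{\sched_{k_i}}_{\state_i}(\Globally\Finally T_i)$ exactly in polynomial time. This is done by computing the bottom SCCs of the induced chain, marking those that intersect $T_i$, and solving a linear equation system for reachability of these BSCCs~\cite[Ch.~10]{baierPrinciplesModel2008}. Finally we check the weighted-sum comparison with $\comp$.

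For strong \NP-hardness, we reduce from the corresponding cases of \cref{th:MD_NP-complete} for \RelReachMD. Given an instance of \RelReachMD case \ref{item:1sched1state} or \ref{item:2sched2state} on an MDP $\mdp$ with targets $T_1,T_2$, we first make the targets absorbing. The proof of \cref{th:MD_NP-complete} states that hardness holds irrespective of whether $T_1,T_2$ are absorbing, so we may start from an instance in which $T_1$ and $T_2$ are absorbing. In that case, for every scheduler and every initial state,
\[
\Pr^{\sched}_{\state}(\Finally T_i) = \Pr^{\sched}_{\state}(\Globally\Finally T_i),
\]
because reaching an absorbing state of $T_i$ means staying in $T_i$ forever, and conversely visiting $T_i$ infinitely often implies reaching it. Hence the identical instance, with $\Finally$ replaced by $\Globally\Finally$, is an equivalent \RelBuechiMD instance. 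The map is the identity on numbers, so it is pseudo-polynomial and preserves strong hardness. The same argument transfers \NP-hardness for $\approx_\epsilon$ with $\epsilon>0$.

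It remains to cover the \enquote{irrespective of absorbing} and \enquote{irrespective of $\state_1=\state_2$} clauses. The $\state_1=\state_2$ variants are inherited directly, since \cref{th:MD_NP-complete} already provides both variants with absorbing targets. For non-absorbing targets, we modify the hard instance with absorbing targets: each absorbing target state $t$ is replaced by a two-state cycle $t \to t' \to t$ with $t'\notin T_i$. The targets are then no longer absorbing, yet $T_i$ is still visited infinitely often iff it is reached, and MD schedulers are unaffected because $t,t'$ have single actions. This is the step I expect to need the most care. One must check that the hardness gadget from \cref{app:MD_NP-complete} indeed reaches absorbing targets only through paths that never return, and in particular handle the case $T_1=T_2$ or overlapping targets so that exactly the intended sets are visited infinitely often. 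If the original gadget lacks this structure, I would instead redo the Hamiltonian-path reduction directly with Büchi targets placed in closed loops.
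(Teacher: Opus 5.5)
Your proposal is correct and follows essentially the paper's proof. Membership comes from guessing MD schedulers and computing the Büchi probabilities exactly in the induced DTMC. Hardness reuses the Hamiltonian-path construction from \cref{th:MD_NP-complete}: its targets $s_a, s_b$ are absorbing, so $\Pr(\Finally T)=\Pr(\Globally\Finally T)$, and the same construction also gives the $\approx_\epsilon$ case.

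Your two-state-cycle gadget for non-absorbing targets is a sound addition that the paper leaves implicit. The check you flag as delicate holds automatically, since an absorbing target state is never left once it is reached.
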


\begin{restatable}{theorem}{buechiGeneralNecessary}
    \label{th:buechi_general_necessary}
    Memory and randomization are necessary for \RelBuechi with (approximate or exact) equality. 
\end{restatable}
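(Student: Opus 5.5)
Since \cref{th:buechi_general_necessary} is an existence claim about witnesses, the plan is to give two small MDPs with explicit \RelBuechi properties, mirroring the reachability counterexamples from \cref{th:general_necessary}. In the first MDP the property holds over general schedulers but fails over memoryless (randomized) schedulers, which shows that memory is needed. In the second MDP the property holds over general schedulers but fails over deterministic (memoryful) schedulers, which shows that randomization is needed.

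\emph{Randomization.} Take the MDP of \cref{fig:randomization-necessary} and the property $\exists \sched .\ \Pr^\sched_s(\Globally\Finally\{t\}) \approx_{0.1} 0.5$. Since $t$ and $s_\bot$ are absorbing self-loops, visiting $t$ infinitely often is the same event as reaching $t$. Consequently $\Pr^\sched_s(\Globally\Finally\{t\}) = \Pr^\sched_s(\Finally\{t\})$ for every scheduler.
\begin{itemize}
\item A scheduler that randomizes uniformly between $\alpha$ and $\beta$ at $s$ achieves exactly $0.5$.
\item A deterministic scheduler, even one with memory, makes a single choice at $s$, because $s$ is visited only once. It therefore achieves a value in $\{0,1\}$, which violates the bound.
\end{itemize}
The same example covers exact equality with $\epsilon = 0$.

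\emph{Memory.} Adapt \cref{fig:memory-necessary}: let $t_2$ be absorbing and let $t_1$ return to $s$. Consider $\exists\sched .\ \Pr^\sched_s(\Globally\Finally\{t_1\}) \approx_\epsilon \Pr^\sched_s(\Globally\Finally\{t_2\})$ with $\epsilon$ small, say $0.1$. This fails over memoryless schedulers:
\begin{itemize}
\item If $\sched$ picks $\beta$ at $s$ with probability $p>0$, then $t_2$ is reached almost surely, giving values $(0,1)$.
\item If $p=0$, the path cycles through $t_1$ forever, giving values $(1,0)$.
\end{itemize}
In both cases the two values differ by $1$. The property holds over general schedulers: a memoryful scheduler that flips a fair coin once at the first visit to $s$ and then stays committed, choosing either $\alpha$ forever or $\beta$, attains $(0.5,0.5)$. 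This witness uses randomization as well as memory. That is fine, because the memory claim only requires failure over memoryless schedulers, including randomized ones.

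The only step needing care is the memoryless case analysis. The key point is that any positive probability of $\beta$, repeated over infinitely many returns to $s$, reaches $t_2$ almost surely. This is a standard Borel--Cantelli or geometric-series argument. Computing the values under the explicit witnesses is routine.
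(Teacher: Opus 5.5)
Your proof is correct, but it is organized differently from the paper's.

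You follow the template of \cref{th:general_necessary} and use two separate instances:
\begin{itemize}
\item the absorbing example of \cref{fig:randomization-necessary}, where $\Globally\Finally$ collapses to $\Finally$, to show that deterministic schedulers do not suffice;
\item the example of \cref{fig:memory-necessary} to show that memoryless schedulers do not suffice.
\end{itemize}

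The paper uses only the second instance, with the property $\exists\sched.\ \Pr^\sched_s(\Globally\Finally\{t_1\}) \approx_\epsilon \Pr^\sched_s(\Globally\Finally\{t_2\})$. It additionally shows that memoryful \emph{deterministic} schedulers fail there too. A deterministic scheduler induces a single path. That path either never takes $\beta$, giving values $(1,0)$, or takes $\beta$ once and is then trapped in $t_2$, giving values $(0,1)$.

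The paper's route therefore gives a stronger conclusion. It exhibits one genuinely relational B\"uchi property with a non-absorbing target for which, for every $\epsilon<1$ (including exact equality), neither memoryless randomized nor memoryful deterministic schedulers suffice. So memory and randomization are needed simultaneously. Your randomization example is essentially a reachability probability compared against a constant. It only shows that each ingredient is needed in some instance.

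Your version is more modular, and its randomization half follows immediately from the reachability case. Since the paper phrases the reachability counterpart with two examples in the same way, your reading of the statement is acceptable. You could nevertheless have dropped the first example by adding the one-line unique-path argument to your memory example.
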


\begin{proof}[Proof (Sketch)]
    Consider again the MDP from~\cref{fig:memory-necessary}, depicted again in \cref{fig:buechi_memory-randomization-necessary} and the \RelBuechi property $\exists \sched .\ \Pr^{\sched}_{\state}(\protect\Globally \protect\Finally \{t_1\}) \approx_{\epsilon} \Pr^{\sched}_{\state}(\protect\Globally \protect\Finally \{t_2\})$ for some $\epsilon\geq 0$. This property holds over memoryful randomized schedulers, but does not hold if we restrict to either memoryless or deterministic schedulers.
    We refer to \cref{app:buechi_general_necessary} for details.
\end{proof}

\begin{figure}
    \centering
    \begin{tikzpicture}
        \node[state] (s) {$s$};
        
        \node[state, below left= of s] (t1) {$t_1$}; 
        \node[state, below right= of s] (t2) {$t_2$}; 

       \path[-latex', draw]
            (s) edge[bend left] node[right] {$\alpha$} (t1)
            (s) edge node[right] {$\beta$} (t2);

        \path[-latex', draw]
            (t1) edge[bend left] node[right] {$\gamma$} (s)
            (t2) edge[loop right] (t2);
    \end{tikzpicture}
    \caption{MDP where memory and randomization are necessary for the relational B\"uchi property $\exists \sched .\ \Pr^{\sched}_{\state}(\protect\Globally \protect\Finally \{t_1\}) \approx_{\epsilon} \Pr^{\sched}_{\state}(\protect\Globally \protect\Finally \{t_2\})$ for some $\epsilon\geq 0$.}
    \label{fig:buechi_memory-randomization-necessary}
\end{figure}

Combining \cref{th:MD_PTIME}\ref{item:MD_n-comb} and \cref{th:relBuechi-to-relReach_Leftarrow_MD} yields restricted versions of \RelBuechi for which the procedure detailed in \cref{sec:buechi_algo} induces MD schedulers.
Let \RelBuechiNotEq denote the fragment of \RelBuechi restricted to comparison operators $\{\geq,>,\not\approx_{\epsilon} \mid \epsilon\geq 0\}$.

\begin{theorem}
\label{th:buechi_MD}
    For the following special cases of \RelBuechiNotEq, the procedure detailed in \cref{sec:buechi_algo} induces MD schedulers, if it returns true:
    \begin{enumerate}[label=(\alph*)]
        \item \label{item:buechi_MD_n-comb}
        Probability operators with the same scheduler variable have the same initial state
        (formally, $\forall i,i' .\ \sched_{k_i} = \sched_{k_{i'}} \implies \state_i = \state_{i'}$).

        \item \label{item:buechi_MD_sign}
        Probability operators with the same scheduler variable have equally signed coefficients and the same target sets
        (formally, $\forall i,i' .\ \sched_{k_i} {=} \sched_{k_{i'}} \implies ((q_i \geq 0 \Iff q_{i'} \geq 0) \wedge T_i {=} T_{i'})$).
    \end{enumerate}
    In case \ref{item:buechi_MD_sign}, the aforementioned procedure runs in polynomial time.
\end{theorem}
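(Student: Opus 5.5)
The plan is to reduce both cases to \cref{th:MD_PTIME}, applied to the \RelReach query that \cref{alg:buechi} builds on $\quotT$. Any MD witness found there is then pulled back to an MD witness on $\mdp$ via \cref{th:relBuechi-to-relReach_Leftarrow_MD}.

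First, some structural facts about the constructed query $\exists \quotT[\sched_1],\ldots,\quotT[\sched_\numsched].\ \sum_i q_i \Pr^{\quotT,\quotT[\sched_{k_i}]}_{\mapStates(\state_i)}(\Finally U_{T_i}) \comp q$. The scheduler variables are the same as in the original property. The initial states are $\mapStates(\state_i)$, so operators that share a scheduler and an initial state in $\mdp$ still share both in $\quotT$. Coefficients and the comparison operator are unchanged. Each success set $U_{T_i}$ consists only of sink states $\bot^{\mathcal{T}}$, so every target set is absorbing. Finally, if $T_i = T_{i'}$ then $U_{T_i} = U_{T_{i'}}$, because \cref{def:success-set} depends only on $T_i$.

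For case \ref{item:buechi_MD_n-comb}, operators with the same scheduler variable share the initial state $\mapStates(\state_i)$, and all targets $U_{T_i}$ are absorbing. This is exactly the hypothesis of \cref{th:MD_PTIME}\ref{item:MD_n-comb}. So the \RelReach procedure, when it returns true, yields MD witnesses $\quotT[\sched_1],\ldots,\quotT[\sched_\numsched]$ on $\quotT$.

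For case \ref{item:buechi_MD_sign}, the equal-sign condition carries over unchanged, and equal targets give equal success sets. This places the query in \cref{th:MD_PTIME}\ref{item:MD_sign}, which also gives MD witnesses on $\quotT$.

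Next, we translate back to $\mdp$. Since each $\quotT[\sched_k]$ is MD, it is in particular MD for every combination $c\in\comb$ that uses it. \cref{th:relBuechi-to-relReach_Leftarrow_MD} then gives, for each $c$, an MD scheduler on $\mdp$ achieving the same value of $\sum_{j\in\relInd(c)} q_j \Pr^{\sched}_{\state_j}(\Globally\Finally T_j)$. Summing over $c$ preserves the comparison $\comp q$.

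This step is the main obstacle, because \cref{th:relBuechi-to-relReach_Leftarrow_MD} is stated per combination. We need one MD scheduler per scheduler variable, not one per combination.

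In case \ref{item:buechi_MD_n-comb} there is no difficulty: each scheduler variable belongs to exactly one combination. In case \ref{item:buechi_MD_sign} one variable may belong to several combinations with different initial states. I would handle this by inspecting the construction in the proof sketch of \cref{th:relBuechi-to-relReach_Leftarrow_MD}. That construction depends only on $\quotT[\sched_k]$ and on a fixed choice, for each MEC, of a sub-EC in $\EC_{\mathcal{T}}$ or an exit action. It does not depend on the initial state. So a single MD scheduler $\sched_k$ works simultaneously for all initial states of that variable, and all values are preserved.

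Polynomial time in case \ref{item:buechi_MD_sign} follows because each combination has only one target set $T$. The relevant subsets $\mathcal{T}\subseteq\mathcal{T}_c$ are therefore $\emptyset$ and $\{T\}$, so $|\sdead|\le 2|\comb|$ and $\quotT$ has polynomial size. Its construction needs only MEC decompositions and checks for sub-ECs, which are polynomial. The \RelReach step then runs in polynomial time: \cref{th:MD_PTIME}\ref{item:MD_sign} gives this directly, and alternatively \cref{th:general_PTIME}\ref{item:absorbing} applies since the targets are absorbing.
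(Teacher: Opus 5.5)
Your route is the paper's: apply \cref{th:MD_PTIME} to the reachability query on $\quotT$, then pull the MD witnesses back via \cref{th:relBuechi-to-relReach_Leftarrow_MD}. The step that fails is this pull-back in case~\ref{item:buechi_MD_n-comb}. It needs, for every $\mathcal{T}$ with $\mec \in \MEC_{\mathcal{T}}$, an MD scheduler that stays in $\mec$ and visits \emph{every} $T \in \mathcal{T}$ infinitely often. Steering into and remaining in a sub-EC $\ec' \in \EC_{\mathcal{T}}$ does not give this once $|\mathcal{T}| \geq 2$. In general no MD scheduler does, since generalized B\"uchi objectives need memory or randomization.

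Concretely, let $u$ have one action to $v$ and one to $w$, and let $v$ and $w$ each return deterministically to $u$. The property $\exists \sched .\ \Pr^{\sched}_{u}(\Globally\Finally\{v\}) + \Pr^{\sched}_{u}(\Globally\Finally\{w\}) \geq 2$ falls under case~\ref{item:buechi_MD_n-comb}. The whole MDP is a single MEC $\mec \in \MEC_{\{\{v\},\{w\}\}}$, so the MD quotient scheduler choosing $\epsilon^{\{\{v\},\{w\}\}}$ at $s_\mec$ attains $2$. The procedure therefore returns true, and rightly so, since randomizing uniformly at $u$ is a witness. But every MD scheduler fixes one action at $u$ and attains only $1$.

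So this step cannot be repaired. Claim~\ref{item:buechi_MD_n-comb}, and \cref{cor:buechi_MD_suffice} for it, are false as stated. The paper's proof has the same hole, since it cites the same corollary. What your argument does give in this case is an MR witness (after reaching $\ec'$, randomize uniformly over its actions). It also gives an MD witness if every $\mathcal{T}_c$ is a singleton.

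Case~\ref{item:buechi_MD_sign} is fine and agrees with the paper. For each $c$ only $\bot^{\emptyset}$ and $\bot^{\{T\}}$ matter, and both are realizable by MD schedulers inside $\mec$. For $\bot^{\emptyset}$, attract into a sub-EC avoiding $T$ and stay there. For $\bot^{\{T\}}$, attract from everywhere in $\mec$ to a fixed state of $\mec$ in $T$. Your remark that this pull-back is defined state by state, so one MD scheduler serves all initial states sharing a scheduler variable, supplies a step the paper skips.

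One caveat is inherited from \cref{def:MEC-quotient_set}. Read literally, $\bot^{\mathcal{T}}$ and $\epsilon^{\mathcal{T}}$ are shared between combinations with different target sets. So $\epsilon^{\emptyset}$ can be enabled at $s_\mec$ merely because $\mec$ avoids another combination's target. A quotient scheduler for $c$ may then choose it, although staying in $\mec$ forever visits $T$ infinitely often almost surely. For instance, take two disjoint self-loop states $u,z$. The false property $\exists \sched_1,\sched_2 .\ \Pr^{\sched_2}_{u}(\Globally\Finally\{z\}) - \Pr^{\sched_1}_{u}(\Globally\Finally\{u\}) \geq 0$ becomes true on $\quotT$. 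Building one quotient per distinct $\mathcal{T}_c$ removes this and keeps case~\ref{item:buechi_MD_sign} polynomial.
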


\begin{proof}
    \ref{item:buechi_MD_n-comb}: 
    If probability operators with the same scheduler variable have the same initial state, then in the constructed relational reachability property also probability operators with the same scheduler variable have the same initial state and additionally all target sets are absorbing by construction.
    By \cref{th:MD_PTIME}\ref{item:MD_n-comb}, the procedure for solving \RelReach thus induces MD witness schedulers on the MEC quotient, if they exist.
    By \cref{th:relBuechi-to-relReach_Leftarrow_MD}, we can transform any MD scheduler $\quotT[\sched]$ for $\quotT$ back to an MD scheduler $\sched$ for $\mdp$.
        
    \ref{item:buechi_MD_sign}:
    If probability operators with the same scheduler variable have equally signed coefficients and the same target sets, then each set $\indexc[\mathcal{T}]$ must be a singleton.
    Hence also in the constructed relational reachability property, we must have $U_{T_j} = U_{T_{j'}}$ for all $j,j' \in \relInd(c)$ and hence probability operators with the same scheduler variable have the same target sets.
    By \cref{th:MD_PTIME}\ref{item:MD_sign}, the procedure for solving \RelReach thus induces MD witness schedulers on the MEC quotient, if they exist, and runs in polynomial time by \cref{th:buechi_general_PTIME}\ref{item:buechi_single-target}.
    MD schedulers on the MEC quotient can be translated back to MD schedulers on the original MDP by \cref{th:relBuechi-to-relReach_Leftarrow_MD}.
\end{proof}

Observe that \ref{item:buechi_MD_n-comb} includes the case that $n=m$, and thus subsumes both \cref{th:MD_PTIME}\ref{item:MD_n-comb} and \ref{item:MD_sign}.
However, the procedure does not run in polynomial time in case \ref{item:buechi_MD_n-comb} since the size of the MEC quotient may still be exponential in $m$. 

\begin{corollary}
    \label{cor:buechi_MD_suffice}
    For the \RelBuechi variants from \cref{th:buechi_MD}, MD schedulers suffice.
\end{corollary}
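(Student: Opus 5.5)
The corollary follows from \cref{th:buechi_MD} combined with the correctness of \cref{alg:buechi} (\cref{th:buechi_correctness}). We argue only the non-trivial direction, namely that existence over general schedulers implies existence over MD schedulers. The converse holds because MD schedulers are a subset of general schedulers.

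Fix an MDP $\mdp$ and a \RelBuechiNotEq property $\phi$ that falls under case \ref{item:buechi_MD_n-comb} or case \ref{item:buechi_MD_sign} of \cref{th:buechi_MD}. Suppose $\phi$ holds over general schedulers $\Scheds$. By \cref{th:buechi_correctness}, \cref{alg:buechi} then returns true on input $(\mdp,\phi)$.

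In both cases, \cref{th:buechi_MD} states that when the procedure returns true, it induces MD schedulers witnessing $\phi$. The mechanism is as follows.
\begin{itemize}
\item The \RelReach query on $\quotT$ has absorbing success sets $U_{T_i}$. It also inherits the structural restriction of the case at hand, so \cref{th:MD_PTIME} yields MD witnesses $\quotT[\sched_1],\ldots,\quotT[\sched_\numsched]\in\SchedsMD[\quotT]$.
\item \cref{th:relBuechi-to-relReach_Leftarrow_MD} converts these, combination by combination, into MD schedulers on $\mdp$ that achieve the same weighted sums.
\end{itemize}

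The step needing care is this final conversion. \cref{th:relBuechi-to-relReach_Leftarrow_MD} is stated for a single state-scheduler combination $c$ with an exact target value $q_c$. We must therefore check that the schedulers obtained for different combinations can be reassembled into a single MD scheduler for each variable $\sched_k$.

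In case \ref{item:buechi_MD_n-comb} this is immediate. Each scheduler variable occurs with exactly one initial state, so each variable corresponds to exactly one combination $c$. The MD scheduler produced for $c$ can serve directly as the MD scheduler for that variable. Summing the values $q_c$ over all combinations recovers the left-hand side of $\phi$, so the comparison $\comp\, q$ is preserved.

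In case \ref{item:buechi_MD_sign} a variable may belong to several combinations with different initial states. However, all its operators share one target set and carry equally signed coefficients. The MD scheduler from \cref{th:MD_PTIME}\ref{item:MD_sign} on $\quotT$ is then one scheduler that simultaneously optimizes the reachability probability of the common set $U_T$ from every initial state. Applying the construction of \cref{th:relBuechi-to-relReach_Leftarrow_MD} to this one scheduler yields a single MD scheduler on $\mdp$. That construction is defined per MEC and does not depend on the initial state, so this scheduler preserves $\Pr_{\state_i}(\Globally\Finally T)$ for every $\state_i$ at once.

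Hence $\phi$ holds over $\SchedsMD$, which proves the corollary.
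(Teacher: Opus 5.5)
Your route is the paper's own: the corollary is read off \cref{th:buechi_MD} together with correctness of \cref{alg:buechi}, and your reassembly of per-combination witnesses into one scheduler per variable is fine. The gap is the step you call immediate in case~\ref{item:buechi_MD_n-comb}, namely that \cref{th:relBuechi-to-relReach_Leftarrow_MD} turns an MD witness on $\quotT$ into an MD witness on $\mdp$.

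\textbf{Why the conversion fails.} If a combination $c$ has $|\mathcal{T}_c|\geq 2$, the MD quotient witness may choose $\epsilon^{\mathcal{T}}$ with $|\mathcal{T}|\geq 2$. The sub-EC $C'\in\EC_{\mathcal{T}}$ does contain a state of every $T\in\mathcal{T}$. However, an MD scheduler inside $C'$ fixes one action per state and can split $C'$ into several recurrent classes. Visiting all $T\in\mathcal{T}$ infinitely often may therefore need randomization or memory.

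\textbf{Counterexample.} Let $s$ have actions $\alpha,\beta$ leading surely to $t_1$ and $t_2$, respectively, and let $t_1,t_2$ each lead surely back to $s$. Consider $\exists\sched.\ \Pr^{\sched}_{s}(\Globally\Finally\{t_1\})+\Pr^{\sched}_{s}(\Globally\Finally\{t_2\})\geq 2$. This is an instance of case~\ref{item:buechi_MD_n-comb}: one variable, one initial state.
\begin{itemize}
\item A fair coin at $s$ makes the induced chain irreducible, so the value is $2$.
\item Both MD schedulers give value $1$.
\item The quotient has a single MEC whose action $\epsilon^{\{T_1,T_2\}}$ leads into $U_{T_1}\cap U_{T_2}$. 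So \cref{alg:buechi} returns true with an MD quotient witness, yet no MD witness exists on $\mdp$.
\end{itemize}
The same happens with $\not\approx_0 1$ in place of $\geq 2$. Hence the corollary, and with it \cref{th:buechi_MD}\ref{item:buechi_MD_n-comb} and \cref{th:relBuechi-to-relReach_Leftarrow_MD}, is false in case~\ref{item:buechi_MD_n-comb} unless every combination has a single target set. What does hold there is that memoryless randomized witnesses suffice: realize $\epsilon^{\mathcal{T}}$ by randomizing uniformly over the actions of $C'$.

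\textbf{Case~\ref{item:buechi_MD_sign}.} This case is true, but you should argue it directly on $\mdp$ rather than through $\quotT$. Each variable has a single target $T$ and equally signed coefficients. One MD scheduler attains the maximum (resp.\ minimum) of $\Pr_{s}(\Globally\Finally T)$ from all states simultaneously. For $\geq,>,\not\approx_\epsilon$ only these extreme values of the left-hand side matter.

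The detour via $\quotT$ is unsafe because the sinks $\bot^{\mathcal{T}}$ are shared across combinations. A quotient scheduler for one combination may use an $\epsilon^{\mathcal{T}'}$ created for another combination, which need not be mimicable on $\mdp$. For example, let $u$ be absorbing, let $T=\{u\}$ belong to one variable and $T'\supsetneq T$ to another, both evaluated from $u$. Choosing $\epsilon^{\{T'\}}$ at the collapsed state of $\{u\}$ gives $\Pr(\Finally U_T)=0$, whereas every scheduler on $\mdp$ visits $T$ infinitely often from $u$. With a negative coefficient on the $T$-operator, the optimizer you feed into \cref{th:relBuechi-to-relReach_Leftarrow_MD} is exactly such a scheduler.
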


\begin{corollary}
    \label{cor:buechi_MD_absorb}
    If all target sets are absorbing, \RelBuechiMD is in \PTIME in the cases listed in~\cref{th:MD_PTIME}.
\end{corollary}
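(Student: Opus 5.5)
The corollary concerns \RelBuechiMD with absorbing target sets in the three cases of \cref{th:MD_PTIME}. The plan is to reduce it directly to \RelReachMD on the original MDP, since for absorbing targets Büchi and reachability coincide. If every $T_i$ is absorbing, then any path that visits $T_i$ stays in the absorbing state it first enters, which lies in $T_i$, and therefore visits $T_i$ infinitely often. Conversely, visiting $T_i$ infinitely often trivially implies reaching it. Hence for every scheduler $\sched$ (in particular every MD scheduler) and every state $\state$,
\[
  \Pr^{\sched}_{\state}(\Globally \Finally T_i) = \Pr^{\sched}_{\state}(\Finally T_i).
\]
The \RelBuechiMD instance is therefore equivalent, with the same schedulers as witnesses, to the \RelReachMD instance obtained by replacing every $\Globally\Finally$ with $\Finally$. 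This translation is syntactic and takes linear time.

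Next I apply \cref{th:MD_PTIME} to the resulting \RelReachNotEq instance. The cases listed there restrict only quantifier and initial-state structure, signs, and target sets, and these are unchanged by the translation. Case \ref{item:MD_n-comb} additionally requires absorbing targets, which holds by assumption. In each case the procedure of \cref{sec:reach_algo} runs in polynomial time. If it returns true, it produces MD witness schedulers.

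The remaining step is to argue that this procedure decides the MD problem exactly. If the procedure returns true, the MD witnesses show that the \RelReachMD instance holds. If it returns false, then by correctness (\cref{thm:correctness}) no general schedulers witness the property, so in particular no MD schedulers do. The procedure's answer therefore coincides with the answer over MD schedulers, which is exactly \cref{th:MD_suffice}.

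The main obstacle is the scope of the corollary. \cref{th:MD_PTIME} is stated only for the operators $\geq,>,\not\approx_\epsilon$, so I read the corollary as restricted to \RelBuechiNotEq. For equality, \cref{th:MD_NP-complete} shows hardness even when targets are absorbing, so the corollary cannot extend to that case.
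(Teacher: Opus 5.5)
Your proof is correct and takes the same approach as the paper. With absorbing targets, $\Pr^{\sched}_{\state}(\Globally\Finally T_i) = \Pr^{\sched}_{\state}(\Finally T_i)$ holds for every scheduler, so the \RelBuechiMD instance is a \RelReachMD instance to which \cref{th:MD_PTIME} and \cref{th:MD_suffice} apply. Your extra points (that the procedure's answer coincides with the MD answer, and that the corollary implicitly concerns \RelBuechiNotEq because equality stays hard even for absorbing targets) spell out what the paper's one-line proof leaves implicit.
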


\begin{proof}
    A relational B\"uchi property with absorbing target sets can be reformulated as a relational reachability property.
\end{proof}

\paragraph*{Summary of \cref{sec:buechi}}
The key insight is that \RelBuechi can be solved by reduction to \RelReach on (a slight variation of) the MEC quotient.
This observation yields an algorithm that runs in time exponential only in the number of target sets, but in polynomial time for a fixed number of target sets, absorbing targets, or if each state-scheduler combination only has a single relevant target set.
\RelBuechi is strongly \NP-complete over both general and MD schedulers; we identified fragments where our algorithm induces MD witnesses in polynomial time.

%% file: conjunction.tex
\section{Multi-Objective Relational Reachability}
\label{sec:conjunction}

Multi-objective relational reachability properties generalize traditional multi-objective reachability properties~\cite{chatterjeeMarkovDecision2006,chatterjeeMarkovDecision2007,etessamiMultiObjectiveModel2008,quatmannVerificationMultiobjective2023}.
The latter ask for a scheduler satisfying multiple predicates comparing a reachability probability to a bound, such as ``\emph{Does there exist a scheduler such that the probability of reaching $T_1$, $T_2$, $T_3$ from $\state$ exceeds $q_1$, $q_2$, $q_3$, respectively?}''
We lift this setting to \emph{relational reachability predicates} that compare weighted sums of probabilities---from potentially different states---to a bound, i.e., $\sum_{i=1}^{\numsum} q_{i,j} \Pr^{\sched_{k_{i,j}}}_{\state_{i,j}}(\Finally T_{i,j}) \comp_j q_j$.
In the following, `predicate' always refers to a relational reachability predicate unless specified otherwise.
We focus on the multi-objective part here and simplify matters by not considering B\"uchi objectives.
We conjecture that the approach can be lifted to multi-objective relational properties including B\"uchi constraints by combining it with techniques from~\cref{sec:buechi}.
Recall that the motivating example from \cref{ex:conj_motivating-ex} is (the negation of) a multi-objective relational reachability property.

\begin{framedproblem}[\ConjRelReach]
    \label{prob:morelreach}
    Given an MDP $\mdp = \mdptup$, decide whether 
    \[
        \exists \sched_1, \ldots, \sched_n \in \Scheds .\
        \bigwedge_{j=1}^{\numconj} \sum_{i=1}^{\numsum} q_{i,j} 
        \Pr^{\sched_{k_{i,j}}}_{\state_{i,j}}(\Finally T_{i,j}) \comp_j q_j
        ~,
        \qquad\text{where}
    \]
    \begin{itemize}
        \item $\numconj,\numsum,\numsched$ are natural numbers,
        \item $q_{1,1}, \ldots, q_{\numsum,\numconj}$ are rational coefficients,
        \item $q_1, \ldots, q_\numconj$ are rational bounds,
        \item $\state_{1,1}, \ldots, \state_{\numsum,\numconj}$  are (not necessarily distinct) initial states,
        \item $\{k_{1,1}, k_{1,2} \ldots, k_{\numsum-1,\numconj}, k_{\numsum,\numconj}\} = \{1, \ldots, \numsched\}$ is a set of indices,
        \item $T_{1,1}, \ldots, T_{\numsum,\numconj}$ are (not necessarily distinct) target sets, and 
        \item $\comp_1, \ldots, \comp_\numconj \in \{>, \geq, \approx_{\epsilon}, \not\approx_{\epsilon} \mid \epsilon \in \Q_{\geq 0} \}$ are comparison operators.
    \end{itemize}    
\end{framedproblem}

Note that, by allowing the coefficients $q_{i,j}$ to be zero, we implicitly allow a different number of summands for each predicate.

\begin{remark}
    As for traditional multi-objective properties, simply maximizing the `sum of the predicates' $\Pr^\sched_s(\Finally T_1) - \Pr^\sched_s(\Finally T_2) + \Pr^\sched_s(\Finally T_3) - \Pr^\sched_s(\Finally T_4)$ does not reveal whether there exists a scheduler satisfying both $\Pr^\sched_s(\Finally T_1) - \Pr^\sched_s(\Finally T_2) \geq 0$ and $\Pr^\sched_s(\Finally T_3) - \Pr^\sched_s(\Finally T_4) \geq 0$.
\end{remark}

Intuitively, a \ConjRelReach property with the comparison operator $\not\approx_{\epsilon}$ contains a disjunction inside a conjunction, since $x \not\approx_\epsilon y$ is equivalent to $x < y - \epsilon \lor x > y + \epsilon$.
This nested disjunction makes \ConjRelReach with $\not\approx_{\epsilon}$ computationally more complex than \ConjRelReach without $\not\approx_{\epsilon}$.
In fact, we will see that \ConjRelReach can be solved in \PTIME for a fixed number of target sets \emph{only} if we exclude $\not\approx_\epsilon$ for $\epsilon>0$. 
However, the special case of $\neq$ (i.e., $\not\approx_0$) can still be handled efficiently.

\begin{remark}[Extended MOA Queries]
    \label{rem:extended-MOA}
    In this section, we work with \emph{multi-objective achievability} (\emph{MOA}) queries in the sense of~\cite{quatmannMultiobjectiveModel2016,forejtQuantitativeMultiobjective2011}, i.e., queries of the form \[
        \exists \sched \in \Scheds .\ 
        \bigwedge_{j=1}^{\numconj} \Expected_{\state}^{\mdp, \sched}(\rew^j) \comp_j q_j 
    \]
    for 
    an MDP $\mdp$ with state $\state$ and reward structures $\rew^1, \ldots, \rew^\numconj$, 
    comparison operators $\comp_1, \ldots, \comp_\numconj \in \{>, \geq, \approx_{\epsilon}, \not\approx_{\epsilon} \mid \epsilon \in \Q_{\geq 0} \}$,
    and rational bounds $q_1, \ldots, q_\numconj$.
    See~\cref{sec:related_moa} for details. 
    Note that \cite{quatmannMultiobjectiveModel2016,forejtQuantitativeMultiobjective2011} do not allow $\not\approx_\epsilon$
    and that other works such as~\cite{quatmannVerificationMultiobjective2023,etessamiMultiObjectiveModel2008} use a more restricted notion of achievability where only $\geq$ is allowed as a comparison operator. 
    In the following, we explicitly call MOA queries with arbitrary comparison operators ``\emph{extended MOA queries}'' to distinguish in particular from the restricted version with only $\geq$.
\end{remark}

\subparagraph*{Outline of this section.}
We first show that multi-objective relational queries can be reduced to extended MOA queries in \cref{sec:conj_algo}, and outline how to solve the constructed MOA queries in \cref{sec:moa}.
In \cref{sec:conj_complexity}, we prove that \ConjRelReach is \PSPACE-hard and decidable in \EXPTIME.
If all probability operators share the same, absorbing target set, it is \NP-complete, and if we additionally exclude $\not\approx_\epsilon$ for $\epsilon>0$, it is in \PTIME.

%% file: conjunction_algo.tex
\subsection{Verifying Multi-Objective Relational Reachability Properties}
\label{sec:conj_algo}

Throughout this section, we fix an MDP $\mdp = \mdptup$ and a \ConjRelReach property
\begin{align}
    \exists \sched_1 \ldots \sched_n \in \Scheds .\
    \bigwedge_{j=1}^{\numconj} \sum_{i=1}^{\numsum} q_{i,j} 
    \Pr^{\sched_{k_{i,j}}}_{\state_{i,j}}(\Finally T_{i,j}) \comp_j q_j
    \label{eq:genConjRelReachProp}
\end{align}
as in~\cref{prob:morelreach}.
In the following, we assume $q_{i,j} \neq 0$ for all $(i,j) \in \{1,\ldots,\numsum\} \times \{1, \ldots \numconj\}$ for simplicity.
In this section, we outline a procedure for verifying whether \eqref{eq:genConjRelReachProp} holds, as summarized in \cref{alg:conj}.
In \cref{sec:conj_to_MOMC}, we first show how to reduce \eqref{eq:genConjRelReachProp} to a multi-objective achievability query; and then address how these can be solved in \cref{sec:moa}.

\begin{example}
    \label{ex:conj_running}
    Throughout this section, we use the MDP depicted in~\cref{fig:conj_running} together with the \ConjRelReach property
    \[
        \exists \sched .\
        \Pr^{\sched}_{s_1}(\Finally T) - \Pr^{\sched}_{s_2}(\Finally T) < 0
        \land 
        \Pr^{\sched}_{s_1}(\Finally T) - \Pr^{\sched}_{s_1}(\Finally T') > 0
    \]
    as a running example.
\end{example}

\colorlet{rew1}{magenta}
\colorlet{rew2}{blue}
\begin{figure}
    \centering
    \scalebox{.7}{
    \begin{tikzpicture}[
            node distance=25mm and 15mm,
            on grid,semithick,>=stealth,
            ]
        \node[state] (s1) {$s_1$};
        \node[dist, right=10mm of s1] (d1) {};
        \node[state, right=10mm of d1] (s2) {$s_2$};
        \node[dist, below=10mm of s2] (d2) {};
        \node[state, below left=25mm and 10mm of s1] (t) {$t$};
        \node[state, below=of d1] (sp) {$s'$};
        \node[state, below right=25mm and 10mm of s2] (tp) {$t'$};

        \path[-, draw]
        (s1) edge (d1)
        (s2) edge (d2)
        ;
            
        \path[-latex', draw]
        (d1) edge[bend left=15] node[left] {$\nicefrac{1}{3}$} (t)
        (d1) edge node[left] {$\nicefrac{1}{3}$} (sp)
        (d1) edge node[above] {$\nicefrac{1}{3}$} (s2)
        (d2) edge[bend right=15] node[above] {$\nicefrac{2}{3}$} (sp)
        (d2) edge[bend left=15] node[above] {$\nicefrac{1}{3}$} (tp)
        (sp) edge node[above] {$\alpha$} (t)
        (sp) edge[bend left=15] node[above] {$\beta$} (tp)
        (tp) edge[bend right=30] node[right] {$\gamma$} (s2)
        ;

        \path[-latex', draw]
        (t) edge[loop above] (t)
        (tp) edge[loop below] node[right] {$\delta$} (tp);

        \node[staterectangle, right=10cm of s1] (sX) {$s_{c_1,c_2}$};
        \node[dist, below=10mm of sX] (dX) {};

        \node[staterectangle, below=20mm of sX, xshift=-4cm] (s1c1) {$s_1, \emptyset, c_1$};
        \node[dist, right= of s1c1] (d1c1) {};
        \node[staterectangle, right= of d1c1] (s2c1) {$s_2, \emptyset, c_1$};
        \node[dist, below=7mm of s2c1] (d2c1) {};
        \node[staterectangle, below left=of s1c1, xshift=5mm, label={[yshift=-1mm]100:{\textcolor{rew1}{$\mathbf{2}$}, \textcolor{rew2}{$\mathbf{2}$}}}] (tc1) {$t, \emptyset, c_1$};
        \node[staterectangle, below=of d1c1] (spc1) {$s', \emptyset, c_1$};
        \node[staterectangle, below right=of s2c1, xshift=-5mm, label={[yshift=-1mm]110:{\textcolor{rew2}{$\mathbf{-2}$}}}] (tpc1) {$t', \emptyset, c_1$};

        \node[staterectangle, left=30mm of tc1] (tTc1) {$t, \{T\}, c_1$};

        \node[staterectangle, below=35mm of s2c1] (s2Tpc1) {$s_2, \{T'\}, c_1$};
        \node[dist, below=7mm of s2Tpc1] (d2Tpc1) {};
        
        \node[staterectangle, below=30mm of tTc1] (tTTpc1) {$t, \{T, T'\}, c_1$};
        \node[staterectangle, below=30mm of tc1, label={[yshift=-1mm]110:{\textcolor{rew1}{$\mathbf{2}$}, \textcolor{rew2}{$\mathbf{2}$}}}] (tTpc1) {$t, \{T'\}, c_1$};
        \node[staterectangle, below=30mm of spc1] (spTpc1) {$s', \{T'\}, c_1$};
        \node[staterectangle, below=30mm of tpc1] (tpTpc1) {$t', \{T'\}, c_1$};

        \node[below=20mm of sX, xshift=3.5cm] (s1c2) {};
        \node[right= of s1c2] (d1c2) {};
        \node[staterectangle, right=15mm of d1c2] (s2c2) {$s_2, \emptyset, c_2$};
        \node[dist, below=10mm of s2c2] (d2c2) {};
        \node[staterectangle, below left=of s1c2, xshift=5mm, label={[yshift=-1mm]110:{\textcolor{rew1}{$\mathbf{-2}$}}}] (tc2) {$t, \emptyset, c_2$};
        \node[staterectangle, below=of d1c2] (spc2) {$s', \emptyset, c_2$};
        \node[staterectangle, below right=of s2c2, xshift=-5mm] (tpc2) {$t', \emptyset, c_2$};

        \node[staterectangle, below=15mm of tc2] (tTc2) {$t, \{T\}, c_2$};

        \path[-, draw] (sX) edge node[left] {$\epsilon$} (dX);
        
        \path[-latex', draw] 
        (dX) edge[bend right=10] node[above] {$\nicefrac{1}{2}$} (s1c1.north)
        (dX) edge[bend left=10] node[above] {$\nicefrac{1}{2}$} (s2c2.north);
        
        \path[-, draw]
        (s1c1) edge (d1c1)
        (s2c1) edge (d2c1)
        (s2Tpc1) edge (d2Tpc1)
        ;
            
        \path[-latex', draw]
        (d1c1) edge[bend left=15] node[left] {$\nicefrac{1}{3}$} (tc1)
        (d1c1) edge node[left] {$\nicefrac{1}{3}$} (spc1)
        (d1c1) edge node[above] {$\nicefrac{1}{3}$} (s2c1)
        (d2c1) edge[bend right=15] node[above] {$\nicefrac{2}{3}$} (spc1)
        (d2c1) edge[bend left=15] node[above] {$\nicefrac{1}{3}$} (tpc1)
        (spc1) edge[ultra thick] node[above] {$\boldsymbol{\alpha}$} (tc1)
        (spc1) edge[bend left=15] node[above, pos=0.2] {$\beta$} (tpc1)
        ;

        \path[-latex', draw]
        (tc1) edge (tTc1);

        \path[-latex', draw] 
        (tpc1) edge[bend right=30] node[right, pos=0.5, xshift=1mm] {$\gamma$} (s2c1)
        (spTpc1) edge[ultra thick, bend right=15] node[above, pos=0] {$\boldsymbol\alpha$} (tTpc1.north)
        (spTpc1) edge[bend left=15] node[above, pos=0.3,yshift=-1mm] {$\beta$} (tpTpc1.120)
        (tpTpc1) edge[bend right=30] node[right, pos=0.1] {$\gamma$} (s2Tpc1)
        ;

        \path[-latex', draw]
        (tpc1) edge[ultra thick, bend right=30] node[left] {$\boldsymbol\gamma$} (s2Tpc1.north)
        (d2Tpc1) edge[bend right=15] node[above] {$\nicefrac{2}{3}$} (spTpc1)
        (d2Tpc1) edge[bend left=15] node[above] {$\nicefrac{1}{3}$} (tpTpc1)
        (tTpc1) edge (tTTpc1)
        ;

        \path[-latex', draw]
        (tTc1) edge[loop above] (tTc1)
        (tTTpc1) edge[loop above] (tTTpc1)
        (tpc1) edge[bend left=20] node[right] {$\delta$} (tpTpc1.30)
        (tpTpc1) edge[loop right, looseness=10, out=5, in=355, ultra thick] node[right] {$\boldsymbol\delta$} (tpTpc1);

        \path[-, draw]
        (s2c2) edge (d2c2)
        ;
            
        \path[-latex', draw]
        (d2c2) edge[bend right=15] node[above] {$\nicefrac{2}{3}$} (spc2)
        (d2c2) edge[bend left=15] node[above] {$\nicefrac{1}{3}$} (tpc2)
        (spc2) edge node[above, ultra thick] {$\boldsymbol\alpha$} (tc2)
        (spc2) edge[bend left=15] node[above] {$\beta$} (tpc2)
        (tpc2) edge[bend right=30, ultra thick] node[right] {$\boldsymbol\gamma$} (s2c2)
        ;
        \path[-latex', draw]
        (tc2) edge (tTc2);

        \path[-latex', draw]
        (tTc2) edge[loop right, looseness=10, out=5, in=355] (tTc2)
        (tpc2) edge[loop below] node[right] {$\delta$} (tpc2);

        \begin{scope}[on background layer]
        \node[state, right=40mm of s2c1, color=gray, inner sep=0pt] (sink) {$\bot_{\emptyset}$};
        \node[state, right=55mm of tpTpc1, yshift=3mm, color=gray, inner sep=0pt] (sinkTp) {$\bot_{T'}$};  
        \node[state, left=30mm of tTc1, yshift=-5mm, color=gray, inner sep=0pt] (sinkT) {$\bot_{T}$};  
        \node[state, below=15mm of sinkT, color=gray, inner sep=0pt] (sinkTTp) {$\bot_{T,T'}$};  

        \path[-latex', draw, color=gray, dashed]
        (tTc1) edge (sinkT)
        (tTTpc1) edge (sinkTTp)
        (tTc2) edge[bend left=15] (sinkT)
        (s2c1) edge (sink)
        (spc1) edge (sink)
        (tpc1) edge (sink)
        (s2c2) edge (sink)
        (spc2) edge (sink)
        (tpc2) edge (sink)
        (s2Tpc1) edge[bend right=15] (sinkTp)
        (spTpc1) edge[bend left=15] (sinkTp)
        (tpTpc1) edge[bend left=5] (sinkTp)
        (sink) edge[loop below] (sink)
        (sinkT) edge[loop above] (sinkT)
        (sinkTp) edge[loop right] (sinkTp)
        (sinkTTp) edge[loop below] (sinkTTp)
        ;
        \end{scope}
    \end{tikzpicture}
    }
    \label{fig:conj_running-ex_mdp}
    \caption{MDP with $T = \{t\}$ and $T' = \{t'\}$ (left) and the pre-processed combined unfolded MDP $\processed[\mdp_{\comb}]$ (right) for ~\cref{ex:conj_running}. 
    States and transitions added by the pre-processing~(\cref{def:conj_preprocessing}) are indicated in \textcolor{gray}{gray}, the MDP without these states and transitions is the combined unfolded MDP \emph{without} the pre-processing $\mdp_{\comb}$~(\cref{def:MDP_combined}).
    If a state only has a single available action, we may omit action labels.
    Non-zero rewards for \textcolor{rew1}{$\rew^1$}, \textcolor{rew2}{$\rew^2$} are indicated next to the states.
    Actions chosen by the witness scheduler for $\mdp_{\comb}$ from~\cref{ex:conj_rew} are indicated in \textbf{bold}.
    } 
    \label{fig:conj_running}
\end{figure}

\subsubsection{Reducing to Extended Multi-Objective Achievability}
\label{sec:conj_to_MOMC}

\subparagraph*{Step 0: Split query (optional).}
\label{step:conj_split}
We define the \emph{class} of a predicate index $j \in \{1,\ldots,\numconj\}$ as the set of all objectives $j' \in \{1,\ldots,\numconj\}$ sharing a scheduler with objective $j$, thus partitioning the set of objectives $\{1,\ldots,\numconj\}$.
Splitting the query is not strictly necessary for the correctness of the approach but improves its practical efficiency (though it does not improve its \emph{theoretical} worst-case complexity). 

\begin{definition}
    \label{def:conjpart}
    For $j\in \{1,\ldots,\numconj\}$, we define the \emph{class} of $j$ as 
    \[
        \class{j} = \{ j' \in \{1,\ldots,\numconj\} \mid \exists i,i' .\ \sched_{k_{i,j}} = \sched_{k_{i',j'}} \} 
        ~.
    \]
    We let $\conjpart = \{ \class{1}, \ldots, \class{\numconj} \}$.
\end{definition}

For each class of predicate indices we collect the set of all relevant scheduler indices.
\begin{definition}
    For $\conjpartelt \in \conjpart$, we let  
    $\textit{Scheds}(\conjpartelt) = \{ k_{i,j'} \mid i \in \{1,\ldots,\numsum\} \wedge j' \in \conjpartelt\} $. 
    For $n_{\conjpartelt} = |\textit{Scheds}(\conjpartelt)|$, we let $\{h^\conjpartelt_1, \ldots, h^\conjpartelt_{\numsched_\conjpartelt} \} = \textit{Scheds}(\conjpartelt)$.
\end{definition}

\begin{example}
    \label{ex:conj_split}
    Consider the following extension of the property from~\cref{ex:conj_running}
    \begin{align*}
        \exists \sched_1, \sched_2 .\
        &\Pr^{\sched_1}_{s_1}(\Finally T) - \Pr^{\sched_1}_{s_2}(\Finally T) < 0
        \;\;\land\;\;
        \Pr^{\sched_1}_{s_1}(\Finally T) - \Pr^{\sched_1}_{s_1}(\Finally T') > 0
        \;\;\land\;\; {} \\ &
        \Pr^{\sched_2}_{s_1}(\Finally T) + \Pr^{\sched_2}_{s_1}(\Finally T') > 2
        ~.
    \end{align*}
    We have $\conjpart = \{ \class{1} = \class{2}, \class{3}\}$, i.e., the first and second predicate share a scheduler but not with the third predicate.
    More precisely, the first two predicates use $\sched_1$, so $\textit{Scheds}(\class{1}) = \{ 1 \}$, 
    while the third predicate only refers to $\sched_2$, so $\textit{Scheds}(\class{3}) = \{ 2 \}$, 
\end{example}

We can now split the problem into a number of independent smaller problems, one for every partition element $\conjpartelt \in \conjpart$.

\begin{lemma}
\label{le:split-conjunction}
    Let $\conjpart$ as defined in~\cref{def:conjpart}, then
    \begin{align*}
        &\exists \sched_1 \ldots \sched_n \in \Scheds .\ 
        \bigwedge_{j=1}^{\numconj}
        \sum_{i=1}^{\numsum} q_{i,j} 
        \Pr^{\sched_{k_{i,j}}}_{\state_{i,j}}(\Finally T_{i,j}) \comp_j q_j \\
        \Iff\quad \bigwedge_{\conjpartelt \in \conjpart}
        &\exists \sched_{h^\conjpartelt_1} \ldots \sched_{h^\conjpartelt_{\numsched_\conjpartelt}} \in \Scheds .\ \bigwedge_{j \in \conjpartelt} \sum_{i=1}^{\numsum} q_{i,j} 
        \Pr^{\sched_{k_{i,j}}}_{\state_{i,j}}(\Finally T_{i,j}) \comp_j q_j 
        ~.
    \end{align*}
\end{lemma}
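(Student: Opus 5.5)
The plan is to show that the classes in $\conjpart$ partition both the conjuncts and the scheduler variables. Then the existential quantifier distributes over the conjunction, because the conjuncts in different classes share no free variables.

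\textbf{Step 1: the classes partition the scheduler indices.} The definition of $\class{j}$ as stated only relates predicates that directly share a scheduler. To obtain a partition, I read it as the equivalence relation generated by this relation, i.e.\ its transitive closure. Equivalently, I take the connected components of the bipartite graph between predicate indices $j$ and scheduler indices $k$, with $j$ joined to $k$ iff $k=k_{i,j}$ for some $i$. Each scheduler index $k$ occurs in some predicate, since $\{k_{i,j}\} = \{1,\ldots,\numsched\}$. Hence $k$ lies in $\textit{Scheds}(\conjpartelt)$ for exactly one $\conjpartelt\in\conjpart$. Indeed, if $k\in \textit{Scheds}(\conjpartelt)\cap\textit{Scheds}(\conjpartelt')$, then two predicates from $\conjpartelt$ and $\conjpartelt'$ share $\sched_k$, so the classes coincide. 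Thus $\{1,\ldots,\numsched\}$ is the disjoint union of the sets $\textit{Scheds}(\conjpartelt)$. Moreover, every predicate $j\in\conjpartelt$ refers only to schedulers indexed in $\textit{Scheds}(\conjpartelt)$.

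\textbf{Step 2, ``$\Rightarrow$''.} Given witnesses $\sched_1,\ldots,\sched_\numsched$ for the whole conjunction, fix $\conjpartelt \in \conjpart$. Restrict the assignment to the indices $h^\conjpartelt_1,\ldots,h^\conjpartelt_{\numsched_\conjpartelt}$. Each predicate $j\in\conjpartelt$ evaluates to exactly the same value under this restriction, so the restricted tuple witnesses the $\conjpartelt$-conjunct.

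\textbf{Step 2, ``$\Leftarrow$''.} For each $\conjpartelt$, take witnesses $(\sched^{\conjpartelt}_{h})_{h\in\textit{Scheds}(\conjpartelt)}$. By Step 1, for each $k\in\{1,\ldots,\numsched\}$ there is a unique $\conjpartelt$ with $k\in\textit{Scheds}(\conjpartelt)$, and we set $\sched_k := \sched^{\conjpartelt}_k$; this assignment is therefore well defined. Now take any predicate $j$ with $j \in \conjpartelt$. All schedulers it mentions were taken from the $\conjpartelt$-witness, so its truth value equals the one established for that class. Hence the full conjunction holds.

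\textbf{Main obstacle.} The only subtle point is the well-definedness in the ``$\Leftarrow$'' direction. It requires disjointness of the sets $\textit{Scheds}(\conjpartelt)$, which holds only if $\class{\cdot}$ is transitively closed. I would make this explicit at the start of the proof. No probabilistic reasoning or memory argument (as in \cref{thm:whyComb}) is needed, because the schedulers are left unchanged.
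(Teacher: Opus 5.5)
Your argument is correct and is the reasoning the paper leaves implicit, since it states the lemma without proof: the existential quantifier distributes over a conjunction of blocks whose scheduler variables are pairwise disjoint, and witnesses are simply restricted or glued together unchanged. Your insistence on the transitive closure of the ``shares a scheduler'' relation is genuinely needed and not merely cosmetic, because the lemma fails under the literal $\class{j}$ (which the paper nonetheless describes as a partition): take $x_k = \Pr^{\sched_k}_{s}(\Finally \{t\})$ in an MDP where this value can be anything in $[0,1]$, and the predicates $x_1+x_1 \geq 2$, $x_1+x_2 \approx_0 1$, $x_2-x_3 \approx_0 0$, $x_3+x_3 \geq 2$; their literal classes $\{1,2\}$, $\{1,2,3\}$, $\{2,3,4\}$, $\{3,4\}$ are each satisfiable on their own, whereas the full conjunction would require $x_3 = x_2 = 1-x_1 = 0$ and $x_3 = 1$.
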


\begin{example}
    For the property from \cref{ex:conj_split}, we have 
    \begin{align*}
        & \; \exists \sched_1, \sched_2 .\
        \Pr^{\sched_1}_{s_1}(\Finally T) - \Pr^{\sched_1}_{s_2}(\Finally T) < 0
        \;\;\land\;\;
        \Pr^{\sched_1}_{s_1}(\Finally T) - \Pr^{\sched_1}_{s_1}(\Finally T') > 0
        \;\;\land\;\; {} 
        \\ & \qquad\qquad 
        \Pr^{\sched_2}_{s_1}(\Finally T) + \Pr^{\sched_2}_{s_1}(\Finally T') > 2
        \\ \Iff\quad  & 
        \left( 
        \exists \sched_1 .\
        \Pr^{\sched_1}_{s_1}(\Finally T) - \Pr^{\sched_1}_{s_2}(\Finally T) < 0
        \;\;\land\;\; 
        \Pr^{\sched_1}_{s_1}(\Finally T) - \Pr^{\sched_1}_{s_1}(\Finally T') > 0
        \right)
        \;\;\land\;\; {}
        \\ & 
        \left( 
        \exists \sched_2 .\
        \Pr^{\sched_2}_{s_1}(\Finally T) + \Pr^{\sched_2}_{s_1}(\Finally T') > 2
        \right)
        ~.
    \end{align*}
    Note that the first conjunct corresponds to our running example from~\cref{ex:conj_running}.
\end{example}

We can now process each $\conjpartelt \in \conjpart$ individually.
For the sake of simplicity, we detail the following steps assuming this optional partitioning step was skipped.

\subparagraph*{Step 1: Collect state-scheduler combinations.}
We extend the notion of state-scheduler combinations to multiple objectives, overwriting the definition from~\cref{def:comb}.

\begin{definition}
    \label{def:conj_comb}
    We define $\comb = \{ (s_{i,j}, \sched_{k_{i,j}}) \mid i=1,\ldots,\numsum, j=1, \ldots, \numconj \}$.
    For $c \in \comb$, we let $\relInd(c) = \{ (i,j) \mid (\state_{i,j}, \sched_{k_{i,j}}) = c \}$ and $\indexc[\mathcal{T}] = \{T_{i,j} \mid (i,j) \in \relInd(c)\}$.
    For $j \in \conjpartelt$, we further let $\comb^j = \{ (s_{i,j}, \sched_{k_{i,j}}) \mid i=1,\ldots,\numsum \}$ and $\relInd^j(c) = \{ i \mid (\state_{i,j}, \sched_{k_{i,j}}) = c \}$ for $c \in \comb^j$.
\end{definition}

The number of state-scheduler combinations, $|\comb|$, is bounded by $\numconj \cdot \numsum$.

\begin{example}
    For the running example from in~\cref{ex:conj_running}, we have
    $\comb = \{(s_1, \sched), (s_2, \sched)\}$.
    Let $c_1 = (s_1, \sched)$ and $c_2 = (s_2, \sched)$, then we have
    $\mathcal{T}_{c_1} = \{T, T'\}$ and $\mathcal{T}_{c_2} = \{T\}$, and further
    $\comb^1 = \{c_1, c_2\}$
    and
    $\comb^2 = \{ c_1 \}$
\end{example}

\begin{remark}
    If we split the query as detailed in Step 0~(p.~\pageref{step:conj_split}), then, by definition, different partition elements do not share schedulers.
    Formally, for $\conjpartelt, \conjpartelt' \in \conjpart$ with $\conjpartelt \neq \conjpartelt'$:
    \[\{ (s_{i,j}, \sched_{k_{i,j}}) \mid i=1,\ldots,\numsum, j \in \conjpartelt \} \cap \{ (s_{i,j}, \sched_{k_{i,j}}) \mid i=1,\ldots,\numsum, j \in \conjpartelt' \} = \emptyset~.\]
\end{remark}

\subparagraph*{Step 2: Unfold targets and set up reward structures.}
First, we construct the goal unfolding $\indexc = \mdptup[\indexc]$ w.r.t.\ all relevant target sets for each state-scheduler combination $c \in \comb$.
As in~\cref{sec:reach_algo},  we use $\indexc[\state] \in \indexc[\states]$ to denote the state $(\state, \emptyset)$ for $c=(\state, \cdot) \in \comb$.
We then combine all unfolded MDPs $\indexc$ for $c \in \comb$ into a single MDP $\combined$, which corresponds to the disjoint union of the unfolded MDPs plus a fresh initial state $\combined[\state]$ from which we transition to each unfolded MDP with equal probability.
This enables us to rephrase \eqref{eq:genConjRelReachProp} as an extended multi-objective achievability query (\cref{rem:extended-MOA}) on a single MDP.

\begin{definition}[Combined MDP]
    \label{def:MDP_partelt}
    \label{def:MDP_combined}
    For $c \in \comb$, let $\indexc[\mdp] = (\indexc[\states], \Act, \indexc[\Trans])$ be the goal unfolding of $\mdp$ w.r.t.\ $\indexc[\mathcal{T}] = \{T_{i,j} \mid (i,j) \in \relInd(c)\}$ as in \cref{def:goalUnfolding}.
    The combined MDP is then defined as $\combined = \mdptup[\combined]$
    where
    \begin{itemize}
        \item $\combined[\states] = \{ \combined[\state] \} \cup \bigcup_{c \in \comb} \indexc[\states] \times \{c\} $ for a fresh state $\combined[\state]$,

        \item $\combined[\Act] = \Act \cup \{\epsilon\}$ for a fresh action $\epsilon$,

        \item 
        We define $\combined[\Trans] \colon \combined[\states] \times \combined[\Act] \times \combined[\states] \to [0,1]$ as follows, distinguishing between the fresh initial state and states from the unfolded MDPs: 
        \begin{itemize}
            \item $\combined[\Trans](\combined[\state], \epsilon, (\indexc[\state], c)) = \frac{1}{|\comb|}$ for $c \in \combined$,

            \item $\combined[\Trans]((\state, c), \action, (\state', c)) = \indexc[\Trans](\state, \action, \state')$ for $\action \in \Act$ and $\state, \state' \in \indexc[\states]$ for some $c \in \combined$, and

            \item all other transition probabilities are 0.
        \end{itemize}
    \end{itemize} 
\end{definition}

\begin{example}
    \label{ex:combined}
    The combined MDP $\mdp_{\comb}$ for the running example from~\cref{ex:conj_running} is depicted in~\cref{fig:conj_running}. 
\end{example}

The size of $\combined[\states]$ is (coarsely) bounded by $1 + |\comb| \cdot \max_{c \in \comb} |\indexc| \leq 1 + \numconj \cdot \numsum \cdot |\states| \cdot 2^{\numconj \cdot \numsum}$.
More precisely, it is exponential in the maximal number of different target sets relevant for some state-scheduler combination.

For each $c \in \comb$, we define reward structures $\rew_T \colon \indexc[\states] \to \Q$ for $T \in \indexc[\mathcal{T}]$ on $\indexc$ encoding the reachability probabilities as for \RelReach~(cf.~\cref{def:rewForComb}).
Recall that for \RelReach we defined reward structures $\indexc[\rew]$ for each $c \in \comb$ that aggregate all reward structures $\rew_T$ for $T \in \indexc[\mathcal{T}]$~(\cref{def:rewForComb}).
We now define analogous reward structures $\indexc[\rew]^j$ for each predicate $j$ and combine them into a reward structure $\rew^j$ expressing the weighted sum of reachability probabilities for predicate $j$.
We have to scale the collected reward by $|\comb|$ in order to compensate for reaching each $\indexc$ with probability $\nicefrac{1}{|\comb|}$ from $\combined[\state]$.

\begin{definition}[Reward structure for predicate]
    \label{def:rewForConj}
    For $c \in \combined$ and $T \in \indexc[\mathcal{T}]$, let $\rew_{T}$ be the reward structure on $\indexc$ from~\cref{def:rewForComb}.
    For $j \in \{1, \ldots, \numconj\}$ and $c \in \comb^j$, we define the reward structure $\indexc[\rew]^j \colon \indexc[\states] \to \Q$ on $\indexc$ with 
    \[
        \indexc[\rew]^j(\state,\mathcal{T})
        = 
        \sum_{T \in \indexc[\mathcal{T}] \cap \mathcal{T}^j} \left( \sum_{i \in \{\relInd^j(c) \mid T=T_{i,j}\}} q_{i,j} \right) \cdot \rew_T(\state, \mathcal{T})
    \]
    for $(\state, \mathcal{T}) \in \indexc[\states]$.
    Then, we define the reward structure $\rew^j \colon \combined[\states] \to \Q$ on $\combined$ by letting $\rew^j(\combined[\state]) = 0$ and 
    \[
        \rew^j((\state,\mathcal{T}), c) = 
        |\comb| \cdot \indexc[\rew]^j(\state,\mathcal{T})
        ~.
    \]
\end{definition}

\begin{example}
    The rewards for \textcolor{rew1}{$\rew^1$}, \textcolor{rew2}{$\rew^2$} are depicted next to the states in \cref{fig:conj_running}. 
\end{example}

\subparagraph*{Step 3: Reduce to extended MOA query of expected rewards.}
\label{step:reduce-to-moa}
We can reduce the quantification over schedulers for $\mdp$ to quantifying over a single scheduler for $\combined$ (\cref{def:MDP_combined}), and reduce the computation of a weighted sum of probabilities to computing some expected reward, see~\cref{app:reduce-to-moa_proof} for the proof.

\begin{restatable}{lemma}{reduceToMOA}
    \label{le:reduce-to-moa}
    Let $\combined$ as in~\cref{def:MDP_combined} and $\rew^j$ as  in~\cref{def:rewForConj}, then
    \begin{align*}
        & \exists \sched_{1} \ldots \sched_{\numsched} \in \Scheds .\ 
        \bigwedge_{j=1}^{\numconj} 
        \sum_{i=1}^{\numsum} q_{i,j} \Pr^{\mdp, \sched_{k_{i,j}}}_{\state_{i,j}}(\Finally T_{i,j}) 
        \comp_j q_j  
        \\ \Iff\quad & 
        \exists \sched \in \Scheds[\combined] .\ 
        \bigwedge_{j=1}^{\numconj} 
        \Expected_{\combined[\state]}^{\combined, \sched}(\rew^j) \comp_j q_j  
        ~.
    \end{align*}
\end{restatable}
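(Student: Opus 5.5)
The proof chains three equivalences, mirroring the \RelReach development. First, as in \cref{thm:whyComb}, I replace the original scheduler variables $\sched_1,\ldots,\sched_\numsched$ by fresh, independent scheduler variables $\sched_c$, one per state-scheduler combination $c \in \comb$ (\cref{def:conj_comb}). The argument is the one from \cref{app:whyComb}, applied to all $\numconj$ predicates at once.

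For ``$\Rightarrow$'', set $\sched_c := \sched_{k}$ for $c=(s,\sched_k)$. For ``$\Leftarrow$'', consider a scheduler variable $\sched_k$. The combinations $c=(s,\sched_k)$ have pairwise distinct initial states $s$. So I define $\sched_k(\pi) := \sched_{(\pi(0),\sched_k)}(\pi)$, which dispatches on the first state of the history. Then $\Pr^{\sched_k}_{s}$ coincides with $\Pr^{\sched_{(s,\sched_k)}}_{s}$ for every relevant $s$, and hence every predicate keeps its value. This yields
\[
\exists (\sched_c)_{c\in\comb}.\ \bigwedge_{j=1}^{\numconj} \sum_{c \in \comb^j}\sum_{i\in\relInd^j(c)} q_{i,j}\Pr^{\sched_c}_{\state_c}(\Finally T_{i,j}) \comp_j q_j .
\]

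Second, I pass to the goal unfoldings. For each $c$, there is a bijection between schedulers of $\mdp$ started in $s$ and schedulers of $\indexc$ started in $\indexc[\state]$: paths from $\indexc[\state]$ correspond one-to-one to paths from $s$, because the $2^{\indexc[\mathcal{T}]}$-component is a deterministic function of the history. Under this bijection,
\[
\sum_{i\in\relInd^j(c)} q_{i,j}\Pr^{\mdp,\sched}_{s}(\Finally T_{i,j}) = \Expected^{\indexc,\sched'}_{\indexc[\state]}(\indexc[\rew]^j)
\]
for every $j$ simultaneously. This is the pointwise statement underlying \cref{thm:weightedReachProbsAsExpectedRew}. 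On each path, $\rew_T$ collects exactly $\Iverson{\pi \models \Finally T}$, namely once, on the first visit to $T$. Because this total is bounded, linearity of expectation applies despite the negative coefficients. Only the per-scheduler identity is needed here, not the optimization statement, and it holds for all $j$ because the bijection does not depend on $j$.

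Third, I glue the unfoldings into $\combined$. Any scheduler $\sched$ for $\combined$ takes $\epsilon$ in $\combined[\state]$ and then, on the component $c$, induces a scheduler $\sched_c$ for $\indexc$ by stripping the prefix $\combined[\state]\,\epsilon$. Conversely, a family $(\sched_c)_c$ defines such a $\sched$ by dispatching on the second state of the history. Since $\indexc$ is entered with probability $\nicefrac{1}{|\comb|}$ and $\rew^j$ scales rewards by $|\comb|$,
\[
\Expected^{\combined,\sched}_{\combined[\state]}(\rew^j)=\sum_{c\in\comb^j}\Expected^{\indexc,\sched_c}_{\indexc[\state]}(\indexc[\rew]^j).
\]
Combinations $c\notin\comb^j$ contribute $0$, because $\rew^j$ vanishes on their copies.

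Chaining the three equivalences proves the lemma for arbitrary $\comp_j$: each step preserves every predicate value exactly, so no property of the comparison operators is used. I expect the main care point to be the second step. There one must verify that the scheduler correspondence with the unfolding preserves all $\numconj$ weighted sums at the same time. One must also make sure the reward bookkeeping is correct: in \cref{def:rewForConj}, the coefficients of equal targets are aggregated into a single coefficient, which is sound because each target's reward is collected only once per path. The rest is routine measure-theoretic bookkeeping.
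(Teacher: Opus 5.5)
Your proof is correct and follows essentially the same route as the paper's. You split the scheduler variables per state-scheduler combination by dispatching on the initial state, pass to the goal unfoldings where first-visit rewards encode the reachability probabilities exactly, and glue the unfoldings into $\combined$, where the $1/|\comb|$ branching cancels the $|\comb|$ scaling of $\rew^j$. The only difference is bookkeeping: the paper merges the scheduler splitting with the passage to the unfoldings into one step and treats the probability-to-reward conversion and the aggregation into $\rew_c^j$ as two separate steps, whereas you do the reverse.
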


We observe that
\begin{align}
    \exists \sched \in \Scheds[\combined] .\ 
        \bigwedge_{j=1}^{\numconj} 
        \Expected_{\combined[\state]}^{\combined, \sched}(\rew^j) \comp_j q_j  
    ~
    \label{eq:moa-query}
\end{align}
from \cref{le:reduce-to-moa}
is an (extended) multi-objective achievability query in the sense of \cite{quatmannMultiobjectiveModel2016,forejtQuantitativeMultiobjective2011} (see \cref{rem:extended-MOA} for a details on extended MOA queries).
We will address solving \eqref{eq:moa-query} in \cref{sec:moa}; let us assume for now that we have a black-box solving such queries.
Note, however, that for solving this multi-objective query it does not suffice to only compute the \emph{optimal} expected reward w.r.t.~some reward structures, in contrast to \RelReach.

\begin{example}
    \label{ex:conj_rew}
    Our running example from~\cref{ex:conj_running} is equivalent to the following property on the combined MDP $\mdp_{\comb}$:
    \[
        \exists \sched \in \Scheds[\mdp_{\comb}] .\ 
        \Expected_{\state_{\class{1}}}^{\mdp_{\comb}, \sched}(\rew^1) < 0 
        \land 
        \Expected_{\state_{\class{1}}}^{\mdp_{\comb}, \sched}(\rew^2) > 0 
        ~.
    \]
    Consider the memoryless scheduler $\sched$ for $\combined$ defined as follows, which is indicated in~\cref{fig:conj_running} in \textbf{bold}:
    \begin{align*}
        &\sched(s', \emptyset, c_1) = \alpha && \sched(s', \{T'\}, c_1) = \alpha & &\sched(s', \emptyset, c_2) = \alpha  \\
        &\sched(t', \emptyset, c_1) = \gamma &&\sched(t', \{T'\}, c_1) = \delta && \sched(t', \emptyset, c_2) = \gamma 
    \end{align*}
    which induces a \emph{memoryful} scheduler on $\mdp$. 
    This scheduler is a witness since
    \begin{align*}
        \Expected_{\state_{\class{1}}}^{\mdp_{\comb}, \sched}(\rew^1) 
        & = 2 \cdot \left(\frac{1}{2} \cdot \Expected_{s_1,\emptyset,c_1}^{\mdp_{\comb}, \sched}(\rew^1) + \frac{1}{2} \cdot \Expected_{s_2, \emptyset, c_2}^{\mdp_{\comb}, \sched}(\rew^1) \right)
        \\ & = 
        \left(\frac{1}{3} \cdot 2 + \frac{1}{3} \cdot 2 + \frac{1}{3} \left(\frac{2}{3} \cdot 2 + \frac{1}{3} \left(\frac{2}{3} \cdot 2 + \frac{1}{3} \cdot 0 \right) \right) \right) -2
        = \frac{4}{3} + \frac{16}{3 \cdot 9}  - 2
        < 0
        ~,
        \\
        \Expected_{\state_{\class{1}}}^{\mdp_{\comb}, \sched}(\rew^2)
        & = \Expected_{s_1, \emptyset, c_1}^{\mdp_{\comb}, \sched}(\rew^2) 
        = \frac{1}{3} \cdot 2 + \frac{1}{3} \cdot 2 + \frac{1}{3} \left(\frac{2}{3} \cdot 2 + \frac{1}{3} \left(-2 + \frac{2}{3} \cdot 2 \right) \right) \\
        & = \frac{4}{3} + \frac{1}{3} \left(\frac{4}{3} + \frac{1}{3} \left(\frac{-2}{3} \right) \right)
        =  \frac{4}{3} + \frac{1}{3} \left(\frac{12}{9} + \frac{-2}{9} \right) 
        > 0
        ~.
        \qedhere
    \end{align*}
\end{example}

\subparagraph*{Step 4: Combine partition elements (optional).}
If we initially split up the query into the different partition elements $\conjpartelt \in \conjpart$ in Step 0~(p.~\pageref{step:conj_split}), we need to combine the results for each $\conjpartelt \in \conjpart$ again to answer the original query \eqref{eq:genConjRelReachProp} in the end.

\begin{corollary}
    \label{le:conj_combine}
    Let $\conjpart$ as defined in~\cref{def:conjpart}, $\combined$ as defined in~\cref{def:MDP_combined} and $\rew^j$ as defined in~\cref{def:rewForConj}, then
    \begin{align*}
        &\exists \sched_1 \ldots \sched_n .\
        \bigwedge_{j=1}^{\numconj} \sum_{i=1}^{\numsum} q_{i,j} 
        \Pr^{\sched_{k_{i,j}}}_{\state_{i,j}}(\Finally T_{i,j}) \comp_j q_j 
        \\ \Iff\quad & 
        \bigwedge_{\conjpartelt \in \conjpart} \left( \exists \sched_{\conjpartelt} \in \Scheds[\combined] .\ 
        \bigwedge_{j \in \conjpartelt} \Expected_{\combined[\state]}^{\combined, \sched}(\rew^j) \comp_j q_j  \right)
    \end{align*} 
\end{corollary}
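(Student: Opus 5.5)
The corollary follows by chaining \cref{le:split-conjunction} with \cref{le:reduce-to-moa}, applied separately to each class $\conjpartelt \in \conjpart$.

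First, by \cref{le:split-conjunction}, the original query \eqref{eq:genConjRelReachProp} is equivalent to a conjunction over $\conjpartelt \in \conjpart$ of the subqueries
\[
\exists \sched_{h^\conjpartelt_1} \ldots \sched_{h^\conjpartelt_{\numsched_\conjpartelt}} \in \Scheds .\ \bigwedge_{j \in \conjpartelt} \sum_{i=1}^{\numsum} q_{i,j} \Pr^{\sched_{k_{i,j}}}_{\state_{i,j}}(\Finally T_{i,j}) \comp_j q_j .
\]
Each subquery is itself a \ConjRelReach instance. Its predicates are indexed by $\conjpartelt$, and its scheduler variables are $\textit{Scheds}(\conjpartelt)$. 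After renaming these variables to $1,\ldots,\numsched_\conjpartelt$, the instance satisfies the side condition of \cref{prob:morelreach} that the index set equals $\{1,\ldots,\numsched_\conjpartelt\}$. This holds because, by \cref{def:conjpart}, every variable $k_{i,j}$ with $j\in\conjpartelt$ lies in $\textit{Scheds}(\conjpartelt)$.

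Second, I would apply \cref{le:reduce-to-moa} to each subquery. The objects $\combined$, $\comb$ and $\rew^j$ are then read relative to $\conjpartelt$: $\comb$ consists of the state-scheduler combinations occurring in predicates $j\in\conjpartelt$, $\combined$ is the combined MDP built from their goal unfoldings, and $\rew^j$ for $j\in\conjpartelt$ carries the scaling factor $|\comb|$ of that sub-instance. This is the reading the statement intends, since $\combined$ appears under the outer conjunction over $\conjpartelt$. \cref{le:reduce-to-moa} then states that each subquery holds iff $\exists \sched_{\conjpartelt}\in\Scheds[\combined] .\ \bigwedge_{j\in\conjpartelt}\Expected^{\combined,\sched_\conjpartelt}_{\combined[\state]}(\rew^j)\comp_j q_j$. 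The $\sched$ in the corollary's expectation should be read as $\sched_\conjpartelt$.

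Replacing each conjunct in the first equivalence by its equivalent from the second yields the claim. The only step needing care is the bookkeeping: the objects in \cref{le:reduce-to-moa} must be instantiated per class rather than for the whole query. In particular, the normalization factor $|\comb|$ in \cref{def:rewForConj} must count only the combinations of $\conjpartelt$, so that the uniform initial distribution of $\combined$ is compensated exactly. By the remark following \cref{def:conj_comb}, distinct classes share no combinations, so this per-class instantiation is consistent and no further argument is needed.
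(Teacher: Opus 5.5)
Your argument is correct and is the paper's proof. The paper simply cites \cref{le:split-conjunction} together with \cref{le:reduce-to-moa} applied to each class, and your per-class reading of $\combined$ and of the normalisation factor $|\comb|$ is the intended one.

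One caveat, inherited from the paper rather than introduced by you: the partition property that you (and \cref{le:split-conjunction}) rely on holds only if the classes of \cref{def:conjpart} are taken as connected components of the scheduler-sharing relation. The literal definition is not transitive and need not yield a partition.
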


\begin{proof}
    Straightforward from \cref{le:split-conjunction} and \cref{le:reduce-to-moa}.
\end{proof}

\subparagraph*{Overall algorithm.}
The algorithm resulting from Steps 1--3 is summarized in~\cref{alg:conj}, and the practically more efficient version including Steps 0 and 4 in~\cref{alg:conj_practice}.

\begin{theorem}[Correctness and time complexity]
    \label{th:conj_runtime}
    \label{th:conj_correctness}
    \cref{alg:conj,alg:conj_practice} adhere to their input-output specification and can be implemented with a worst-case running time of $\mathcal{O}(2^{\numconj_{\not\approx}} \cdot \textit{poly}(\numconj \cdot \numsum \cdot |\mdp| \cdot 2^{\numconj \cdot \numsum}))$
    where $\numconj$ is the number of predicates, $\numconj_{\not\approx}$ the number of predicates with $\not\approx_\epsilon$ with $\epsilon>0$, and $\numsum$ the number of probability operators per objective in the property.
\end{theorem}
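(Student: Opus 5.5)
Correctness follows by chaining the reductions already established. For \cref{alg:conj}, \cref{le:reduce-to-moa} shows that the input \ConjRelReach property \eqref{eq:genConjRelReachProp} holds iff the extended MOA query \eqref{eq:moa-query} on $\combined$ holds. For \cref{alg:conj_practice}, \cref{le:split-conjunction} together with \cref{le:conj_combine} shows that it suffices to solve one such query per class $\conjpartelt\in\conjpart$ and take the conjunction of the answers. It therefore remains to argue that the black-box procedure for extended MOA queries from \cref{sec:moa} is correct on the specific queries produced here. I would invoke its correctness statement as given there. The side conditions to check are that every reward structure $\rew^j$ collects only finitely many non-zero rewards on every path (each $\rew_T$ fires at most once per path by construction of the goal unfolding, and $\rew^j$ is a finite rational combination of these) and that rewards may be positive or negative, which the procedure of \cref{sec:moa} explicitly supports.

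For the running time I would first bound the size of $\combined$. There are $|\comb|\le \numconj\cdot\numsum$ state-scheduler combinations. Each goal unfolding $\indexc$ has at most $|\states|\cdot 2^{|\indexc[\mathcal{T}]|}\le|\states|\cdot 2^{\numconj\numsum}$ states and the same action set. Hence $|\combined|\in\mathcal{O}(\numconj\cdot\numsum\cdot|\mdp|\cdot 2^{\numconj\numsum})$. Constructing the $\numconj$ reward structures $\rew^j$ takes time polynomial in this quantity. In \cref{alg:conj_practice} each class yields a combined MDP no larger than the global one, and there are at most $\numconj$ classes, so the same bound applies up to a polynomial factor.

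The main obstacle is the complexity of solving the extended MOA query, specifically the treatment of $\not\approx_\epsilon$ with $\epsilon>0$. Here I would use that each such predicate $E\not\approx_\epsilon q$ is the disjunction $E<q-\epsilon \lor E>q+\epsilon$. Branching over all $2^{\numconj_{\not\approx}}$ choices of disjuncts gives extended MOA queries whose operators are all among $\{>,\geq,\approx_\epsilon,\neq\}$, and the answer is the disjunction of the answers over all branches. Each branch is then handled by the \cref{sec:moa} procedure. I expect that procedure to amount to a polynomial number of linear programs over the achievable set of expected-reward vectors, which is convex via convex combinations of schedulers, after preprocessing $\combined$ into $\processed[\mdp_{\comb}]$. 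That gives time $\textit{poly}(|\combined|)$ per branch, and hence the total bound $\mathcal{O}(2^{\numconj_{\not\approx}}\cdot\textit{poly}(\numconj\cdot\numsum\cdot|\mdp|\cdot2^{\numconj\numsum}))$.

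The delicate points are the operators $\approx_\epsilon$ and $\neq$. The former yields two linear constraints. The latter must be handled without branching, because the achievable set is convex: it contains a point with $E\neq q$ unless it lies entirely in the hyperplane $E=q$. I would take the polynomial-time treatment of $\neq$ from \cref{sec:moa} rather than re-derive it.
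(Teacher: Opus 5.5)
Your proposal is correct and follows the paper's own argument. Correctness comes from chaining \cref{le:split-conjunction,le:conj_combine} with \cref{le:reduce-to-moa} and the correctness of the MOA procedure. The running time is what \cref{th:MOA_correctness} gives, through the same three ingredients you use: your size bound on $\mdp_{\comb}$, the $2^{\numconj_{\not\approx}}$-fold case split on the $\not\approx_\epsilon$ predicates, and a convexity argument for $\neq$. One small correction: with several $\neq$-constraints at once, your single-hyperplane phrasing is not enough. You need that a convex feasible set covered by finitely many hyperplanes already lies in one of them, which is the fact the paper uses to get by with linearly many LP calls.
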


\begin{proof}
    Correctness follows from \cref{le:conj_combine}, runtime complexity from \cref{th:MOA_correctness}.
\end{proof}

\begin{algorithm}[t]
    \caption{Solving \ConjRelReach by reduction to multi-objective query} 
    \label{alg:conj}
    \Input{%
        MDP $\mdp = \mdptup$ and a \ConjRelReach property 
        \\ \medskip 
        \phantom{Input: } 
        $\displaystyle \exists \sched_1, \ldots, \sched_n \in \Scheds .\ \bigwedge_{j=1}^{\numconj} \sum_{i=1}^{\numsum} q_{i,j} 
            \Pr^{\sched_{k_{i,j}}}_{\state_{i,j}}(\Finally T_{i,j}) \comp_j q_j $
        \DontPrintSemicolon\tcp*{See \cref{prob:morelreach}}
        \medskip
    }
    \Output{Whether the property is true in $\mdp$} 
    \tcp{Step 1: Analyze state-scheduler combinations:}
    {$\comb \gets \{ (s_{i,j}, \sched_{k_{i,j}}) \mid i=1,\ldots,\numsum, j=1, \ldots, \numconj \}$ \tcp*{See \cref{def:conj_comb}}}
    \For{$c \in \comb$}{
        \tcp{Step 2: Unfold MDP:}
        $\indexc \gets$ unfold $\mdp$ w.r.t.\ target sets for $c$\hspace{-1ex}
        \tcp*{See \Cref{def:goalUnfolding}}
    }
    \tcp{Step 2, cont'd: Combine unfolded MDPs and define reward structures:}
    $\combined \gets$ combine $\indexc$ for $c \in \comb$ \tcp*{See \Cref{def:MDP_partelt}}
    \For{$j \in \conjpartelt$}{
        $\rew^j \gets$ reward structure on $\combined$
        \tcp*{See \Cref{def:rewForConj}}
    }
    \tcp{Step 3: Solve extended MOA query:} 
    {$\textit{res}_\conjpartelt \gets \textsf{Solve(}\exists \sched \in \Scheds[\combined] .\ \bigwedge_j \Expected_{\combined[\state]}^{\combined, \sched}(\rew^j) \comp_j q_j$\textsf{)}}\tcp*{Using, e.g., \cref{alg:MOA}}
\end{algorithm}

\begin{algorithm}[t]
    \caption{Solving \ConjRelReach by reduction to multi-objective query with improved practical efficiency} 
    \label{alg:conj_practice}
    \Input{%
        MDP $\mdp = \mdptup$ and a \ConjRelReach property 
        \\ \medskip 
        \phantom{Input: } 
        $\displaystyle \exists \sched_1, \ldots, \sched_n \in \Scheds .\ \bigwedge_{j=1}^{\numconj} \sum_{i=1}^{\numsum} q_{i,j} 
            \Pr^{\sched_{k_{i,j}}}_{\state_{i,j}}(\Finally T_{i,j}) \comp_j q_j $
        \DontPrintSemicolon\tcp*{See \cref{prob:morelreach}}
        \medskip
    }
    \Output{Whether the property is true in $\mdp$} 
    \tcp{Step 0: Splitting the query}
    $\conjpart \gets \big\{ \{ j' \in \{1,\ldots,l\} \mid \exists i,i' .\ k_{i,j} = k_{i',j'} \} \;\big|\; j \in \{1, \ldots, l\} \big\}$\hspace*{-0.75cm} \tcp*{\cref{def:conjpart}}
    \For{$\conjpartelt \in \conjpart$}{
        {$\textit{res}_\conjpartelt \gets \textsf{Solve(}\exists \sched_{h^\conjpartelt_1} \ldots \sched_{h^\conjpartelt_{\numsched_\conjpartelt}} \in \Scheds .\ \bigwedge_{j \in \conjpartelt} \sum_{i=1}^{\numsum} q_{i,j} 
        \Pr^{\sched_{k_{i,j}}}_{\state_{i,j}}(\Finally T_{i,j}) \comp_j q_j $\textsf{)}}\tcp*{Using \cref{alg:conj}}
    }
    \tcp{Step 4: Combine partition elements:}
    \Return{ $\bigwedge_{\conjpartelt \in \conjpart} \textit{res}_\conjpartelt$} 
    \label{line:conj_combine}
\end{algorithm}

%% file: moa.tex
\subsubsection{Solving Extended Multi-Objective Achievability Queries}
\label{sec:moa}

In this section, we address solving the extended MOA query \eqref{eq:moa-query} constructed in \cref{sec:conj_to_MOMC}, Step~3~(p.~\pageref{step:reduce-to-moa}), i.e.,
\[
    \exists \sched \in \Scheds[\combined] .\ 
        \bigwedge_{j=1}^{\numconj} \Expected_{\combined[\state]}^{\combined, \sched}(\rew^j) \comp_j q_j  
    ~.
\]
To the best of our knowledge, while does exist work on both MOA queries with \emph{positive and negative} rewards~\cite{quatmannVerificationMultiobjective2023} on the one hand, and extended MOA queries with \emph{strict and non-strict} comparison operators~\cite{etessamiMultiObjectiveModel2008,forejtQuantitativeMultiobjective2011}, on the other hand, these have not been considered together before and, more interestingly, the comparison operator $\not\approx_\epsilon$ (or even just $\neq$) has not been addressed before.
We refer to~\cref{sec:related_moa} for a detailed discussion of related work on multi-objective model checking.
%
We extend \cite{etessamiMultiObjectiveModel2008,forejtQuantitativeMultiobjective2011} to solving \eqref{eq:moa-query} by encoding the problem in quantifier-free \emph{linear real arithmetic}, though the encoding corresponds to an LP for queries without~$\not\approx_\epsilon$.
The approach is summarized in \cref{alg:MOA}.

\begin{algorithm}[t]
    \caption{Solving extended MOA queries for \ConjRelReach } 
    \label{alg:MOA}
    \Input{%
        Extended MOA query constructed by \cref{alg:conj}
        \\ \medskip 
        \phantom{Input: } 
        $\displaystyle \exists \sched \in \Scheds[\combined] .\ \bigwedge_{j=1}^{\numconj} \Expected_{\combined[\state]}^{\combined, \sched}(\rew^j) \comp_j q_j
        $
        \DontPrintSemicolon\tcp*{See~\cref{le:reduce-to-moa}}
    }
    \Output{Whether the query is satisfiable} 
    {$\processed[\combined] \gets $ pre-processing of $\combined$ }\tcp*{See \cref{def:conj_preprocessing}}
    {$L \gets$ encoding from \cref{fig:LP} }\tcp*{Potentially containing $\neq, \not\approx$}
    \Return{\textsf{Solve($L$)}}
    \tcp*{SMT solver or iterative calls to LP solvers (\cref{alg:LP})}
\end{algorithm}

\subparagraph*{Pre-processing.}
Following \cite{forejtQuantitativeMultiobjective2011}, we first process $\combined$ to ensure that any scheduler witnessing the original query corresponds to a witness scheduler for the query on the processed MDP $\processedcombined$ that visits all states from $\combined$ in $\processedcombined$ only finitely often, and vice versa.
Recall that $\combined$ is already a goal unfolding, i.e., for any MEC in $\combined$ there must exist some $\mathcal{T}$ and $c \in \comb$ such that all states from the MEC are of the form $((s, \mathcal{T}), c)$ for some $\state \in \states$.
Concretely, we construct $\processed[\combined]$ from $\combined$ by adding new sink states $\bot_{\mathcal{T}}$ for every combination of target sets $\mathcal{T}$ and adding transitions to $\bot_{\mathcal{T}}$ via a fresh action $\dagger$ from all states $\state \in \states_{\MEC}$ that have already seen exactly the target sets from $\mathcal{T}$.
Intuitively, going to $\bot_{\mathcal{T}}$ represents staying forever in the current MEC (and thus not visiting any more target sets).

\begin{definition}
    \label{def:conj_preprocessing}
    Let $\combined[\mathcal{T}] := \bigcup_{c \in \combined[\comb]} \indexc[\mathcal{T}]$.
    Given $\combined$ as constructed in \cref{def:MDP_partelt}, we construct its \emph{pre-processing} $\processed[\combined] = \mdptup[\processedcombined]$ with
    \begin{itemize}
        \item $\processedcombined[\states] = \combined[\states] \cup \overbrace{\{\bot_{\mathcal{T}} \mid \mathcal{T} \subseteq \combined[\mathcal{T}] \}}^{\sdead}$,
        \item $\processedcombined[\Act] = \combined[\Act] \cup \{ \dagger \}$ for a fresh action $\dagger$,
        \item $\processedcombined[\Trans](\state,\action,\state') = \combined[\Trans](\state,\action,\state')$ for $\state,\state' \in \combined[\states]$, $\action \in \combined[\Act]$, \\
        $\processedcombined[\Trans](\state,\dagger, \bot_{\mathcal{T}}) = 1$ for $\state \in \combined[\states]$ and $\mathcal{T} \subseteq \combined[\mathcal{T}]$ if there exists some $C=(T,A) \in \MEC(\combined)$ such that $\state \in T$ and there exist $\state' \in \states$, $c \in \combined[\comb]$ such that $\state = ((\state', \mathcal{T}), c)$, \\
        $\processedcombined[\Trans](\bot_{\mathcal{T}},\dagger, \bot_{\mathcal{T}}) = 1$ for $\mathcal{T} \subseteq \combined[\mathcal{T}]$,
        and all other transition probabilities are 0.
    \end{itemize}
\end{definition}

\begin{example}
    Continuing our running example from~\cref{ex:combined}, the states and transitions added by the pre-processing of $\mdp_{\class{1}}$ are indicated in \textcolor{gray}{gray} in~\cref{fig:conj_running}.
\end{example}

If some scheduler for $\processed[\combined]$ almost-surely reaches $\sdead$, then the expected number of times we transition into or out of a state is finite for all states $\state \in \combined[\states]$ in $\processedcombined$ under that scheduler.
Hence, the pre-processing ensures that the expected number of times we leave a state $\state$ via an action $\action$ is finite for all states from $\combined$ and thus enables us to encode the problem using variables representing these values.
We show that finding a \emph{memoryless} randomized scheduler for the pre-processed MDP that satisfies the extended MOA query and reaches $\sdead$ with probability 1 is equivalent to solving the original query.
This claim can be generalized to any MDP $\mdp$ with reward functions $\rew^j$ whose expected reward is finite under all schedulers, analogously to~\cite{forejtQuantitativeMultiobjective2011}.
Our constructed MDP $\processedcombined$ satisfies the stronger requirement that we collect finite reward \emph{on all paths} under all schedulers, or, equivalently, that we collect non-zero reward only outside MECs.
We state the claim here for our special case only for the sake of a more stream-lined presentation

\begin{restatable}{theorem}{moaPreprocTransferMR}
    \label{th:moa_preproc_transfer-MR}
    Let $\processedcombined$ as defined in~\cref{def:conj_preprocessing} with sink states $\states_\bot$, then
    \begin{align*}
        &\exists \sched \in \Scheds[\combined] .\ 
        \bigwedge_{j=1}^{\numconj} \Expected_{\combined[\state]}^{\combined, \sched}(\rew^j) \comp_j q_j  
        \\ \Iff\quad &
        \exists \processed[\sched] \in \SchedsMR[{\processed[\combined]}] .\ 
        \bigwedge_{j=1}^{\numconj} \Expected_{\combined[\state]}^{{\processed[\combined]}, \processed[\sched]}(\rew^j) \comp_j q_j \wedge \Pr_{\combined[\state]}^{{\processed[\combined]}, \processed[\sched]}(\Finally \sdead) = 1.
    \end{align*}
\end{restatable}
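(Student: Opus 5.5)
The plan is to prove the two directions separately, going through the vector of expected rewards. Both sides constrain only the tuple $(\Expected(\rew^1),\ldots,\Expected(\rew^\numconj))$, and the predicates $\comp_j q_j$ are applied pointwise to that tuple. So it suffices to show that the set of achievable reward vectors on $\combined$ (over general schedulers) equals the set of reward vectors achievable on $\processedcombined$ by memoryless randomized schedulers that reach $\sdead$ almost surely. We use one structural fact throughout. Because $\combined$ is a disjoint union of goal unfoldings, every MEC of $\combined$ lies inside a single layer $\{((\state,\mathcal{T}),c)\}$ with fixed $\mathcal{T}$ and $c$. All rewards $\rew^j$ are collected only on entering a new layer, i.e.\ on first visits of targets. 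Hence no reward is collected on any transition that stays inside an MEC, and $\rew^j$ is zero on $\sdead$.

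\emph{Direction ``$\Leftarrow$''.} Take $\processed[\sched]\in\SchedsMR[{\processed[\combined]}]$ with $\Pr(\Finally\sdead)=1$. Define $\sched$ on $\combined$ to follow $\processed[\sched]$ as long as $\processed[\sched]$ does not choose $\dagger$. On the (random) step where $\processed[\sched]$ would take $\dagger$ at a state of an MEC $C$, $\sched$ switches to any scheduler that stays in $C$ forever; such a scheduler exists since $C$ is an end component. This needs memory (a flag), but that is allowed on the left-hand side. Paths of $\processed[\sched]$ and $\sched$ coincide up to the switching point, and afterwards neither collects reward: one sits in $\bot_{\mathcal{T}}$, the other stays in $C$, which lies in a fixed layer. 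Hence the per-path total rewards agree under the natural coupling, and so do their expectations. Integrability is clear because each path collects boundedly many nonzero rewards; there are at most $2^{|\mathcal{T}_c|}$ layer changes.

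\emph{Direction ``$\Rightarrow$''.} Given $\sched\in\Scheds[\combined]$, first build a (memoryful) scheduler $\sched'$ on $\processedcombined$ that mimics $\sched$ and reaches $\sdead$ almost surely. Almost every path of $\combined$ under $\sched$ eventually stays in some MEC forever (standard limit theorem, \cite[Ch.~10]{baierPrinciplesModel2008}). We adapt the construction from \cref{th:relBuechi-to-relReach_Rightarrow} / \cite[Lem.~2.4]{baierFoundationsProbabilityraising2024}: after history $\pi$ in MEC $C$, $\sched'$ takes $\dagger$ with the conditional probability that $\sched$ never leaves $C$ again. Otherwise it plays $\sched$ conditioned on eventually leaving $C$. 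Since reward is constant within an MEC stay, this preserves the expected reward vector, and $\Pr(\Finally\sdead)=1$ by construction. Second, we must pass from $\sched'$ to a memoryless randomized scheduler. Here we follow \cite{forejtQuantitativeMultiobjective2011,etessamiMultiObjectiveModel2008}. Let $x_{\state,\action}$ be the expected number of times $\action$ is taken in $\state$ under $\sched'$. These are finite because $\sdead$ is reached almost surely and states outside $\sdead$ are transient; proving this is the delicate point. The $x_{\state,\action}$ satisfy the flow equations, and $\Expected(\rew^j)$ is linear in them, since rewards are state-based and collected outside $\sdead$. The memoryless scheduler $\processed[\sched](\state)(\action)=x_{\state,\action}/\sum_{\action'}x_{\state,\action'}$ (arbitrary where the denominator is $0$) induces the same expected frequencies, hence the same reward vector, and again reaches $\sdead$ almost surely.

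The main obstacle is the finiteness and uniqueness argument in the last step. We need that, for a memoryless scheduler reaching $\sdead$ almost surely, the flow equations have the $x$ as their unique solution, so that the frequencies are reproduced. We also need that no frequency blows up, which would happen if $\sched'$ could linger in an MEC with infinite expected visits. Finiteness follows because under $\sched'$ every non-sink state is transient: almost surely we leave it forever, either into $\sdead$ or into a layer that is never re-entered. A geometric-decay bound per MEC then gives finite expectations. We would cite \cite[Thm.~3.2]{etessamiMultiObjectiveModel2008} for the equality of frequencies once transience is established.
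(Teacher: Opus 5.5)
Your overall architecture is the paper's: a memoryful transfer between $\combined$ and $\processedcombined$ (\cref{le:moa_preproc_transfer-memory}), followed by a frequency argument in the style of \cite{forejtQuantitativeMultiobjective2011} that brings you down to memoryless randomized schedulers (\cref{le:moa_preproc_MR-suffice}). Your two transfer constructions are sound, and more careful than the paper's sketch: in one you take $\dagger$ with the conditional probability of staying in $C$ forever, in the other you switch permanently via a flag.

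The genuine gap is the finiteness of the frequencies $x_{s,\alpha}$ under your memoryful $\sigma'$. Transience of every non-sink state (almost surely finitely many visits) does not imply finitely many \emph{expected} visits. For memoryful schedulers there is also no geometric bound per MEC, because the per-visit exit probability may decay over time. Concretely, let $C$ be an MEC with a state $s$ that has an action $\gamma$ returning to $s$ inside $C$ (say a self-loop) and an action $\beta$ leaving $C$. Let $\sigma$ take $\beta$ at the $k$-th visit of $s$ with probability $\frac{1}{k+1}$, and $\gamma$ otherwise. Then $\sigma$ leaves $C$ almost surely, so your $\sigma'$ never takes $\dagger$ there and simply copies $\sigma$. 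It reaches $S_\bot$ almost surely, yet $s$ is visited at least $k$ times with probability $\frac1k$, so $x_{s,\gamma}=\sum_{k\ge1}\frac{1}{k+1}=\infty$. For such a $\sigma'$ the normalisation $x_{s,\alpha}/\sum_{\alpha'}x_{s,\alpha'}$ and the flow equations are meaningless, and the step to an MR scheduler fails.

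The repair uses the fact that $\rew^j$ vanishes inside MECs, so the actual sojourn behaviour there is irrelevant. Take a state $u$ that lies in no end component. Its maximal return probability is $<1$, since an MD scheduler returning with probability $1$ would place $u$ in a bottom SCC, i.e.\ an end component. Hence under \emph{every} scheduler the number of visits to $u$ is dominated by a geometric variable. Applied in the MEC quotient, the same argument shows that the expected number of entries into each MEC is finite, and so is the expected number of uses of each of its exit actions, including $\dagger$. You then have two options.
\begin{enumerate}
\item Upon each entry into $C$, switch to a memoryless scheduler inside $C$ that realises the same conditional distribution over exit actions. This is the construction of $\sigma_{\mathsf{leave},C}$ from \cite{baierCertificatesWitnesses2024} used for \cref{th:relBuechi-to-relReach_Leftarrow_MR}; it makes sojourns geometric and leaves all rewards unchanged.
\item Keep the finite frequencies outside MECs and replace the internal ones by any finite nonnegative flow that balances the aggregated inflow and outflow of $C$. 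Such a flow exists by strong connectivity: solve $z(I-P_\tau)=b$ for an irreducible memoryless $\tau$ on $C$, then add a multiple of its stationary distribution.
\end{enumerate}
Either way you obtain a finite solution of the flow equations with the same rewards and $\sum_s y_{s,\dagger}=1$. After that, your uniqueness step (or directly \cref{le:conj_LP}) goes through.

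Note that the paper's proof of \cref{le:moa_preproc_MR-suffice} makes the same leap, deriving $\mathit{InfS}\subseteq S_\bot$ directly from $\Pr(\Finally S_\bot)=1$. So you will not find this argument supplied there either.
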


\begin{proof}[Proof (Sketch)]
    We show this claim in two steps, analogously to \cite{forejtQuantitativeMultiobjective2011}.
    First, we show that any scheduler for $\combined$ induces a scheduler for $\processed[\combined]$ that reaches $\sdead$ with probability 1, and vice versa.
    Then we show that memoryless (randomized) schedulers suffice on $\processed[\combined]$.
    We refer to \cref{app:moa_preproc_transfer-MR_proof} for the proof, which uses ideas from \cite{kretinskyLTLConstrainedSteadyState2021a,forejtQuantitativeMultiobjective2011}.
\end{proof}

\begin{example}
    Consider again the running example introduced in~\cref{ex:conj_running} and the corresponding pre-processed MDP, depicted in~\cref{fig:conj_running}.
    Following~\cref{ex:conj_rew} and \cref{th:moa_preproc_transfer-MR}, we have
    $\states_\bot = \{ \bot_{\emptyset}, \bot_{\{T\}}, \bot_{\{T'\}}, \bot_{\{T,T'\}} \}$
    and
    \begin{align*}
        &\exists \sched \in \Scheds[\mdp_{\class{1}}] .\ 
        \Expected_{\state_{\class{1}}}^{\mdp_{\class{1}}, \sched}(\rew^1) < 0 
        \land 
        \Expected_{\state_{\class{1}}}^{\mdp_{\class{1}}, \sched}(\rew^2) > 0 
        \\ \Iff\quad &
        \exists \sched \in \Scheds[{\processed[\mdp_{\class{1}}]}] .\ 
        \Expected_{\state_{\class{1}}}^{\mdp_{\class{1}}, \sched}(\rew^1) < 0 
        \land 
        \Expected_{\state_{\class{1}}}^{\mdp_{\class{1}}, \sched}(\rew^2) > 0 
        \land 
        \Pr_{\state_{\class{1}}}^{\mdp_{\class{1}}, \sched}(\Finally \states_\bot) = 1
        ~. 
    \end{align*}
    We can transform the witness scheduler to the original query from~\cref{ex:conj_rew}, which is indicated in \textbf{bold} in~\cref{fig:conj_running}, to a witness satisfying the modified query on $\processed[\mdp_{\class{1}}]$ by taking $\dagger$ in absorbing states, i.e., in $(t, \{T\}, c_1)$, $(t, \{T,T'\}, c_1)$, $(t', \{T'\}, c_1)$ and $(t, \{T\}, c_2)$.  
\end{example}

The following example illustrates that MR schedulers in general do not suffice for the combined MDP $\combined$ without pre-processing.

\begin{example}   
    Consider again the MDP from \cref{fig:memory-necessary}, restated in \cref{fig:moa_memory-necessary_original}, and the following property
    \[  
        \exists \sched .\ \Pr_\state^\sched(\Finally \{t_2\}) = 0.5~.
    \]
    The combined MDP $\mdp_{\class{1}}$ is shown in \cref{fig:moa_memory-necessary_combined}, and the equivalent query on $\mdp_{\class{1}}$ is
    \[
        \exists \sched \in \Scheds[{\mdp_{\class{1}}}] .\ 
        \Expected_{s_{\class{1}}}^{{\mdp_{\class{1}}}, \sched}(\rew^1) = 0.5
    \]
    where $\rew^1$ collects reward 1 when visiting $t_2$ in $\mdp_{\class{1}}$ and 0 otherwise.
    Let $\sched'$ be a memoryless randomized scheduler for $\mdp_{\class{1}}$. If $\sched'$ selects $\beta$ in $s$ with some positive probability, we must reach $t_2$ almost-surely from $s$, otherwise (if $\sched'$ does not select $\beta$ in $s$), we cannot reach $t_2$ from $s$.
    Hence, any memoryless randomized scheduler $\sched'$ for $\mdp_{\class{1}}$ 
    can achieve only values 0 or 1 for $\Expected_{s_{\class{1}}}^{\mdp_{\class{1}}, \sched}(\rew^1)$.
\end{example}

\begin{figure}[t]
\centering
    \begin{subfigure}[c]{0.49\textwidth}
    \centering
    \begin{tikzpicture}[semithick,>=stealth]
        \node[state] (s) {$s$};
        
        \node[state, below left= of s] (s1) {$t_1$}; 
        \node[state, below right= of s] (t) {$t_2$}; 
    
       \path[-latex', draw]
            (s) edge[bend left] node[right] {$\alpha$} (s1)
            (s) edge node[left] {$\beta$} (t);

        \path[-latex', draw]
            (s1) edge[bend left] node[right] {$\gamma$} (s)
            (t) edge[loop above] (t);
    \end{tikzpicture}
    \caption{Original MDP $\mdp$.}
    \label{fig:moa_memory-necessary_original}
    \end{subfigure}
    \begin{subfigure}[c]{0.49\textwidth}
    \centering
    \begin{tikzpicture}
        \node[state] (s) {$s$};
        \node[staterectangle, left= of s] (sX) {$s_{\class{1}}$};
        
        \node[state, below left= of s] (s1) {$t_1$}; 
        \node[state, below right= of s] (t) {$t_2$}; 
        \node[state, right= of t] (sink) {$t_2'$};

       \path[-latex', draw]
            (sX) edge node[above] {$\epsilon$} (s)
            (s) edge[bend left] node[right] {$\alpha$} (s1)
            (s) edge node[left] {$\beta$} (t)
            (t) edge (sink);

        \path[-latex', draw]
            (s1) edge[bend left] node[right] {$\gamma$} (s)
            (sink) edge[loop above] (sink);
    \end{tikzpicture}
    \caption{Combined MDP $\mdp_{\class{1}}$.}
    \label{fig:moa_memory-necessary_combined}
    \end{subfigure}
    \caption{MDP from \cref{fig:memory-necessary} and the combined MDP $\mdp_{\class{1}}$ w.r.t.\ $\exists \sched .\ \Pr_\state^\sched(\protect\Finally \{t_2\}) = 0.5$, where some state names are simplified for better readability.} 
    \label{fig:moa_memory-necessary}
\end{figure}

\subparagraph*{Encoding in linear real arithmetic.}
\cref{fig:LP} encodes the query \eqref{eq:moa-query} in quantifier-free linear real arithmetic, analogously to~\cite[Fig.\ 2]{forejtQuantitativeMultiobjective2011}, where we introduce $\not\approx_{\epsilon}$ as syntactic sugar.
The encoding uses variables $y_{\state, \action}$ for $\state \in \combined[\states]$, $\action \in \processed[{\combined[\Act]}](\state)$ encoding the expected number of times we leave $\state$ via $\action$. 
Constraint \eqref{eq:EVTs} intuitively expresses that for $\state \in \combined[\states]$, the expected number of times we leave $\state$ corresponds to the expected number of times we enter $\state$, plus 1 if $\state$ is the initial state $\combined[\state]$.
Constraint \eqref{eq:MOA} expresses the extended MOA query and \eqref{eq:sinks} expresses the added constraint of reaching $\sdead$ with probability 1. 
For a property without $\not\approx_\epsilon$, this corresponds to an LP feasibility problem.
We can use ideas from~\cite{etessamiMultiObjectiveModel2008,forejtQuantitativeMultiobjective2011} to show the correctness of the encoding, see~\cref{app:conj_LP_proof}.

\begin{figure}[t]
    \fbox{\begin{minipage}{\textwidth}
    Find assignments for variables 
    $ y_{\state, \action} \text{ for } \state \in \combined[\states], \action \in \processed[{\combined[\Act]}](\state)$ 
    such that 
    \begin{align}
        \sum_{\action \in \processed[{\combined[\Act]}](\state)} 
        \hspace{-2ex} y_{\state, \action} - 
        \hspace{-1ex} \sum_{\state' \in \combined[\states]} 
        \sum_{\action' \in \combined[\Act](\state')} 
        \hspace{-2ex}\combined[\Trans](\state', \action', \state) \cdot y_{\state', \action'} 
        &= \Iverson{\state {=} \combined[\state]} 
        &&\text{for }  \state \in \combined[\states] \label{eq:EVTs} \\
        \sum_{\state \in \combined[\states]} 
        \hspace{-1ex}\rew^j(\state) \cdot 
        \sum_{\state' \in \combined[\states]} 
        \sum_{\action' \in \combined[\Act](\state')} 
        \hspace{-2ex}\combined[\Trans](\state', \action', \state) \cdot y_{\state', \action'} 
        &\comp_j q_j && \text{for } j=1,\ldots,\numconj \label{eq:MOA} \\
        \sum_{\state \in (\combined[\states])_{\MEC}} y_{\state, \dagger} 
        &= 1 && \label{eq:sinks} \\
        y_{\state, \action} 
        &\geq 0 &&\text{for } \state \in \combined[\states], \action \in \processed[{\combined[\Act]}](\state) \label{eq:nonneg}
    \end{align}
    \end{minipage}
    }
    \caption{Linear real arithmetic encoding for extended MOA queries with $\comp_j \in \{>, \geq, \approx_\epsilon,\not\approx_{\epsilon}\mid \epsilon \geq 0 \}$, where we use $x\not\approx_\epsilon y$ as syntactic sugar for $x < y-\epsilon \lor x >y+\epsilon$. 
    Recall that $\processed[{\combined[\states]}] = \combined[\states] \cup \sdead$ and that $\sdead$ consists of sink states that can only be reached via $\dagger$, i.e., $\processed[{\combined[\Trans]}](\state',\dagger,\state) = 0$ for $\state,\state' \in \combined[\states]$.
    }
    \label{fig:LP}
\end{figure}

\begin{restatable}{lemma}{conjLP}
    \label{le:conj_LP}
    Let $\processedcombined$ as defined in~\cref{def:MDP_combined}, then
    \begin{align*}
        &
            \exists \processed[\sched] \in \SchedsMR[{\processed[\combined]}] .\ 
            \bigwedge_{j=1}^{\numconj} \Expected_{\combined[\state]}^{{\processed[\combined]}, \processed[\sched]}(\rew^j) \comp_j q_j 
            \wedge \Pr_{\combined[\state]}^{{\processed[\combined]}, \processed[\sched]}(\Finally \sdead) = 1
        \\ \Iff\quad & 
        \text{the encoding from \cref{fig:LP} has a feasible solution $y^*$.}
    \end{align*}
\end{restatable}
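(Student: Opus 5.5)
The plan is to follow the classical occupation-measure argument of \cite{etessamiMultiObjectiveModel2008,forejtQuantitativeMultiobjective2011} and prove both directions directly. Throughout, $y_{\state,\action}$ is read as the expected number of times $\state$ is left via $\action$. The one new ingredient is to check that the disjunctive constraints stemming from $\not\approx_\epsilon$ need no special treatment. They do not, because \eqref{eq:MOA} is a pointwise statement about the numerical value of each expected reward, and the correspondence between schedulers and solutions preserves these values exactly.

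\emph{Soundness of the encoding ($\Rightarrow$).} Fix $\processed[\sched] \in \SchedsMR[{\processed[\combined]}]$ with $\Pr_{\combined[\state]}(\Finally \sdead) = 1$ that satisfies all predicates. Since $\sdead$ is absorbing and is reached almost surely, and since every state of $\combined$ can reach $\sdead$ under $\processed[\sched]$ with positive probability (states outside MECs are transient, and MEC states with positive visiting probability must eventually use $\dagger$), every state of $\combined[\states]$ is transient in the induced DTMC. Hence the expected number of visits $x_\state$ is finite. I set $y_{\state,\action} = x_\state \cdot \processed[\sched](\state)(\action)$.

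Each constraint then follows in turn. Constraint \eqref{eq:EVTs} is the standard flow-conservation identity for expected visit counts of transient states. Constraint \eqref{eq:nonneg} is trivial. Constraint \eqref{eq:sinks} holds because each sink is entered exactly once, and only via $\dagger$, so $\sum y_{\state,\dagger}$ equals $\Pr(\Finally \sdead) = 1$. For \eqref{eq:MOA}, the expected total reward equals $\sum_\state \rew^j(\state)\cdot(\text{expected number of entries into }\state)$; here I use that $\rew^j(\combined[\state]) = 0$ and that entries into $\state$ are exactly $\sum_{\state',\action'}\combined[\Trans](\state',\action',\state)\, y_{\state',\action'}$. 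Interchanging expectation and the infinite sum is justified because rewards are collected only finitely often on every path and, for each state, the expected visit counts are finite (dominated convergence on $\sum |\rew^j|$). So each comparison $\comp_j$, including $\not\approx_\epsilon$, holds literally.

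\emph{Completeness ($\Leftarrow$).} Given a feasible $y^*$, I would define the memoryless randomized scheduler $\processed[\sched](\state)(\action) = y^*_{\state,\action}/\sum_{\action'} y^*_{\state,\action'}$ wherever the denominator is positive, and choose arbitrarily otherwise. On sinks, $\dagger$ is forced.

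The \textbf{main obstacle} is to show that the expected visit counts $x_\state$ under $\processed[\sched]$ coincide with $\sum_\action y^*_{\state,\action}$. A priori \eqref{eq:EVTs} could admit spurious solutions that carry circulating flow on bottom strongly connected components which never reach $\sdead$. My approach is as follows.
\begin{enumerate}
\item Restrict attention to the set $Z$ of states with positive outflow; by \eqref{eq:EVTs}, $Z$ contains every state reachable from $\combined[\state]$ under $\processed[\sched]$.
\item Argue that no closed recurrent class inside $Z \cap \combined[\states]$ exists. Summing \eqref{eq:EVTs} over all of $\combined[\states]$ gives total outflow to sinks $= 1$. This equals the inflow of $1$ at the initial state, so flow is conserved with no leakage, and a closed class $D$ that does not exit towards $\sdead$ would have inflow from outside $D$ equal to zero. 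I would try the argument of \cite[Lem.~3.3]{etessamiMultiObjectiveModel2008} / \cite{forejtQuantitativeMultiobjective2011}: the flow restricted to $D$ is then a stationary circulation that is disconnected from the source, so it is unreachable, which contradicts $D \subseteq Z$ being reachable from $\combined[\state]$. The positivity needed for reachability comes from \eqref{eq:EVTs}.
\item Conclude that $\sdead$ is reached almost surely. Then the transient visit-count system $x = e_{\combined[\state]} + x P$ has a unique solution, and it must coincide with $\sum_\action y^*_{\cdot,\action}$ on reachable states.
\end{enumerate}

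With this identification, the computation from the first direction run backwards gives both $\Expected(\rew^j)$ equal to the left-hand side of \eqref{eq:MOA} and $\Pr(\Finally\sdead)=1$ by \eqref{eq:sinks}. Hence $\processed[\sched]$ witnesses the query.
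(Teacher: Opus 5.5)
\textbf{Comparison with the paper.} Your $\Rightarrow$ direction is the paper's occupation-measure argument. Your dominated-convergence justification for exchanging sum and expectation is a welcome addition; note that only states reachable from $\combined[\state]$ need to be transient, and that follows immediately from almost-sure absorption in $\sdead$.

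For $\Leftarrow$ you use the same normalised memoryless scheduler, but you organise the argument differently:
\begin{itemize}
\item The paper works on the set $W$ of states that have a positive-$y^*$ path to $\sdead$. It inverts $I-P_{W,W}$ (the induced transition matrix restricted to $W$) to identify $\sum_\action y^*_{\state,\action}$ with the expected visits on $W$. Only afterwards does it obtain $\Pr(\Finally\sdead)\geq\sum_{\state\in W}y^*_{\state,\dagger}=1$ from \eqref{eq:sinks}.
\item You first exclude reachable closed classes by flow conservation, which gives almost-sure absorption from \eqref{eq:EVTs} alone. You then invoke uniqueness of the visit-count equations.
\end{itemize}
That route is fine. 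However, the closed-class step has to be done by summing \eqref{eq:EVTs} over $D$ itself, which gives $-(\text{inflow into } D \text{ from outside } D)=\Iverson{\combined[\state]\in D}$. The global sum over $\combined[\states]$ that you mention does not give this by itself.

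\textbf{The gap.} You treat $Z$ as if it were the set $R$ of states reachable from $\combined[\state]$ under your scheduler, but step 1 only gives $R\subseteq Z$.
\begin{itemize}
\item A feasible $y^*$ may carry arbitrary circulating flow on any end component of $\combined$ that is not reachable from $\combined[\state]$. Adding a stationary circulation there preserves every constraint. Such components are common, since each goal unfolding contains all of $\states\times 2^{\mathcal{T}_c}$. Hence your claim that $Z\cap\combined[\states]$ contains no closed recurrent class is false as stated.
\item This extra mass matters in two places. First, uniqueness in step 3 needs $z^*=\sum_\action y^*_{\cdot,\action}$, restricted to $R$, to satisfy the visit-count equations restricted to $R$; that is, no flow may enter $R$ from $U=Z\setminus R$. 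Second, the left-hand side of \eqref{eq:MOA} sums over all of $\combined[\states]$, so running the first computation backwards only works once you know $U$ contributes nothing.
\end{itemize}

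\textbf{The fix.} Both points follow from one additional argument:
\begin{itemize}
\item At states of $R$ every positive-flow action is taken with positive probability, so no flow goes from $R$ into $U$.
\item Summing \eqref{eq:EVTs} over $U$ then shows that $U$ sends no flow to $R$ or to $\sdead$.
\item Hence $z^*$ restricted to $U$ is an invariant measure of the closed chain on $U$. Since it is positive on all of $U$, every state of $U$ is recurrent and so lies in an end component of $\combined$.
\item $\rew^j$ vanishes on end components, because reward is only collected at states that lie on no cycle of a goal unfolding.
\end{itemize}
The paper's proof uses the same fact without stating it. It happens where the proof passes from $\sum_{\state\in W}\rew^j(\state)\,y^*_\state$ to $\sum_{\state\in\combined[\states]}\rew^j(\state)\,y^*_\state$, with $y^*_\state$ the inflow into $\state$.
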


Due to the presence of $\approx_\epsilon$-, $>$-, and $\not\approx_{\epsilon}$-constraints, the constructed encoding in linear real arithmetic does not necessarily correspond to a linear program, whose standard form only allows $\leq,\geq$-constraints.
While $\approx_\epsilon$- and $>$-constraints can be transformed to $\leq,\geq$-constraints~(see, e.g.,~\cite{dutertreFastLinearArithmetic2006,nalbachExtendingFundamental2021} for transforming $>$ to $\geq$), the situation is more complex for $\not\approx_\epsilon$. 
A linear real arithmetic encoding containing $\not\approx_\epsilon$ can be solved using successive calls to an LP solver, in the worst case exponentially many, while in the special case of $\not\approx_0$, i.e., $\neq$, we only need to solve a \emph{linear} number of calls to an LP solver, see \cref{sec:moa_via_lp} for details.

\subparagraph*{Overall algorithm.}
The algorithm resulting from the approach outlined above is summarized in \cref{alg:MOA}.
It has a worst-case running time exponential in the number of probability operators (even without $\not\approx$).
\label{rem:moa_approx}%
We note that while this approach gives us a tighter upper bound for the complexity of MOA than a Pareto-curve-based approach (which would be exponential also in the size of the MDP), in practice, an approximate Pareto-curve-based approach may significantly outperform our exact approach, see, e.g.,~\cite{forejtParetoCurves2012}.

\begin{restatable}{theorem}{MOACorrectness}
    \label{th:MOA_correctness}
    \cref{alg:MOA} adheres to its input-output specification and can be implemented with worst-case running time of $\mathcal{O}(2^{\numconj_{\not\approx}} \cdot \textit{poly}(\numconj \cdot \numsum \cdot |\mdp| \cdot 2^{\numconj \cdot \numsum}))$ 
    where $\numconj$ is the number of predicates, $\numconj_{\not\approx}$ the number of predicates with $\not\approx_\epsilon$ with $\epsilon>0$, and $\numsum$ the number of probability operators per objective in the original \ConjRelReach property, i.e., the original input to~\cref{alg:conj}.
\end{restatable}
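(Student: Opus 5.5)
Correctness of \cref{alg:MOA} follows by chaining the equivalences already established. By \cref{th:moa_preproc_transfer-MR}, the input query on $\combined$ holds iff some memoryless randomized scheduler on $\processedcombined$ satisfies all constraints and reaches $\sdead$ almost surely. By \cref{le:conj_LP}, this holds iff the linear real arithmetic encoding $L$ of \cref{fig:LP} is feasible. It therefore remains to show that \textsf{Solve($L$)} decides feasibility correctly and to bound its cost.

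For the running time I first bound the encoding. The number of variables $y_{\state,\action}$ and the number of constraints in \eqref{eq:EVTs}--\eqref{eq:nonneg} are polynomial in $|\processedcombined|$. We have $|\combined[\states]| \leq 1 + \numconj\cdot\numsum\cdot|\states|\cdot 2^{\numconj\cdot\numsum}$. The sink set $\sdead$ has at most $2^{|\combined[\mathcal{T}]|} \leq 2^{\numconj\cdot\numsum}$ elements. Computing the MECs of $\combined$ for the pre-processing takes polynomial time in $|\combined|$. All coefficients are products or sums of input rationals and the factor $|\comb|$, so their bit sizes stay polynomial. Hence building $L$ costs $\textit{poly}(\numconj\cdot\numsum\cdot|\mdp|\cdot 2^{\numconj\cdot\numsum})$.

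Next I normalise $L$ into LP form.
\begin{itemize}
\item Each $\approx_\epsilon$ constraint becomes two non-strict inequalities.
\item Strict inequalities $>$ are handled by the standard trick: introduce one slack variable $\delta$, replace each $x>y$ by $x\geq y+\delta$, and maximise $\delta$ subject to $\delta\leq 1$. The system is feasible iff the optimum is positive. This is one LP of polynomial size, solvable in polynomial time by interior point methods.
\item Each $\not\approx_\epsilon$ with $\epsilon>0$ is the disjunction $x<y-\epsilon \lor x>y+\epsilon$. I enumerate all $2^{\numconj_{\not\approx}}$ choices of a disjunct and solve one LP per choice. $L$ is feasible iff some choice is feasible.
\end{itemize}
This gives the factor $2^{\numconj_{\not\approx}}$.

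The main obstacle is $\neq$ (that is, $\not\approx_0$), which must not contribute to the exponent. For these constraints I would argue as follows. Let $P$ be the convex feasible set defined by all the other constraints. If $P\neq\emptyset$ and, for each $\neq$-constraint $j$ individually, $P$ contains a point with $\Expected(\rew^j)\neq q_j$, then the set of points of $P$ violating constraint $j$ is a relatively closed proper affine slice $P\cap H_j$. A finite union of such slices cannot cover the convex set $P$. So some point of $P$ satisfies all the $\neq$-constraints simultaneously.

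For each $j$ separately, we can check the condition by minimising and maximising the linear objective $\Expected(\rew^j)$ over $P$. A point with value $\neq q_j$ exists iff the minimum and maximum are not both equal to $q_j$. This costs two LPs per $\neq$-constraint, a linear number of calls overall, and it can be done inside each of the $2^{\numconj_{\not\approx}}$ branches. Multiplying the per-branch polynomial cost by the number of branches yields the claimed bound $\mathcal{O}(2^{\numconj_{\not\approx}}\cdot\textit{poly}(\numconj\cdot\numsum\cdot|\mdp|\cdot 2^{\numconj\cdot\numsum}))$.
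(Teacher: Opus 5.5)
Your proposal is correct and follows the paper's proof essentially step for step. Correctness comes from chaining \cref{th:moa_preproc_transfer-MR} and \cref{le:conj_LP}, and the decision procedure is the LP-based one of \cref{alg:LP}: split $\approx_\epsilon$ into two inequalities, remove strict inequalities with a slack variable, branch over the $2^{\numconj_{\not\approx}}$ disjunct choices for $\not\approx_\epsilon$, and treat each $\neq$ separately via independence of negative constraints (your single shared slack variable bounded by $1$ is slightly cleaner than the paper's one slack per strict constraint). One point to tighten: $P$ may contain strict inequalities, so the ``minimum'' and ``maximum'' of $\Expected(\rew^j)$ over $P$ should be infima and suprema; you can compute them as LP optima over the non-strict relaxation of $P$, which is the closure of $P$ whenever $P\neq\emptyset$.
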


\begin{proof}
    Correctness follows from \cref{le:conj_LP} and \cref{th:moa_preproc_transfer-MR}. 

    If the extended MOA query does not contain $\not\approx_{\epsilon}$ with $\epsilon>0$, then the encoding is a quantifier-free formula with a conjunction over at most $|\combined[\states]| + \numconj + 1 + |\combined[\states]| \cdot |\processed[{\combined[\Act]}]|$ linear real constraints.
    Observe 
    $|\combined[\states]| \cdot |\processed[{\combined[\Act]}]|
    \leq (\numconj \cdot \numsum \cdot |\states| \cdot 2^{\numconj \cdot \numsum} + 1) \cdot (|\Act| + 2)$. 
    The claim then follows since quantifier-free linear real arithmetic with only conjunctions can be solved in time polynomial in the number of constraints~\cite{khachiyanPolynomialAlgorithm1979}.
    In particular, if $\neq$-predicates are present, it suffices to first check the encoding without the $\neq$-constraints---which is an LP---and then checking each $\neq$-constraint one by one~\cite[Th.~1]{lassezIndependenceNegative1989}.
    Otherwise, if the extended MOA query contains $\numconj_{\not\approx} \leq \numconj$ constraints with $\not\approx_{\epsilon}$ with $\epsilon>0$, then the encoding corresponds to a disjunction over $2^{\numconj_{\not\approx}}$ quantifier-free formulas with only conjunctions over linear real constraints without $\not\approx_\epsilon$ with $\epsilon>0$.
    We can solve this disjunction by iterating over the disjuncts and solving each disjunct separately.
    See~\cref{sec:moa_via_lp} for details.
\end{proof}

%% file: conjunction_complexity.tex
\subsection{Complexity of Multi-Objective Relational Reachability}

\label{sec:conj_complexity}
\label{sec:conj_not-approx}

Since already \ConjRelReach with a single predicate is \PSPACE-hard, the problem with an arbitrary number of objectives is \PSPACE-hard as well.

\begin{theorem}
    \label{th:conj_exptime}
    Problem \ConjRelReach is \PSPACE-hard and decidable in \EXPTIME.
\end{theorem}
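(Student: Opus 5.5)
Both halves come from results already in the paper. \PSPACE-hardness reduces to \RelReach, and membership follows from the running time of \cref{alg:conj}.

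\emph{Hardness.} Every \RelReach instance is a \ConjRelReach instance with a single predicate ($\numconj = 1$). The identity embedding is therefore a polynomial-time reduction, and \PSPACE-hardness transfers from \cref{th:general_EXPTIME}. Concretely, the simultaneous almost-sure reachability query $\exists \sched .\ \sum_{i=1}^{\numsum} \Pr^{\sched}_{\state}(\Finally T_i) \geq \numsum$ of \cite{randourPercentileQueries2017} can be used directly. It can equally be written as the conjunction $\bigwedge_{i=1}^{\numsum} \Pr^{\sched}_{\state}(\Finally T_i) \geq 1$, so hardness holds even when every predicate consists of a single probability operator.

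\emph{Membership.} I invoke \cref{th:conj_correctness}. It gives correctness of \cref{alg:conj} together with a running time of $\mathcal{O}(2^{\numconj_{\not\approx}} \cdot \textit{poly}(\numconj \cdot \numsum \cdot |\mdp| \cdot 2^{\numconj\cdot\numsum}))$. It remains to bound this in terms of the input size $N$, which consists of the explicit MDP plus the formula. The parameters $\numconj$, $\numsum$, $\numconj_{\not\approx} \leq \numconj$ and $|\mdp|$ are all at most $N$. The bound is thus $2^{\mathcal{O}(N^2)}$, which lies in \EXPTIME.

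One detail needs checking: the LP/LRA solving inside \cref{alg:MOA} must be polynomial in the exponentially large encoding, including the bit-size of its coefficients. The coefficients are the transition probabilities of $\combined$, the rewards $|\comb| \cdot q_{i,j}$, and the bounds $q_j$. Each of these has bit-size polynomial in $N$. Khachiyan's algorithm therefore runs in time polynomial in the number of constraints times the coefficient bit-size, i.e.\ still $2^{\mathcal{O}(N^2)}$. The same holds for each of the at most $2^{\numconj_{\not\approx}}$ disjuncts.

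Neither half should be a real obstacle. The only point needing care is this bit-size remark, together with the observation that the assumption $q_{i,j}\neq 0$ used in \cref{sec:conj_algo} is harmless: zero-coefficient summands can simply be dropped in a preprocessing step.
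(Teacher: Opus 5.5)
Your proposal is correct and follows the paper's proof: \PSPACE-hardness is inherited from the single-predicate case (\cref{th:general_EXPTIME}), and \EXPTIME membership follows from the running-time bound of \cref{th:conj_runtime}. Your further remarks are also correct, and the paper leaves them implicit: coefficient bit-sizes stay polynomial, $\numconj\cdot\numsum \le N^2$ gives a $2^{\mathcal{O}(N^2)}$ bound, and zero-coefficient summands can be dropped.
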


\begin{proof}
    \PSPACE-hardness follows from the fact that the problem is already \PSPACE-hard for a single predicate~\cref{th:general_EXPTIME}, no matter whether we include $\not\approx_\epsilon$ with $\epsilon>0$ or not.
    Membership in \EXPTIME follows from \cref{th:conj_runtime}.
\end{proof}
The exact complexity of \ConjRelReach remains open for reasons similar to those discussed after the proof of \cref{th:general_EXPTIME}.

Recall the \PTIME-solvable special cases of \RelReach from~\cref{th:general_PTIME}: a constant number of target sets, all target sets absorbing, and properties where each probability operator has its own quantifier. 
Multi-objective relational properties without $\not\approx_\epsilon$ ($\epsilon>0$) remain in \PTIME in these cases.
However, if the property contains at least one $\not\approx_\epsilon$ with $\epsilon>0$, some of these special cases become (strongly) \NP-complete:

\begin{restatable}{theorem}{conjSpecial}
    \label{th:conj_special}
    Consider the following special cases of \ConjRelReach, where $\numconj_{\not\approx}$ denotes the number of predicates with $\not\approx_\epsilon$ with $\epsilon>0$:

    \begin{minipage}[c]{\textwidth}
    \begin{tabular}{ll|c|c}
        & & $\numconj_{\not\approx}>0$ & $\numconj_{\not\approx}=0$ \\
        \hline
        \newtag{(a)}{item:conj_fixed-param} & $|\{T_{1,1}, \ldots, T_{m,l}\}|$ is at most a constant & strongly \NP-complete & \PTIME \\
        \newtag{(b)}{item:conj_absorb} & all target sets are absorbing & strongly \NP-complete & \PTIME \\ 
        \newtag{(c)}{item:conj_independent} & \makecell[tl]{$n = m\cdot l$, i.e., each probability operator \\ has its own quantifier} & \PTIME & \PTIME 
    \end{tabular}
    \end{minipage}
\end{restatable}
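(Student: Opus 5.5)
The plan is to handle the six cells of the table in two groups: first the upper bounds (the \PTIME cells and \NP membership), which all come from the complexity analysis of \cref{alg:conj,alg:MOA}, and then strong \NP-hardness for \ref{item:conj_fixed-param} and \ref{item:conj_absorb} with $\numconj_{\not\approx}>0$, by a reduction from 3SAT.

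\emph{Upper bounds.} The bound of \cref{th:MOA_correctness} is stated coarsely with the factor $2^{\numconj\cdot\numsum}$. I will first refine it: the size of $\combined$ is $1+\sum_{c\in\comb}|\indexc|$, and $|\indexc|\le |\states|\cdot 2^{|\indexc[\mathcal{T}]|}$.
\begin{itemize}
\item In \ref{item:conj_fixed-param}, $|\indexc[\mathcal{T}]|$ is bounded by the constant number of distinct target sets, so $\combined$ and $\processed[\combined]$ have polynomial size.
\item In \ref{item:conj_absorb}, once a target state is entered it is never left. So every reachable state of $\indexc$ is either of the form $(\state,\emptyset)$ or is a target state carrying its target information. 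I will either restrict $\indexc$ to its reachable part, which then has polynomial size, or skip the unfolding and give reward directly on entering the absorbing target states.
\item In \ref{item:conj_independent}, each combination has $|\indexc[\mathcal{T}]|=1$, so $|\indexc|\le 2|\states|$.
\end{itemize}
With $\numconj_{\not\approx}=0$, the encoding of \cref{fig:LP} is then a polynomial-size LP; $\neq$-constraints are handled by \cite[Th.~1]{lassezIndependenceNegative1989} as in the proof of \cref{th:MOA_correctness}. Correctness comes from \cref{le:reduce-to-moa,th:moa_preproc_transfer-MR,le:conj_LP}. For \NP membership with $\numconj_{\not\approx}>0$ in \ref{item:conj_fixed-param} and \ref{item:conj_absorb}, I guess for each $\not\approx_\epsilon$-predicate which side of the disjunction $x<y-\epsilon \lor x>y+\epsilon$ holds, and check the resulting polynomial-size LP.

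\emph{Case \ref{item:conj_independent} with $\not\approx_\epsilon$.} If $n=\numsum\cdot\numconj$, no two probability operators share a scheduler. Hence every class in $\conjpart$ is a singleton, and \cref{le:split-conjunction} splits the property into $\numconj$ independent single-predicate \RelReach queries, each with one quantifier per operator. Each such query is in \PTIME by \cref{th:general_PTIME}\ref{item:independent}, and that result covers $\not\approx_\epsilon$ as well. So the disjunctions never interact across predicates and the exponential blow-up disappears.

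\emph{Hardness for \ref{item:conj_fixed-param} and \ref{item:conj_absorb}.} One reduction will handle both cells, using a single absorbing target $T=\{t\}$. Given a 3SAT instance with variables $x_1,\dots,x_n$, I build for each variable a state $x_i$ with an action $\alpha$ to $t$ and an action $\beta$ to an absorbing sink; all probabilities are $0$ or $1$. All operators use one scheduler $\sched$. Write $p_i=\Pr^{\sched}_{x_i}(\Finally T)$. The constraints are:
\begin{itemize}
\item $p_i \not\approx_{\epsilon} \tfrac12$ with $\epsilon=\tfrac12-\delta$ for, say, $\delta=\tfrac1{10}$. This forces $p_i\in[0,\delta)\cup(1-\delta,1]$.
\item For each clause $C_j$, the linear predicate
\[\textstyle\sum_{x_i\in C_j} p_i-\sum_{\neg x_i\in C_j} p_i \;\ge\; 1-|\{\neg x_i\in C_j\}|-3\delta .\]
This uses the weighted sums directly, so no clause gadget in the MDP is needed.
\end{itemize}
If every literal of $C_j$ is near-false, its contribution is below $\delta$, so the left-hand side is below $3\delta-|\{\neg x_i\in C_j\}|$, which violates the bound once $6\delta<1$. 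If some literal is near-true, the bound holds. A satisfying assignment gives an MD scheduler with $p_i\in\{0,1\}$. All numbers are constant-size, which yields \emph{strong} \NP-hardness.

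The main obstacle is getting the clause arithmetic right. I have to check carefully that the rounding $p_i\mapsto[p_i>\tfrac12]$ of any witness satisfies every clause, and that the slack $3\delta$ is compatible with the $\not\approx$-window, so that $\delta$ must be chosen with strict inequalities on both sides. I also have to confirm that the different initial states $x_i$ under one memoryful scheduler cause no issue; they do not, since each $p_i$ depends only on the scheduler's choice at the first step from $x_i$.
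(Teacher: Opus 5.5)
Your proposal is correct and follows essentially the paper's proof. For the upper bounds, you use polynomial-size goal unfoldings together with the encoding of \cref{fig:LP}, guessing the side of each $\not\approx_\epsilon$-disjunction to get \NP{} membership. For \ref{item:conj_independent}, you split into independent \RelReach{} queries and apply \cref{th:general_PTIME}\ref{item:independent}. For hardness, you reduce from 3SAT using a single absorbing target, where $\not\approx_\epsilon$-predicates force near-$0/1$ probabilities and linear predicates encode the clauses.

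The only real difference is in the hardness gadget. The paper uses one scheduler variable per propositional variable on a fixed two-state MDP with one initial state. You use a single scheduler with one initial state per variable, which additionally shows hardness already for $n=1$. Your clause arithmetic with slack $3\delta$ and $\epsilon=\tfrac12-\delta$ checks out.
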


\begin{proof}
    $\ref{item:conj_fixed-param}, \ref{item:conj_absorb}$: \ConjRelReach is \NP-hard by reduction from 3SAT to a \ConjRelReach property of the form
    \[
        \exists \sched_1, \ldots, \sched_n .\ \bigwedge_{j=1}^{n} \Pr^{\sched_j}_{\state}(\Finally T) \not\approx_{\epsilon} q_j
        ~,
    \]
    i.e., a property with a single, absorbing target set shared by all probability operators.
    The reduction can be found in \cref{app:conj_approx-diseq_NP}.

    Membership in \NP/\PTIME: If the number of different target sets is at most a constant or all targets are absorbing, then the size of the goal unfolding $\indexc$ is polynomial for each $c \in \comb$ and thus also the size of $\combined$ is polynomial. 
    Hence, the encoding from \cref{fig:LP} has a polynomial number of variables and constraints.
    Thus, we can guess a feasible solution $y^*$, whose size is polynomial in the size of the MDP, and verify whether it is indeed a solution in polynomial time.
    In particular, if the property does not contain $\not\approx_\epsilon$ with $\epsilon>0$, we can even solve the encoding in \PTIME following the reasoning from~\cref{th:MOA_correctness}.

    $\ref{item:conj_independent}$: If $n = m \cdot l$, then no predicates share schedulers, so we can solve each predicate separately. We thus have to solve $l$ independent \RelReach properties where each probability operator has its own scheduler, which can be solved in \PTIME by \cref{th:general_PTIME}\ref{item:independent}. 
\end{proof}

In contrast to previous sections, we do not study cases where the model-checking problem under memoryless deterministic schedulers, \ConjRelReachMD, is in \PTIME, but only make the following two statements that follow directly from \cref{th:MD_NP-complete}\footnote{To be precise, membership in \NP for \MORelReachMD does not follow directly from \cref{th:MD_NP-complete} but with analogous reasoning.} and \cref{th:general_necessary}.
In order to analyze fragments where our algorithm returns MD schedulers in polynomial time, we would need to analyze in which cases MD schedulers suffice for multi-objective achievability queries and can be computed in \PTIME, which we consider out of scope here.
While the complexity of solving MOA queries over MD schedulers has been studied before~\cite{delgrangeSimpleStrategies2020,chatterjeeMarkovDecision2006,wieringComputingOptimal2007,velasquezSteadyStatePolicy2019},
we are not aware of work analyzing cases where MD schedulers suffice for MOA queries.

\begin{corollary}
    \ConjRelReachMD is strongly \NP-complete.
\end{corollary}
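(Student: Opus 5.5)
The statement is that \ConjRelReachMD is strongly \NP-complete. I would prove hardness and membership separately, reusing earlier results as far as possible.

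\emph{Hardness.} This part is immediate. Every \RelReachMD instance is a \ConjRelReachMD instance with a single predicate ($\numconj = 1$). \cref{th:MD_NP-complete} shows that \RelReachMD is already strongly \NP-hard, for instance for $\exists \sched \in \SchedsMD .\ \Pr^{\sched}_{\state_1}(\Finally T_1) - \Pr^{\sched}_{\state_2}(\Finally T_2) = 0$. Writing this equality as $\approx_0$ lies within the allowed comparison operators. The identity embedding does not change any numerical data, so it is a pseudo-polynomial (indeed polynomial) transformation, and strong \NP-hardness carries over.

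\emph{Membership in \NP.} I would mirror the argument given for \RelReachMD. The certificate is one MD scheduler $\sched_k \colon \states \to \Act$ for each scheduler variable $k = 1, \ldots, \numsched$. Since $\numsched \leq \numsum \cdot \numconj$, this certificate has polynomial size. To verify it, I would proceed as follows:
\begin{itemize}
\item Compute, for each $k$, the induced DTMC $\mdp^{\sched_k}$.
\item For each pair $(i,j)$, compute the exact rational value $\Pr^{\sched_{k_{i,j}}}_{\state_{i,j}}(\Finally T_{i,j})$ by solving a linear equation system. This takes polynomial time by \cite[Ch.~10]{baierPrinciplesModel2008}, and the resulting values have polynomial bit size.
\item Evaluate each of the $\numconj$ weighted sums exactly in rational arithmetic and check the comparisons $\comp_j$ against $q_j$.
\end{itemize}
All operators, including $\not\approx_\epsilon$ (which is just $|x - q_j| > \epsilon$), are checked directly on rational numbers, so the disjunction hidden in $\not\approx_\epsilon$ costs nothing here.

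\emph{Strong membership.} Strong \NP-completeness additionally requires the problem to remain in \NP under unary encoding. This is automatic, because the verification above is polynomial in the binary encoding and therefore also in the unary one.

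The only step needing care is the claim that the exact reachability probabilities have polynomially bounded bit size, so that the verifier runs in polynomial time. I would justify it by the standard fact that the probabilities are the unique solution of a linear system, after removing the states with probability $0$ found by graph analysis, whose coefficients come from the input. Cramer's rule, or equivalently Gaussian elimination, then bounds the bit size polynomially. Everything else is bookkeeping.
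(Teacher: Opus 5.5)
Your proposal is correct and follows the paper's own argument. Strong \NP-hardness is inherited from the single-predicate case of \cref{th:MD_NP-complete}, with $=$ written as $\approx_0$. Membership in \NP goes, as the paper's footnote indicates, by the same reasoning as for \RelReachMD: guess one MD scheduler per scheduler variable and verify all predicates, including $\not\approx_\epsilon$, via exact DTMC reachability probabilities. Guessing per scheduler variable rather than per state-scheduler combination is exactly right, since the splitting of \cref{thm:whyComb} is unavailable for MD schedulers.
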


\begin{corollary}
    \label{cor:conj_general-necessary}
    Memory and randomization are necessary for \ConjRelReach with (approximate or exact) equality.
\end{corollary}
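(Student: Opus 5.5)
The plan is to reduce to the single-predicate case and reuse the witnesses from \cref{th:general_necessary}. Every \RelReach property is a \ConjRelReach property with $\numconj = 1$: the problem statement of \cref{prob:morelreach} specialises to \cref{prob:relreach} when there is a single conjunct. Therefore it suffices to show two things. First, there is a \RelReach instance with (approximate or exact) equality that holds over general schedulers but fails over memoryless schedulers. Second, there is such an instance that fails over deterministic schedulers.

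\emph{Memory.} I would take the MDP of \cref{fig:memory-necessary} and the property $\exists \sched .\ \Pr^{\sched}_{s}(\Finally \{t_1\}) = \Pr^{\sched}_{s}(\Finally \{t_2\})$, viewed as a \ConjRelReach property with one predicate. The memoryful witness plays $\alpha$ on the first visit to $s$ and $\beta$ afterwards, which gives both probabilities equal to $1$. A memoryless scheduler that plays $\beta$ with probability $p>0$ reaches $t_2$ almost surely, so it gives $\Pr(\Finally\{t_2\}) = 1$ and $\Pr(\Finally\{t_1\}) = 1-p < 1$. A memoryless scheduler with $p = 0$ gives $1$ versus $0$.

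For approximate equality $\approx_\epsilon$ the same argument applies whenever $\epsilon < 1-p$ rules out the memoryless candidates. To get a uniform counterexample for arbitrary fixed $\epsilon$, I would simply invoke the approximate-equality witness already given in the proof of \cref{th:general_necessary} (\cref{app:general_necessary}). Alternatively, a second conjunct can enforce $\Pr_s(\Finally\{t_1\}) \geq 1$, which a memoryless scheduler achieves only with $p = 0$. That conjunct is an instance where the multi-objective setting actually helps.

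\emph{Randomization.} I would take \cref{fig:randomization-necessary} with the single predicate $\Pr^\sched_s(\Finally\{t\}) \approx_{0.1} 0.5$. Deterministic schedulers only achieve $0$ or $1$, whereas randomizing uniformly at $s$ achieves exactly $0.5$. Exact equality is the case $\epsilon = 0$ of the same instance.

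The only point needing care, which I expect to be the main obstacle, is making sure that ``necessary'' is read in the same sense as in \cref{th:general_necessary}: we exhibit instances, not a claim about every instance. Once that is fixed, the embedding $\numconj = 1$ is immediate and no further computation is required.
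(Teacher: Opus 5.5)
Your route matches the paper's: the corollary is obtained by reading the \RelReach instances of \cref{th:general_necessary} as \ConjRelReach instances with $\numconj=1$. Your randomization argument and your exact-equality memory argument are correct.

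One sub-step fails, however: the single-predicate memory example does \emph{not} work for $\approx_\epsilon$ with $\epsilon>0$. The $\beta$-probability $p$ is chosen by the memoryless scheduler, and the resulting discrepancy is $|(1-p)-1| = p$, which the scheduler can make as small as it likes. Hence for every $\epsilon>0$, the memoryless scheduler with $p\in(0,\min\{\epsilon,1\}]$ satisfies $\Pr^{\sched}_{s}(\Finally\{t_1\}) \approx_\epsilon \Pr^{\sched}_{s}(\Finally\{t_2\})$. Falling back on the approximate-equality witness from the proof of \cref{th:general_necessary} does not help either. That proof gives an $\approx_\epsilon$ instance only for randomization; for memory it treats only exact equality.

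Your second-conjunct alternative is a correct repair. Adding $\Pr^{\sched}_{s}(\Finally\{t_1\}) \geq 1$ forces $p=0$ for memoryless schedulers, which gives discrepancy $1$. Meanwhile, the memoryful scheduler playing $\alpha$ and then $\beta$ satisfies both conjuncts, so memory is necessary for every $\epsilon<1$. This genuinely uses the conjunctive setting and yields slightly more than the paper's one-line derivation. The paper only establishes memory necessity at $\epsilon=0$, which suffices for the existential reading of \emph{necessary}, since $\approx_0$ is a special case of $\approx_\epsilon$.
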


\paragraph*{Summary of \cref{sec:conjunction}}
The key insight is that \MORelReach can be reduced to multi-objective achievability queries with expected reward objectives.
Even though the constructed reward structures assign both positive and negative reward, we can solve the constructed query in time exponential in the number of probability operators of the property by extending existing approaches~\cite{etessamiMultiObjectiveModel2008,forejtQuantitativeMultiobjective2011}.
\MORelReach is in general \PSPACE-hard.
We investigate the complexity of \MORelReach in the analogous case to those where \RelReach can be solved in \PTIME and observe that \MORelReach without $\not\approx_{\epsilon}$ for $\epsilon>0$ can also be solved in \PTIME in these cases, but is strongly \NP-complete in some of the cases if there are predicates $\not\approx_{\epsilon}$ for $\epsilon>0$.
\MORelReachMD is strongly \NP-complete.

%% file: eval/eval.tex
\section{Implementation and Evaluation}
\label{sec:eval}
\label{sec:implement}

We have implemented model-checking algorithms for \RelReach and \RelBuechi based on the procedures described in \cref{sec:buechi_algo,sec:conj_algo}.
Extending our implementation to \MORelReach is current work; 
we believe that the current results already give a good insight into our approach's level of applicability.
We use our prototypical implementation to investigate the scalability of our approach in terms of model size, and how it performs compared to existing tools that can check relational properties.
We give details on our setup in~\cref{sec:setup}, and report on results for evaluating our prototypical implementation on case studies for relational properties~(\cref{sec:case-studies_new}) as well as benchmarks for probabilistic hyperproperties~(\cref{sec:hyperprob-benchmarks}).
The experiments on relational reachability probabilities are similar to the ones in~\cite{gerlachEfficientProbabilistic2025}, but have been run on the latest version of our software.

\subsection{Setup}
\label{sec:setup}

All experiments were performed on a laptop with a single core of a $1.80$GHz Intel~i7 CPU and 16GB~RAM under Linux Ubuntu 24.04 LTS. 

\subparagraph*{Implementation.}
{{\color{cavcolor}
Our prototype\footnote{Source code and benchmarks: \url{https://github.com/carolinager/RelProp/releases/tag/RelReachBuechi}, artifact with scripts to reproduce experiments: \url{https://doi.org/10.5281/zenodo.19450015}} is implemented on top of (the python bindings of) the probabilistic model checker \storm~\cite{henselProbabilisticModel2022} and supports an expressive fragment of relational properties. It takes as input an MDP (in the form of a \PRISM file~\cite{kwiatkowskaPRISM402011}) with $\numsum$ (not necessarily distinct) initial states and $\numsum$ (not necessarily distinct) target sets,} as well as a \emph{universally quantified} relational property, i.e., the negation of a relational formula as defined in~\cref{sec:problem-statement}.
More specifically, our prototype covers universally quantified relational reachability properties, and universally quantified relational B\"uchi properties. 
{\color{cavcolor}%
Most available case studies for relational properties are universally quantified.
Recall that, while the previous sections addressed existentially quantified relational properties, we can transfer our results to universally quantified relational properties by considering their negation.
}%

Note that 
the tool’s output (`Yes'/`No') depends only on comparing an expected reward to a threshold (\cref{alg:linear_general_simplified},~\cref{line:linear_general_simplified_return}), so the same calculations are performed regardless of the comparison's outcome.
Hence, performance is generally unaffected by an instance's falsifiability in these cases.

We use \storm's internal single- and multi-objective model checking capabilities for the computation of expected rewards (\cref{alg:linear_general_simplified},~\cref{line:linear_general_simplified_step3}), 
{\color{cavcolor}where single-objective model checking employs \emph{Optimistic Value Iteration}~\cite{hartmannsOptimisticValue2020} for approximate expected reward computation with a default tolerance of $10^{-6}$.\footnote{\color{cavcolor}Note that \storm returns a single value $x$ with relative difference at most $10^{-6}$ to the exact result instead of sound lower and upper bounds here. We ensure soundness by computing conservative lower and upper bounds as $\tfrac{x}{1 + 10^{-6}}$ and $\tfrac{x}{1 - 10^{-6}}$.} 
The multi-objective model checking employs \emph{Sound Value Iteration}~\cite{quatmannSoundValue2018} with the same tolerance.\footnote{\color{cavcolor}\storm (currently) returns a single value with relative difference at most $10^{-6}$ to the exact result as both lower and upper bound and we again ensure soundness manually.}
Here, we report results for approximate computation of expected rewards with tolerance $10^{-6}$.
}

While parts of this section already appeared in~\cite{gerlachEfficientProbabilistic2025}, we performed all experiments again with our updated prototype, which is based on a newer version of \storm compared to the artifact provided for~\cite{gerlachEfficientProbabilistic2025}.

\subparagraph*{Baseline.}
{\color{cavcolor}
We compare our tool against two baselines:
\HyperProb\footnote{\color{cavcolor}\url{https://github.com/TART-MSU/HyperProb}}~\cite{dobeHyperProbModel2021} and \HyperPaynt\footnote{\color{cavcolor}\url{https://github.com/probing-lab/HyperPAYNT}, we used this docker container: \url{https://zenodo.org/records/8116528}. Also referred to as \enquote{AR loop} in \cite{andriushchenkoDeductiveController2023}.}~\cite{andriushchenkoDeductiveController2023}. 
These tools handle (fragments of much more expressive) probabilistic hyperproperties that partially overlap with relational properties. 
\HyperProb encodes the \HyperPCTL model checking problem in SMT with an exponential number of variables.
\HyperPaynt uses abstraction refinement to model-check a fragment of \HyperPCTL, potentially exploring an exponential number of schedulers.
To the best of our knowledge, no other tools support a (nontrivial) fragment of relational properties.
\HyperProb and \HyperPaynt support approximate comparison operators via an equivalent conjunction or disjunction of two inequalities.\footnote{\color{cavcolor}While not covered in \cite{andriushchenkoDeductiveController2023}, \HyperPaynt also allows to add a constant to one side of the (in)equality.}
}%
\HyperPaynt restricts to reachability objectives, it does not cover B\"uchi objectives. 
While \HyperPCTL does not allow direct nesting of temporal operators, B\"uchi probabilities are \PCTL-definable via $\Prob(\Globally \Finally T) \equiv \Prob(\Finally \Prob(\Globally (\Prob(\Finally T)=1) =1))$ (see, e.g.,~\cite[Th.~10.47]{baierPrinciplesModel2008}).

{\color{cavcolor}
Both \HyperProb and \HyperPaynt search for policies and restrict themselves to MD schedulers. Solving a property over MD schedulers is not always equivalent to checking it over general schedulers (see, e.g., \cref{th:general_necessary}, \cref{ex:oneschedtwostate-exists-greater-memory}), but it coincides sometimes (\cref{th:MD_suffice}).
Further, \HyperProb supports only properties with a single scheduler quantifier (i.e., all probability operators are evaluated over the same MD scheduler).
\HyperPaynt, on the other hand, supports an arbitrary number of scheduler quantifiers.%
\footnote{\color{cavcolor}For both \HyperProb and \HyperPaynt, one could alternatively consider a single scheduler on a manually created self-composition to simulate multiple schedulers (from the same initial state). Based on the large performance gap with \HyperProb and \HyperPaynt seen below, we did not consider such tweaks.}
To account for the different scheduler classes considered by the tools, we will in the following denote the validity of a property over general as well as over MD schedulers as \emph{HR result} and \emph{MD result}, respectively, when comparing the tools.%
\footnote{\color{cavcolor}Note that \HyperPaynt expects existentially quantified properties, but we report the validity w.r.t.\ the universally quantified property, so the MD result stated here is the opposite of the result reported by \HyperPaynt.}
We will also denote for each benchmark family whether the considered property belongs to a fragment for which checking MD schedulers is equivalent to checking general schedulers.

Our tool's core procedure} for treating relational reachability properties
{\color{cavcolor}
consists of polynomially many calls to well-established, practically efficient subroutines based on value iteration~\cite{quatmannParameterSynthesis2016}.
In contrast, \HyperProb and \HyperPaynt use exponential algorithms to solve the \NP-hard \RelReach problems over MD schedulers and do not optimize for the \PTIME special cases (\cref{th:general_PTIME}).
Further, to optimize a linear combination of expected rewards over different schedulers, we can treat the different state-scheduler combinations independently and then aggregate them in Step~4 of \cref{sec:reach_algo}~(p.~\pageref{step4}).
}

\subsection{Case Studies}
\label{sec:case-studies_new}
\label{sec:eval_new}
Let us first illustrate that relational properties cover interesting problems.
\input{eval/eval_new}

\input{eval/eval_reach_comp}

%% file: eval/eval_new.tex
{\color{cavcolor}

\subparagraph*{Generalization of Von Neumann's trick (\VN).}
We generalize Von Neumann's trick from \cref{ex:reach_motivating-ex} to $2N$ bits.
The idea is as follows: 
Extract the first $2N$ bits from the stream; if the number of zeros equals the number of ones, return the value of the first bit; otherwise try again.
We check whether
\[
    \forall \sched .\
    \Pr_{s_0}^{\sched}(\Finally \{\texttt{return 0}\}) \approx_{\epsilon} \Pr_{s_0}^{\sched}(\Finally \{\texttt{return 1}\})
\]
for $\epsilon=0$ and $\epsilon=0.1$ and varying values for $N$. 
By \cref{th:MD_suffice}\ref{item:absorbing}, checking this property over MD schedulers is equivalent to checking it over general schedulers since all target sets are absorbing.
Therefore, \HyperProb and \HyperPaynt can also check the general problem in this case even though they restrict to MD schedulers.

\input{eval/eval_reach_new_tab}

\subparagraph*{Robot tag (\RT).}
Consider a $N {\times} N$ grid world with a robot and a janitor, where both move in turns, starting with the robot.%
\footnote{\color{cavcolor}Our model is based on a \href{https://www.prismmodelchecker.org/casestudies/robot.php}{\PRISM model} and a \href{https://github.com/sjunges/gridworld-by-storm/blob/master/gridstorm/models/files/evade-two-mdp.nm}{Gridworld-By-Storm model}.}
The robot starts in the lower left corner, the janitor in the upper right corner.
The robot has fixed the following strategy: It moves right until it reaches the lower right corner, then moves up until it reaches the upper right corner, its target.
The janitor can hinder the robot from reaching its target by occupying a cell that the robot wants to move to. 
We now want to check whether this strategy is approximately robust against adversarial behavior by the janitor, 
in the sense that the probability of reaching the target should be approximately independent of the janitor strategy.
Formally,
\[
    \forall \sched_1 , \sched_2 .\ \Pr^{\sched_1}_s(\Finally \{t\}) ~\approx_{10^{-5}}~ \Pr^{\sched_2}_s(\Finally \{t\})
    ~,
\]
where $s$ represents the initial state with robot and janitor in their initial locations, and
$t$ represents that the robot has reached its target.
We check this property for different grid sizes $N$ and different starting positions for the janitor:
If the janitor starts in the upper right corner, they cannot hinder the robot, but if they start in location $(N, N{-}1)$, then they can always stop the robot. 

By \cref{th:MD_suffice}\ref{item:independent}, MD schedulers suffice here. 
However, \HyperProb does not natively support this problem as it only allows a single scheduler quantifier and the property is trivially satisfied if we use a single scheduler quantifier for both probability operators.
}

\subparagraph*{Israeli-Jalfon (\IJ).} 
Recall Israeli \& Jalfon's self-stabilizing protocol and its asymmetric variant introduced in \cref{ex:buechi_motivating-ex},\footnote{We extend a \href{https://www.prismmodelchecker.org/casestudies/self-stabilisation.php\#ij}{PRISM model}.} and that we are interested in the following conjunction of relational B\"uchi properties:
\begin{align*}
    \bigwedge_{\statei \in \states} 
    \bigwedge_{i=1}^{N-1} 
    \forall \sched \in \Scheds .\ 
    \Pr^{\sched}_{\statei}(\Globally \Finally Q_i) = \Pr^{\sched}_{\statei}(\Globally \Finally Q_{i+1})
    ~.
\end{align*}
Here, we benchmark the conjunct where $\statei$ is the state where all processes have a token\footnote{This state does not have any incoming transitions and can reach all other states.} and $i=N{-}1$ for all tools for a varying number of processes $N$.
Observe that memoryless deterministic schedulers suffice here due to \cref{cor:buechi_MD_suffice}.


\subparagraph*{Results.}
The results for both case studies for relational reachability properties, \VN and \RT, are presented in \cref{tab:eval_new}.
{\color{cavcolor}
Firstly, we observe that the quality of the coin simulation decreases with a growing number of bits $N$: While approximate equality with $\epsilon{=}0.1$ holds for $N{=}10$, it does not hold anymore for $N{=}100$.
}%
Further, our tool handles 1 million states in $6s$ for \RT but times out for \VN instances of a quarter the size.
{\color{cavcolor}
Besides the size of the state space, the computation time of expected rewards also depends on structural aspects, e.g. many cycles (as in \VN) cause slower convergence of the approximation algorithm for expected rewards.
}%
We note that our prototype performs worse on \VN than our previous prototype from~\cite{gerlachEfficientProbabilistic2025}, even though the implementation for solving relational reachability properties is the same in both versions. 
We therefore attribute the decreased performance to changes in \storm: For \cite{gerlachEfficientProbabilistic2025} we used \storm 1.9.0, for the current evaluation we use \storm 1.12.0, the most recent version. 
{\color{cavcolor}%
Still, on problem instances that are also supported by \HyperProb or \HyperPaynt, our tool performs drastically better: \HyperProb times out already for \VN with $N{=}10$, and \HyperPaynt for \VN with $N{=}100$ while our tool solves these instances in a matter of seconds.
}

\cref{tab:eval_buechi} shows the results for the case study for relational B\"uchi properties, \IJ.
Recall that \HyperPaynt does not support B\"uchi objectives.
Our tool handles roughly $2\,000\,000$ states in around $35$ minutes, only less than $1$ minute of which consist of the actual computation time excluding model building.
It times out while building the model for over $4\,000\,000$ states, while \HyperProb times out for roughly $500$ states already.

\input{eval/eval_buechi_tab}

%% file: eval/eval_reach_new_tab.tex
\begin{table}[t] 
    \centering
    \caption{
        \color{cavcolor}
        Experimental results for relational reachability case studies.
        \emph{HR res.}/\emph{MD res.}: Does the universally quantified property hold over general (HR)/MD schedulers? 
        \emph{$\equiv?$}: Is checking the property over general scheduler equivalent to checking over MD schedulers?
        \emph{Time} is rounded to the nearest second, with the exception of values ${<}1s$ which are denoted as such.
        For every tool, we give the total time and, in brackets,
        (1) for our tool: the total time excluding model building,  
        (2) for \HyperProb: z3 solving time, 
        (3) for \HyperPaynt: the reported `synthesis time'. 
        \TO: Timeout (1 hour). \OOM: Out of memory.
    }
    \label{tab:eval_new}
    \adjustbox{max width=\textwidth}{  
    \begin{tabular}{l l l r || c | r >{\color{gray}}r || c || c | r >{\color{gray}}r | r >{\color{gray}}r}
        \toprule
         \multicolumn{4}{c||}{Case study} & \multicolumn{3}{c||}{Our tool} & \multirow{2}{*}{$\equiv$?} & \multicolumn{5}{c}{Comparison \textbf{over MD sched.}} \\
         & \multicolumn{2}{c}{Variant} & \makecell{$|\states|$} & HR res. & \multicolumn{2}{c||}{Time} & &
         MD res. & 
         \multicolumn{2}{c|}{\HyperProb} & \multicolumn{2}{c}{\HyperPaynt}
         \\
         \midrule 
         \multirow{10}{*}{\VN} 
         & $N{=}1$ & $\epsilon{=}0$ & 5 
         & No & ${<}1s$ & (${<}1s$)
         & \multirow{10}{*}{\makecell{$\equiv$ \\ Cor.~\ref{th:MD_suffice}}}
         &  No
         & ${<}1s$ & (${<}1s$)
         &  ${<}1s$ & (${<}1s$)
         \\
         & $N{=}1$ & $\epsilon{=}0.1$ & 5 
         & Yes & ${<}1s$ & (${<}1s$)
         &
         & Yes
         & ${<}1s$ & (${<}1s$)
         & ${<}1s$ & (${<}1s$)
         \\
         & $N{=}10$ & $\epsilon{=}0$ & 383 
         & No & ${<}1s$ & (${<}1s$)
         & & No
         & \TO & -
         & ${<}1s$ & (${<}1s$)
         \\
         & $N{=}10$ & $\epsilon{=}0.1$ & 383 
         & No & ${<}1s$ & (${<}1s$)
         & & No
         & \TO & -
         & ${<}1s$ & (${<}1s$)
         \\
         & $N{=}100$ & $\epsilon{=}0$ & $39\,803$ 
         & No & $11s$ & ($10s$)
         & & -
         & \TO & -
         & \TO & -
         \\
         & $N{=}100$ & $\epsilon{=}0.1$ & $39\,803$ 
         & No & $10s$ & ($10s$)
         & & -
         & \TO & -
         & \TO & -
         \\
         & $N{=}200$ & $\epsilon{=}0$ & $159\,603$
         & No & $1\,761s$ & ($1\,761s$)
         & & -
         & \OOM & -
         & \TO  & -
         \\
         & $N{=}200$ & $\epsilon{=}0.1$ & $159\,603$
         & No & $1\,719s$ & ($1\,718s$)
         & & -
         & \OOM & -
         & \TO & -
         \\
         & $N{=}250$ & $\epsilon{=}0$ & $249\,503$
         & - & \TO & -
         & & -
         & \OOM & -
         & \TO  & -
         \\
         & $N{=}250$ & $\epsilon{=}0.1$ & $249\,503$
         & - & \TO & -
         & & -
         & \OOM & -
         & \TO & -
         \\
         \midrule
         \multirow{8}{*}{\RT} 
         & $N{=}10$ & $(N,N)$ & $933$ 
         & Yes & ${<}1s$ & (${<}1s$)
         & \multirow{8}{*}{\makecell{$\equiv$ \\ Cor.~\ref{th:MD_suffice}}}
         & Yes
         & \multicolumn{2}{c|}{\multirow{8}{*}{n/a}}
         & ${<}1s$ & (${<}1s$)
         \\
         & $N{=}10$ & $(N,N{-}1)$ & $1\,021$ 
         & No & ${<}1s$ & (${<}1s$)
         & & No
         & & 
         & ${<}1s$ & (${<}1s$)
         \\
         & $N{=}100$ & $(N,N)$ & $994\,803$ 
         & Yes & $7s$ & ($1s$)
         & & -
         & & 
         & \TO & -
         \\
         & $N{=}100$ & $(N,N{-}1)$ & $1\,004\,701$ 
         & No & $6s$ & (${<}1s$)
         & & -
         & & 
         & \TO & -
         \\
         & $N{=}200$ & $(N,N)$ & $7\,979\,603$
         & Yes & $53s$ & ($12s$)
         & & -
         & &
         & \TO & - 
         \\
         & $N{=}200$ & $(N,N{-}1)$ & $8\,019\,401$
         & No & $48s$ & ($8s$)
         & & -
         & &
         & \TO & -
         \\
         & $N{=}300$ & $(N,N)$ & $26\,954\,403$
         & - & \OOM & -
         & & -
         & &
         & \OOM & -
         \\
         & $N{=}300$ & $(N,N{-}1)$ & $27\,044\,101$
         & - & \OOM & -
         & & -
         & &
         & \OOM & -
         \\
         \bottomrule
    \end{tabular}
    }
\end{table}

%% file: eval/eval_buechi_tab.tex
\begin{table}[t] 
    \centering
    \caption{
        Experimental results for relational B\"uchi case study.
        The meaning of columns and abbreviations is the same as in \cref{tab:eval_new}.
    }
    \label{tab:eval_buechi}
    \adjustbox{max width=\textwidth}{  
    \begin{tabular}{l l l r || c | r >{\color{gray}}r || c || c | r >{\color{gray}}r | r >{\color{gray}}r}
        \toprule
         \multicolumn{4}{c||}{Case study} & \multicolumn{3}{c||}{Our tool} & \multirow{2}{*}{$\equiv$?} & \multicolumn{5}{c}{Comparison \textbf{over MD sched.}} \\
         & \multicolumn{2}{c}{Variant} & \makecell{$|\states|$} & HR res. & \multicolumn{2}{c||}{Time} & &
         MD res. & 
         \multicolumn{2}{c|}{\HyperProb} & \multicolumn{2}{c}{\HyperPaynt}
         \\
         \midrule 
         \multirow{10}{*}{\IsraeliJalfon} 
         & $N{=}3$ & asym. & $4$
         & No & ${<}1s$ & (${<}1s$)
         & \multirow{10}{*}{\makecell{$\equiv$ \\ Cor.~\ref{cor:buechi_MD_suffice}}}
         & No
         & ${<}1s$ & (${<}1s$)
         &  \multicolumn{2}{c}{\multirow{10}{*}{n/a}}
         \\
         & $N{=}3$ & orig. & $7$
         & Yes & ${<}1s$ & (${<}1s$)
         & 
         & Yes
         & ${<}1s$ & (${<}1s$)
         & &
         \\
         & $N{=}10$ & asym. & $512$
         & No & ${<}1s$ & (${<}1s$)
         & 
         & -
         & \TO & -
         &  & 
         \\
         & $N{=}10$ & orig. & $1\,023$ 
         & Yes & ${<}1s$ & (${<}1s$)
         & 
         & -
         & \OOM & -
         &  & 
         \\
         & $N{=}15$ & asym. & $16\,384$
         & No & $2s$ & ($2s$)
         & 
         & -
         & \OOM & -
         &  & 
         \\
         & $N{=}15$ & orig. & $32\,767$
         & Yes & $5s$ & ($4s$)
         & 
         & -
         & \OOM & -
         &  & 
         \\
         & $N{=}20$ & asym. & $524\,288$
         & No & $186s$ & ($180s$)
         & 
         & -
         & \OOM &
         &  & 
         \\ 
        & $N{=}20$ & orig. & $1\,048\,575$
         & Yes & $559s$ & ($545s$)
         & 
         & -
         & \OOM &
         &  & 
         \\
         & $N{=}22$ & asym. & $2\,097\,152$
         & No & $2082s$ & ($2050s$)
         & 
         & -
         & \OOM &
         &  & 
         \\ 
        & $N{=}22$ & orig. & $4\,194\,303$
         & - & \TO & -
         & 
         & -
         & \TO &
         &  & 
         \\
         \bottomrule
    \end{tabular}
    }
 \end{table}

%% file: eval/eval_reach_comp.tex
{\color{cavcolor}
\subsection{Benchmarks for Probabilistic Hyperproperties}
\label{sec:hyperprob-benchmarks}

Next, we investigate the scalability of our tool on benchmarks from the literature on probabilistic hyperproperties that are relational reachability properties. These benchmarks are typically motivated by security use cases.
In particular, three out of four \HyperPCTL case studies for MDPs presented in~\cite{abrahamProbabilisticHyperproperties2020} are covered by our approach: 
    We consider (mild variations of) \TA, \PW, \TS from~\cite{abrahamProbabilisticHyperproperties2020}. 
    \TA and \PW check properties of the form
    \begin{align*}
        \textstyle\bigwedge_{\state_1,\state_2 \in \Init} \bigwedge_{i=0}^{M} \forall \sched_1, \sched_2 .\ \Pr^{\sched_1}_{\state_1}(\Finally T_i) = \Pr^{\sched_2}_{\state_2}(\Finally T_i)
        ~,
    \end{align*}
    where $\Init$ is a set of initial states.
    Here, we benchmark only the first conjunct (which can be falsified) for all tools.
    We consider two variations of \TA and \PW: \TAone and \PWone use only a single scheduler quantifier for both probability operators while \TAtwo and \PWtwo use two scheduler quantifiers, as in the original formulation~\cite{abrahamProbabilisticHyperproperties2020}.\footnote{\color{cavcolor}Note that both variants are equivalent over general schedulers.}
    The property for \TS is analogous, but with only a single scheduler quantifier and we fix a different pair of initial states for each instance. 
    Further, we consider \SD from \cite{andriushchenkoDeductiveController2023}, which checks 
    \[ 
        \forall \sched .\ \Pr^{\sched}_{s_1}(\Finally T) \geq \Pr^{\sched}_{s_2}(\Finally T) ~.
    \]
    Details on all four benchmarks can be found in \cref{app:hyperprob-benchmarks}, including the differences in our models to the models from \cite{abrahamProbabilisticHyperproperties2020,andriushchenkoDeductiveController2023}.

\input{eval/eval_reach_comp_tab}

\subparagraph*{Results.}
\cref{tab:eval} presents our 
experimental 
results on these four case studies and provides a comparison with \HyperProb and \HyperPaynt.
We observe that for all benchmarks in \cref{tab:eval}, the running time of our tool consists almost entirely of building the model.

For the benchmarks with a single scheduler quantifier (\TAone, \PWone, \TS, \SD), checking over general schedulers is, in general, not equivalent to checking over MD schedulers (see \cref{ex:oneschedtwostate-exists-greater-memory}). 
Notably, already instances with $100\,000$ states prove to be challenging over MD schedulers for \HyperProb and \HyperPaynt, while our tool solves these instances over general schedulers in over a minute and can handle instances of \TAone with 1 million states in less than 10 minutes.

For the benchmarks with two scheduler quantifiers (\TAtwo and \PWtwo), 
checking over MD schedulers is equivalent to checking over general schedulers since both probability operators are independent (\cref{th:MD_suffice}).%
\footnote{\color{cavcolor}Checking for two scheduler quantifiers on a manually created self-composition of the MDP with \HyperProb already exceeds memory bounds for \TA with $m{=}8$.}
Notably, our tool solves the instance for $M{=}24$ in over a minute while \HyperPaynt times out.
Our tool can solve instances of \TAtwo with more than 1 million states, but times out for the instance of \PWtwo with almost 1 million states.

In summary, our tool is orders of magnitude faster than existing tools, which restrict to MD schedulers but nevertheless solve an equivalent problem in some cases (\TAtwo, \PWtwo).
For problem instances belonging to fragments whose decision problems are \NP-hard over MD schedulers but in \PTIME over general schedulers, we show that solving the \NP-hard problem via SMT solving (\HyperProb) or an abstraction-refinement approach (\HyperPaynt) is also much harder in practice than solving the \PTIME problem.

}

%% file: eval/eval_reach_comp_tab.tex
\begin{table}[t] 
    \centering
    \caption{
        \color{cavcolor}
        Comparison on benchmarks for probabilistic hyperproperties.
        The meaning of columns and abbreviations is the same as in \cref{tab:eval_new}.
    }
    \label{tab:eval}
    \adjustbox{max width=\textwidth}{  
    \begin{tabular}{l l r || c | r >{\color{gray}}r || c || c | r >{\color{gray}}r | r >{\color{gray}}r}
        \toprule
         \multicolumn{3}{c||}{Case study} & \multicolumn{3}{c||}{Our tool} & \multirow{2}{*}{$\equiv$?} & \multicolumn{5}{c}{Comparison \textbf{over MD sched.}} \\
         & Variant & \makecell{$|\states|$} & HR res. & \multicolumn{2}{c||}{Time} & &
         MD res. & 
         \multicolumn{2}{c|}{\HyperProb} & \multicolumn{2}{c}{\HyperPaynt} \\
         \midrule 
         \multirow{5}{*}{\TAone} 
         & $M{=}8$ & $423$
         & No & ${<}1s$  & (${<}1s$)  
         & \multirow{5}{*}{\makecell{$\not\equiv$ \\ Ex~\ref{ex:oneschedtwostate-exists-greater-memory}}}
         & No 
         & $3\,055s$ & ($2\,858s$) 
         & ${<}1s$  & (${<}1s$)  
         \\
         & $M{=}16$ & $13\,039$ 
         & No & ${<}1s$  & (${<}1s$) 
         & & No 
         & \OOM & - 
         & $21s$ &  ($1s$)
         \\
         & $M{=}24$ & $307\,175$ 
         & No & $25s$ & (${<}1s$)
         & & - 
         & \OOM & - 
         & \TO & -
         \\
         & $M{=}28$ & $1\,425\,379$ 
         & No & $556s$ & (${<}1s$)
         &  & -
         & \OOM & - 
         & \TO & -
         \\
         & $M{=}32$ & $6\,488\,031$
         & -  & \TO & -
         & & -
         & \TO &
         & \TO & -
         \\
         \midrule
         \multirow{5}{*}{\TAtwo} 
         & $M{=}8$ & 423 
         & No & ${<}1s$ & (${<}1s$) 
         &\multirow{5}{*}{\makecell{$\equiv$ \\ Cor.~\ref{th:MD_suffice}}}
         & No 
         & \multicolumn{2}{c|}{\multirow{5}{*}{n/a}}  
         & ${<}1s$ & (${<}1s$)
         \\
         & $M{=}16$ & $13\,039$ 
         & No & ${<}1s$ & (${<}1s$) 
         & & No 
         &  &  
         & $115s$ & ($2s$)
         \\
         & $M{=}24$ & $307\,175$ 
         & No & $23s$ & (${<}1s$) 
         & & - 
         &  &  
         & \TO & - 
         \\
         & $M{=}28$ & $1\,425\,379$ 
         & No & $560s$ & (${<}1s$)
         & & -
         & & 
         & \TO & -
         \\
         & $M{=}32$ &  $6\,488\,031$
         & - & \TO & -
         & & -
         &  &
         & \TO & -
         \\
         \midrule
         \multirow{2}{*}{\PWone} 
         & $M{=}2$ & $2\,307$ 
         & No & ${<}1s$ & (${<}1s$) 
         & \multirow{2}{*}{ \makecell{$\not\equiv$ \\ Ex~\ref{ex:oneschedtwostate-exists-greater-memory}} }
         & No 
         & \OOM & - 
         & ${<}1s$ & (${<}1s$) 
         \\
         & $M{=}4$ & $985\,605$ 
         & - & \TO & - 
         &  & -
         & \TO & - 
         & \TO & -
         \\
         \midrule
         \multirow{2}{*}{\PWtwo} 
         & $M{=}2$ & $2\,307$ 
         & No & ${<}1s$ & (${<}1s$)  
         & \multirow{2}{*}{ \makecell{$\equiv$ \\ Cor.~\ref{th:MD_suffice}} }
         & No 
         & \multicolumn{2}{c|}{\multirow{2}{*}{n/a}} 
         & $5s$ & (${<}1s$)
         \\
         & $M{=}4$ & $985\,605$ 
         & - & \TO & - 
         & & - 
         & &  
         & \TO  & -
         \\
         \midrule
         \multirow{5}{*}{\TS} 
         & $h{=}(10,20)$ & 252 
         & No & ${<}1s$ & (${<}1s$) 
         & \multirow{5}{*}{ \makecell{$\not\equiv$ \\ Ex~\ref{ex:oneschedtwostate-exists-greater-memory}} }
         & No 
         & $112s$ & ($38s$)
         & ${<}1s$ & (${<}1s$)
         \\
         & $h{=}(20,200)$ & $2\,412$ 
         & No & ${<}1s$ & (${<}1s$) 
         &  & No
         & \OOM & - 
         & $2s$ & (${<}1s$)
         \\
         & $h{=}(20,5\,000)$ & $60\,012$ 
         & No & ${<}1s$ & (${<}1s$) 
         &  & No
         & \OOM & -
         & $1\,076s$ & ($20s$)
         \\
         & $h{=}(50,10\,000)$ & $120\,012$ 
         & No & ${<}1s$ & (${<}1s$) 
         &  & -
         & \OOM & -
         & \TO & -
         \\
         & $h{=}(50,20\,000)$ & $240\,012$ 
         & No & ${<}1s$ & (${<}1s$) 
         &  & 
         & \OOM & -
         & \TO & -
         \\
         \midrule 
         \multirow{7}{*}{\SD} 
         & simple & 10 
         & No & ${<}1s$ & (${<}1s$)  
         & \multirow{7}{*}{ \makecell{$\not\equiv$ \\ Ex~\ref{ex:oneschedtwostate-exists-greater-memory}} }
         & Yes 
         & $3s$  & (${<}1s$) 
         & $2s$ & ($2s$)
         \\
         & splash-1 & 16 
         & No & ${<}1s$ & (${<}1s$) 
         &  & No 
         & $1\,184s$  & ($1\,183s$) 
         & ${<}1s$ & (${<}1s$) 
         \\
         & splash-2 & 25 
         & No & ${<}1s$ & (${<}1s$) 
         & & Yes 
         & \TO & - 
         & $2\,487s$ & ($2\,485s$)
         \\
         & larger-1 & 25 
         & No & ${<}1s$ & (${<}1s$) 
         & & No 
         & \TO  & - 
         & $310s$ & ($310s$)
         \\
         & larger-2 & 25 
         & No & ${<}1s$ & (${<}1s$) 
         & & Yes 
         & \TO & - 
         & $925s$ & (${<}1s$)
         \\
         & larger-3 & 25 
         & No & ${<}1s$ & (${<}1s$)
         & & No 
         & \TO & - 
         & ${<}1s$ & (${<}1s$)
         \\
         & train & 48 
         & No & ${<}1s$ & (${<}1s$)
         & & Yes 
         & \TO & - 
         & $14s$ & ($14s$)
         \\
         \bottomrule
    \end{tabular}
    }
\end{table}

%% file: related-work.tex
\section{Related Work}
\label{sec:related-work}

{\color{cavcolor}
We discuss two main areas of related work: \emph{multi-objective MDP model checking} and \emph{probabilistic hyperlogics}.}%

\subparagraph*{Multi-objective model checking.}%
\label{sec:related_moa}%
The techniques and results of this paper are closely related to \emph{multi-objective model checking} (MOMC), a well-established area of research in probabilistic verification for MDPs~\cite{chatterjeeMarkovDecision2006,chatterjeeMarkovDecision2007,etessamiMultiObjectiveModel2008,quatmannVerificationMultiobjective2023}.

A key problem within MOMC is \emph{multi-objective achievability} (MOA; see also \Cref{sec:moa}).
For reachability, the basic MOA question is:
\emph{Is there a scheduler such that target set $A$ is reached with probability \emph{at least} $\lambda_A$ \emph{and} target set $B$ is reached with probability \emph{at least} $\lambda_B$?}
Our relational properties, as defined in \cref{sec:problem-statement}, subsume MOA queries with reachability, safety, Büchi, and coBüchi objectives.
This subsumption is strict:
For instance, no MOA query---nor any other MOMC formalism that we are aware of---can express the prototypical \RelReach property that a scheduler reaches $A$ and $B$ with \emph{equal} probabilities.

Similar to our results for \RelReach, MOA with a fixed number of reachability objectives is \NP-complete over MD schedulers~\cite{quatmannVerificationMultiobjective2023} and in \PTIME over general schedulers~\cite{etessamiMultiObjectiveModel2008}.
Moreover, prominent algorithms for MOA~\cite{forejtParetoCurves2012} also rely on optimizing weighted sums of probabilities, just like our algorithm for \RelReach.
The probabilistic model checking tools \PRISM~\cite{kwiatkowskaPRISM402011} and \storm~\cite{henselProbabilisticModel2022} support various types of MOA queries.
Due to the algorithmic similarities, we were able to reuse some of \storm's MOA capabilities in our implementation for \RelReach.

The relation of our work to MOA goes further:
Our solution for \ConjRelReach (see \Cref{sec:conjunction}) reduces the problem to MOA for total expected reward objectives.
The reduction constructs queries with both \emph{positive and negative} rewards, and with \emph{strict and non-strict} comparison operators, including $\not\approx_\epsilon$ and $\neq$.
However, to the best of our knowledge, this particular type of MOA query is not covered by existing work:
\begin{itemize}
    \item Etessami et al.~\cite{etessamiMultiObjectiveModel2008} address only $\omega$-regular objectives, including reachability as a special case, and comparison operators $<$, $\leq$, $=$, $\geq$, and $>$.
    \item Forejt et al.~\cite{forejtQuantitativeMultiobjective2011} extend~\cite{etessamiMultiObjectiveModel2008} to total expected reward with non-negative action-based payoffs.
    \item Forejt, Kwiatkowska, and Parker~\cite{forejtParetoCurves2012} study MOA for reachability or total expected reward, again with non-negative reward functions only, and non-strict comparison operators;
their method explores the Pareto curve using value iteration.
    \item The previous approach~\cite{forejtParetoCurves2012} was extended later to either entirely positive or entirely negative reward functions, as well as strict and non-strict comparison operators~\cite{quatmannMultiobjectiveModel2016,quatmannMarkovAutomata2017}.
    \item Quatmann~\cite{quatmannVerificationMultiobjective2023} treats MOA queries for general objectives expressed as measurable functions---this includes total expected reward objectives with mixed rewards---but restricts to the comparison operator $\geq$.
\end{itemize}
Since the specific MOA problem arising from our reduction is not covered by the literature, we present a dedicated solution in \cref{sec:moa}.
Our approach relies on the techniques presented in~\cite{etessamiMultiObjectiveModel2008,forejtQuantitativeMultiobjective2011} and is tailored to the specific structure of the MDPs constructed during the reduction.

Beyond MOA, our relational properties overlap with the \emph{multi-dimensional percentile queries} studied in~\cite{randourPercentileQueries2017} (also see \cref{remark:buechi_sim-a-s}), but neither subsumes the other:
We allow weighted sums with real-valued coefficients, whereas percentile queries cover several \emph{payoff functions} such as $\inf$, $\sup$, and discounted sum.

Finally, somewhat tangential to MOMC, \emph{distributional properties} for MDPs have also been studied~\cite{DBLP:conf/lics/AkshayGV18}.
These properties can express, for example, that a scheduler induces a DTMC which, after a \emph{finite} number of steps, reaches a distribution in which two given target sets have equal probability~\cite{DBLP:conf/ijcai/0001CMZ24}.
In contrast, \RelReach only considers reachability in the limit.
A more thorough comparison is beyond the scope of this paper, but it appears to be a promising direction for future work.

{\color{cavcolor}
\subparagraph*{Probabilistic hyperlogics.}
Our paper was motivated by recent emerging interest in algorithms for probabilistic
hyperlogics, in particular \HyperPCTL~\cite{abrahamHyperPCTLTemporal2018,abrahamProbabilisticHyperproperties2020} and
\PHL~\cite{dimitrovaProbabilisticHyperproperties2020}.
}%
The class of relational formulas defined in~\cref{sec:problem-statement} is 
{\color{cavcolor}
a strict fragment of \HyperPCTL.\footnote{Recall that repeated reachability probabilities are PCTL-definable (see, e.g.,~\cite[Th.~10.47]{baierPrinciplesModel2008}).}
While \PHL cannot naturally compare probabilities from different initial states, every MDP $\mdp$ and relational property $\varphi$ can be transformed to an MDP $\mdp'$ and \PHL formula $\varphi'$ s.t.\ $\mdp$ satisfies $\varphi$ iff $\mdp'$ satisfies $\varphi'$ over general schedulers (by making a copy of the MDP for every state-scheduler combination).
\HyperPCTL and \PHL both can express properties like \emph{Does there exist a scheduler such that all paths are trace-equivalent almost-surely?}, which relational formulas does not cover.
Due to their high expressiveness, the corresponding model-checking problems are undecidable.
This paper contributes two main points to the study of probabilistic hyperlogics.
First, searching for randomized and memoryful schedulers may be beneficial complexity-wise, both in theory and in practice.
Second, many of the motivating case studies for probabilistic hyperlogics can also be treated by dedicated and therefore much more efficient routines, which also motivated the use of an AR-loop in~\cite{andriushchenkoDeductiveController2023}.
However, in \cite{andriushchenkoDeductiveController2023} a \emph{search} over a finite amount of (MD) schedulers is suggested, while we study the computational complexity and consider an algorithm for general (and thus uncountably many) schedulers.
While \uppaalsmc is not a tool for probabilistic hyperproperties, it also supports \emph{statistical} model checking for comparison of two cost-bounded reachability probabilities \emph{on DTMCs}~\cite{davidTimeStatistical2011}.
}

%% file: conclusion.tex
\section{Conclusion and Future Work}
\label{sec:conclusion}

The key insights for solving relational properties are as follows:
\begin{enumerate}
    \item \RelReach can be reduced to expected reward computations on the goal unfoldings with respect to the relevant target sets for each state-scheduler combination.

    \item \RelBuechi can be reduced to \RelReach on (a slight variation of) the MEC quotient.

    \item \MORelReach can be reduced to multi-objective achievability queries with total expected reward objectives.
\end{enumerate}

\cref{tab:complexity_overview_full} compares our complexity results for the three fragments over general schedulers; note that \cref{th:buechi_general_PTIME} covers more general fragments of \RelBuechi than those presented here.
We conjecture that we can extend our approach to multi-objective relational B\"uchi properties by combining the techniques for solving multi-objective relational reachability properties with those for solving relational B\"uchi properties.
We are currently working on extending our implementation to include the approach for solving \MORelReach properties, more specifically for universally quantified disjunctive relational reachability properties using \storm's MOA capabilities.

\begin{table}[t]
    \caption{Comparison of complexity results for relational properties over general (memoryful, randomized) schedulers, where $\numconj_{\not\approx}$ is the number of predicates in the property with comparison operator $\not\approx_\epsilon$ with $\epsilon>0$.
    }
    \label{tab:complexity_overview_full}
    \setlength\tabcolsep{0pt}
    \centering
    \begin{tabular*}{\linewidth}{@{\extracolsep{\fill}} l  l l l }
        \toprule
        Fragment &  \RelReach & \RelBuechi & \ConjRelReach
        \\
        \midrule
        no restrictions 
        & \makecell[l]{\PSPACE-hard, \\ in \EXPTIME [Th.~\ref{th:general_EXPTIME}]}
        & \makecell[l]{strongly \NP-complete \\ {[Th.~\ref{th:buechi_NP-complete}]}}
        & \makecell[l]{\PSPACE-hard, \\ in \EXPTIME [Th.~\ref{th:conj_exptime}]}
        \\
        \midrule 
        $|\{T_{1,1}, \ldots, T_{\numsum,\numconj}\}|$ is const.
        & \PTIME [Th.~\ref{th:general_PTIME}\ref{item:fixed-param}]
        & \PTIME [Th.~\ref{th:buechi_general_PTIME}\ref{item:buechi_fixed-param}]
        & \makecell[l]{strongly \NP-complete; \\ \PTIME if $\numconj_{\not\approx} = 0$ \\ {[Th.~\ref{th:conj_special}\ref{item:conj_fixed-param}]} }
        \\
        \midrule
        all targets absorbing
        & \PTIME [Th.~\ref{th:general_PTIME}\ref{item:absorbing}]
        & \PTIME [Th.~\ref{th:buechi_general_PTIME}\ref{item:buechi_absorb}]
        & \makecell[l]{strongly \NP-complete; \\ \PTIME if $\numconj_{\not\approx} = 0$ \\ {[Th.~\ref{th:conj_special}\ref{item:conj_absorb}]} }
        \\
        \midrule
        $n = m \cdot l$
        & \PTIME [Th.~\ref{th:general_PTIME}\ref{item:independent}]
        & \PTIME [Th.~\ref{th:buechi_general_PTIME}\ref{item:buechi_single-target}]
        & \PTIME [Th.~\ref{th:conj_special}\ref{item:conj_independent}]
        \\
        \bottomrule
    \end{tabular*}
\end{table}

\subparagraph*{Future Work.}
We are interested in how much we can extend this subclass of \HyperPCTL while keeping the model-checking (decidable and even) tractable.
Since we can already handle B\"uchi and coB\"uchi conditions, an extension to Rabin conditions, i.e, disjunctions over pairs of a B\"uchi and a coB\"uchi condition, seems natural, thus extending our approach towards relational \emph{$\omega$-regular} properties. 

Another step towards more expressive fragments would be 
relational properties with \emph{scheduler quantifier alternations}. 
Observe that, for example, 
\begin{align*}
    & \exists \sched_1 \forall \sched_2 .\ \Pr^{\sched_1}_{\state_1}(\Finally T_1) = \Pr^{\sched_2}_{\state_2}(\Finally T_2)
    \\ \Iff & 
    \min_{\sched_2} \Pr^{\sched_2}_{\state_2}(\Finally T_2) = \max_{\sched_2} \Pr^{\sched_2}_{\state_2}(\Finally T_2) \in [\min_{\sched_1} \Pr^{\sched_1}_{\state_1}(\Finally T_1), \max_{\sched_1} \Pr^{\sched_1}_{\state_1}(\Finally T_1)]
\end{align*}
and
\begin{align*}
    & \forall \sched_1 \exists \sched_2 .\ \Pr^{\sched_1}_{\state_1}(\Finally T_1) = \Pr^{\sched_2}_{\state_2}(\Finally T_2)
    \\ \Iff & 
    \Big(\min_{\sched_1} \Pr^{\sched_1}_{\state_1}(\Finally T_1) \geq \min_{\sched_2} \Pr^{\sched_2}_{\state_2}(\Finally T_2) \Big)
    \land 
    \Big(\max_{\sched_1} \Pr^{\sched_1}_{\state_1}(\Finally T_1) \leq \max_{\sched_2} \Pr^{\sched_2}_{\state_2}(\Finally T_2) \Big)
    ~.
\end{align*}
Thus, properties of such forms can be solved by only minimizing and maximizing some expected reward, similarly to the case without quantifier alternation.
However, it is not immediately clear whether this generalizes to an arbitrary number of alternations.
In particular, one interesting question is whether the dependencies between the quantifiers are relevant: For a property of the form $\forall \sched_1 \exists \sched_2 .\ \phi$, we may choose a different $\sched_2$ for each $\sched_1$ but, as illustrated above, we can solve such a property without considering the dependency between the two scheduler choices.

Further possible avenues for future work are to
allow nested probability operators,
relate properties that require \emph{trace equivalence},
consider relational \emph{expected reward} or \emph{long run average} properties
or models like \emph{partially observable MDPs} with restricted scheduler classes.

%% file: appendix/reach_full-algo.tex
\section{Full Algorithm for \RelReach (\cref{sec:reach_algo})}
\label{app:reach_full-algo}

The algorithms in this section are taken verbatim from \cite{extendedVersion}.
\textcolor{cavcolor}{
\cref{alg:linear_general} outlines the procedure presented in \cref{sec:reach_algo} for arbitrary comparison operators, including the handling of approximate computations using a black-box for approximate expected reward computations~\cref{alg:black-box}.
}

\begin{algorithm}[h]
    \caption{\color{cavcolor}Approximate expected reward (black box)~\cite{quatmannSoundValue2018},\cite[Alg.~4.6]{quatmannVerificationMultiobjective2023}}
    \Input{%
        MDP $\mdp=\mdptup$,
        reward function $\rew \colon \states \to \Q$,
        initial state $s \in \states$,
        \emph{attracting} set $\attractor \subseteq \states$ (i.e., $\forall \sigma \in \Scheds[\mdp] .~ \Pr_s(\Finally \attractor) = 1$),
        absolute tolerance $\tau \geq 0$ (use $\tau = 0$ for exact computation) 
    }
    \Output{
        $\lb{v}, \ub{v} \in \Q$ such that $\lb{v} \leq \max_{\sched \in \Scheds[\mdp]}\Expected_s^{\sched}(\rew^{\Finally \attractor}) \leq \ub{v}$ and $\ub{v} - \lb{v} \leq \tau$
    }
    \label{alg:black-box}
\end{algorithm}

\begin{algorithm}[p]
    \caption{\color{cavcolor}Efficient solution of \RelReach} 
    \label{alg:linear_general}
    \Input{%
        Tolerance $\tau \geq 0$ ($\tau = 0$ yields exact computation),
        MDP $\mdp = \mdptup$ and a \RelReach property
        \\ \medskip 
        \phantom{Input: } 
        $\displaystyle \quad \exists \sched_1, \ldots, \sched_n \in \Scheds .~ \sum_{i=1}^{m} q_i \cdot \Pr^{\sched_{k_i}}_{\state_{i}}(\Finally T_i) \approx_\epsilon q$
        \DontPrintSemicolon\tcp*{See \cref{prob:relreach}}
        \medskip 
    }
    \Output{Whether the property is true in $\mdp$, or \enquote{inconclusive} (the latter can only happen if $\tau > 0$)}
    \tcp{Step 1: Loop over all state-scheduler combinations:}
    \For{$c = (s,\sched) \in \comb = \{ (s_i, \sched_{k_i}) \mid i=1, \ldots, m\}$}{
        \tcp{Step 2: Unfold and define reward structures:}
        $\indexc \gets$ goal unfolding of $\mdp$ w.r.t.\ target sets for $c$ \tcp*{See \Cref{def:goalUnfolding}}
        $s_{c} \gets $ the state $(s,\emptyset)$ in $\indexc$ \tcp*{Just for readability}
        $\indexc[\rew] \gets$ reward structure on $\indexc$ for $c$ \tcp*{See \Cref{def:rewForComb}}
        \tcp{Step 3: Compute (or approximate) expected rewards:}
        $\mdp_c', \state_\bot \gets$ MEC-quotient of $\mdp_c$ and its absorbing state \;
        $\indexc[\lb{v}]^{\max}, \indexc[\ub{v}]^{\max} \gets \mathsf{AprxExRew}(\indexc[\mdp'],\indexc[\rew],s_{c},\{\state_\bot\}, \tau)$  \tcp*{using the black box} 
        \If{$\comp~\in \{ \approx_\epsilon, \not\approx_\epsilon \}$}{
            \tcp{Compute $\min$-expected reward by flipping signs}
            $\indexc[\ub{v}]^{\min}, \indexc[\lb{v}]^{\min} \gets (-1) \cdot \mathsf{AprxExRew}(\indexc[\mdp'],-\indexc[\rew],s_{c},\{\state_\bot\}, \tau)$ \;
        }
    }
    \tcp{Step 4: Aggregate results from individual state-scheduler combinations and check appropriate conditions (depending on $\comp$):}
    $\lb{v}^{\max} \gets \sum_{c \in \comb} \indexc[\lb{v}]^{\max} \,;~~ \ub{v}^{\max}  \gets \sum_{c \in \comb} \indexc[\ub{v}]^{\max}$ \;
    \If{$\comp~\in \{ >, \geq \}$}{
        \lIf{\hspace{7.2mm}$\lb{v}^{\max} \comp q$}{
            \Return{true} \label{line:linear_general_return_first}
        }
        \lElseIf{$\ub{v}^{\max} \not\comp q$}{
            \Return{false}
        }
        \lElse{\Return{\enquote{inconclusive}}}
    }
    \ElseIf{$\comp~\in \{ \approx_\epsilon, \not\approx_\epsilon \}$}{
        $\lb{v}^{\min} \gets \sum_{c \in \comb} \indexc[\lb{v}]^{\min} \,;~~ \ub{v}^{\min}  \gets \sum_{c \in \comb} \indexc[\ub{v}]^{\min}$ \;
        \If{$\comp$ is $\approx_\epsilon$}{
            \lIf{\hspace{7.2mm}$q \in [\ub{v}^{\min} - \epsilon, \lb{v}^{\max} + \epsilon]$}{
                \Return{true}
            }
            \lElseIf{$q \notin [\lb{v}^{\min} - \epsilon, \ub{v}^{\max} + \epsilon]$}{
                \Return{false}
            }
            \lElse{\Return{\enquote{inconclusive}}}
        }
        \ElseIf{$\comp$ is $\not\approx_\epsilon$}{
            \lIf{\hspace{7.2mm}$q \notin [\lb{v}^{\max} - \epsilon, \ub{v}^{\min} + \epsilon]$}{
                \Return{true}
            }
            \lElseIf{$q \in [\ub{v}^{\max} - \epsilon, \lb{v}^{\min} + \epsilon]$}{
                \Return{false}
                 \label{line:linear_general_return_last}
            }
            \lElse{\Return{\enquote{inconclusive}}}
        }
    }
\end{algorithm}

%% file: appendix/reach_algo_proofs.tex
{\color{cavcolor}
\section{Proofs for \cref{sec:reach_algo}}
\label{sec:reach_algo_proofs}

The proofs in this section are taken verbatim from \cite{extendedVersion}.

\subsection{Proof of \cref{thm:whyComb}}
\label{app:whyComb}

\whyComb*

\begin{proof}
    Quantifying over each state-scheduler combination individually is justified because of the following:
    For every pair of distinct combinations $c = (s,\sched), c' = (s',\sched') \in \comb$ it holds that either the scheduler variables are already distinct (i.e., $\sched\neq\sched'$), or else $\sched = \sched'$ and $s \neq s'$.
    In the latter case, since \emph{memoryful schedulers may depend on the initial state}, we can instead quantify over two different schedulers.
    More formally, the following is true in any MDP $\mdp$ with states $s \neq s'$ (and target sets $T, T'$):
    For every $\sched,\sched' \in \Scheds[\mdp]$ there exists a (memoryful) $\hat{\sched}\in \Scheds[\mdp]$  such that 
    $\Pr^{\sched}_{s}(\Finally T) = \Pr^{\hat{\sched}}_{s}(\Finally T)$ and $\Pr^{\sched'}_{s'}(\Finally T') = \Pr^{\hat{\sched}}_{s'}(\Finally T')$.
\end{proof}

\subsection{Proof of \cref{{thm:weightedReachProbsAsExpectedRew}}}
\label{app:weightedReachProbsAsExpectedRew}

\weightedReachProbsAsExpectedRew*

\begin{proof}
    Let $c = (s, \textcolor{gray}{\sched}) \in \comb$ and $\opt 
    \in \{\min,\max\}$.
    By construction, we have
    \[
        \opt_{\sched \in \Scheds[\mdp]} \sum_{j \in \relInd(c)} q_j \cdot \Pr_{s}^{\mdp,\sched}(\Finally T_j) 
        = 
        \opt_{\sched \in \Scheds[\indexc]} \sum_{j \in \relInd(c)} q_j \cdot \Pr_{s_c}^{\indexc,\sched}(\Finally T_j) 
        ~.
    \]
    Since $\rew_{T_j}$ collects reward 1 only on the first visit to $T_j$, it further follows that 
    \[
        \opt_{\sched \in \Scheds[\indexc]} \sum_{j \in \relInd(c)} q_j \cdot \Pr_{s_c}^{\indexc,\sched}(\Finally T_j)
        = 
        \opt_{\sched \in \Scheds[\indexc]} \sum_{j \in \relInd(c)} q_j \cdot \Expected^{\indexc, \sched}_{s_c}(\rew_{T_j}) 
        ~.
    \]
    By linearity of expectations, it holds that
    \begin{align*}
        \opt_{\sched \in \Scheds[\indexc]} \sum_{j \in \relInd(c)} q_j \cdot \Expected^{\indexc, \sched}_{s_c}(\rew_{T_j}) 
        &= 
        \opt_{\sched \in \Scheds[\indexc]} \Expected^{\indexc, \sched}_{s_c}\left(\sum_{j \in \relInd(c)} q_j \cdot \rew_{T_j}\right) 
    \end{align*}
    and finally
    \begin{align*}
        \opt_{\sched \in \Scheds[\indexc]} \Expected^{\indexc, \sched}_{s_c}\left(\sum_{j \in \relInd(c)} q_j \cdot \rew_{T_j}\right) 
        &= 
        \opt_{\sched \in \Scheds[\indexc]} \Expected^{\indexc, \sched}_{s_c}(\indexc[\rew]) 
        ~,
    \end{align*}
    since 
    $\indexc[\rew] = \sum_{T \in \mathcal T_c} (\sum_{j \in \{\relInd(c)\mid T = T_j\}} q_j) \cdot \rew_{T} = \sum_{j \in \relInd(c)} q_j \cdot \rew_{T_j}$.    
\end{proof}

\subsection{Proof of \cref{thm:exRewPtimeMD}}
\label{app:exRewPtimeMD}

\exRewPtimeMD*

\begin{proof}
    Existence of MD optimal schedulers is a well-known result, see, e.g.~\cite[Thm.~7.1.9]{putermanMarkovDecision1994}.
    To compute the expected rewards in polynomial time, we rely on \emph{linear programming} (LP)~\cite{karmarkarNewPolynomialtime1984}.
    To express the $\max$-expected reward as the optimal solution of an LP we follow \cite[Ch.~4]{quatmannVerificationMultiobjective2023} and first cast the expected total reward objective to an \emph{expected reachability reward} objective~\cite[Def.~2.27]{quatmannVerificationMultiobjective2023} in a certain modified MDP $\mdp_c'$, the so-called \emph{MEC-quotient} of $\mdp_c$.
    Intuitively, $\mdp_c'$ arises from $\mdp_c$ by collapsing its MECs $E$ into individual states $s_E$, and adding a fresh absorbing state $s_\bot$ with an ingoing transition from every $s_E$ to model the option of staying within $E$ forever.
    Since every scheduler eventually remains inside some MEC, the state $s_{\bot}$ is \emph{attracting} in $\mdp_c'$, i.e., $\forall \sched \in \Scheds[\mdp_c'].~ \Pr^\sched(\Finally \{s_\bot\}) = 1$.
    Next, observe that $\rew_c$ (\Cref{def:rewForComb}) assigns zero reward to all states inside the ECs of $\mdp_c$ because reward is only collected upon reaching a target set \emph{for the first time}.
    We can thus naturally translate $\rew_c$ to a reward function $\rew_c'$ in the quotient $\mdp_c'$.
    
    The optimal expected reward in $\mdp_c$ is then equal to the optimal expected \emph{reachability reward} w.r.t.\ $\{\state_\bot\}$ in $\mdp_c'$, i.e., the expected reward collected until visiting $\state_\bot$ for the first time (see~\cite[Sec.~4.1.5]{quatmannVerificationMultiobjective2023}):
    \[
        \opt_{\sched \in \Scheds[\indexc]} \Expected^{\indexc, \sched}_{s_c}(\indexc[\rew])
        ~=~
        \opt_{\sched \in \Scheds[\mdp_c']} \Expected^{\indexc', \sched}_{s_c}(\indexc[\rew]' \Finally \{\state_\bot\})
    \]
    Since $\{\state_\bot\}$ is attracting, the \emph{Bellman-equations} associated with the above reachability reward objective in $\mdp_c'$ have a \emph{unique solution}~\cite[Thm.~4.8]{quatmannVerificationMultiobjective2023} and can thus be readily expressed as an LP of size linear in the size of $\mdp_c'$~\cite[Fig.~4.10]{quatmannVerificationMultiobjective2023}.
    
    The above approach applies to $\min$-expected rewards as well by noticing that in any MDP $\mdp$ with a state $\state$ and a reward function $\rew$ such that the optimal expected rewards are well-defined, $\min_{\sigma \in \Scheds} \Expected_s^\sched(\rew) = -\max_{\sigma \in \Scheds} \Expected_s^\sched(-\rew)$.
\end{proof}

\subsection{Proof of \cref{thm:cruxApproxEqual}}
\label{app:cruxApproxEqual}

\cruxApproxEqual*

\begin{proof}[Proof of \NoCaseChange\cref{thm:cruxApproxEqual}]
    We show \enquote{$\Rightarrow$}.
    It follows from \Cref{thm:whyComb,thm:weightedReachProbsAsExpectedRew} that the properties
    \begin{align*}
        & \exists \sched_1, \ldots, \sched_n \in \Scheds .~ \sum_{i=1}^{m} q_i \cdot \Pr^{\sched_{k_i}}_{\state_{i}}(\Finally T_i)
        ~\boldsymbol{\geq}~
        q \boldsymbol{-} \epsilon \qquad\text{and}\\
        & \exists \sched_1, \ldots, \sched_n \in \Scheds .~ \sum_{i=1}^{m} q_i \cdot \Pr^{\sched_{k_i}}_{\state_{i}}(\Finally T_i)
        ~\boldsymbol{\leq}~
        q \boldsymbol{+} \epsilon
    \end{align*}
    both hold in $\mdp$.    
    Let $\sched_1^{\geq},\ldots,\sched_n^{\geq}$ and $\sched_1^{\leq},\ldots,\sched_n^{\leq}$ be witnessing schedulers for the two properties.
    Further, let $\ub{v} \geq q - \epsilon$ and $\lb{v} \leq q + \epsilon$ be the weighted sums of probabilities induced by these schedulers.
    We distinguish two cases:
    
    First, if we have $\ub{v} \leq q + \epsilon$ \emph{or} $\lb{v} \geq q - \epsilon$, then $\sched_1^{\geq},\ldots,\sched_n^{\geq}$ or $\sched_1^{\leq},\ldots,\sched_n^{\leq}$ are already witnessing schedulers for the property and we are done.
    
    Otherwise, we have $\ub{v} > q + \epsilon$ \emph{and} $\lb{v} < q - \epsilon$.
    This means that $q \in (\lb v,\ub v)$ and thus there exists $\lambda \in (0,1)$ such that $q = \lambda \lb v + (1-\lambda) \ub v$.
    The crux is now to consider the schedulers $\sched_i^{\lambda} = [\sched^{\leq}_i \oplus_\lambda \sched^{\geq}_i]$, $i = 1,\ldots,n$ (for details, see \cite[p.~71]{quatmannVerificationMultiobjective2023}).
    These schedulers satisfy
    \begin{align*}
        \sum_{i=1}^{m} q_i \cdot \Pr^{\sched_{k_i}^{\lambda}}_{\state_{i}}(\Finally T_i) 
        ~=~&\lambda \cdot \sum_{i=1}^{m} q_i \cdot \Pr^{\sched_{k_i}^{\leq}}_{\state_{i}}(\Finally T_i) ~+~ (1-\lambda) \cdot  \sum_{i=1}^{m} q_i \cdot \Pr^{\sched_{k_i}^{\geq}}_{\state_{i}}(\Finally T_i) \\
        ~=~&\lambda \lb v + (1-\lambda) \ub v = q
        ~,
    \end{align*}
    hence they witness satisfaction of property \eqref{eq:genRelReachProp} with \emph{exact} equality ($\approx_0$).
    
    We show \enquote{$\Leftarrow$} by contraposition.
    Assume that $q < v^{\min} - \epsilon$ (the argument for the other case $q > v^{\min} + \epsilon$ is analogous).
    It follows that $q + \epsilon < \min_{\sched_1, \ldots, \sched_n} \sum_{i=1}^{m} q_i \cdot \Pr^{\sched_{k_i}}_{\state_{i}}(\Finally T_i)$, i.e., all schedulers give a value at least $\epsilon$ apart from $q$ and hence property \eqref{eq:genRelReachProp} does not hold.
\end{proof}
}

%% file: appendix/reach_compl_proofs.tex
{\color{cavcolor}
\section{Proofs for \cref{sec:reach_complexity}}
\label{app:reach_compl_proofs}

The proofs in this section are taken verbatim from \cite{extendedVersion}.

\subsection{Proof of \cref{th:general_PTIME}}
\label{app:general_PTIME}

\generalPTIME*

\begin{proof}
    \ref{item:fixed-param}: $m$ is the only parameter occurring exponentially in the runtime complexity of \cref{alg:linear_general} with exact reward computation.
    
    \ref{item:absorbing}: If all target sets are absorbing, then for each $c \in \comb$, the number of reachable states in $\indexc$ is $|(\states \times \emptyset) \cup \bigcup_i T_i \times \{T_i\}| = |\states| + \sum_{i=1}^{m} |T_i| \leq (m+1)\cdot|\states|$. 
    Hence, the size of the unfolding is linear in the size of the original MDP and $m$.
    
    \ref{item:independent}: If each probability operator is evaluated under a different scheduler, then $|\comb| = n= m$ and for each $c \in \comb$ we have $|\relInd(c)| = 1$. Hence, for each $c \in \comb$, we have $|\indexc[\states]| = |\states \times 2^{\relInd(c)}| = |\states| \cdot 2$ and thus the size of the unfolding with respect to $\relInd(c)$ is linear in the size of the original MDP.
\end{proof}

\subsection{Proof of \cref{th:MD_NP-complete}}
\label{app:MD_NP-complete}

\MDNPComplete*

We show \NP-hardness by giving a polynomial transformation from the Hamiltonian path problem, 
which is known to be strongly \NP-hard~\cite{gareyStrongNPCompleteness1978}, inspired by~\cite{footePolynomialTime}.
For exact equality, we establish strong \NP-hardness by showing that our transformation is pseudo-polynomial.

\begin{definition}[Hamiltonian Path Problem]
    Given a directed graph $G = (V, E)$ 
    (where $V$ is a set of vertices and $E \subseteq V \times V$ a set of edges) 
    and some initial vertex $v \in V$,
    decide whether there exists a path from $v$ in $G$ that visits each vertex exactly once.
\end{definition}

}
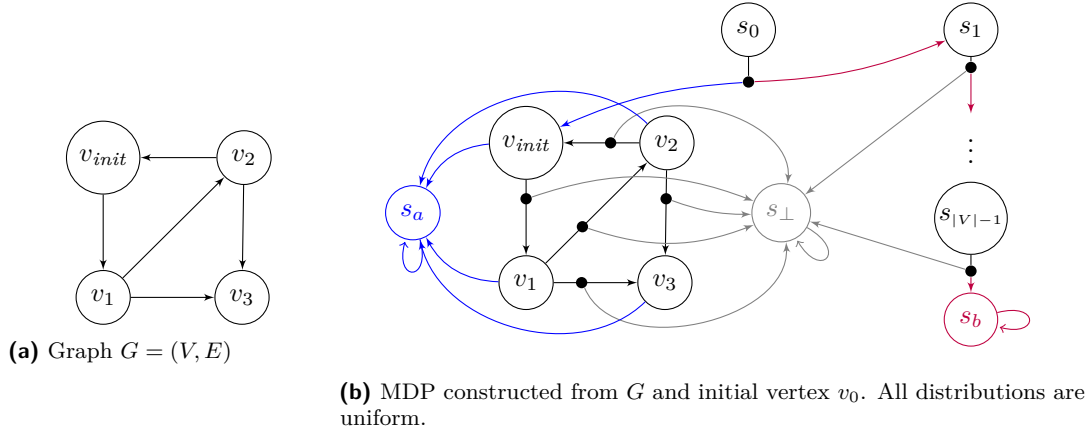
\begin{figure}[t]
    \centering
    \begin{subfigure}[b]{0.3\textwidth}
        \centering
        \begin{tikzpicture}
            \node[state] (v0) {$v_\init$};
            \node[state, below= of v0] (v1) {$v_1$};
            \node[state, right= of v0] (v2) {$v_2$};
            \node[state, right= of v1, xshift=0.1cm] (v3) {$v_3$};

    
            \path[-latex', draw]
                (v0) edge (v1)
                (v1) edge (v2)
                (v1) edge (v3)
                (v2) edge (v0)
                (v2) edge (v3);
        \end{tikzpicture}%
    \caption{Graph $G = (V, E)$}
    \label{fig:ill_ham-path_reduction_graph}
    \end{subfigure}%
    ~
    \begin{subfigure}[t]{0.7\textwidth}
        \centering
    \begin{tikzpicture}
        \node[state] (v0) {$v_\init$};
        \node[state, below= of v0] (v1) {$v_1$};
        \node[state, right= of v0] (v2) {$v_2$};
        \node[state, right= of v1, xshift=0.1cm] (v3) {$v_3$};
        
        \node[dist] at ($(v0)!0.4!(v1)$) (d01) {};
        \node[dist] at ($(v0)!0.6!(v2)$) (d02) {};
        \node[dist] at ($(v1)!0.4!(v2)$) (d12) {};
        \node[dist] at ($(v1)!0.4!(v3)$) (d13) {};
        \node[dist] at ($(v2)!0.4!(v3)$) (d23) {};
        
        \node[] at ($(v2)!0.5!(v3)$) (sinkhelp) {};
        \node[state, right= of sinkhelp, color=gray] (sink) {$s_\bot$};
        
        \node[] at ($(v0)!0.5!(v1)$) (sahelp) {};
        \node[state, left= of sahelp, color=blue] (sa) {$s_a$}; 

        \node[right of = v2, xshift=3cm] (s1help) {};
        \node[state, yshift=1.5cm] at (s1help) (s1) {$s_1$};  
        \node[dist, below of=s1, yshift=0.5cm] (ds1) {};
        \node[below of=ds1] (ds2) {\vdots};
        \node[state, below of=ds2] (svm1h) {\phantom{$s_{|V|}$}};
        \node at (svm1h) (svm1) {$s_{\scriptscriptstyle |V|-1 }$};
        \node[dist, below of=svm1h, yshift=0.3cm] (dsvm1) {};

        \node[state, below= of svm1h, yshift=0.5cm, color=purple] (sb) {$s_b$}; 

        \node[state, yshift=1.5cm] at ($(v0)!0.5!(s1help)$) (sinit) {$\statei$};
        \node[dist, below= of sinit, yshift=0.75cm] (dinit) {};
        
        \path[-latex', draw]
            (v0) edge (v1)
            (v1) edge (v2)
            (v1) edge (v3)
            (v2) edge (v0)
            (v2) edge (v3);

        \path[-, draw]
            (sinit) edge (dinit)
            (s1) edge (ds1) 
            (svm1h) edge (dsvm1)
            ;
        \path[-latex', draw]
            (dinit) edge[color=blue, bend right=10] node[above] {} (v0)
            (dinit) edge[color=purple, bend right=10] node[above] {} (s1)
            (ds1) edge[color=gray] node[above] {} (sink)
            (ds1) edge[color=purple] node[right] {} (ds2)
            (dsvm1) edge[color=gray] node[above] {} (sink)
            (dsvm1) edge[color=purple] node[right] {} (sb); 

        \path[-latex', draw, color=gray]
            (d01) edge[bend left=20] node[pos=0.9, above] {} (sink)
            (d02) edge[bend left=80] node[pos=0.8, right] {} (sink)
            (d12) edge[bend right=20] node[pos=0.8, below] {} (sink)
            (d13) edge[bend right=80] node[pos=0.8, right] {} (sink)
            (d23) edge[bend right=10] node[pos=0.3, below] {} (sink);
            
        \path[-latex', draw, color=blue]
            (v0) edge[bend right] (sa)
            (v1) edge[bend left] (sa)
            (v2) edge[bend right=60] (sa)
            (v3) edge[bend left=60] (sa);

        \path[-latex', draw]
            (sink) edge[in=300, out=330, loop, color=gray] (sink)
            (sa) edge[loop below, color=blue] (sa)
            (sb) edge[loop right, color=purple] (sb);
    \end{tikzpicture}
    \caption{MDP constructed from $G$ and initial vertex $v_0$. All distributions are uniform. }
    \label{fig:ill_ham-path_reduction_mdp}
    \end{subfigure}
    \caption{\color{cavcolor}Illustration of the MDP construction for the reduction from the Hamiltonian path problem. 
    }
    \label{fig:ill_ham-path_reduction}
\end{figure}
{\color{cavcolor}

\begin{proof}[Proof of \NoCaseChange\cref{th:MD_NP-complete}]
    Membership: Given some \RelReach property, MDP $\mdp$ and a memoryless deterministic scheduler $\sched \in \Scheds$, we can verify whether $\sched$ is a witness for the property 
    by computing the (exact) reachability probabilities in the induced DTMC, see e.g.,~\cite[Ch.~10]{baierPrinciplesModel2008}.
    This is possible in time polynomial in the size of the state space~\cite{biancoModelChecking1995}.
    
    \NP-hardness:
    We first give the reduction for \ref{item:1sched1state} with $\state_1 = \state_2$
    and explain how to adjust the construction for the other cases afterwards.
    For a given instance of the Hamiltonian path problem $G = (V, E)$ and $v_{\init} \in V$, 
    we construct the MDP $\mdp = \mdptup$ with
    \begin{itemize}
        \item $\states = V \cup \{ \statei, s_\bot, s_a, s_b \} \cup \{ \state_i \mid i = 1, \ldots, |V|-1\}$
        
        \item $\Act = E \cup \{ \tau \}$
        
        \item 
        $\Trans(\statei, \tau, v_\init) = \frac{1}{2}$ and $\Trans(s_0, \tau, s_1) = \frac{1}{2}$ \\
        $\Trans(s_{i}, \tau, s_{i+1}) = \frac{1}{2}$ and $\Trans(s_{i}, \tau, s_\bot) = \frac{1}{2}$ for $i=1,\ldots, |V|-2$\\
        $\Trans(s_{|V|-1}, \tau, s_b) = \frac{1}{2}$ and $\Trans(s_{|V|-1}, \tau, s_\bot) = \frac{1}{2}$ \\
        $\Trans(v, (v,v'), v') = \frac{1}{2}$ and $\Trans(v, (v,v'), s_\bot) = \frac{1}{2}$ for all $v, v' \in V, (v,v') \in E$ \\
        $\Trans(v, \tau, s_a) = 1$ for all $v \in V$ \\
        $\Trans(s_\bot, \tau, s_\bot) = 1$, $\Trans(s_a, \tau, s_a) = 1$ and $\Trans(s_b, \tau, s_b) = 1$ \\
        $\Trans(s, \alpha, s') = 0$ otherwise
    \end{itemize}
    \cref{fig:ill_ham-path_reduction} illustrates the construction: 
    \cref{fig:ill_ham-path_reduction_mdp} shows the MDP constructed from the graph $G$ with initial vertex $v_0$ depicted in \cref{fig:ill_ham-path_reduction_graph}.

    We observe that for all schedulers $\sched \in \SchedsMD$, it holds that $\Pr^\sched_{\statei}(\Finally \{s_b\}) = \frac{1}{2^{|V|}}$.
    Further, for all schedulers $\sched \in \SchedsMD$, we have
    \begin{align*}
        {\Pr}^{\sched}_{\statei}(\Finally \{s_a\}) 
        &= \frac{1}{2} \cdot {\Pr}^{\sched}_{v_{\init}}(\Finally \{s_a\}) 
    \end{align*}

    \proofsubparagraph{Claim:} \emph{For any $\epsilon < \frac{1}{2^{|V|+1}}$, it holds that there exists a Hamiltonian path from $v_\init$ in $G$ iff there exists some $\sched \in \SchedsMD$ such that $\Pr^\sched_{\statei}(\Finally \{s_a\}) - \Pr^\sched_{\statei}(\Finally \{s_b\}) \approx_\epsilon 0$.}

    ``$\Rightarrow$'': 
    Assume there exists a Hamiltonian path from $v_\init$ in $G$.
    Then, we construct $\sched$ by following this path in $\mdp$ and transitioning to $s_\bot$ from the last vertex. 
    By construction, it holds that ${\Pr}^{\sched}_{\statei}(\Finally \{s_a\}) = \frac{1}{2} \cdot \Pr^{\sched}_{v_{\init}}(\Finally \{s_a\}) = \frac{1}{2^{|V|}}$.
    Since $\Pr^{\sched}_{\statei}(\Finally \{s_b\}) = \frac{1}{2^{|V|}}$ for all schedulers $\sched \in \SchedsMD$, this implies $\Pr^{\sched}_{\statei}(\Finally \{s_a\}) = \Pr^{\sched}_{\statei}(\Finally \{s_b\})$ and hence in particular also $\Pr^\sched_{\statei}(\Finally a) - \Pr^\sched_{\statei}(\Finally b)| \approx_\epsilon 0$.

    ``$\Leftarrow$'': 
    Assume there exists $\sched \in \SchedsMD$ such that $|\Pr^\sched_{\statei}(\Finally \{s_a\}) - \Pr^\sched_{\statei}(\Finally \{s_b\})| \leq \epsilon$.
    We first show that $\Pr^\sched_{\statei}(\Finally \{s_a\}) > 0$: Assume $s_a$ is not reachable from $\statei$. Then $\frac{1}{2^{|V|+1}} > \epsilon \geq |\Pr^\sched_{\statei}(\Finally \{s_a\}) - \Pr^\sched_{\statei}(\Finally \{s_b\})| = |\Pr^\sched_{\statei}(\Finally \{s_b\})| = \frac{1}{2^{|V|}}$, which is a contradiction.
    So $s_a$ is reachable from $\statei$, and hence also from $v_\init$ and the finite path from $v_\init$ to $s_a$ cannot contain a loop since since $\sched$ is memoryless deterministic.
    Let $n$ be the number of $V$-states on the path from $v_\init$ to $s_a$ under $\sched$ (not counting $v_\init$ itself).
    Then $\Pr^{\sched}_{\statei}(\Finally \{s_a\}) = \frac{1}{2^{n+1}}$.
    Since $n \leq |V|-1$, this implies that $\Pr^{\sched}_{\statei}(\Finally \{s_a\}) \geq \frac{1}{2^{|V|}}$.
    Further, 
    \begin{align*}
        |\Pr^{\sched}_{\statei}(\Finally \{s_a\}) - \Pr^{\sched}_{\statei}(\Finally \{s_b\})| \leq \epsilon 
        \Iff
        \Pr^{\sched}_{\statei}(\Finally \{s_a\}) - \frac{1}{2^{|V|}} \leq \epsilon ~ .
    \end{align*}
    Since $\epsilon < \frac{1}{2^{|V|+1}}$, this implies 
    \[ 
        \Pr^{\sched}_{\statei}(\Finally \{s_a\}) < \frac{1}{2^{|V|}} + \frac{1}{2^{|V|+1}} < \frac{1}{2^{|V|-1}} ~ .
    \]
    From $\frac{1}{2^{|V|-1}} > \Pr^{\sched}_{\statei}(\Finally \{s_a\}) = \frac{1}{2^{n+1}} \geq \frac{1}{2^{|V|}}$ we can conclude that $n = |V|-1$.
    Hence, the path from $v_\init$ to $s_a$ visits $|V|-1$ states corresponding to vertices in $G$, which means it must visit each vertex of $G$ exactly once and thus corresponds to a Hamiltonian path from $v_\init$ in $G$.

    \proofsubparagraph{Claim:} \emph{The above construction defines a pseudo-polynomial time transformation to \ref{item:1sched1state}, and a polynomial-time transformation to \ref{item:1sched1state} with approximate equality with $\epsilon>0$.}
    
    The constructed MDP has $2 \cdot |V| + 3$ states and $|E|+1$ actions. Each state has at most two successors.
    All transition probabilities in the MDP are either $0$, $0.5$, or $1$. Hence, the magnitude of the largest number occurring in the constructed MDP is a constant and thus polynomial in the size of the original graph.

    For approximate equality, we must choose some $0 < \epsilon < \frac{1}{2^{|V|+1}}$ for the constructed property. 
    The magnitude of the largest number occurring in the constructed \RelReach instance thus depends exponentially on the size of the original Hamiltonian path problem instance.
    Hence, the transformation is not pseudo-polynomial.
    The transformation is, however, still polynomial since clearly $\epsilon <1$ and hence all numbers are polynomially bounded.

    For exact equality, however, $\epsilon=0$ and hence the magnitude of the largest number occurring in the constructed \RelReach instance is polynomial in the size of the Hamiltonian path problem instance.
    Hence, the transformation is pseudo-polynomial.

    \proofsubparagraph{Handling the other cases.}
    \begin{itemize}
        \item \ref{item:2sched2state}, $s_1 = s_2$:
        We construct the MDP as above, and the following property:
        \enquote{$\exists \sched_1, \sched_2 \in \SchedsMD .\ \Pr^{\sched_1}_{s_0}(\Finally \{s_a\}) - \Pr^{\sched_2}_{s_0}(\Finally \{s_b\}) \approx_\epsilon 0$}.
        The proof works completely analogously since for all schedulers $\sched_1 \in \SchedsMD$, it holds that $\Pr^{\sched_1}_{\statei}(\Finally \{s_b\}) = \frac{1}{2^{|V|}}$
        and for all schedulers $\sched_2 \in \SchedsMD$, we have
       ${\Pr}^{\sched_2}_{\statei}(\Finally \{s_a\}) = \frac{1}{2} \cdot {\Pr}^{\sched_2}_{v_{\init}}(\Finally \{s_a\})$.        
        
        \item \ref{item:1sched2state}, $s_1 \neq s_2$:
        The MDP construction works analogously to the construction above, the only difference being that we do not introduce a fresh initial state. 
        We construct the property
        \enquote{$\exists \sched \in \SchedsMD .\ \Pr^\sched_{v_0}(\Finally \{s_a\}) - \Pr^\sched_{s_1}(\Finally \{s_b\}) \approx_\epsilon 0$}.
        Observe that for all schedulers $\sched \in \SchedsMD$, it holds that $\Pr^{\sched}_{s_1}(\Finally \{s_b\}) = \frac{1}{2^{|V|-1}}$.
        The proof works analogously.

        \item \ref{item:2sched2state}, $s_1 \neq s_2$:
        Again, we construct the MDP as above but without the fresh initial state.
        We construct the property \enquote{$\exists \sched_1, \sched_2 \in \SchedsMD .\ \Pr^{\sched_1}_{v_0}(\Finally \{s_a\}) - \Pr^{\sched_2}_{s_1}(\Finally \{s_b\}) \approx_\epsilon 0$}.
        The proof works analogously.
    \end{itemize}
    
\end{proof}

\subsection{Proof of \cref{th:general_necessary}}
\label{app:general_necessary}

\generalNecessary*

}
\begin{figure}[t]
    \centering
    \begin{subfigure}[b]{0.49\textwidth}
        \centering
    \begin{tikzpicture}
        \node[state] (sinita) {$s$};  
        
        \node[state, below left= of sinita] (t) {\phantom{$s_{2}$}}; 
        \node at (t) (th) {$t$};
        \node[state, below right= of sinita] (sx) {$s_\bot$}; 

        \path[-latex', draw]
            (sinita) edge node[left] {$\alpha$} (t)
            (sinita) edge node[right] {$\beta$} (sx);

        \path[-latex', draw]
            (t) edge[loop left] (t)
            (sx) edge[loop right] (sx);
    \end{tikzpicture}
    \caption{$\exists \sched .\ \Pr^\sched_s(\Finally \{t\}) \approx_\epsilon 0.5$ 
    }
    \label{fig:randomization-necessary_app}
    \end{subfigure}%
    \begin{subfigure}[b]{0.49\textwidth}
        \centering
    \begin{tikzpicture}
        \node[state] (s) {$s$};
        
        \node[state, below left= of s] (t1) {$t_1$}; 
        \node[state, below right= of s] (t2) {$t_2$}; 

       \path[-latex', draw]
            (s) edge[bend left] node[right] {$\alpha$} (t1)
            (s) edge node[right] {$\beta$} (t2);

        \path[-latex', draw]
            (t1) edge[bend left] node[right] {$\gamma$} (s)
            (sx) edge[loop right] (sx);
    \end{tikzpicture}
    \caption{$\exists \sched .\ \Pr^{\sched}_{s}(\Finally \{t_1\}) = \Pr^{\sched}_{s}(\Finally \{t_2\})$}
    \label{fig:memory-necessary_app}
    \end{subfigure}%
    \caption{\color{cavcolor}
        MDPs where memory and/or randomization are necessary for relational reachability properties with equality.
    }
\end{figure}
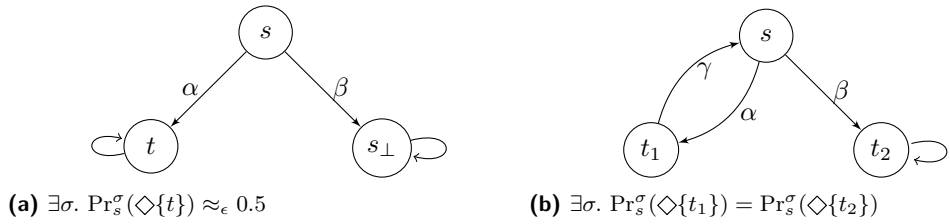
{\color{cavcolor}

\begin{proof}
    We first show that randomization is necessary:
    Consider the MDP in \cref{fig:randomization-necessary_app}.
    Over general schedulers, the property 
    \enquote{$\exists \sched .\ \Pr^\sched_s(\Finally \{t\}) \approx_\epsilon 0.5$} holds for any $\epsilon$, but over (possibly memoryful) deterministic schedulers it holds only for $\epsilon \geq 0.5$.

    We further see that memory is also necessary, i.e., there are instances where memoryless randomized schedulers do not suffice:
    Consider the MDP in \cref{fig:memory-necessary_app}.
    Over general schedulers, the property
    \enquote{$\exists \sched_1 .\ \Pr^{\sched}_{s}(\Finally \{t_1\}) = \Pr^{\sched}_{s}(\Finally \{t_2\})$} holds, 
    but over memoryless schedulers it does not.
    Consider a memoryless randomized scheduler choosing $\alpha$ with probability $p$. If $p=1$ then the property does not hold. If $p<1$ then $\Pr^{\sched}_{s}(\Finally \{t_1\}) = p \neq \Pr^{\sched}_{s}(\Finally \{t_2\}) = \sum_{i=0}^{\infty} p^i (1-p) = \frac{1-p}{1-p} = 1$.
\end{proof}

\subsection{Proof of \cref{th:MD_PTIME}}
\label{app:MD_PTIME}

\MDPTIME*

\begin{proof}   
    For each case, we show that \cref{alg:linear_general} returns an MD scheduler and runs in polynomial time.
    
    \proofsubparagraph{\ref{item:MD_independent}:}
    If $n=m$ and $\comp \in \{\geq, >, \not\approx_\epsilon \mid \epsilon \geq 0 \}$,
    then $|\relInd(c)|=1$ for each $c \in \comb$ and each scheduler $\sched_i \in \{\sched_1, \ldots, \sched_n\}$ corresponds to an MD scheduler for $\indexc$ for the unique $c \in \comb$ with $c = (-, \sched_i)$.
    Further, any MD scheduler on $\indexc$ can be transformed back to an MD scheduler for $\mdp$ since there is only a single target set.
    
    We have already argued in \cref{th:general_PTIME} that \cref{alg:linear_general} runs in polynomial time in this case.

    \proofsubparagraph{\ref{item:MD_n-comb}:}
    We show the claim for the simplest case, namely $n=1$ and $s_1 = \ldots = s_m$ and all target sets are absorbing and $\comp \in \{\geq, >, \not\approx_\epsilon \mid \epsilon \geq 0 \}$.
    Then $|\comb| = 1$, so there is a unique $c \in \comb$.
    We can transform any MD scheduler for $\indexc$ back to an MD scheduler on $\mdp$ since the target sets are absorbing and there is a unique initial state $s_1 = \ldots = s_m$.
    
    Further, since target states are absorbing, the size of the goal unfolding $\indexc$ is linear in $\mdp$ and hence \cref{alg:linear_general} runs in polynomial time.
    
    The reasoning for the general case follows analogously.

    \proofsubparagraph{\ref{item:MD_sign}:}
    We show the claim for the simplest case, namely $n=1$ and $T_1 = \ldots = T_m =: T$ and for all $i=1, \ldots, m$ we have $q_i \geq 0$ and $\comp \in \{\geq, >, \not\approx_\epsilon \mid \epsilon \geq 0 \}$.
    We know that there exists some MD scheduler $\sched^{\max} \in \SchedsMD$ maximizing the reachability probability of $T$ for all $s \in \states$. 
    Since all coefficients $q_1, \ldots, q_m$ are non-negative, it follows that
    \begin{align*}
        \exists \sched \in \Scheds .\ \sum_{i=1}^{m} q_i \cdot \Pr^{\sched}_{\state_i} (\Finally T) \geq q_{m+1}
        \quad\Iff\quad &
        \max_{\sched \in \Scheds} \sum_{i=1}^{m} q_i \cdot \Pr^{\sched}_{\state_i} (\Finally T) \geq q_{m+1}
        \\ \Iff\quad & 
        \sum_{i=1}^{m} q_i \cdot \Pr^{\sched^{\max}}_{\state_i} (\Finally T) \geq q_{m+1}
        \\ \Iff\quad & 
        \max_{\sched \in \SchedsMD} \sum_{i=1}^{m} q_i \cdot \Pr^{\sched}_{\state_i} (\Finally T) \geq q_{m+1}
    \end{align*}
    and analogously for $>$ and $\not\approx_\epsilon$.
    Hence, \cref{alg:linear_general} returns true iff the property is satisfiable over MD schedulers.
    
    Further, since $T_1 = \ldots = T_m$, the size of the goal unfolding is linear in the size of the original MDP and hence \cref{alg:linear_general} runs in polynomial time.
    
    The reasoning for the general case follows analogously.
\end{proof}
}

%% file: appendix/buechi_algo_proofs.tex
\section{Proofs for \cref{sec:buechi_algo}}
\label{app:buechi_algo_proofs}

\subsection{Proof of \cref{th:relBuechi-to-relReach}}

Let us fix some notation needed for both directions of \cref{th:relBuechi-to-relReach}.
For $\pi \in \Paths$, we let $\Limit(\pi)$ be the tuple $(S',A)$ where $S'$ is the set of all states that are visited infinitely often in $\pi$ and $A$ is the set of all actions that are taken infinitely often in $\pi$~(cf., e.g., \cite{baierPrinciplesModel2008}).

    Let $c \in \comb$ and $q_c \in Q$.
    For $\mathcal{T} \subseteq \indexc[\mathcal{T}]$, we use $T \not\in \mathcal{T}$ as shorthand for $T \in \indexc[\mathcal{T}] \setminus \mathcal{T}$.
    Recall that for $j,j' \in \relInd(c)$ we have $\state_j = \state_{j'}$; we use $\statei$ to denote this unique state.
    
For $\sched \in \Scheds$ and $\pi \in \finPaths$ we define the \emph{residual scheduler} $\sched(\pi)$ as the scheduler behaving like $\sched$ after having seen $\pi \in \finPaths$. 

For $s, t \in \states$ we use $\pi \colon s \to t$ to denote $\pi \in \finPaths(s)$ with $\last(\pi) =t$.
    
\begin{definition}[Residual scheduler]
    Let $\mdp$ be an MDP and $\sched \in \Scheds$.
    Let $\pi \in \finPaths$.
    Then we define the \emph{residual scheduler of $\sched$ after $\pi$} as the (partial) scheduler $\sched(\pi) \in \Scheds$ with
    \[
        \sched(\pi)(\state_1 \action_1 \ldots \action_{r-1} \state_r) = \sched( \pi \action_1 \ldots \action_{r-1} \state_r)
    \]
    for $\state_1 \action_1 \ldots \action_{r-1} \state_r \in \finPaths$ with $\state_1 = \textit{last}(\pi)$, and we let $\sched(\pi)$ undefined for $\state_1 \neq \textit{last}(\pi)$.
\end{definition}

We define a correspondence between paths of $\mdp$ and $\quotT[\mdp]$ as in \cite{baierFoundationsProbabilityraising2024}.
    For $\pi \in \Paths[\mdp]$, let $\mapPaths(\pi)$ be the unique path corresponding to $\pi$ in $\quotT[\mdp]$.
    For $\pi \in \finPaths[\mdp]$, let $\mapPaths(\pi)$ be the unique path corresponding to $\pi$ in $\quotT[\mdp]$ that does not contain a state from $\states_\bot$.
    Conversely, for $\quotT[\pi] \in \Paths[\quoti]$, let $\quotT[\mapPaths](\quotT[\pi])$
    be the set of paths in $\mdp$ that correspond to $\quotT[\pi]$, such that the last action does not belong to the same MEC as the last state (i.e., if the last state of the path is in some MEC, then we must have only just entered this MEC).

\subsubsection*{Proof of \cref{th:relBuechi-to-relReach_Rightarrow}}
\label{app:relBuechi-to-relReach_Rightarrow_proof}

\relBuechiToTelReachRightarrow*
\begin{proof}
    We follow \cite[Lem.\ 2.4]{baierFoundationsProbabilityraising2024}.
    Let $\sched \in \Scheds$. 
    The idea for the construction of a witness $\quotT[\sched] \in \Scheds[\quotT]$ is as follows. 
    On $\states \setminus \states_{\MEC}$, $\quotT[\sched]$ copies $\sched$.
    On states $\state_\mec \in \states_{\MEC}$ for $\mec=(S',A) \in \MEC(\mdp)$, $\quotT[\sched]$ mimics whether $\sched$ \emph{(1)} leaves $\mec$ via some $\action \not\in A$, or \emph{(2)} stays in $\mec$ forever and visits a specific set of target sets infinitely often.
    On $\states_\bot$ we choose the unique available action.
    
    Formally, we construct $\quotT[\sched]$ as follows.
    \begin{itemize}
    \item
    For a finite path $\quotT[\pi] \in \finPaths[\quotT](\mapStates(\statei))$\footnote{Technically, we also need to define $\quotT$ for paths not starting at $\mapStates(\statei)$ here; we can just choose some action uniformly at random for these paths.} ending in $\last(\quotT[\pi]) \in \states \setminus \states_{\MEC}$ and for $\action \in \quotT[\Act](\last(\quotT[\pi]))$ we let
    \[
        \quotT[\sched](\quotT[\pi], \alpha) 
        = \frac{
            \sum_{\pi \in \quotT[\mapPaths](\quotT[\pi])} \Pr^{\mdp, \sched}_{\statei}(\pi) \cdot \sched(\pi, \alpha)
        }{
            \sum_{\pi \in \quotT[\mapPaths](\quotT[\pi])} \Pr^{\mdp, \sched}_{\statei}(\pi)
        }
        ~.
    \] 

    \item
    For a finite path $\quotT[\pi] \in \finPaths[\quotT](\mapStates(\statei))$ ending in a state $s_C$ for some MEC $\mec=(S',A)$:
    \begin{itemize}
        \item Choose $\alpha \in \Act \setminus A$ with $\Pr^\sched_{\statei}(\text{`leave } \mec \text{ via } \alpha \text{  after taking } \quotT[\pi]\text{'})$, i.e., 
        \[
            \quotT[\sched](\quotT[\pi], \alpha) 
            = 
            \frac{
                \sum_{\pi \in \quotT[\mapPaths](\quotT[\pi])} \Pr^{\mdp, \sched}_{\statei}(\pi) \cdot 
                \Pr_{\last(\pi)}^{\sched(\pi)}(\text{`leave } \mec \text{ via } \action \text{'})
            }{
                \sum_{\pi \in \quotT[\mapPaths](\quotT[\pi])} \Pr^{\mdp, \sched}_{\statei}(\pi)
            }
        \]
        where \[ 
            \Pr_{\last(\pi)}^{\sched(\pi)}(\text{`leave } \mec \text{ via } \action \text{'}) = \textstyle\sum_{\substack{\pi' \in \finPaths(\last(\pi)), \\\pi' \text{ stays in } C}} \Pr^{\mdp, \sched(\pi)}_{\last(\pi)}(\pi') \cdot \sched(\pi \circ \pi', \alpha)
            ~.
        \]
        
        \item For $\mathcal{T} \subseteq \indexc[\mathcal{T}]$ for $c \in \comb$, choose $\epsilon^{\mathcal{T}}$ with $\Pr^\sched_{\statei}($`stay in $\mec$ and see exactly all $T \in \mathcal{T}$ infinitely often after taking $\quotT[\pi]\text{'})$, i.e.,
        $\quotT[\sched](\quotT[\pi], \epsilon^{\mathcal{T}}) = $
        \[
            \frac{
                \sum_{\pi \in \quotT[\mapPaths](\quotT[\pi])} 
                \Pr^{\mdp, \sched}_{\statei}(\pi) \cdot  
                \Pr^{\mdp, \sched(\pi)}_{\last(\pi)}(\Globally \mec \wedge \bigwedge_{T \in \mathcal{T}} \Globally \Finally T \wedge \bigwedge_{T \not\in \mathcal{T}} \Finally \Globally \overline{T})
            }{
                \sum_{\pi \in \quotT[\mapPaths](\quotT[\pi])} \Pr^{\mdp, \sched}_{\statei}(\pi)
            }
            ~.
        \]
        where $\Pr^{\mdp, \sched(\pi)}_{\last(\pi)}(\Globally C \wedge \bigwedge_{T \in \mathcal{T}} \Globally \Finally T \wedge \bigwedge_{T \not\in \mathcal{T}} \Finally \Globally \overline{T}) = $
        \footnote{Note that here we use $\Globally \mec$ to denote not only that the path only visits states from $S'$ but also that it only uses actions from $A$.}
            \begin{align*} 
                \Pr^{\mdp, \sched(\pi)}_{\last(\pi)}
                \left(\bigcup_{\substack{\ec'=(S'',A') \in \textit{EC}(\mdp) .\ \ec' \subseteq \mec, \\ \forall T \in \mathcal{T} . S'' \cap T \neq \emptyset \wedge \forall T \not\in \mathcal{T} . S'' \cap T = \emptyset}} \{ \pi' \in \Paths(\last(\pi)) \mid \textit{Limit}(\pi') = \mec \}
                \right)
                ~.
            \end{align*}
        \end{itemize}
    \item
    For paths ending in a state $\bot^{\mathcal{T}} \in \states_\bot$ for some $\mathcal{T} \subseteq \indexc[\mathcal{T}]$ for some $c \in \comb$, we choose the unique action $\epsilon^{\mathcal{T}}$ (observe that $\quotT[\Act](\bot^{\mathcal{T}}) = \{ \epsilon^{\mathcal{T}}\}$).
    \end{itemize}
    We can convince ourselves that $\quotT[\sched]$ is well-defined, i.e., $\sum_{\action \in \quotT[\Act]} \quotT[\sched](\quotT[\pi], \action) = 1$ for all $\quotT[\pi] \in \finPaths[\quotT](\mapStates(\statei))$.

    By induction on the length of $\quotT[\pi]$, we can show that for all $\quotT[\pi] \in \finPaths[\quotT](\mapStates(\statei))$ with $\last(\quotT[\pi]) \not \in \states_\bot$ we have
    $\Pr^{\quotT, \quotT[\sched]}_{\mapStates(\statei)}(\quotT[\pi]) = \sum_{\pi \in \quotT[\mapPaths](\quotT[\pi])} \Pr^{\mdp, \sched}_{\statei}(\pi)$. 

    Thus, for each $j \in \relInd(c)$ we have $\Pr^{\mdp, \sched}_{\statei}(\Globally \Finally T_j) = $
    \begin{align*}
        &\sum_{\substack{\mec=(S',A) \in \MEC(\mdp),\, \\S' \cap T_j \neq \emptyset}} \;
        \sum_{\quotT[\pi] \colon \mapStates(\statei) \to \state_\mec}
        \;
        \sum_{\pi \in \quotT[\mapPaths](\quotT[\pi])} \Pr^{\mdp, \sched}_{\statei}(\pi) \cdot 
        \Pr^{\mdp, \sched(\pi)}_{\last(\pi)}(\Globally \mec \wedge \Globally \Finally T_j)
        \\ = &\sum_{\substack{\mec=(S',A) \in \MEC(\mdp),\, \\S' \cap T_j \neq \emptyset}} \;
        \sum_{\quotT[\pi] \colon \mapStates(\statei) \to \state_\mec}
        \;
        \\ & \quad
        \sum_{\pi \in \quotT[\mapPaths](\quotT[\pi])} 
        \Pr^{\mdp, \sched}_{\statei}(\pi) \cdot 
        \sum_{\mathcal{T} \subseteq \mathcal{T}_c, c \in \comb, T_j \in \mathcal{T}}
        \Pr^{\mdp, \sched(\pi)}_{\last(\pi)}(\Globally \mec \wedge \bigwedge_{T \in {\mathcal{T}}} \Globally \Finally T \wedge \bigwedge_{T \not\in \mathcal{T}} \Finally \Globally \overline{T})
        \\ = &\sum_{\substack{\mec=(S',A) \in \MEC(\mdp),\, \\S' \cap T_j \neq \emptyset}} \;
        \sum_{\quotT[\pi] \colon \mapStates(\statei) \to \state_\mec}
        \;
        \\ & \quad
        \sum_{\mathcal{T} \subseteq \mathcal{T}_c, c \in \comb, T_j \in \mathcal{T}}
        \sum_{\pi \in \quotT[\mapPaths](\quotT[\pi])} 
        \Pr^{\mdp, \sched}_{\statei}(\pi) \cdot 
        \Pr^{\mdp, \sched(\pi)}_{\last(\pi)}(\Globally \mec \wedge \bigwedge_{T \in {\mathcal{T}}} \Globally \Finally T \wedge \bigwedge_{T \not\in \mathcal{T}} \Finally \Globally \overline{T})
        ~.
    \end{align*}
    
    On the other hand, we have $\Pr^{\quotT, \quotT[\sched]}_{\mapStates(\statei)}(\Finally U_{T_j}) = $
    \begin{align*}
        & 
        \sum_{\mathcal{T} \subseteq \mathcal{T}_c, c \in \comb, T_j \in \mathcal{T}}
        \Pr^{\quotT, \quotT[\sched]}_{\mapStates(\statei)}(\Finally \bot^{\mathcal{T}})
        \\ = &\sum_{\substack{\mec=(S',A) \in \MEC(\mdp), \\ S' \cap T_j \neq \emptyset}} \;
        \sum_{\quotT[\pi] \colon \mapStates(\statei) \to \state_\mec}
        \Pr^{\quotT, \quotT[\sched]}_{\mapStates(\statei)}(\quotT[\pi])
        \sum_{\mathcal{T} \subseteq \mathcal{T}_c, c \in \comb, T_j \in \mathcal{T}} \quotT[\sched](\quotT[\pi], \epsilon^{\mathcal{T}})
        \\ = &\sum_{\substack{\mec=(S',A) \in \MEC(\mdp),\, \\ S' \cap T_j \neq \emptyset}} \;
        \sum_{\quotT[\pi] \colon \mapStates(\statei) \to \state_\mec}
        \;
        \\ & \quad
        \Pr^{\quotT, \quotT[\sched]}_{\mapStates(\statei)}(\quotT[\pi]) \cdot 
        \hspace{-3ex}\sum_{\mathcal{T} \subseteq \mathcal{T}_c, c \in \comb, T_j \in \mathcal{T}}\hspace{-4ex}
        \frac{
            \sum_{\pi \in \quotT[\mapPaths](\quotT[\pi])} 
            \Pr^{\mdp, \sched}_{\statei}(\pi) \cdot 
            \Pr^{\mdp, \sched(\pi)}_{\last(\pi)}(\Globally \mec \wedge \bigwedge_{T \in \mathcal{T}} \Globally \Finally T \wedge \bigwedge_{T\not\in \mathcal{T}} \Finally \Globally \overline{T})
        }{
            \sum_{\pi \in \quotT[\mapPaths](\quotT[\pi])} 
            \Pr^{\mdp, \sched}_{\statei}(\pi) 
        }
        \\ = &\sum_{\substack{\mec=(S',A) \in \MEC(\mdp),\, \\ S' \cap T_j \neq \emptyset}} \;
        \sum_{\quotT[\pi] \colon \mapStates(\statei) \to \state_\mec}
        \;
        \\ & \quad
        \sum_{\mathcal{T} \subseteq \mathcal{T}_c, c \in \comb, T_j \in \mathcal{T}}
        \sum_{\pi \in \quotT[\mapPaths](\quotT[\pi])} 
        \Pr^{\mdp, \sched}_{\statei}(\pi) \cdot 
        \Pr^{\mdp, \sched(\pi)}_{\last(\pi)}(\Globally \mec \wedge \bigwedge_{T \in \mathcal{T}} \Globally \Finally T \wedge \bigwedge_{T \not\in \mathcal{T}}  \Finally \Globally \overline{T})
    \end{align*}
    which is equivalent to $\Pr^{\mdp, \sched}_{\statei}(\Globally \Finally T_j)$ by our previous reasoning.
    The last equality holds due to the correspondence between the probability of a finite path $\quotT[\pi]$ in $\quotT$ and the probability of the corresponding set of finite paths $\quotT[f](\quotT[\pi])$ in $\mdp$.
\end{proof}

\subsubsection*{Proof of \cref{th:relBuechi-to-relReach_Leftarrow}} 
\label{app:relBuechi-to-relReach_Leftarrow_proof}

\relBuechiToTelReachLeftarrow*

We first show the claim for the case that the given scheduler for the quotient $\quotT$ is memoryless in \cref{th:relBuechi-to-relReach_Leftarrow_MR} and then prove that this suffices to show the desired claim. 

For the scheduler witness transfer from $\quotT$ to $\mdp$, it is convenient to use a scheduler representation with explicit memory modes
instead of the path-based version introduced in \cref{sec:prelim}, we schematically illustrate the new definition in~\cref{fig:sched_mem_illustration}.

\begin{definition}[Explicit-memory scheduler representation]
    \label{def:scheduler_explicit}
    Let $\mdp = \mdptup$ be an MDP.
    An \emph{scheduler $\sched$ for $\mdp$ in explicit-memory representation} is a tuple $(\modes, \start, \modef, \act)$ with
    \begin{itemize}
        \item $\modes$ a countable set of memory modes,

        \item $\start \in \Distr(\modes)$ a stochastic initial mode selection function,

        \item $\modef \colon \modes \times \Act \times \states \to \Distr(\modes)$ a stochastic memory update function, where $\modef(\mode, \action, \state')(\mode')$ gives the probability of updating the memory to $\mode' \in \modes$ if the current mode is $\mode \in \modes$, we chose action $\action$ in the current state and moved to $\state'$,
        and

        \item $\act \colon \modes \times \states \to \Distr(\Act)$ a stochastic action selection function, where $\act(\mode, \state)(\action)$ is the probability of choosing $\action$ in state $\state$ and mode $\mode$. 
    \end{itemize}
\end{definition}

\begin{figure}
    \centering
    \begin{tikzpicture}[on grid]
        \node (s0) {$\state$};
        \node[right=of s0] (alpha) {$\action$};
        \node[right=of alpha] (s1) {$\state'$};
        \node[below=of s0] (m0) {$\mode$};
        \node[below=of s1] (m1) {$\mode'$};
        \node[left=of m0] (init) {};

        \path[->, magenta] (init) edge node[above] {$\start$} (m0);
        \path[->, blue] 
            (m0) edge (alpha)
            (s0) edge node[below, xshift=-1mm] {$\act$} (alpha); 
        \path[->, green]
            (m0) edge (m1)
            (alpha) edge node[below, xshift=-4mm] {$\modef$} (m1)
            (s1) edge (m1);
        \path[->, gray]
            (s0) edge[bend left] node[above, pos=0.9] {$\Trans$} (s1)
            (alpha) edge (s1);
    \end{tikzpicture}
    \caption{Illustration of interaction of \textcolor{magenta}{$\start$}, \textcolor{green}{$\modef$}, \textcolor{blue}{$\act$}, and \textcolor{gray}{$\Trans$} for explicit-memory scheduler representation~(\cref{def:scheduler_explicit})}
    \label{fig:sched_mem_illustration}
\end{figure}

There is a direct correspondence between the explicit-memory and path-based scheduler representations for some given initial state $\statei$.
\begin{lemma}
    Let $\mdp = \mdptup$ be an MDP and $\statei \in \states$.
    For every $\sched = (\modes, \start, \modef, \act)$ there exists $\sched' \colon \finPaths \to \Distr(\Act)$ such that
    \[
        \Pr^{\mdp, \sched}_{\statei}(\pi) = \Pr^{\mdp, \sched'}_{\statei}(\pi)
    \]
    for any finite path $\pi \in \finPaths(\statei)$,
    and vice versa.
\end{lemma}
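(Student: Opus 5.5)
We prove the two directions by explicit constructions, and in each direction we show that the two schedulers induce the same measure on finite paths from $\statei$.

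\emph{From explicit memory to path-based.} Given $\sched = (\modes, \start, \modef, \act)$, we first define for every finite path $\pi = \state_0 \action_0 \ldots \action_{r-1} \state_r \in \finPaths(\statei)$ a sub-distribution $\mu_\pi$ over $\modes$. It is the joint weight of reaching mode $\mode$ together with the action choices along $\pi$. Set $\mu_{\state_0} = \start$, and
\[
    \mu_{\pi \action \state'}(\mode') = \sum_{\mode \in \modes} \mu_\pi(\mode) \cdot \act(\mode, \last(\pi))(\action) \cdot \modef(\mode, \action, \state')(\mode')~.
\]
These sums converge because $\modes$ is countable and all terms are non-negative and bounded by probability masses. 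Let $w(\pi) = \sum_{\mode} \mu_\pi(\mode)$. Whenever $w(\pi) > 0$, define
\[
    \sched'(\pi)(\action) = \frac{\sum_{\mode} \mu_\pi(\mode)\, \act(\mode, \last(\pi))(\action)}{w(\pi)}~,
\]
and choose an arbitrary enabled action otherwise. Thus $\sched'$ is the posterior action distribution given the observed history.

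We then show by induction on $|\pi|$ that $\Pr^{\mdp,\sched}_{\statei}(\pi) = w(\pi)\cdot \prod_{k} \Trans(\state_k,\action_k,\state_{k+1})$ and that $\Pr^{\mdp,\sched'}_{\statei}(\pi)$ equals the same quantity. The base case is $\sum_\mode \start(\mode) = 1$. The inductive step uses that $\modef$ is a distribution, so summing over $\mode'$ removes the $\modef$ factor. Under $\sched'$, the factor $\sched'(\pi)(\action)\cdot w(\pi)$ then matches $\sum_\mode \mu_\pi(\mode)\act(\mode,\last(\pi))(\action)$. Paths with $w(\pi)=0$ have probability zero under both schedulers, so the arbitrary choice there is harmless. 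The only real care needed is this normalization step. Since the two cylinder measures agree, they extend to the same measure on infinite paths.

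\emph{From path-based to explicit memory.} Given $\sched'$, take $\modes = \finPaths(\statei)$, which is countable because $\mdp$ is finite. Set $\start = \delta_{\statei}$, $\modef(\pi, \action, \state') = \delta_{\pi\action\state'}$, and $\act(\pi, \state) = \sched'(\pi)$. For modes with $\last(\pi) \neq \state$, which are never reached, any enabled-action distribution can be used. Along every path the mode is deterministically the history, so a direct induction gives equality of the two measures on finite paths.
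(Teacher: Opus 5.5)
Your proof is correct and follows the paper's outline: one explicit construction in each direction. The reverse direction is the same as the paper's, with modes as histories from $\statei$. Incidentally, the paper's update $\Iverson{\mode=\statei\action_0\ldots\state_n}$ should append $\action\state'$, as your $\delta_{\pi\action\state'}$ does.

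The substantive difference is in the forward direction. You propagate the joint weight $\mu_{\pi\action\state'}(\mode')=\sum_{\mode}\mu_\pi(\mode)\,\act(\mode,\last(\pi))(\action)\,\modef(\mode,\action,\state')(\mode')$ and normalize by $w(\pi)$. Thus $\sched'(\pi)$ is the posterior action distribution given the whole observed history. The paper instead propagates $\modef(\pi\beta\state')(\mode)=\sum_{\mode'}\modef(\mode',\beta,\state')(\mode)\cdot\modef(\pi)(\mode')$. This never weights $\mode'$ by the probability $\act(\mode',\last(\pi))(\beta)$ that it actually chose $\beta$.

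Your version is the one that works. Take a state $\state$ with two self-loop actions $\action,\beta$, and two modes drawn uniformly by $\start$ and never updated: one always plays $\action$, the other always $\beta$. Then the path $\state\,\action\,\state\,\beta\,\state$ has probability $0$. The paper's recursion keeps the mode distribution uniform forever and yields the memoryless uniform scheduler, which assigns this path probability $\tfrac14$. So your conditioning step, together with your handling of $w(\pi)=0$, is exactly what this direction needs.
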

\begin{proof}
    `$\Rightarrow$': 
    Given $\sched = (\modes, \start, \modef, \act)$, we recursively define
    \begin{itemize}
        \item for $\state \in \states$, $\mode \in \modes$, $\action \in \Act(\state)$, $\pi \beta \state' \in \finPaths$: 
        \[ 
            \modef(\state)(\mode) = \start(\mode)
            \qquad 
            \modef(\pi \beta \state')(\mode) = \sum_{\mode' \in \modes} \modef(\mode', \state', \beta)(\mode) \cdot \modef(\pi)(\mode')
        \]
    
        \item for $\pi \in \finPaths$, $\action \in \Act(\state)$:
        \[
            \act(\pi)(\action) = \sum_{\mode \in \modes} \act(\mode, \last(\pi))(\action) \cdot \modef(\pi)(\mode)
        \]
    \end{itemize}
    We define $\sched' \colon \finPaths \to \Distr(\Act)$ with $\sched'(\pi) = \act(\pi)$ for $\pi \in \finPaths$.
    
    `$\Leftarrow$': 
    Given a scheduler $\sched' \colon \finPaths \to \Distr(\Act)$ we construct $\sched = (\modes, \start, \modef, \act)$ with 
    \begin{itemize}
        \item $\modes = \finPaths(\statei)$
        
        \item $\start(\statei) = 1$ and $\start(\mode) = 0$ for $\mode \in \modes \setminus \{\statei\}$
        
        \item $\modef( \statei \action_0 \ldots \state_n, \action, \state')(\mode) = \Iverson{q = \statei \action_0 \ldots \state_n}$
        for $\statei \action_0 \ldots \state_n \in \finPaths(\statei)$, $\action \in \Act(\state_n)$, $\state' \in \states$ and $\mode \in \modes$

        \item $\act(\statei \action_0 \ldots \state_n, \state_n) = \sched'(\statei \action_0 \ldots \state_n)$ for $\statei \action_0 \ldots \state_n \in \finPaths(\statei)$ \footnote{Technically, we would also need to define $\act(\statei \action_0 \ldots \state_n, \mode)$ for $\mode \neq \state_n$; we can choose any arbitrary distribution.}
        \qedhere
    \end{itemize}
\end{proof}
Note that the transfer from explicit-memory to path-based representation does not depend on a given initial state, but the other direction does.

In the following, for an explicit-memory scheduler $\sched$ we may sometimes use $\sched(\pi)$ to denote $\sched'(\pi)$ for the induced path-based scheduler $\sched'$ from the construction above under abuse of notation.

\relBuechiToTelReachLeftarrowMR*

\begin{proof}
    Let $\quotT[\sched] \in \SchedsMR[\quotT]$.
    We construct a scheduler $\sched \in \Scheds$ analogously to \cite[pp.\ 10--11]{baierCertificatesWitnesses2024}.

    Recall that $\state_j = \state_{j'}$ for $j,j' \in \relInd(c)$.

    Intuitively, when entering an MEC $\mec$ we flip a coin and either switch to some scheduler staying in $\mec$ forever, or to some scheduler leaving $\mec$ via the action chosen by $\quotT[\sched]$.
    More precisely, for $\mathcal{T} \subseteq \indexc[\mathcal{T}]$ for some $c \in \comb$, with probability $\quotT[\sched](\indexc[\state])(\epsilon^{\mathcal{T}})$ we switch to a memoryless scheduler going to a sub-EC $\ec'$ of $\mec$ with $\ec' \in \EC_{\mathcal{T}}$.
    With the remaining probability $1- \sum_{\mathcal{T} \subseteq \indexc[\mathcal{T}], c \in \comb} \quotT[\sched](\indexc[\state])(\epsilon^{\mathcal{T}})$, we switch to a memoryless scheduler that leaves $\mec$ almost-surely, while mimicking the probabilities with which $\quotT[\sched]$ chooses each outgoing action of $\mec$.
    The construction of the latter is technically involved and we refer to~\cite{baierCertificatesWitnesses2024} for details on the formal construction.
    
    We define $\sched$ as a tuple $(\modes, \start, \modef, \act)$ where
    \begin{itemize}
        \item $\modes = \{\mode_0\} \cup \{ \mode_{\mathcal{T}} \mid \mathcal{T} \subseteq \mathcal{T}_c, c \in \comb \}$.

        In the following, we use $\mu_\mec \in \Distr(\modes)$ to denote the distribution defined by $\mu_\mec(\mode_0) = 1 - \sum_{\mathcal{T} \subseteq \mathcal{T}_c, c \in \comb} \quotT[\sched](\state_\mec)(\epsilon^{\mathcal{T}})$ and $\mu_\mec(\mode_{\mathcal{T}}) = \quotT[\sched](\state_\mec)(\epsilon^{\mathcal{T}})$ for $\mathcal{T} \subseteq \mathcal{T}_c, c \in \comb$.

        \item $\start \colon \modes \to [0,1]$ defined as follows.
        If there exists some $\mec_0 \in \MEC(\mdp)$ containing $\statei$ then 
        $\start = \mu_{\mec_0}$.
        Otherwise $\start(\mode) = \Iverson{\mode = \mode_0}$ for $\mode \in \modes$.

        \item $\modef \colon \modes \times \Act \times \states \to \Distr(\modes)$ with
        \[  
            \modef(\mode, \action, \state) = 
            \begin{cases}
                \mu_\mec & \text{if } \exists \mec=(S',A) \in \MEC(\mdp) .\ s \in S' \wedge \action \not\in A  \\
                (\mode \mapsto 1) & \text{otherwise}~.
            \end{cases}
        \]

        Intuitively, we flip a coin when entering an MEC to go to $\mode_0$ or some $\mode_I$.
        Inside MECs and on states outside MECs, we do not update the memory.

        \item $\act \colon \states \times \modes \to \Distr(\Act)$ defined as follows:
        \begin{itemize}
            \item For $\state \in \states \setminus \states_{\MEC}$ and $\mode \in \modes$:
            \[
                \act(\state,\mode) = \quotT[\sched](\state)
            \]
            
            \item For $\state \in \states$ where there exists some $\mec=(S',A) \in \MEC(\mdp)$ with $\state \in S'$:
            \begin{align*}
            \act(\state,\mode_0) &=  \sched_{\textsf{leave},\mec}(\state)  \\
            \act(\state, \mode_{\mathcal{T}}) &=
                \begin{cases}
                    \sched_{\mec,\mathcal{T}}(\state) &\text{if } \mec \in \MEC_{\mathcal{T}} \\
                    0 &\text{otherwise}
                \end{cases}
                \text{ for } \mathcal{T} \subseteq \mathcal{T}_c, c \in \comb
            \end{align*}
        \end{itemize}

        where for $\mec=(S',A) \in \MEC(\mdp)$:
        \begin{itemize}
            \item 
            $\sched_{\textsf{leave},\mec}$ leaves $\mec$ almost surely while mimicking the probabilities with which $\quotT[\sched]$ chooses each outgoing action of $\mec$, i.e., $\action \in \quotT[\Act](\indexc[\state])$. 
            Formally, let $p_\mec = 1 - \sum_{I\subseteq[m]} \quotT[\sched](\indexc[\state])(\epsilon^I)$ for $\mec \in \MEC(\mdp)$, then the scheduler satisfies the following constraint for $\quotT[\pi] = \quotT[\pi']\beta\state_\mec \in \finPaths[\quotT](\mapStates(\statei))$, $\state \in S'$ and $\action \in \Act(\state) \setminus A$:
            \begin{equation}
                \left[
                \sum_{\pi \in \quotT[f](\quotT[\pi])} \hspace{-1ex}
                \Pr_{\statei}^{\sched}(\pi) 
                \hspace{-3ex} \sum_{\pi' \colon \last(\pi) \xrightarrow[]{\mec} \state} \hspace{-4ex}
                \Pr_{\last(\pi)}^{\sched(\pi)}(\pi') 
                \right]
                \sched_{\textsf{leave},\mec}(\state)(\action) p_\mec 
                = 
                \left[
                \sum_{\pi \in \quotT[f](\quotT[\pi])} \hspace{-1ex}
                \Pr_{\statei}^{\sched}(\pi) 
                \right]
                \quotT[\sched](\state_\mec)(\action)
                \label{eq:leave-C-assurance}
            \end{equation}
            where we use $\pi' \colon \last(\pi) \xrightarrow[]{\mec} \state$ as shorthand for $\pi \in \{ \state_0 \action_0 \ldots \state_n \in \finPaths(\last(\pi)) \mid \state_n = \state \wedge \forall i < n .\ \state_i \in S' \wedge \action_i \in A \}$.
            For a formal construction of $\sched_{\textsf{leave},\mec}$ we refer to \cite[p.\ 10--11]{baierCertificatesWitnesses2024} with $p_\mec = 1 - \sum_{I\subseteq[m]} \quotT[\sched](\indexc[\state])(\epsilon^I)$.

            \item $\sched_{\mec,\mathcal{T}}$ ensures that we stay in $\mec$ and see exactly the target sets $T \in \mathcal{T}$ infinitely often.
            Such a scheduler must exist with the following reasoning: 
            By construction, for each $\mathcal{T} \subseteq \mathcal{T}_c$ with $\mec \in \MEC_{\mathcal{T}}$, there must exist some sub-EC $\ec'$ of $\mec$ that (1) contains at least state from each $T \in \mathcal{T}$ and (2) contains no state from any $T \in \indexc[\mathcal{T}] \setminus \mathcal{T}$.
            Hence, there must exist some memoryless deterministic scheduler $\sched_{\mec,\mathcal{T}}$ on $\mdp$ restricted to $\mec$ that almost-surely goes to $\ec'$, and stays there forever. 
        \end{itemize}
    \end{itemize}

    Observe that $\mode_0$ and $\mode_\emptyset$ have different purposes: In $\mode_0$, we aim to leave $\mec$ while in $\mode_\emptyset$ we aim to stay in $\mec$ without seeing any target set infinitely often.

    We can convince ourselves that $\sched$ is well-defined.

    We can show that for all $\quoti[\pi] \in \finPaths[\quoti](\mapStates(\statei))$ with $\last(\quoti[\pi]) \not \in \states_\bot$ it holds that
    $\Pr^{\quoti, \quoti[\sched]}_{\mapStates(\statei)}(\quoti[\pi]) = \sum_{\pi \in \quoti[\mapPaths](\quoti[\pi])} \Pr^{\mdp, \sched}_{\statei}(\pi)$, by induction on the length of $\quoti[\pi]$, using~\eqref{eq:leave-C-assurance}. 
    
    Finally, let us show that $\sched$ satisfies the desired claim.
    Let $j \in \relInd(c)$, then we can rewrite $\Pr_{\statei}^{\mdp,\sched}(\Globally \Finally T_j)$ as follows (see, e.g.,~\cite[Sec.~10.6.3]{baierPrinciplesModel2008}):
    \begin{align*}
        &\Pr_{\statei}^{\mdp,\sched}(\Globally \Finally T_j)
        \\ = & 
        \sum_{\substack{\mec=(S',A) \in \MEC,\\ S' \cap T_j \neq \emptyset}} 
        \Pr_{\statei}^{\mdp,\sched}(\{ \pi \in \Paths(\statei) \mid \Limit(\pi) \subseteq \mec \wedge \Limit(\pi) \cap T_j \neq \emptyset \})
        \\ = & 
        \sum_{\substack{\mathcal{T} \subseteq \mathcal{T}_c, c \in \comb \\ T_j \in \mathcal{T}}} \; 
        \sum_{\mec \in \MEC_{\mathcal{T}}} \\ &\qquad  
        \Pr_{\statei}^{\mdp,\sched}\left( \left\{ \pi \in \Paths(\statei) \;\Bigg|\; \begin{aligned} &\Limit(\pi) \subseteq \mec \wedge \\ &\bigwedge_{T \in \mathcal{T}} \Limit(\pi) \cap T \neq \emptyset \wedge \bigwedge_{T \not\in \mathcal{T}} \Limit(\pi) \cap T = \emptyset \end{aligned} \right\} \right)
        ~.
    \end{align*}
    Recall that $\sched_{\mec,\mathcal{T}}$ ensures that `$\Globally \mec \wedge \bigwedge_{T \in \mathcal{T}} \Globally \Finally T \wedge \bigwedge_{T \not\in \mathcal{T}}  \Finally \Globally \overline{T}$' holds almost-surely. No other $\sched_{\mec,\mathcal{T}'}$ ensures this, and $\sched_{\textsf{leave},\mec}$ leaves $\mec$ almost-surely. Hence, the above property is satisfied if and only if $\sched$ follows $\sched_{\mec,\mathcal{T}}$ after $\pi$, which is the case if and only if $\sched$ moves to mode $\mode_{\mathcal{T}}$ after $\pi$. Hence, we can rewrite the above further to
    \begin{align*}
        & \Pr_{\statei}^{\mdp,\sched}(\Globally \Finally T_j)
        = 
        \sum_{\substack{\mathcal{T} \subseteq \mathcal{T}_c, c \in \comb \\ T_j \in \mathcal{T}}} 
        \sum_{\mec \in \MEC_{\mathcal{T}}}
        \sum_{\quotT[\pi] \colon \mapStates(\statei) \to \state_\mec}
        \sum_{\pi \in \quotT[f](\quotT[\pi])} \Pr_{\statei}^{\mdp,\sched}(\pi) 
        \cdot 
        \modef(\pi)(\mode_{\mathcal{T}})
        ~.
    \end{align*}
    We now observe that $\modef(\pi)(\mode_{\mathcal{T}}) = \quotT[\sched](\state_\mec)(\epsilon^{\mathcal{T}})$ holds for all $\pi \in \quotT[f](\quotT[\pi])$ for $\quotT[\pi] \colon \mapStates(\statei) \to \state_\mec$ for $\mec \in \MEC_{\mathcal{T}}$ for $\mathcal{T} \subseteq \mathcal{T}_c$ for $c \in \comb$ since 
    (1) for $\pi = \statei$ we must have $\statei \in \mec$ and thus $\modef(\statei)(\mode_{\mathcal{T}}) = \start(\mode_{\mathcal{T}}) = \quotT[\sched](\state_\mec)(\epsilon^{\mathcal{T}})$ and 
    (2) for $\pi = \pi' \beta \state'$ we have $s' \in T$ but $\beta \not\in A$ and thus $\modef(\mode', \beta, \state') = \quotT[\sched](\state_\mec)(\epsilon^{\mathcal{T}})$ for all $\mode' \in \modes$ and hence $\modef(\pi' \beta \state')(\mode_{\mathcal{T}}) = $
        \[ 
        \sum_{\mode' \in \modes} \modef(\mode', \state', \beta)(\mode_{\mathcal{T}}) \cdot \modef(\pi')(\mode') = \quotT[\sched](\state_\mec)(\epsilon^{\mathcal{T}}) \cdot \sum_{\mode' \in \modes} \modef(\pi')(\mode') = \quotT[\sched](\state_\mec)(\epsilon^{\mathcal{T}})
        ~.
    \]
    Thus,
    \begin{align*}
        \Pr_{\statei}^{\mdp,\sched}(\Globally \Finally T_j)
        = & 
        \sum_{\substack{\mathcal{T} \subseteq \mathcal{T}_c, c \in \comb \\ T_j \in \mathcal{T}}} \;
        \sum_{\mec \in \MEC_{\mathcal{T}}}
        \sum_{\quotT[\pi] \colon \mapStates(\statei) \to \state_\mec}
        \sum_{\pi \in \quotT[f](\quotT[\pi])} \Pr_{\statei}^{\mdp,\sched}(\pi) 
        \cdot 
        \quotT[\sched](\state_\mec)(\epsilon^{\mathcal{T}})
        \\ = &
        \sum_{\substack{\mathcal{T} \subseteq \mathcal{T}_c, c \in \comb \\ T_j \in \mathcal{T}}} \; 
        \sum_{\mec \in \MEC_{\mathcal{T}}}
        \sum_{\quotT[\pi] \colon \mapStates(\statei) \to \state_\mec}
        \Pr_{\mapStates(\statei)}^{\quotT,\quotT[\sched]}(\pi)
        \cdot 
        \quotT[\sched](\state_\mec)(\epsilon^{\mathcal{T}})
        \\ = &
        \sum_{\substack{\mec=(S',A) \in \MEC,\\ S' \cap T_j \neq \emptyset}} \;
        \sum_{\quotT[\pi] \colon \mapStates(\statei) \to \state_\mec}
        \Pr_{\mapStates(\statei)}^{\quotT,\quotT[\sched]}(\quotT[\pi]) 
        \sum_{\substack{\mathcal{T} \subseteq \mathcal{T}_c, c \in \comb \\ T_j \in \mathcal{T}}} 
        \quotT[\sched](\state_\mec)(\epsilon^{\mathcal{T}})
        \\ = &
        \sum_{\substack{\mathcal{T} \subseteq \mathcal{T}_c, c \in \comb \\ T_j \in \mathcal{T}}} 
        \Pr_{\mapStates(\statei)}^{\quotT,\quotT[\sched]}(\Finally \bot^{\mathcal{T}}) 
        =
        \Pr_{\mapStates(\statei)}^{\quotT, \quotT[\sched]}(\Finally U_{T_j})
        ~.
    \end{align*} 
\end{proof}

In order to show \cref{th:relBuechi-to-relReach_Leftarrow} using \cref{th:relBuechi-to-relReach_Leftarrow_MR}, we define the \emph{flip-extension} of an MDP $\mdp$ with some initial state $\state_{0} \in \states$, which encodes the memory structure of schedulers that are the convex combination of two schedulers for $\mdp$ into the MDP.
    
    \begin{definition}
        Given an MDP $\mdp=\mdptup$ and $\state_{0} \in \states$, the \emph{flip-extension} of $\mdp$ w.r.t.\ $\state_0$ is defined as $\flip[\mdp] = \mdptup[\flip]$ with 
        \begin{itemize}
            \item $\flip[\states] = (\states \times \{1\}) \cup (\{\states\} \times \{2\}) \cup \{\state_{\textit{flip}}\}$,
            \item $\flip[\Act] = \Act \cup \{1, 2\}$, and
            \item $\flip[\Trans] \colon \flip[\states] \times \flip[\Act] \times \flip[\states] \to [0,1]$ is defined by case distinction: \\
            $\flip[\Trans](\state_{\textit{flip}}, i, (\state_0, i))=1$ for $i \in \{1,2\}$, 
            \\
            $\flip[\Trans]((\state, i), \action, (\state',i)) = \Trans(\state, \action, \state')$ for $i \in \{1,2\}$ and $\state, \state' \in \states$ and $\action \in \Act$,
            \\ and all other transition probabilities are 0.
        \end{itemize}
    \end{definition}
    
Hence, every scheduler for $\flip[\mdp]$ can directly be interpreted as the convex combination of two schedulers for $\mdp$, and vice versa.
In particular, memoryless randomized schedulers suffice on $\flip[\mdp]$ to achieve any value for a weighted sum of reachability probabilities that is achievable on $\mdp$:

\label{app:flip-ext_MR_proof}
\begin{restatable}{lemma}{flipExtMR}
        \label{le:flip-ext_MR}
        Let $c \in \comb$ and $q_c \in \Q$, then
        \begin{align*}
            & 
            \exists \quotT[\sched] \in \Scheds[\quotT] .\ 
            \sum_{j \in \relInd(c)}
            q_j \Pr^{\quotT, \quotT[\sched]}_{\mapStates(\indexc[\state])}\left(\Finally U_{T_j} \right) = q_c
            \\ \Iff\quad &
            \exists \quotT[\sched]' \in \SchedsMR[{\flip[(\quotT)][{\mapStates(\indexc[\state])}]}] .\ 
            \sum_{j \in \relInd(c)}
            q_j \Pr^{\flip[(\quotT)][{\mapStates(\indexc[\state])}], \quotT[\sched]'}_{\sflip}\left(\Finally U_{T_j} \right) = q_c
        \end{align*}
    \end{restatable}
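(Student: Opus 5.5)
The plan is to prove both directions through the observation, already used in the sketch of \cref{th:relBuechi-to-relReach_Leftarrow}, that in $\quotT$ all success sets $U_{T_j}$ are absorbing. Write $s^\ast := \mapStates(\indexc[\state])$ for the common initial state of $c$, and $\mdp^{\circlearrowright} := \flip[(\quotT)][s^\ast]$ for the flip-extension.

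\emph{Direction ``$\Leftarrow$''.} Let $\quotT[\sched]' \in \SchedsMR[\mdp^{\circlearrowright}]$ be given. It chooses some distribution $(\lambda, 1-\lambda)$ over the actions $1,2$ in $\sflip$. It then behaves memorylessly on the copy $\states \times \{1\}$ and on the copy $\states\times\{2\}$. We define $\quotT[\sched]$ on $\quotT$ by initially flipping a $\lambda$-coin and then following the copy-$i$ memoryless scheduler for ever; the outcome of the coin is stored in memory. Both copies are identical to $\quotT$, so each $\Pr(\Finally U_{T_j})$ under $\quotT[\sched]$ equals $\lambda$ times the copy-1 value plus $(1-\lambda)$ times the copy-2 value. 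This is exactly the value in $\mdp^{\circlearrowright}$ from $\sflip$, where $U_{T_j}$ is read as the union of its copies. Hence the weighted sums agree.

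\emph{Direction ``$\Rightarrow$''.} Fix an arbitrary $\quotT[\sched]$ achieving the value $q_c$. Consider the reward structure $\rew := \sum_{j\in\relInd(c)} q_j \rew_{U_{T_j}}$, which collects $q_j$ on the first entry to $U_{T_j}$. Because the $U_{T_j}$ consist of absorbing sinks $\bot^{\mathcal{T}}$, a path enters at most one sink. So no goal unfolding is needed: $\rew$ can be taken as a state reward on $\quotT$ that assigns $\sum_{j: \bot^{\mathcal{T}}\in U_{T_j}} q_j$ to $\bot^{\mathcal{T}}$ and collects it once, for example via a fresh copy as in \cref{def:goalUnfolding}. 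This copy does not affect memorylessness on the original states. By \cref{thm:weightedReachProbsAsExpectedRew} and \cref{thm:exRewPtimeMD}, there are MD schedulers $\sched^{\min},\sched^{\max}$ on $\quotT$ attaining $v^{\min}\le q_c\le v^{\max}$, and these MD schedulers do not need the unfolding bit. If $v^{\min}=v^{\max}$, use either one. Otherwise choose $\lambda\in[0,1]$ with $\lambda v^{\min}+(1-\lambda)v^{\max}=q_c$, as in the proof of \cref{thm:cruxApproxEqual}. The MR scheduler on $\mdp^{\circlearrowright}$ then picks action $1$ with probability $\lambda$ in $\sflip$, plays $\sched^{\min}$ on copy 1 and $\sched^{\max}$ on copy 2, and attains $q_c$.

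\emph{Main obstacle.} The delicate step is justifying that optimal values on $\quotT$ are attained by MD schedulers that need no memory about which $U_{T_j}$ were already visited. I would argue this carefully via absorption: once a sink is entered, no further decision matters. So the goal unfolding of $\quotT$ w.r.t.\ $\{U_{T_j}\}$ differs from $\quotT$ only by extra copies of the sinks, and MD schedulers project back. For this I would reuse the argument from the proof of \cref{th:general_PTIME}\ref{item:absorbing} and \cref{th:MD_PTIME}\ref{item:MD_n-comb}.
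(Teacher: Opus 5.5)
Your proposal is correct and follows essentially the paper's route. For ``$\Leftarrow$'' you take the $\lambda$-convex combination of the two copy-schedulers. For ``$\Rightarrow$'' you use absorption of the $U_{T_j}$ to reduce the weighted sum to one expected-reward objective on $\quotT$, take MD minimizing and maximizing schedulers, and mix them at $\sflip$ with the interpolating $\lambda$. Two points you make explicit are only glossed in the paper's ``$\quotT$ is already a goal unfolding'': that the sink reward must be collected only once, via the extra sink copies, and that $U_{T_j}$ is read as the union of its copies in the flip-extension.
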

\begin{proof}
    First observe that for $i=1,2$, any scheduler for the $\states \times \{i\}$-part of $(\quotT)_{\textit{flip}@\mapStates(\state_j)}$ can be interpreted as a scheduler for $\quotT[\mdp]$, and vice versa.
        
    `$\Leftarrow$': 
    Let $\sched_1, \sched_2 \in \Scheds[\quotT]$ be the schedulers induced by $\quotT[\sched]' \in \SchedsMR[{\flip[(\quotT)][{\mapStates(\state_j)}]}]$ on the $1$- and $2$-copy of $\mdp$.
    We define $\quotT[\sched] \in \Scheds[\quotT]$ as the convex combination of $\sched_1$ and $\sched_2$ w.r.t.\ $\lambda = \quotT[\sched]'(\sflip, 1)$.
    
    `$\Rightarrow$': 
    By construction, the sets $U_{T_j}$ contain only absorbing states, i.e., $\quotT$ is already a goal unfolding w.r.t.\ these sets. 
    Following the construction from \cref{sec:reach_algo}, we can thus reduce the \RelReach query $\exists \sched' \in \Scheds[\quotT] .\ \sum_{j \in \relInd(c)} q_j \Pr^{\quotT, \sched'}_{\mapStates(\statei)}(\Finally U_{T_j}) = x$ to a single-objective expected reward query on $\quotT$ itself such that the expected reward under some scheduler equals exactly the value of the weighted sum of reachability probabilities under that scheduler. 
    There exist memoryless deterministic schedulers minimizing and maximizing the expected reward, respectively (see, e.g., \cite[Th.\ 7.1.9]{putermanMarkovDecision1994}). 
    Since $x$ is achievable (via witness $\quotT[\sched]$), there must exist some $\lambda \in [0,1]$ such that the convex combination of the minimizing and the maximizing scheduler w.r.t.\ $\lambda$ achieves exactly expected reward $x$.
    We construct $\quotT[\sched]' \in \SchedsMR[{\flip[(\quotT)][{\mapStates(\state_j)}]}]$
    by choosing action $1$ in $\sflip$ with probability $\lambda$, behaving like the maximizing scheduler in the $1$-copy of $\quotT$ and like the minimizing scheduler in the $2$-copy of $\quotT$.
\end{proof}

We are now ready to prove \cref{th:relBuechi-to-relReach_Leftarrow}.

\begin{proof}[Proof of \NoCaseChange\cref{th:relBuechi-to-relReach_Leftarrow}]
    We reduce both sides of the implication from \cref{th:relBuechi-to-relReach_Leftarrow_MR} to reasoning about schedulers for the flip-extensions of the respective MDPs, which then allows us to conclude the claim by applying \cref{th:relBuechi-to-relReach_Leftarrow_MR}.
    
    Firstly, observe that the sets $U_{T_j}$ are absorbing and hence any value achievable for the weighted sum $\sum_{j \in \relInd(c)} q_j \Pr^{\quotT, \quotT[\sched]}_{\mapStates(\indexc[\state])}\left(\Finally U_{T_j} \right)$ can be achieved by the convex combination of two MD schedulers (see proof of \cref{thm:cruxApproxEqual}).
    Thus, any value achievable for the weighted sum of reachability probabilities on the quotient can be achieved by a \emph{memoryless} randomized scheduler on the flip-extension of the quotient, and vice versa.
    For $c \in \comb$, let $\indexc[\state]$ be the unique state in $\mdp$ with $\state_j = \indexc[\state]$ for all $j \in \relInd(c)$.

    Let $c \in \comb$ and $q_c \in \Q$, then by \cref{le:flip-ext_MR} it holds that
        \begin{align*}
            & 
            \exists \quotT[\sched] \in \Scheds[\quotT] .\ 
            \sum_{j \in \relInd(c)}
            q_j \Pr^{\quotT, \quotT[\sched]}_{\mapStates(\indexc[\state])}\left(\Finally U_{T_j} \right) = q_c
            \tag{1}
            \\ \Iff\quad &
            \exists \quotT[\sched]' \in \SchedsMR[{\flip[(\quotT)][{\mapStates(\indexc[\state])}]}] .\ 
            \sum_{j \in \relInd(c)}
            q_j \Pr^{\flip[(\quotT)][{\mapStates(\indexc[\state])}], \quotT[\sched]'}_{\sflip}\left(\Finally U_{T_j} \right) = q_c
            ~.
            \tag{1'}
            \label{eq:1p_app}
        \end{align*}
    Observe further that 
    \begin{align*}
        & 
        \exists \sched' \in \Scheds[{\flip[\mdp][{\indexc[\state]}]}] .\ 
        \sum_{j \in \relInd(c)}
        q_j \Pr^{\flip[\mdp][{\indexc[\state]}], \sched'}_{\sflip}\left(\Globally \Finally  T_j \right) = q_c
        \tag{2'}
        \label{eq:2p_app}
        \\ \Iff\quad &
        \exists \sched \in \Scheds[\mdp] .\ 
        \sum_{j \in \relInd(c)}
        q_j \Pr^{\mdp, \sched}_{\indexc[\state]}\left(\Globally \Finally  T_j \right) = q_c
        ~.
        \tag{2}
    \end{align*}

    The claim `$\eqref{eq:1p_app} \Implies \eqref{eq:2p_app}$' is an instance of \cref{th:relBuechi-to-relReach_Leftarrow_MR} with $\mdp=\flip[\mdp][{\indexc[\state]}]$ and $\quotT = \quotT[{(\flip[\mdp][{\indexc[\state]}])}]$, since $\flip[(\quotT)][{\mapStates(\indexc[\state])}] = \quotT[{(\flip[\mdp][{\indexc[\state]}])}]$.
    Hence, we have shown the desired claim `$\eqref{eq:1} \Implies \eqref{eq:2}$'. 
\end{proof}

%% file: appendix/buechi_compl_proofs.tex
\section{Proofs for \cref{sec:buechi_complexity}}
\label{app:buechi_compl_proofs}

\subsection{Proof of \cref{th:buechi_NP-complete}}
\label{app:buechi_NP-complete}

\buechiNPComplete*

\begin{proof}
    We first show membership in \NP and then strong \NP-hardness.
    
    \proofsubparagraph{Membership in \NP:}
    Given some MDP $\mdp$ and \RelBuechi property $\phi$, let $\phi'$ be the constructed \RelReach property.
    We will show that it suffices to guess a polynomial number of MD schedulers in order to guess a witness for the query.
    
    The key observation is that all target sets $U_{T_i}$ of $\phi'$ on $\quotT$ are absorbing and therefore for each $c \in \comb$, the goal unfolding of $\quotT$ w.r.t.\ the target sets for $c$, $\indexc[\mathcal{T}]' = \{U_{T_i} \mid i \in \relInd(c)\}$, corresponds to $\quotT$ with an additional copy of each sink in some $U_{T_i}$ for some $i \in \relInd(c)$.
    An MD scheduler for $\indexc[(\quotT)]$ can thus be translated to an MD scheduler for $\quotT$ while preserving the reachability probabilities.
    By \cref{th:relBuechi-to-relReach_Leftarrow_MD}, an MD witness for $\phi'$ on $\quotT$ can be translated to an MD witness for $\phi$ on $\mdp$.
    
    We distinguish by comparison operator:
    \begin{itemize}
        \item 
        Assume $\comp$ is $\geq$ or $>$. 
        Let us first assume there is only a single state-scheduler combination $\{c\} = \comb$.
        By \cref{th:relBuechi-to-relReach} and \cref{thm:correctness}, solving $\phi$ reduces to maximizing some expected reward on $\indexc[(\quotT)]$, for which memoryless deterministic schedulers suffice.
        By the above reasoning, an MD witness on $\indexc[(\quotT)]$ can be translated to an MD witness on $\mdp$.
        
        For $\geq,>$-queries with $|\comb|>1$, we can thus guess witnesses by guessing an MD witness for each state-scheduler combination. 

        \item 
        Assume $\comp$ is $\approx_\epsilon$. 
        Let us first assume there is only a single state-scheduler combination $\{c\} = \comb$.
        By \cref{th:relBuechi-to-relReach} and \cref{thm:cruxApproxEqual}, solving $\phi$ reduces to both maximizing and minimizing some expected reward on $\indexc[(\quotT)]$, and checking whether the given bound $q$ has distance at least $\epsilon$ to the minimal and maximal achievable values.
        With analogous reasoning to $\geq,>$, we can translate the maximizing and the minimizing schedulers to MD schedulers on $\mdp$.
        We can thus guess a witness for $\phi$ by guessing an MD scheduler maximizing the expected reward and guessing an MD scheduler minimizing the expected reward.

        For $\approx_\epsilon$ with $|\comb|>1$, we can thus guess witnesses by guessing two MD witnesses for each state-scheduler combination.

        \item 
        Assume $\comp$ is $\not\approx_\epsilon$; recall that $\phi$ corresponds to a disjunction of two \RelReach properties with comparison operators $<$ and $>$, respectively.
        We can guess a witness for $\phi$ by guessing which disjunct holds and then guessing an MD witness following the reasoning for $\geq,>$-queries. 
    \end{itemize}
    Note that the above does not imply that MD schedulers suffice for \RelBuechi, since \cref{thm:whyComb} only holds for \emph{memoryful} schedulers: The existence of an MD witness schedulers for each state-scheduler combination does not necessarily imply the existence of MD witnesses for the schedulers quantified in the original property (where several state-scheduler combinations may share a scheduler).

    \proofsubparagraph{Strong \NP-hardness:}
    We show \NP-hardness by reduction from SAT~\cite{karpReducibilityCombinatorial1972}.
    Given a SAT formula $\phi$ over variables $x_1,\ldots, x_N$ in conjunctive normal form with clauses $C_1, \ldots, C_M \subseteq \{x_i, \overline{x_i} \mid i \in \{1, \ldots, N\}\}$.
    We construct the MDP $\mdp_\phi = (\state_\phi, \Act_\phi, \Trans_\phi)$, depicted in \cref{fig:buechi_NP-complete}, with
    \begin{itemize}
        \item $\states_\phi = \{x_i, \overline{x_i} \mid i \in \{1, \ldots, N\} \}$
        \item $\Act_\phi = \{\action, \overline{\action}\}$
        \item 
        $\Trans_\phi(x_i, \action, x_{i+1}) = 1$, and
        $\Trans_\phi(x_i, \overline{\action}, \overline{x_{i+1}}) = 1$ for $i \in \{1, \ldots, N-1\}$, \\
        $\Trans_\phi(x_N, \action, x_{1}) = 1$, and
        $\Trans_\phi(x_N, \overline{\action}, \overline{x_{1}}) = 1$,
        and all other transition probabilities are 0.
    \end{itemize}
    We construct the following \RelBuechi property:
    \begin{align*}
        \exists \sched \in \Scheds[\mdp_\phi] .\
        \sum_{i=1}^{N} \left[ 2 - (\Pr^{\sched}_{x_1}(\Globally\Finally \{x_i\}) + \Pr^{\sched}_{x_1}(\Globally\Finally \{\overline{x_i}\})) \right]
        + 
        \sum_{j=1}^{M} \Pr^{\sched}_{x_1}(\Globally \Finally C_j) 
        \geq N + M
        ~.
        \tag{$\phi'$}
        \label{eq:buechi_from_SAT}
    \end{align*}

    Let us now show that $\phi$ is satisfiable iff \eqref{eq:buechi_from_SAT} holds.
    
    ``$\Rightarrow$'': Assume $\phi$ is satisfiable with model $\mathcal{I}$. Then $\mathcal{I}$ induces a memoryless deterministic scheduler as follows: If $\mathcal{I}$ chooses $x_i$, we choose $\action$ in $x_{i-1}$ (modulo $N$), if $\mathcal{I}$ chooses $\overline{x_i}$ we choose $\overline{\action}$.

    ``$\Leftarrow$'': Assume \eqref{eq:buechi_from_SAT} holds.
    Observe that \emph{deterministic} schedulers suffice for satisfying this property: 
    By \cref{th:relBuechi-to-relReach,thm:correctness} we can reduce this property to computing a scheduler maximizing some expected reward on the MEC quotient of $\mdp_\phi$; for this memoryless deterministic schedulers suffice with analogous reasoning to~\cref{thm:exRewPtimeMD} and by \cref{th:relBuechi-to-relReach_Leftarrow_MD} any MD scheduler on the goal unfolding can be translated back to an MD scheduler on the original MDP.
    
    So let $\sched$ be a deterministic witness to \eqref{eq:buechi_from_SAT}.    
    Observe that for any scheduler $\sched'$ it must hold that $\Pr^{\sched'}_{x_1}(\Globally\Finally \{x_i\}) + \Pr^{\sched'}_{x_1}(\Globally\Finally \{\overline{x_i}\}) \in [1,2]$ and hence $ 2 - (\Pr^{\sched'}_{x_1}(\Globally\Finally \{x_i\}) + \Pr^{\sched'}_{x_1}(\Globally\Finally \{\overline{x_i}\})) \in [0,1]$ for all $i \in \{1,\ldots,N\}$, and further that 
    $\Pr^{\sched'}_{x_1}(\Globally \Finally C_j) \in [0,1]$ for all $j \in \{1,\ldots,M\}$.
    Since $\sched$ satisfies \eqref{eq:buechi_from_SAT}, it must thus hold that 
    $\Pr^{\sched'}_{x_1}(\Globally\Finally \{x_i\}) + \Pr^{\sched'}_{x_1}(\Globally\Finally \{\overline{x_i}\}) = 1$ for all $i \in \{1,\ldots,N\}$, and that 
    $\Pr^{\sched'}_{x_1}(\Globally \Finally C_j) = 1$ for all $j \in \{1,\ldots,M\}$.
    Since $\sched$ is deterministic, we must have $\Pr^{\sched'}_{x_1}(\Globally\Finally \{x_i\}) \in \{0,1\}$ for all $i \in \{1,\ldots,N\}$, and thus $\Pr^{\sched'}_{x_1}(\Globally\Finally \{x_i\}) = 1 \Iff \Pr^{\sched'}_{x_1}(\Globally\Finally \{\overline{x_i}\}) = 0$.
    This induces a model for $\phi$: For each $i \in \{1, \ldots, N\}$, if $\Pr^{\sched'}_{x_1}(\Globally\Finally \{x_i\})=1$, choose $x_i$; otherwise, choose $\overline{x_i}$.

    \textbf{Claim:} \emph{The above construction defines a pseudo-polynomial time transformation.}
    
    The constructed MDP has $2\cdot N$ states. Each state has at exactly two successors. 
    All transition probabilities in the MDP are either 0 or 1. 
    Hence, the magnitude of the largest number occurring in the constructed MDP is a constant.
    The constructed \RelBuechi property sums over $2N+M$ probability operators, and contains coefficients $2$ and $N+M$.
    Hence, the magnitude of the largest number in the constructed \RelBuechi instance is polynomial in the size of the original SAT instance.
\end{proof}

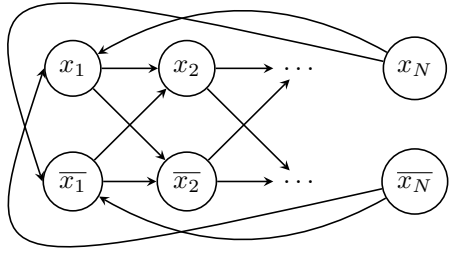
\begin{figure}
    \centering
    \begin{tikzpicture}[on grid,node distance=15mm and 15mm,semithick,>=stealth]
    \node[state] (x1) {$x_1$};
    \node[state, below=of x1] (x1n) {$\overline{x_1}$};  
    \node[state, right=of x1] (x2) {$x_2$};
    \node[state, below=of x2] (x2n) {$\overline{x_2}$};  
    \node[right=of x2] (xi) {\ldots};
    \node[below= of xi] (xin) {\ldots};
    \node[state, right= of xi] (xN) {$x_N$};
    \node[state, below=of xN] (xNn) {$\overline{x_N}$};  

    \node[above left=of x1, xshift=2mm, yshift=-2mm] (x1h) {};
    \node[below left=of x1n, xshift=2mm, yshift=2mm] (x1nh) {};

    \path[->]
    (x1) edge (x2)
    (x1) edge (x2n)
    (x1n) edge (x2)
    (x1n) edge (x2n)
    (x2) edge (xi)
    (x2) edge (xin)
    (x2n) edge (xi)
    (x2n) edge (xin)
    (xN) edge[bend right] (x1)
    (xNn) edge[bend left] (x1n)
    ;
    \draw[->] (xN) .. controls (x1h) .. (x1n.180);
    \draw[->] (xNn) .. controls (x1nh) .. (x1.180);
    \end{tikzpicture}
    \caption{Illustration of the MDP construction for the reduction from SAT to \RelBuechi.}
    \label{fig:buechi_NP-complete}
\end{figure}

\subsection{Proof of \cref{th:buechi_general_PTIME}}
\label{app:buechi_general_PTIME}

\buechiGeneralPTIME*

\begin{proof}
    \ref{item:buechi_fixed-param}: Follows from \cref{th:buechi_correctness}.
    Also, for a fixed $\numsum$, the transformation from \RelBuechi to \RelReach given in \cref{sec:buechi_algo} is a polynomial-time transformation. Hence, for any fixed $\numsum$, \RelBuechi with $m$ target sets is Turing-reducible to \RelReach with $m$ target sets.

    \ref{item:buechi_absorb}: If all targets are absorbing, then reaching a target is equivalent to visiting it infinitely often. Hence, a \RelBuechi query can be restated as a \RelReach query, which is in \PTIME in this case due to \cref{th:general_PTIME}\ref{item:absorbing}.

    \ref{item:buechi_single-target}:
    For every state-scheduler combination, there is a unique target set, so the size of the MEC quotient is linear in the size of the original MDP and hence the procedure detailed above runs in polynomial time.
\end{proof}

\subsection{Proof of \cref{th:buechi_MD_NP-complete}}
\label{app:buechi_MP_NP_complete}

\buechiMDNPComplete*

\begin{proof}
    Membership in \NP:
    Given some \RelBuechi property, MDP $\mdp$ and a memoryless deterministic scheduler $\sched \in \Scheds$, we can verify whether $\sched$ is a witness for the property 
    by computing the (exact) probabilities of the B\"uchi objectives in the induced DTMC.
    This is possible in time polynomial in the size of the state space~\cite{biancoModelChecking1995}.

    \NP-Hardness: We can show \NP-hardness by reduction from the Hamiltonian path problem as in the proof of \cref{th:MD_NP-complete}.
    For a given instance of the Hamiltonian path problem $G =(V,E)$ and $v_{\init} \in V$, we construct the MDP $\mdp$ as before.
    Observe that both target states $\state_a$ and $\state_b$ are absorbing, hence for any scheduler $\sched \in \Scheds$ we have 
    $\Pr^\sched_{\statei}(\Finally \{s_a\}) = \Pr^\sched_{\statei}(\Globally \Finally \{s_a\})$ and 
    $\Pr^\sched_{\statei}(\Finally \{s_b\}) = \Pr^\sched_{\statei}(\Globally \Finally \{s_b\})$.
    Together with our previous reasoning this already implies that for any $\epsilon < \frac{1}{2^{|V|+1}}$, it holds that there exists a Hamiltonian path from $v_\init$ in $G$ iff there exists some $\sched \in \SchedsMD$ such that $\Pr^\sched_{\statei}(\Globally \Finally \{s_a\}) - \Pr^\sched_{\statei}(\Globally \Finally \{s_b\}) \approx_\epsilon 0$.
\end{proof}

\subsection{Proof of \cref{th:buechi_general_necessary}}
\label{app:buechi_general_necessary}

\buechiGeneralNecessary*
\begin{proof}
    Consider again the MDP from \cref{fig:memory-necessary}, depicted again in \cref{fig:buechi_memory-randomization-necessary_app} and consider the property $\exists \sched .\ \Pr^{\sched}_{\state}(\Globally \Finally \{t_1\}) \approx_{\epsilon} \Pr^{\sched}_{\state}(\Globally \Finally \{t_2\})$ for some $\epsilon\geq 0$.
    
    We can construct a memoryful randomized witness for this property as follows for any $\epsilon \geq 0$. 
    Let $\sched_{\alpha}$ and $\sched_{\beta}$ be the memoryless deterministic schedulers that choose $\alpha$ at $\state$ and $\beta$ at $\state$, respectively, and let $\sched := [\sched_{\alpha} \oplus_{0.5} \sched_{\beta}]$ be the convex combination of these two schedulers. 
    Intuitively, $\sched$ initially throws a fair coin to decide whether to always take $\alpha$ or always take $\beta$.
    Hence, this scheduler is randomized with one bit of memory.
    Using \cite[Lem.\ 3.8]{quatmannVerificationMultiobjective2023}, we have $\Pr^{\sched}_{\state}(\Globally \Finally \{t_1\}) = \frac{1}{2} \Pr^{\sched_\alpha}_{\state}(\Globally \Finally \{t_1\}) + \frac{1}{2} \Pr^{\sched_\beta}_{\state}(\Globally \Finally \{t_1\}) = \frac{1}{2}$ and analogously also $\Pr^{\sched}_{\state}(\Globally \Finally \{t_2\}) = \frac{1}{2}$.
    
    If we do not allow memory or randomization, however, the property only holds for $\epsilon \geq 1$.
    Firstly, let us assume $\sched \in \Scheds$ to be memoryless. 
    We distinguish two cases: If $\sched(\state)(\alpha) = 1$, then we almost-surely visit only $t_1$ infinitely often and never visit $t_2$. Otherwise, if $\sched(\state)(\alpha)<1$, the probability of visiting $t_1$ infinitely often is 0 and the probability of finally taking $\beta$ is 1, hence we almost-surely visit $t_2$ infinitely often.

    Now, let us assume that $\sched \in \Scheds$ is memoryful deterministic. Then the DTMC induced by $\sched$ on $\mdp$ has a unique infinite path $\pi$.
    If $\sched$ never takes $\beta$, then $\pi$ never reaches $t_2$, so $\Pr^{\sched}_{\state}(\Globally \Finally \{t_1\}) = 1$ and $\Pr^{\sched}_{\state}(\Globally \Finally \{t_2\}) = 0$. 
    Otherwise, if $\sched$ does take $\beta$ at some point, then $\pi$ cannot visit $t_1$ anymore afterwards, so $\Pr^{\sched}_{\state}(\Globally \Finally \{t_1\}) = 0$ but $\Pr^{\sched}_{\state}(\Globally \Finally \{t_2\}) = 1$.
\end{proof}

\begin{figure}
    \centering
    \begin{tikzpicture}
        \node[state] (s) {$s$};
        
        \node[state, below left= of s] (t1) {$t_1$}; 
        \node[state, below right= of s] (t2) {$t_2$}; 

       \path[-latex', draw]
            (s) edge[bend left] node[right] {$\alpha$} (t1)
            (s) edge node[right] {$\beta$} (t2);

        \path[-latex', draw]
            (t1) edge[bend left] node[right] {$\gamma$} (s)
            (sx) edge[loop right] (sx);
    \end{tikzpicture}
    \caption{MDP where memory and randomization are necessary for the relational B\"uchi property $\exists \sched .\ \Pr^{\sched}_{\state}(\protect\Globally \protect\Finally \{t_1\}) \approx_{\epsilon} \Pr^{\sched}_{\state}(\protect\Globally \protect\Finally \{t_2\})$ for some $\epsilon\geq 0$.}
    \label{fig:buechi_memory-randomization-necessary_app}
\end{figure}

%% file: appendix/conjunction_algo_proofs.tex
\section{Proofs for \cref{sec:conj_algo}}
\label{app:conj_algo_proof}

\subsection{Proof of \cref{le:reduce-to-moa} }
\label{app:reduce-to-moa_proof}

\reduceToMOA*

\begin{proof}
    Let us fix an order on $\comb = \{c_1, \ldots, c_{|\comb|}\}$.
    For $j=1,\ldots,\numconj$ and $T \in \mathcal{T}^j$, let $\comb^j(T) := \{ c \in \comb \mid \exists i \in \relInd^j(c) .\ T_{i,j} = T \}$.

    We lift the reward structure $\indexc[\rew]^j$~(\cref{def:rewForComb}) from $\indexc$ to $\combined$ and denote it again as $\indexc[\rew]^j$, under abuse of notation.

    We show the equivalence step by step.

    \proofsubparagraph{(1) Claim:}\phantom{x}
    
    \nopagebreak\noindent\fbox{\begin{minipage}{\textwidth}
    \begin{alignat*}{2}
        & 
        \exists \sched_{1} \ldots \sched_{\numsched} \in \Scheds . &&
        \bigwedge_{j=1}^{\numconj} 
        \sum_{i=1}^{\numsum} q_{i,j} \Pr^{\mdp, \sched_{k_{i,j}}}_{\state_{i,j}}(\Finally T_{i,j}) 
        \comp_j q_j  
        \\ \Iff\quad & 
        \exists \sched_{c_1} \in \Scheds[\mdp_{c_1}] \ldots \sched_{c_{|\comb|}} \in \Scheds[\mdp_{c_{|\comb|}}] . &&
        \bigwedge_{j=1}^{\numconj} 
        \sum_{c \in \comb^j}
        \sum_{i \in \relInd^j(c)} q_{i,j} \Pr_{\indexc[\state]}^{\indexc, \indexc[\sched]}(\Finally T_{i,j} )
        \comp_j q_j  
    \end{alignat*}
    \end{minipage}}

    \textit{Proof:} We show both directions separately. 

    `$\Rightarrow$': Let $\sched_{1} \ldots \sched_{\numsched} \in \Scheds$. 
    Let $c \in \comb$.
    Let $h$ be the unique index such that $c = (\state, \sched_{h})$ for some $\state \in \states$.
    Then we construct $\indexc[\sched] \in \Scheds[\indexc]$ by letting 
    \[
        \indexc[\sched]((\state_1, \mathcal{T}_1) \action_1 \ldots (\state_r, \mathcal{T}_r)) = \sched_{h}(\state_1 \action_1 \ldots \state_r)
    \]
    for $r \in \N$, $(\state_1, \mathcal{T}_1), \ldots, (\state_r, \mathcal{T}_r) \in \indexc[\states]$, $\action_1, \ldots, \action_{r-1} \in \Act$.
    Then it must hold for any $j=1,\ldots,\numconj$ and $i \in \relInd^j(c)$ that
    $\Pr^{\mdp, \sched_{h}}_{\state_{i,j}}(\Finally T_{i,j}) = \Pr_{\indexc[\state]}^{\indexc, \sched_{c}}(\Finally T_{i,j})$.

    `$\Leftarrow$': Let $\sched_{c_1} \in \Scheds[\mdp_{c_1}] \ldots \sched_{c_{|\comb|}} \in \Scheds[\mdp_{c_{|\comb|}}]$.
    Let $h \in \{1, \ldots, \numsched\}$.
    Then we construct $\sched_{h} \in \Scheds$ as follows:
    Let $\state_1 \action_1 \ldots \state_r \in \finPaths$. 
    \begin{itemize}
        \item 
        If there exists $c \in \comb$ such that $c = (\state_1, \sched_h)$, then
        \[
            \sched_{h}(\state_1 \action_1 \ldots \state_r) = 
                \sched_c((\state_1, \mathcal{T}_1) \action_1 \ldots (\state_r, \mathcal{T}_r))
        \]
        where $\mathcal{T}_1 = \emptyset$ and $\mathcal{T}_t = \mathcal{T}_{t-1} \cup \{ T \in \mathcal{T}_c \mid \state_{t-1} \in T \}$ for $1 < t \leq r$.

        \item Otherwise, choose an action uniformly at random.
    \end{itemize}

    Then for $j=1,\ldots,\numconj$, $c \in \comb^j$, and $i \in \relInd^j(c)$, we have $\Pr_{\indexc[\state]}^{\indexc, \indexc[\sched]}(\Finally T_{i,j} ) = \Pr^{\mdp, \sched_{k_{i,j}}}_{\state_{i,j}}(\Finally T_{i,j})$.

    \proofsubparagraph{(2) Claim:}\phantom{x}
    
    \nopagebreak\noindent\fbox{\begin{minipage}{\textwidth}
    \begin{alignat*}{2}
        & 
        \exists \sched_{c_1} \in \Scheds[\mdp_{c_1}] \ldots \sched_{c_{|\comb|}} \in \Scheds[\mdp_{c_{|\comb|}}] . &&
        \bigwedge_{j=1}^{\numconj} 
        \sum_{c \in \comb^j}
        \sum_{i \in \relInd^j(c)} q_{i,j} \Pr_{\indexc[\state]}^{\indexc, \indexc[\sched]}(\Finally T_{i,j} )
        \comp_j q_j  
        \\ \Iff\quad & 
        \exists \sched_{c_1} \in \Scheds[\mdp_{c_1}] \ldots \sched_{c_{|\comb|}} \in \Scheds[\mdp_{c_{|\comb|}}] . &&
        \bigwedge_{j=1}^{\numconj} 
        \sum_{c \in \comb^j}
        \sum_{i \in \relInd^j(c)} q_{i,j} \Expected_{\indexc[\state]}^{\indexc, \indexc[\sched]}(\rew_{T_{i,j}} )
        \comp_j q_j  
    \end{alignat*}
    \end{minipage}}

    \textit{Proof:} Straightforward.

    \proofsubparagraph{(3) Claim:}\phantom{x}

    \nopagebreak\noindent\fbox{\begin{minipage}{\textwidth}    
    \begin{alignat*}{2}
        & 
        \exists \sched_{c_1} \in \Scheds[\mdp_{c_1}] \ldots \sched_{c_{|\comb|}} \in \Scheds[\mdp_{c_{|\comb|}}] . &&
        \bigwedge_{j=1}^{\numconj} 
        \sum_{c \in \comb^j}
        \sum_{i \in \relInd^j(c)} q_{i,j} \Expected_{\indexc[\state]}^{\indexc, \indexc[\sched]}(\rew_{T_{i,j}} )
        \comp_j q_j  
        \\ \Iff\quad & 
        \exists \sched_{c_1} \in \Scheds[\mdp_{c_1}] \ldots \sched_{c_{|\comb|}} \in \Scheds[\mdp_{c_{|\comb|}}] . && 
        \bigwedge_{j=1}^{\numconj} 
        \sum_{c \in \comb^j} 
        \Expected_{\indexc[\state]}^{\indexc, \indexc[\sched]}\left( \indexc[\rew]^j \right)
        \comp_j q_j  
    \end{alignat*}
    \end{minipage}}

    \textit{Proof:}
    Let $j \in \{1, \ldots, \numconj\}$, $c \in \comb$, and $\indexc[\sched] \in \Scheds[\indexc]$.
    Then,
    \begin{align*}
        \sum_{i \in \relInd^j(c)} q_{i,j} \Expected_{\indexc[\state]}^{\indexc, \indexc[\sched]}(\rew_{T_{i,j}} )
        = &
        \sum_{T \in \indexc[\mathcal{T}] \cap \mathcal{T}^j} \left( \sum_{i \in \{\relInd^j(c) \mid T=T_{i,j}\}} q_{i,j} \right) \cdot \Expected_{\indexc[\state]}^{\indexc, \indexc[\sched]}( \rew_T(\state, \mathcal{T}) )
        \\ = &
        \Expected_{\indexc[\state]}^{\indexc, \indexc[\sched]}\left( \sum_{T \in \indexc[\mathcal{T}] \cap \mathcal{T}^j} \left( \sum_{i \in \{\relInd^j(c) \mid T=T_{i,j}\}} q_{i,j} \right) \cdot \rew_T(\state, \mathcal{T}) \right) 
        \\ = &
        \Expected_{\indexc[\state]}^{\indexc, \indexc[\sched]}\left( \indexc[\rew]^j \right)
        ~.
    \end{align*}
    
    \proofsubparagraph{(4) Claim:}\phantom{x}
    
    \nopagebreak\noindent\fbox{\begin{minipage}{\textwidth}
    \begin{alignat*}{2}
        & \exists \sched_{c_1} \in \Scheds[\mdp_{c_1}] \ldots \sched_{c_{|\comb|}} \in \Scheds[\mdp_{c_{|\comb|}}] . && 
        \bigwedge_{j=1}^{\numconj} 
        \sum_{c \in \comb^j} 
        \Expected_{\indexc[\state]}^{\indexc, \indexc[\sched]}\left( \indexc[\rew]^j \right)
        \comp_j q_j  
        \\ \Iff\quad & 
        \exists \sched \in \Scheds[\combined] .
        && 
        \bigwedge_{j=1}^{\numconj} 
        \Expected_{\combined[\state]}^{\combined, \sched}(\rew^j) 
        \comp_j q_j  
    \end{alignat*}
    \end{minipage}}

     \textit{Proof:} 
     Intuitively, $\combined$ corresponds to initial state $\combined[\state]$ leading with equal probability to one of the goal unfoldings of $\mdp$ with respect to some state-scheduler-combination. Therefore, a number of schedulers for each such goal unfolding can directly be combined into a single scheduler for $\combined$ and vice versa.
     
     Firstly, for $\sched \in \Scheds[\combined]$, let us use $\sched(\combined[\state])$ to denote the residual scheduler of $\sched$ after having visited $\combined[\state]$,
     and observe that for $j \in \{1, \ldots, \numconj\}$ we have
     \begin{align*}
        \Expected_{\combined[\state]}^{\combined, \sched}(\rew^j) 
        &= \sum_{c \in \comb^j} \frac{1}{|\comb|} \cdot \Expected_{(\indexc[\state], c)}^{\combined, \sched(\combined[\state])}(\rew^j) 
        \\ &= \sum_{c \in \comb^j} \frac{1}{|\comb|} \cdot |\comb| \cdot \Expected_{(\indexc[\state], c)}^{\combined, \sched(\combined[\state])}(\indexc[\rew]^j)
        = \sum_{c \in \comb^j} \Expected_{(\indexc[\state], c)}^{\combined, \sched(\combined[\state])}(\indexc[\rew]^j)
        ~.
     \end{align*}

    Let us now show both directions of the claim separately.
    
    `$\Rightarrow$': 
    Let $\sched_{c_1} \in \Scheds[\mdp_{c_1}], \ldots, \sched_{c_{|\comb|}} \in \Scheds[\mdp_{c_{|\comb|}}]$.
    Construct $\sched \in \Scheds[\combined]$ as follows:
    \begin{itemize}
        \item $\sched(\state_\comb)(\epsilon)=1$ (there is no other action available)
        \item All finite paths in $\combined$ from $\state_\comb$ are of the form \[\state_\comb \epsilon (\indexc[\state],c) \alpha_1 (\state_1, c) \ldots \action_r (\state_r, c)\] for some $c \in \comb$, $\state_1, \ldots, \state_r \in \indexc[\states]$, $\action_1, \ldots, \action_r \in \Act$ and $r \in \N$.
        Let $h \in \{1, \ldots, \numsched\}$ be the unique index such that $c = (\state, \sched_h)$ for some $\state \in \states$.
        We let 
        \begin{align*}
        \sched(\combined[\state] \epsilon (\indexc[\state],c) \alpha_1 (\state_1, c) \ldots \action_r (\state_r, c)) 
        &= 
        \sched_{h}(\indexc[\state] \alpha_1 \state_1 \ldots \action_r \state_r)~.
        \end{align*}
        For paths not starting in $\combined[\state]$, we pick an action uniformly at random.
    \end{itemize}

    Then for $j=1,\ldots,\numconj$ we have $\Expected_{\indexc[\state]}^{\indexc, \indexc[\sched]}( \indexc[\rew]^j ) = \Expected_{(\indexc[\state], c)}^{\combined, \sched(\combined[\state])}(\indexc[\rew]^j)$ by definition of the reward structures and our initial observation.

    `$\Leftarrow$': 
    Let $\sched \in \Scheds[\combined]$.
    Let $c \in \comb$. 
    We construct $\sched_c \in \Scheds[\indexc]$ as follows.
    For $\indexc[\state] \action_1 \state_1 \ldots \action_r \state_r \in \finPaths[\indexc](\indexc[\state])$ we let
    \[
        \sched_{c}(\indexc[\state] \alpha_1 \state_1 \ldots \action_r \state_r)
        =
        \sched(\combined[\state] \epsilon (\indexc[\state], c) \alpha_1 (\state_1, c) \ldots \action_r (\state_r, c))
    \]
    and for all other paths we choose some action uniformly at random.
    
    Then for $j=1,\ldots,\numconj$ we have $\Expected_{\indexc[\state]}^{\indexc, \indexc[\sched]}( \indexc[\rew]^j ) = \Expected_{(\indexc[\state], c)}^{\combined, \sched(\combined[\state])}(\indexc[\rew]^j)$ by definition of the reward structures and our initial observation.
\end{proof}

%% file: appendix/moa_proofs.tex
\section{Proofs for \cref{sec:moa}}
\label{app:moa_proofs}

\subsection{Proof of \cref{th:moa_preproc_transfer-MR}}
\label{app:moa_preproc_transfer-MR_proof}

\moaPreprocTransferMR*

We show this claim in two steps, analogously to \cite{forejtQuantitativeMultiobjective2011}.
First, we show that any scheduler for $\combined$ induces a scheduler for $\processed[\combined]$ that reaches $\sdead$ with probability 1, and vice versa in \cref{le:moa_preproc_transfer-memory}.
Then we show that memoryless (randomized) schedulers suffice on $\processed[\combined]$ in \cref{le:moa_preproc_MR-suffice}.

\begin{lemma}
    \begin{align*}
        &\exists \sched \in \Scheds[\combined] .\ 
        \bigwedge_{j=1}^{\numconj} \Expected_{\combined[\state]}^{\combined, \sched}(\rew^j) \comp_j q_j  
        \\ \Iff\quad &
        \exists \processed[\sched] \in \Scheds[{\processed[\combined]}] .\ 
        \bigwedge_{j=1}^{\numconj} \Expected_{\combined[\state]}^{{\processed[\combined]}, \processed[\sched]}(\rew^j) \comp_j q_j \wedge \Pr_{\combined[\state]}^{{\processed[\combined]}, \processed[\sched]}(\Finally \sdead) = 1
    \end{align*}
    \label{le:moa_preproc_transfer-memory}
\end{lemma}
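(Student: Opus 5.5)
We prove the two directions separately by explicit scheduler translations that preserve all expected rewards $\rew^j$. Two structural facts about $\combined$ do the work. First, $\rew^j$ is non-zero only on states $((\state,\mathcal{T}),c)$ with $\state \in T$ and $T \notin \mathcal{T}$. Second, every such state leaves its MEC, because its successors carry a strictly larger target-set component. Hence rewards are collected only outside MECs, and on any path at most $|\combined[\mathcal{T}]|$ times. In particular, for every scheduler $\Expected(\rew^j)$ is a finite sum of bounded terms, so it is well-defined.

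For ``$\Leftarrow$'', let $\processed[\sched]$ be a scheduler on $\processed[\combined]$ that reaches $\sdead$ almost surely. We build $\sched$ on $\combined$ that copies $\processed[\sched]$ on every finite path avoiding $\dagger$. Whenever $\processed[\sched]$ would take $\dagger$ in some state $\state$ of a MEC $\mec$, $\sched$ instead switches to a fixed memoryless scheduler that stays in $\mec$ forever, choosing only actions of $\mec$. Such a scheduler exists because $\mec$ is an end component. The probability measures on $\dagger$-free prefixes coincide under $\sched$ and $\processed[\sched]$. After the switch, $\sched$ collects no further reward, since all states of $\mec$ have reward $0$ and $\mec$ is never left; under $\processed[\sched]$ the run sits in a zero-reward sink. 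By the finiteness observation we may interchange expectation and summation, so $\Expected^{\combined,\sched}(\rew^j)=\Expected^{\processed[\combined],\processed[\sched]}(\rew^j)$ for every $j$.

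For ``$\Rightarrow$'', let $\sched$ be a scheduler on $\combined$. Almost every path eventually stays inside some MEC forever and takes only actions of that MEC. We define $\processed[\sched]$ following the construction of \cref{th:relBuechi-to-relReach_Rightarrow}, essentially the MEC-quotient transfer of \cite{baierFoundationsProbabilityraising2024}. Upon entering a MEC $\mec$ after a history $\pi$, $\processed[\sched]$ behaves like $\sched$ but, along the way, takes $\dagger$ with exactly the conditional probability that $\sched$ stays in $\mec$ forever from that point. Because the target-set component is constant inside a MEC, the sink reached is the correct $\bot_{\mathcal{T}}$. This choice can be realised by an explicit-memory scheduler as in \cref{def:scheduler_explicit}. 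When entering $\mec$, it flips a coin with the stay-forever probability. In the stay branch it takes $\dagger$ immediately; this is harmless because no further reward would be collected anyway. In the other branch it follows $\sched$ conditioned on eventually leaving $\mec$. Conditioning on leaving yields a well-defined residual scheduler whose path distribution, weighted by the probability of leaving, equals that of $\sched$ on the leaving paths. Rewards collected before reaching a sink are therefore distributed identically. Moreover, $\Pr(\Finally\sdead)=1$, since the only way to avoid the sinks would be to stay in a MEC without taking $\dagger$, and that set has probability $0$ by construction.

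The main obstacle is making this conditioning rigorous in the ``$\Rightarrow$'' direction. It requires an induction over finite prefixes showing that $\Pr^{\processed[\sched]}(\pi)$ equals the $\sched$-probability of $\pi$ multiplied by the probability of not yet having committed to staying. This is the same bookkeeping as in the proof of \cref{th:relBuechi-to-relReach_Rightarrow}, and we would reuse it, with actions inside $\mec$ kept explicit rather than collapsed.
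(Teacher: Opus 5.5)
Your ``$\Leftarrow$'' direction is essentially the paper's construction and is fine. Strictly, it is the sub-measure of not-yet-switched runs, not the full measure of $\sched$, that agrees with $\Pr^{\processed[\sched]}$ on $\dagger$-free prefixes, but your reward argument only needs that. In ``$\Rightarrow$'', the entry coin flip that takes $\dagger$ with the stay-forever probability is also fine.

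The gap is the leave branch. ``Follow $\sched$ conditioned on eventually leaving $\mec$'' cannot be implemented by a scheduler. Conditioning on $L$ (some action outside $A$ is eventually taken) reweights not only the action choices but also the \emph{successor states} by $\Pr^{\sched}(L\mid\cdot)$. A scheduler controls only actions and has to accept $\Trans$ as it is.

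Here is a concrete failure. Let $\mec=(\{u,v\},A)$, where $A$ contains an action $a$ at $u$ going to $u$ and $v$ with probability $\tfrac12$ each, a self-loop $d$ at $u$, and an action $b$ from $v$ to $u$. Let $e\notin A$ leave $\mec$ from $v$. Suppose $\sched$, after entering at $u$, plays $a$, then plays $e$ upon landing in $v$ and $d$ forever upon landing in $u$. Conditioned on $L$, the run is $u\,a\,v\,e\cdots$ surely. Your conditioned scheduler must play $a$ at $u$, and then it reaches the history $u\,a\,u$ with probability $\tfrac12$, where $\Pr^{\sched}(L\mid u\,a\,u)=0$. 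So it is undefined there. Whatever it does there, it either fails to leave or produces paths such as $u\,a\,u\,a\,v\,e$, which have $\sched$-probability $0$.

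Hence the path-level invariant you plan to prove, $\Pr^{\processed[\sched]}(\pi)=\Pr^{\sched}(\pi)$ times the probability of not having committed, is false. The bookkeeping of \cref{th:relBuechi-to-relReach_Rightarrow} does not survive ``keeping actions inside $\mec$ explicit''. It works there only because the collapsed state $\state_\mec$ chooses its exit action directly.

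The missing idea is that, since $\rew^j$ vanishes on all MEC states, you only need to reproduce the \emph{exit behaviour}, and strong connectivity of $\mec$ lets you do that. Consider an entry into $\mec$ with history $\pi$; landing back in $\mec$ after an action outside $A$ counts as a new entry. The residual $\sched(\pi)$ induces a distribution over two kinds of outcome: ``stay in $\mec$ forever'' and ``the first action outside $A$ is $\alpha$, taken at $s$''. Sample from this distribution.
\begin{itemize}
\item In the stay case, take $\dagger$.
\item In case $(s,\alpha)$, go to $s$ almost surely with a memoryless scheduler using only $A$-actions; uniform randomisation over $A$ induces an irreducible finite chain on $\mec$. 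Then take $\alpha$.
\item Upon landing in $s'$, sample a virtual internal history $\rho$ from the $\sched(\pi)$-conditional distribution of the in-$\mec$ path given that the first exit is $(s,\alpha)$. Given $(s,\alpha)$, the successor $s'$ is independent of $\rho$. Continue as $\sched(\pi\rho\alpha s')$, recursively.
\end{itemize}
This is the averaging over $\quotT[\mapPaths](\quotT[\pi])$ from \cref{th:relBuechi-to-relReach_Rightarrow}, combined with a leaving scheduler in the spirit of $\sched_{\textsf{leave},\mec}$ from \cref{th:relBuechi-to-relReach_Leftarrow_MR}, realised with countable memory.

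The induction should then run over quotient-level prefixes, i.e.\ sequences of non-MEC states, MEC entries and exit pairs, not over concrete paths. Show that each such prefix has the same probability under $\processed[\sched]$ as the set of $\sched$-paths mapping to it. Show also that each such prefix followed by $\dagger$ has the probability that $\sched$ realises it and then stays forever. Since all reward states lie outside MECs, this gives equal expected rewards. Finally, $\Pr(\Finally\sdead)=1$ follows because $\sched$ almost surely eventually stays in some MEC.
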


\begin{proof}[Proof (Sketch)]
    ``$\Rightarrow$'': 
    Let $\sched \in \Scheds[\combined]$.
    We proceed similarly to the scheduler witness transfer from, e.g., \cite[Th.\ 1]{kretinskyLTLConstrainedSteadyState2021a}.
    For every MEC $\mec = (S',A)$ and finite path $\pi \in \finPaths[\combined](\combined[\state])$ that has just entered $\mec$ (i.e., $\pi = \pi' \action \state$ and $\state \in S'$ and $\action \not\in A$), we compute the probability $p_{\mec,\pi}$ of staying in $\mec$ forever after having taken $\pi$. 
    We then construct $\processed[\sched] \in \Scheds[{\processed[\combined]}]$ from $\sched \in \Scheds[\combined]$ as follows for $\pi \in \finPaths[{\processed[\combined]}](\combined[\state])$ with $\last(\pi) \in \combined[\states]$:
    \begin{itemize}
        \item If $\pi$ ends in a state that is not contained in an MEC: Behave like $\sched$, i.e. $\processed[\sched](\pi) = \sched(\pi)$ (observe that $\pi$ can be interpreted as a path in $\combined$ since it does not enter $\sdead$ by assumption)
        
        \item If $\pi$ just entered some MEC $\mec$: 
        If $p_{\mec,\pi} = 1$ then we deterministically take $\dagger$, otherwise we copy $\sched$.
    \end{itemize}
    Observe that the set of paths in $\combined^{\sched}$ which from some point on almost-surely stay in an EC has measure 1.
    Hence, the set of paths in ${\processed[\combined]}^{\processed[\sched]}$ which at some point take $\dagger$ to transition to $\sdead$ also has measure 1.

    ``$\Leftarrow$'': 
    Given $\processed[\sched] \in \Scheds[{\processed[\combined]}]$, construct $\sched \in \Scheds[\combined]$ by behaving like $\processed[\sched]$ unless $\processed[\sched]$ chooses $\dagger$, then the current state must be contained in some MEC $\mec$ and there must exists a memoryless deterministic scheduler $\sched_\mec$ staying inside $\mec$ forever almost-surely, so $\sched$ switches to behave like $\sched_\mec$ instead. Formally, for $\pi \in \finPaths[\combined](\combined[\state])$ and $\action \in \Act(\last(\pi))$:
    \[
        \sched(\pi)(\action) = \begin{cases}
            \processed[\sched](\pi)(\action) + \processed[\sched](\pi)(\dagger) \cdot \sched_\mec(\last(\pi))(\action) & \text{if } \exists \mec=(S',A) \in \MEC(\combined) .\ \last(\pi) \in S' \\
            \processed[\sched](\pi)(\action) & \text{otherwise}~.
        \end{cases}
    \]
\end{proof}

Before showing \cref{le:moa_preproc_MR-suffice}, let us first recall the definition of expected visiting times.
\begin{definition}[Expected visiting times]
    Given some MDP $\mdp = \mdptup$ with some initial state $\state_0 \in \states$ and some scheduler $\sched \in \Scheds$, for $\state \in \states$ and $\action \in \Act(\state)$ we let 
    \begin{align*}
        \vis^{\mdp, \sched, \state_0}(\state, \action) &= 
        \sum_{j=0}^{\infty} \left( \sum_{\substack{\pi \in \finPaths(\state_0), |\pi|=j+1 \\ \exists \pi' \in \finPaths(\state_0), \state' \in \states .\ \pi = \pi'\state\action\state'}} \Pr^{\mdp, \sched}_{\state_0}(\pi) \right) \\
        \vis^{\mdp, \sched, \state_0}(\state) &= 
        \sum_{j=0}^{\infty} \left( \sum_{\substack{\pi \in \finPaths, |\pi|=j \\ \last(\pi) = \state}} \Pr^{\mdp, \sched}_{\state_0}(\pi) \right)
    \end{align*}
    and further
    \begin{align*}
        \InfA[\sched] &= \{ (\state, \action) \in \states \times \Act \mid \vis^{\sched}(\state, \action) = \infty \} 
        & \FinA[\sched] &= (\states \times \Act) \setminus \InfA[\sched]
        \\
        \InfS[\sched] &= \{ \state \in \states \mid \vis^{\sched}(\state) = \infty \} 
        & \FinS[\sched] &= \states \setminus \InfS[\sched]
        ~.
    \end{align*}
\end{definition}

We often omit the MDP and/or initial state if clear from context.
Observe that $(\InfS[\sched], \InfA[\sched])$ must be contained in MECs of $\mdp$.
For $\state \in \FinS[\sched]$ we have $\vis^{\sched}(\state) = \sum_{\action \in \Act(\state)} \vis^{\sched}(\action)$.

\begin{restatable}{lemma}{moaPreprocMRSuffice}
    \label{le:moa_preproc_MR-suffice}
    \begin{align*}
        &\exists \processed[\sched] \in \Scheds[{\processed[\combined]}] .\ 
        \bigwedge_{j=1}^{\numconj} \Expected_{\combined[\state]}^{{\processed[\combined]}, \processed[\sched]}(\rew^j) \comp_j q_j \wedge \Pr_{\combined[\state]}^{{\processed[\combined]}, \processed[\sched]}(\Finally \sdead) = 1
        \\ \Iff\quad &
        \exists \processed[\sched]' \in \SchedsMR[{\processed[\combined]}] .\ 
        \bigwedge_{j=1}^{\numconj} \Expected_{\combined[\state]}^{{\processed[\combined]}, \processed[\sched]'}(\rew^j) \comp_j q_j \wedge \Pr_{\combined[\state]}^{{\processed[\combined]}, \processed[\sched]'}(\Finally \sdead) = 1
    \end{align*}
\end{restatable}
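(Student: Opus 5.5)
The direction ``$\Leftarrow$'' is immediate, since every memoryless randomized scheduler is in particular a general scheduler. For ``$\Rightarrow$'' I would follow the standard expected-visiting-time argument of \cite{etessamiMultiObjectiveModel2008,forejtQuantitativeMultiobjective2011}.

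Fix a general witness $\processed[\sched]$ that reaches $\sdead$ almost surely. The first step is to show that all expected visiting times $\vis^{\processed[\sched]}(\state,\action)$ for $\state \in \combined[\states]$ are finite. The plan is as follows.
\begin{enumerate}
\item Almost-sure reachability of $\sdead$ gives that, with probability one, each path eventually leaves $\combined[\states]$.
\item Finiteness of the expectation (and not only finiteness on almost every path) needs an extra argument. I would argue that the relevant part of the induced chain is transient. If some pair had infinite expected visits, then $(\InfS[\processed[\sched]],\InfA[\processed[\sched]])$ would be nonempty and contained in an end component, and this must be avoided.
\item Concretely, I would use a Bellman/transience bound. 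There is a finite-step horizon $k$ and a $\delta>0$ such that, from every state in $\combined[\states]$, the scheduler reaches $\sdead$ within $k$ steps with probability at least $\delta$. Then the expected number of steps spent in $\combined[\states]$ is at most $k/\delta$.
\end{enumerate}

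\textbf{Main obstacle.} The third point is the step I expect to be hardest, because almost-sure reachability under a general scheduler does not directly give a uniform $\delta$. To get around this, I would first replace $\processed[\sched]$ by the memoryless scheduler constructed below and argue transience for that scheduler.

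Define the memoryless randomized scheduler by
\[
\processed[\sched]'(\state)(\action) = \frac{\vis^{\processed[\sched]}(\state,\action)}{\vis^{\processed[\sched]}(\state)}
\]
for $\state$ with $\vis^{\processed[\sched]}(\state)>0$, and choose arbitrarily otherwise. The key steps are then:
\begin{enumerate}
\item The vector $y_{\state,\action}=\vis^{\processed[\sched]}(\state,\action)$ satisfies the flow equations \eqref{eq:EVTs}, together with $\sum y_{\state,\dagger}=1$, because $\sdead$ is absorbing, reachable only via $\dagger$, and reached almost surely.
\item By the standard lemma of \cite[Thm.~9.3]{kretinskyLTLConstrainedSteadyState2021a} and \cite{etessamiMultiObjectiveModel2008}, the memoryless scheduler $\processed[\sched]'$ induced by a solution of these flow equations has exactly these expected visiting times. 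In particular it reaches $\sdead$ with probability $\sum y_{\state,\dagger}=1$.
\item The expected total reward is a linear function of the visiting times:
\[
\Expected(\rew^j)=\sum_{\state}\rew^j(\state)\Bigl(\Iverson{\state=\combined[\state]}+\sum_{\state',\action'}\combined[\Trans](\state',\action',\state)\,y_{\state',\action'}\Bigr).
\]
This formula applies because reward is collected only outside MECs, so every summand is finite and absolutely convergent even though rewards may be negative.
\end{enumerate}

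Hence both schedulers induce the same value for every $\Expected(\rew^j)$, and all constraints $\comp_j q_j$ transfer verbatim, including $>$, $\approx_\epsilon$ and $\not\approx_\epsilon$, since the values are identical.

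The delicate point in step 2 is ruling out leakage mass in the memoryless chain. I would handle it as in \cite{forejtQuantitativeMultiobjective2011}: states on which $\processed[\sched]'$ would loop forever inside $\combined[\states]$ would have infinite visiting times under $\processed[\sched]'$. Such looping contradicts the flow equations together with $\sum y_{\state,\dagger}=1$, because mass conservation forces all inflow to exit to $\sdead$.
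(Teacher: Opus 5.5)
\textbf{Finiteness of visiting times is false, so the construction cannot start.} You rightly flag your first step as the hard one, but it fails outright: for an arbitrary general witness $\processed[\sched]$, the visiting times $\vis^{\processed[\sched]}(\state,\action)$ with $\state\in\combined[\states]$ can be infinite.

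Modify any witness as follows. Whenever it would take $\dagger$ in a MEC $C$ of $\combined$, it first keeps cycling inside $C$, and at the $k$-th step of this delay it takes $\dagger$ with probability $\frac{1}{k+1}$. Since $\rew^j$ vanishes on MEC states and on $\sdead$, every $\Expected(\rew^j)$ is unchanged. The delay exceeds $k$ steps with probability $\frac{1}{k+1}\to 0$, so $\sdead$ is still reached almost surely. Yet the expected delay is $\sum_{k\geq 0}\frac{1}{k+1}=\infty$.

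For such a witness, $\processed[\sched]$ yields no finite solution of \eqref{eq:EVTs}, and the ratio defining $\processed[\sched]'$ is $\infty/\infty$. Your workaround, proving transience for $\processed[\sched]'$ first, is circular, because $\processed[\sched]'$ is built from exactly these quantities. The paper's proof has the same weak spot: it infers $\mathit{InfS}\subseteq\sdead$ directly from $\Pr(\Finally\sdead)=1$, which this example refutes.

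\textbf{What is missing is a normalisation of the witness before visiting times are taken.} This is where the vanishing of $\rew^j$ inside MECs is really needed.
\begin{enumerate}
\item Take a history $\pi$ that has just entered a MEC $C=(S',A)$ of $\combined$. Almost-sure reachability of $\sdead$ forces $\processed[\sched]$ to leave $C$ almost surely, either by $\dagger$ or by a pair $(\state',\action)$ with $\state'\in S'$ and $\action\notin A$. Let $\nu_\pi$ be this exit distribution.
\item Replace the behaviour inside $C$ as follows. Sample an exit from $\nu_\pi$. Then walk to $\state'$ with a fixed MD scheduler of $C$ and take $\action$, or take $\dagger$ at once.
\item After the exit, continue with the probability-weighted mixture of the residual schedulers of $\processed[\sched]$ over the matching original histories, as in the proof of \cref{th:relBuechi-to-relReach_Rightarrow}.
\end{enumerate}
This preserves the law of the path projected to the MEC quotient, hence all $\Expected(\rew^j)$ and $\Pr(\Finally\sdead)=1$. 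It also makes every visiting time finite. In the MEC quotient no state other than the sinks lies in an end component, so visits to non-MEC states and MEC entries have bounded expectation under every scheduler. Each sojourn inside a MEC now has uniformly bounded expected length.

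From this normalised witness on, your argument works and coincides in substance with the paper's. The paper shows that $\processed[\sched]'$ reproduces the visiting times by its maximal-ratio argument (Claim~1). You import this from the standard flow lemma, which is essentially the ``$\Leftarrow$'' direction of \cref{le:conj_LP}. Both conclude by linearity of the expected reward in the visiting times.
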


\begin{proof}
    The direction ``$\Leftarrow$'' is straightforward.
    For the other direction, first observe that it suffices to show that for all $x_1, \ldots, x_{\numconj} \in \R$, we have
    \begin{align*}
        &\exists \processed[\sched] \in \Scheds[{\processed[\combined]}] .\ 
        \bigwedge_{j=1}^{\numconj} \Expected_{\combined[\state]}^{{\processed[\combined]}, \processed[\sched]}(\rew^j) = x_j 
        \wedge \Pr_{\combined[\state]}^{{\processed[\combined]}, \processed[\sched]}(\Finally \sdead) = 1
        \\ \Implies\quad &
        \exists \processed[\sched]' \in \SchedsMR[{\processed[\combined]}] .\ 
        \bigwedge_{j=1}^{\numconj} \Expected_{\combined[\state]}^{{\processed[\combined]}, \processed[\sched]'}(\rew^j) = x_j
        \wedge \Pr_{\combined[\state]}^{{\processed[\combined]}, \processed[\sched]'}(\Finally \sdead) = 1
        ~.
    \end{align*}
    We show this claim analogously to \cite[Prop.\ 6]{forejtQuantitativeMultiobjective2011}.
    Let $x_1, \ldots, x_{\numconj} \in \R$.
    Let $\processed[\sched] \in \Scheds[{\processed[\combined]}]$ with $\Expected_{\combined[\state]}^{{\processed[\combined]}, \processed[\sched]}(\rew^j) = x_j 
    \wedge \Pr_{\combined[\state]}^{{\processed[\combined]}, \processed[\sched]}(\Finally \sdead) = 1$.
    Since $\Pr_{\combined[\state]}^{{\processed[\combined]}, \processed[\sched]}(\Finally \sdead) = 1$, we have $\InfS \subseteq \sdead$, which further implies $\InfA = \{ (\state,\action) \mid \state \in \InfS, \action \in \processedcombined[\Act](\InfS)\}$.
    
    We construct a memoryless witness for the original claim as follows for $\state \in \processedcombined[\states]$, $\action \in \processedcombined[\Act](\state)$:
    \[
        \processed[\sched]'(\state, \action) = \begin{cases}
            \frac{\vis^{\processed[\sched]}(\state, \action)}{\sum_{\action' \in \processedcombined[\Act](\state)} \vis^{\processed[\sched]}(\state, \action') }
            &\text{if } \state \not\in \InfS 
            \\
            1 &\text{else (for } \state \in \InfS \text{ we have } \{\action\} = \processedcombined[\Act](\state) \text{)}~.
        \end{cases}
    \]

    In order to see that $\processed[\sched]'$ satisfies the desired properties, we first show that the expected visiting times of actions that are visited finitely often under $\processed[\sched]$ match.

    \proofsubparagraph{Claim 1:} \emph{
        For all $(\state, \action) \in \FinA \cap \FinA[{\processed[\sched]'}]$ we have 
        $\vis^{\processed[\sched]}(\state, \action) = \vis^{\processed[\sched]'}(\state, \action)$.}

        For $(\state, \action) \in \FinA \cap \FinA[{\processed[\sched]'}]$ we let $d_{(\state, \action)} = \frac{\vis^{\processed[\sched]}(\state, \action)}{\vis^{\processed[\sched]'}(\state, \action)}$.
        
        Towards contradiction, assume there exists $(\state, \action) \in \FinA \cap \FinA[{\processed[\sched]'}]$ with $d_{(\state, \action)} \neq 1$.
        We assume $\max_{(\state, \action) \in \FinA \cap \FinA[{\processed[\sched]'}]} d_{(\state, \action)} > 1$ and set $\speciald := \max_{(\state, \action) \in \FinA \cap \FinA[{\processed[\sched]'}]} d_{(\state, \action)}$ and $(\specialstate, \specialact) = \argmax_{(\state, \action) \in \FinA \cap \FinA[{\processed[\sched]'}]} d_{(\state, \action)}$.
        If the maximum is smaller than 1, then take the minimal value and argue analogously. 

        Let $\pi \in \finPaths[\processedcombined](\combined[\state])$ with $\pi = \pi'\specialstate\specialact\state'$ for some $\pi' \in \finPaths[\processedcombined](\combined[\state])$, $\state' \in \processedcombined[\states]$.
        Observe that $\specialstate \not\in \InfS$ since $\InfS \subseteq \sdead$ and thus all outgoing actions of states in $\InfS$ must also be taken infinitely often, but $(\specialstate,\specialact) \in \FinA$ by assumption.
        
        By definition,
        $\frac{\vis^{\processed[\sched]}(\specialstate, \specialact)}{\sum_{\action' \in \processedcombined[\Act](\specialstate)} \vis^{\processed[\sched]}(\specialstate, \action')}
        = 
        \processed[\sched]'(\specialstate)(\specialact)$
        and $\vis^{\processed[\sched]}(\specialstate) = \sum_{\action' \in \processedcombined[\Act](\specialstate)} \vis^{\processed[\sched]}(\specialstate, \action')$.
        Since $\processed[\sched]'$ is memoryless we have $\vis^{\processed[\sched]'}(\specialstate, \specialact) = \vis^{\processed[\sched]'}(\specialstate) \cdot \processed[\sched]'(\specialstate, \specialact)$.
        Putting everything together, we have $\vis^{\processed[\sched]}(\specialstate) = \speciald \cdot \vis^{\processed[\sched]'}(\specialstate)$.
        
        If $\specialstate \neq \combined[\state]$, then
        \begin{align*}
            &\sum_{\state' \in \processedcombined[\states]} 
            \sum_{\action' \in \processedcombined[\Act](\state')} \vis^{\processed[\sched]'}(\state', \action') \cdot \Trans(\state', \action', \specialstate)
            \\ = &
            \vis^{\processed[\sched]'}(\specialstate)
            \\ = &
            \frac{1}{\speciald} \cdot\vis^{\processed[\sched]}(\specialstate)
            \\ = & \frac{1}{\speciald} \cdot
            \sum_{\state' \in \processedcombined[\states]} 
            \sum_{\action' \in \processedcombined[\Act](\state')} \vis^{\processed[\sched]}(\state', \action') \cdot \Trans(\state', \action', \specialstate)
            \\ = & \frac{1}{\speciald} \cdot
            \sum_{\state' \in \processedcombined[\states]} 
            \sum_{\action' \in \processedcombined[\Act](\state')} d_{\action'}\cdot \vis^{\processed[\sched]'}(\state', \action') \cdot \Trans(\state', \action', \specialstate)
        \end{align*}
        which implies that $d_{\action'} = \speciald$ for all $\action'$ leading to $\specialstate$ with positive probability.
        In particular, for the second-to-last action $\action_{|\pi|-1}$ in $\pi$ we must also have $d_{\action_{|\pi|-1}} = \speciald$, i.e., this action must also be $\specialact$. By repeatedly applying this reasoning we arrive at $\specialstate = \combined[\state]$, contradicting our assumption.

        So let us assume $\specialstate = \combined[\state]$, then
        \begin{align*}
            &1 + 
            \sum_{\state' \in \processedcombined[\states]} 
            \sum_{\action' \in \processedcombined[\Act](\state')} \vis^{\processed[\sched]'}(\state', \action') \cdot \processedcombined[\Trans](\state', \action', \specialstate)
            \\ = &
            \vis^{\processed[\sched]'}(\specialstate)
            \\ = &
            \frac{1}{\speciald} \vis^{\processed[\sched]}(\specialstate)
            \\ = & \frac{1}{\speciald} \cdot \left(1 + 
            \sum_{\state' \in \processedcombined[\states]} 
            \sum_{\action' \in \processedcombined[\Act](\state')} \vis^{\processed[\sched]}(\state', \action') \cdot \processedcombined[\Trans](\state', \action', \specialstate)
            \right)
            \\ = & \frac{1}{\speciald} \cdot \left(1 + 
            \sum_{\state' \in \processedcombined[\states]} 
            \sum_{\action' \in \processedcombined[\Act](\state')} d_{\action'} \cdot\vis^{\processed[\sched]'}(\state', \action') \cdot \processedcombined[\Trans](\state', \action', \specialstate)
            \right)
        \end{align*}
        but the equality between the first and last term would imply that some $d_{\action'} > \speciald$, which contradicts our assumption that $\speciald$ is maximal.

    \proofsubparagraph{Claim 2:} \emph{$\processed[\sched]'$ satisfies the desired properties.}
    
    Recall that $\InfS$ and $\InfS[{\processed[\sched]'}]$ must be contained in MECs of $\processedcombined$ and recall also that the reward functions $\rew^j$ do not collect reward inside MECs by construction.\footnote{It would suffice here to reason that the expected reward is finite for any reward structure $\rew^j$; thus this reasoning generalizes to any MDP $\mdp$ with reward functions $\rew^j$ whose expected reward is finite under all schedulers.}
    Therefore, we have $\rew^j(\state) = 0$ for $\state \in \InfS \cup \InfS[{\processed[\sched]'}]$.
    Further, for $\state \in \FinS$ we have $\{\state\} \times \processedcombined[\Act](\state) \subseteq \FinA$ and analogously for $\state \in \FinS[{\processed[\sched]'}]$ we have $\{\state\} \times \processedcombined[\Act](\state) \subseteq \FinA[{\processed[\sched]'}]$
    and thus for $\state \in \FinS \cap \FinS[{\processed[\sched]'}]$ we have $\{\state\} \times \processedcombined[\Act](\state) \subseteq \FinA \cap \FinA[{\processed[\sched]'}]$ .
    
    Hence, 
    \begin{align*}
        & \Expected_{\combined[\state]}^{{\processed[\combined]}, \processed[\sched]'}(\rew^j) \\
        {} = &\sum_{\state \in \processedcombined[\states]} \vis^{\processed[\sched]'}(\state) \cdot \rew^j(\state)
        \\ {} = & \sum_{\state \in \FinS \cap \FinS[{\processed[\sched]'}]}
        \sum_{\action \in \processedcombined[\Act](\state)} \vis^{\processed[\sched]'}(\state, \action) \cdot \rew^j(\state)
        \\ \overset{\text{Cl.~2}}{=} \!\! 
        & \sum_{\state \in \FinS \cap \FinS[{\processed[\sched]'}]}
        \sum_{\action \in \processedcombined[\Act](\state)} \vis^{\processed[\sched]}(\state, \action) \cdot \rew^j(\state)
        \\ {} = & \sum_{\state \in \processedcombined[\states]} \vis^{\processed[\sched]}(\state) \cdot \rew^j(\state)
        = \Expected_{\combined[\state]}^{{\processed[\combined]}, \processed[\sched]}(\rew^j) = x_j.
    \end{align*}

    Further, let $\textit{In}(\sdead) = \{ (\state, \action) \in \processedcombined[\states] \times \processedcombined[\Act] \mid \exists t \in \sdead .\ \processedcombined[\Trans](\state, \action, t) = 1 \}$.
    Observe that, by construction actions leading to $\sdead$ can be visited at most once in expectation and thus $\textit{In}(\sdead) \subseteq \FinA \cap \FinA[{\processed[\sched]'}]$.
    Then,
    \begin{align*}
        \Pr_{\combined[\state]}^{{\processed[\combined]}, \processed[\sched]'}(\Finally \sdead)
        &= \sum_{(\state,\action) \in \textit{In}(\sdead)} \vis^{\processed[\sched]'}(\state,\action)
        \overset{\text{Cl.\ 2}}{=} \sum_{\action \in \textit{In}(\sdead)} \vis^{\processed[\sched]}(\action)
        = \Pr_{\combined[\state]}^{{\processed[\combined]}, \processed[\sched]}(\Finally \sdead) = 1~.
        \qedhere
    \end{align*}
\end{proof}

\subsection{Proof of \cref{le:conj_LP}}
\label{app:conj_LP_proof}

\conjLP*

\begin{proof}
    It suffices to show that for $x_1, \ldots, x_{l} \in \R$, it holds that:
    \begin{align*}
        &
            \exists \processed[\sched] \in \SchedsMR[{\processed[\combined]}] .\ 
            \bigwedge_{j=1}^{\numconj} \Expected_{\combined[\state]}^{{\processed[\combined]}, \sched}(\rew^j) = x_j
            \wedge \Pr_{\combined[\state]}^{{\processed[\combined]}, \sched}(\Finally \sdead) = 1
        \\ \Iff\quad & 
        \text{the LRA encoding from \cref{fig:LP} has a feasible solution $y^*$.}
    \end{align*}

    We proceed analogously to \cite[Th.~3.2 (1.)$\Iff$(2.)]{etessamiMultiObjectiveModel2008} \cite[Prop.~4]{forejtQuantitativeMultiobjective2011}.
    We show both directions separately; let us first introduce some notation used in both directions.
    Given some $\processed[\sched] \in \Scheds[{\processed[\combined]}]_{\textit{MR}}$, we let $\combined[\Trans]^{\processed[\sched]}$ (under abuse of notation) denote the transition matrix of $\processedcombined$ restricted to $\combined[\states]$, i.e., the matrix defined by $(\combined[\Trans]^{\processed[\sched]})_{\state, \state'} = \sum_{\action \in \combined[\Act](\state)} \combined[\Trans](\state, \action, \state') \cdot \processed[\sched](\state, \action)$ for $\state,\state' \in \combined[\states]$.
    For $n \in \N$, $((\combined[\Trans]^{\processed[\sched]})^n)_{\state,\state'}$ is the probability of transitioning from $\state$ to $\state'$ in $n$ steps.
    
    \proofsubparagraph{``$\Rightarrow$'':}
    Let $\processed[\sched] \in \Scheds[{\processed[\combined]}]_{\textit{MR}}$ such that 
    \[ 
        \bigwedge_{j=1}^{\numconj} \Expected_{\combined[\state]}^{{\processed[\combined]}, \processed[\sched]}(\rew^j) = x_j
        \wedge \Pr_{\combined[\state]}^{{\processed[\combined]}, \processed[\sched]}(\Finally \sdead) = 1
        ~.
    \]
    
    For $\state \in \combined[\states]$ and $\action \in \processed[{\combined[\Act]}](\state)$, we let 
    \begin{align*}
        y'_{\state, \action} &= \sum_{n=0}^{\infty} ((\combined[\Trans]^{\processed[\sched]})^n)_{\combined[\state], \state} \cdot \processed[\sched](\state, \action) \\
        y'_{\state} &= \sum_{\state' \in \combined[\states]} \sum_{\action' \in \combined[\Act](\state')} \combined[\Trans](\state', \action', \state) \cdot y'_{\state', \action'}
    \end{align*}
    Since $\Pr_{\combined[\state]}^{{\processed[\combined]}, \processed[\sched]}(\Finally \sdead) = 1$, and $\sdead$ is absorbing, the matrix $\combined[\Trans]^{\processed[\sched]}$ must be sub-stochastic. 
    Since $\combined[\Trans]^{\processed[\sched]}$ is sub-stochastic, $y'_{\state,\action}$ must be finite and thus the values $y'_{\state}$ and $y'_{\state,\action}$ are well-defined.
    
    Observe that $y'_{\state,\action}$ is the expected number of times we leave $\state \in \combined[\states]$ via $\action \in \processed[{\combined[\Act]}](\state)$ when starting from $\combined[\state]$ in $\processed[\combined]$ under $\processed[\sched]$.
    Consequently, $\sum_{\action \in \processed[{\combined[\Act]}](\state)} y'_{\state,\action} = \sum_{n=0}^{\infty} ((\combined[\Trans]^{\processed[\sched]})^n)_{\combined[\state], \state}$ is the expected number of times we transition out of $\state \in \combined[\states]$ under $\processed[\sched]$ when starting from $\combined[\state]$.
    Conversely, $y'_{\state}$ is the expected number of times we transition \emph{into} $\state$, i.e., $y'_{\state} = \sum_{n=0}^{\infty} ((\combined[\Trans]^{\processed[\sched]})^n)_{\combined[\state], \state} - \Iverson{\state = \combined[\state]}$, with the following reasoning:
    \begin{align*}
        y'_{\state} = 
        &\sum_{\state' \in \combined[\states]} \sum_{\action' \in \combined[\Act](\state')} \combined[\Trans](\state',\action',\state) \cdot y'_{\state',\action'}
        \\ = &
        \sum_{\state' \in \combined[\states]} 
        \sum_{\action' \in \combined[\Act](\state')} \combined[\Trans](\state',\action',\state) \cdot 
        \sum_{n=0}^{\infty} ((\combined[\Trans]^{\processed[\sched]})^n)_{\combined[\state], \state'} \cdot 
        \processed[\sched](\state', \action')
        \\ = &
        \sum_{\state' \in \combined[\states]} 
        \sum_{n=0}^{\infty} ((\combined[\Trans]^{\processed[\sched]})^n)_{\combined[\state], \state'} \cdot 
        \sum_{\action' \in \combined[\Act](\state')} 
        \combined[\Trans](\state',\action',\state) \cdot \processed[\sched](\state', \action')
        \\ = &
        \sum_{\state' \in \combined[\states]} 
        \sum_{n=0}^{\infty} ((\combined[\Trans]^{\processed[\sched]})^n)_{\combined[\state], \state'} 
        \cdot (\combined[\Trans]^{\processed[\sched]})_{\state', \state} 
        \\ = & \sum_{n=0}^{\infty} ((\combined[\Trans]^{\processed[\sched]})^{n+1})_{\combined[\state], \state}
        = \sum_{n=1}^{\infty} ((\combined[\Trans]^{\processed[\sched]})^{n})_{\combined[\state], \state}
        = \sum_{n=0}^{\infty} ((\combined[\Trans]^{\processed[\sched]})^n)_{\combined[\state], \state} - \Iverson{\state = \combined[\state]}
        ~.
    \end{align*}

    Using this, let us now show that $y'$ is indeed a feasible solution of the encoding.
    \begin{itemize}[itemindent=1em]
        \item[\eqref{eq:EVTs}] For $\state \in \combined[\states]$, by the above reasoning we have 
        \begin{align*}
            & \sum_{\action \in \processed[{\combined[\Act]}](\state)} y'_{\state,\action}
            - 
            \sum_{\state' \in \combined[\states]} \sum_{\action' \in \combined[\Act](\state')} \combined[\Trans](\state',\action',\state) \cdot y'_{\state',\action'}
            \\ =& 
            \sum_{n=0}^{\infty} ((\combined[\Trans]^{\processed[\sched]})^n)_{\combined[\state], \state}
            -
            \left(\sum_{n=0}^{\infty} ((\combined[\Trans]^{\processed[\sched]})^n)_{\combined[\state], \state} - \Iverson{\state = \combined[\state]} \right)
            =\Iverson{\state = \combined[\state]}
            ~.
        \end{align*}

        \item[\eqref{eq:MOA}] \label{it:exp-match}
        Let $j=1,\ldots,\numconj$. We show that $\sum_{\state \in \combined[\states]} \rew^j(\state) y'_{\state} = \Expected_{\combined[\state]}^{{\processed[\combined]}, \sched}(\rew^j)$.

        \begin{align*}
            \Expected_{\combined[\state]}^{{\processed[\combined]}, \sched}(\rew^j)
            &= \int_{\pi \in \Paths[{\processed[\combined]}](\combined[\state])} \sum_{i=0}^{\infty} \rew^j(\pi(i)) \, d \Pr^{\processed[\combined], \processed[\sched]}_{\combined[\state]}
            \\ &= 
            \sum_{i=0}^{\infty} \int_{\pi \in \Paths[{\processed[\combined]}](\combined[\state])} \rew^j(\pi(i)) \, d \Pr^{\processed[\combined], \processed[\sched]}_{\combined[\state]}
            \\ &= 
            \sum_{i=0}^{\infty} 
            \sum_{\pi \in \finPaths[{\processed[\combined]}](\combined[\state]), |\pi|=i}  \Pr^{\processed[\combined], \processed[\sched]}_{\combined[\state]}(\pi) \cdot \rew^j(\pi(i))
        \end{align*}
        Observe that $\rew^j(\combined[\state])=0$ and $\rew^j(\sdead)=0$ and hence we can rewrite the above further to
        \begin{align*}
            \Expected_{\combined[\state]}^{{\processed[\combined]}, \sched}(\rew^j)
            &= 
            \sum_{\state \in \combined[\states]}
            \sum_{i=1}^{\infty} 
            \sum_{\pi \in \finPaths[{\processed[\combined]}](\combined[\state]), |\pi|=i, \last(\pi)=\state} 
            \Pr^{\processed[\combined], \processed[\sched]}_{\combined[\state]}(\pi) \cdot \rew^j(\pi(i))
            \\ &=
            \sum_{\state \in \combined[\states]}
            \rew^j(\pi(i)) \cdot 
            \sum_{i=1}^{\infty} 
            \sum_{\pi \in \finPaths[{\processed[\combined]}](\combined[\state]), |\pi|=i, \last(\pi)=\state} 
            \Pr^{\processed[\combined], \processed[\sched]}_{\combined[\state]}(\pi) 
            \\ &= 
            \sum_{\state \in \combined[\states]}
            \rew^j(\pi(i)) \cdot 
            \sum_{i=1}^{\infty} 
            ((\combined[\Trans]^{\processed[\sched]})^i)_{\combined[\state], \state}
            = 
            \sum_{\state \in \combined[\states]}
            \rew^j(\pi(i)) \cdot y'_{\state} 
        \end{align*}

        \item[\eqref{eq:sinks}]  
        $y'_{\state,\dagger}$ is the expected number of times we leave $\state$ via $\dagger$ and move to $\sdead$.
        Hence, $\sum_{\state \in \combined[\states]} y'_{\state,\dagger}$ is the expected number of times we move from $\combined[\states]$ to $\sdead$. Since $\sdead$ is absorbing, this corresponds to the expected number of times we move to $\sdead$ \emph{for the first time}, i.e., $\Pr_{\combined[\state]}^{{\processed[\combined]}, \processed[\sched]}(\Finally \sdead)$.
        By assumption it holds that $\Pr_{\combined[\state]}^{{\processed[\combined]}, \processed[\sched]}(\Finally \sdead) = 1$ and hence $\sum_{\state \in \combined[\states]} y'_{\state,\dagger} = 1$.

        \item[\eqref{eq:nonneg}] $y'_{\state,\action} \geq 0$ holds by definition for all $\state \in \combined[\states]$, $\action \in \processedcombined[\Act](\state)$.
    \end{itemize}
    Thus, we have shown that the direction ``$\Rightarrow$'' holds.

    \proofsubparagraph{``$\Leftarrow$'':}
    Let $y^*$ be a feasible solution of \cref{fig:LP}.
    Let $\nnstates := \{ \state \in \combined[\states] \mid \sum_{\action \in \processed[{\combined[\Act]}](\state)} y^*_{\state,\action} > 0 \}$.
    
    We construct $\processed[\sched] \in \Scheds[{\processed[\combined]}]_{\textit{MR}}$ as follows. 
    For $\state \in \processed[{\combined[\states]}]$, $\action \in \processed[{\combined[\Act]}](\state)$, let
    \[
        \processed[\sched](\state,\action) = 
        \begin{cases}
            \frac{y^*_{\state,\action}}{\sum_{\action' \in \processed[{\combined[\Act]}]} y^*_{\state',\action}} & \text{if } \state \in \nnstates \\
            \frac{1}{|\processed[{\combined[\Act](\state)}]|} & \text{otherwise}~.
        \end{cases}
    \]
    We define a vector $y'$ from $\processed[\sched]$ as in ``$\Rightarrow$'', i.e., for $\state \in \combined[\states]$ and $\action \in \processed[{\combined[\Act]}](\state)$, we let 
    \begin{align*}
        y'_{\state, \action} &= \sum_{n=0}^{\infty} ((\combined[\Trans]^{\processed[\sched]})^n)_{\combined[\state], \state} \cdot \processed[\sched](\state, \action) 
        \\
        y'_{\state} &= \sum_{\state' \in \combined[\states]} \sum_{\action' \in \combined[\Act](\state')} \combined[\Trans](\state', \action', \state) \cdot y'_{\state', \action'}
        ~.
    \end{align*}
    Note that we do not know yet that these are well-defined.

    In the following, we show that $\nnstates$ contains all states reachable from $\combined[\state]$ under $\processed[\sched]$, which we then use to show that the constructed vector $y'$ equals the feasible solution $y^*$ and is thus well-defined.
    Finally, we show that the constructed scheduler $\processed[\sched]$ satisfies the desired properties.

    Let us first define some useful notation.    
    For $\state \in \combined[\states]$, let 
    \begin{align*}
        y^*_\state &= \sum_{\state' \in \combined[\states]} \sum_{\action' \in \combined[\Act](\state')} \combined[\Trans](\state', \action', \state) \cdot y^*_{\state', \action'} \\
        z^*_{\state} &= \sum_{\action \in \processed[{\combined[\Act]}](\state)} y^*_{\state,\action}
        ~.
    \end{align*}
    We define the set of all states that can `reach' $\sdead$ w.r.t.\ $y^*$ as $W = $ 
    \[
        \left\{ \state \in \combined[\states] \;\middle|\; 
            \exists \pi = \state_0\action_0 \ldots \state_{|\pi|} \in \finPaths[{\processed[\combined]}](\state) .\ 
            \prod_{i=0}^{|\pi|-1} y^*_{\state_i,\action_i} \cdot \processed[{\combined[\Trans]}](\state_i,\action_i,\state_{i+1}) > 0 \wedge \state_{|\pi|} \in \sdead 
        \right\}
        ~.
    \]
    Observe that $\sdead \cap W = \emptyset$ by definition.
    Observe further that $W \subseteq \nnstates$: By contraposition, let $\state \not\in \nnstates$, then $y^*_{\state,\action} = 0$ for all $\action \in \processed[{\combined[\Act]}](\state)$ and hence there cannot exist any path satisfying the requirement for membership in $W$.

    \proofsubparagraph{Claim:} \emph{$\nnstates$ contains all reachable states.}
    
    We show that for all $\state \in \processed[{\combined[\states]}]$ reachable from $\combined[\state]$ under $\processed[\sched]$, it must hold that $\state \in \nnstates$, by induction on the length of the shortest path from $\combined[\state]$ to $\state$.

    \emph{IB:} $s = \combined[\state]$. 
    Since $y^*$ is feasible, by constraint (\ref{eq:EVTs}) we have
    \[
        \sum_{\action \in \processed[{\combined[\Act]}](\combined[\state])} y^*_{\combined[\state], \action} = \sum_{\state' \in \combined[\states]} \sum_{\action' \in \combined[\Act](\state')} \combined[\Trans](\state', \action', \combined[\state]) \cdot y^*_{\state', \action'} + 1
        ~.
    \]
    Since $y^*_{\state, \action} \geq 0$ by constraint (\ref{eq:nonneg}) for all $\state \in \combined[\states]$, $\action \in \processed[{\combined[\Act]}](\state)$, and $\combined[\Trans](\state', \action', \combined[\state]) \geq 0$ by definition, we must have $\sum_{\action \in \processed[{\combined[\Act]}](\combined[\state])} y_{\combined[\state], \action} > 0$, i.e., $\combined[\state] \in \nnstates$.

    \emph{IS:} Assume we can reach $\state \in \processed[{\combined[\states]}]$ under $\processed[\sched]$ in $k>0$ steps from $\combined[\state]$ with positive probability, and we cannot reach $\state$ in less than $k$ steps.
    There must exist some predecessor $\state^p$ of $\state$ that can be reached from $\combined[\state]$ in $k-1$ steps.
    Since $y^*$ is feasible and thus satisfies constraint (\ref{eq:EVTs}), we have
    \begin{align*}
        \sum_{\action \in \processed[{\combined[\Act]}](\state)} y^*_{\state,\action} 
        =
        \sum_{\state' \in \combined[\states]} \sum_{\action' \in \processed[{\combined[\Act]}](\state')} y^*_{\state',\action'} \processed[{\combined[\Trans]}](\state', \action', \state)
        \geq
        \sum_{\action' \in \processed[{\combined[\Act]}](\state^p)} y^*_{\state^p,\action'} \processed[{\combined[\Trans]}](\state^p, \action', \state)
        \overset{\textit{IH}}{>} 0 ~.
    \end{align*}

    \proofsubparagraph{Claim:} \emph{$y^* = y'$ on $W$.}
    
    We show that $y^*_{\state,\action} = y'_{\state,\action}$ for $\state \in W \subseteq \nnstates$ and $\action \in \processed[{\combined[\Act]}](\state)$, step by step.
    \begin{itemize}
        \item We first show that $\Iverson{\state = \combined[\state]} = z^*_{\state} - \sum_{\state' \in \nnstates} (\combined[\Trans]^{\processed[\sched]})_{\state',\state} \cdot z^*_{\state'} $ for all $\state \in \nnstates$.
        Since $y^*$ satisfies constraint (\ref{eq:EVTs}), we have
        \begin{align*}
            \Iverson{\state = \combined[\state]}
            &= z^*_{\state} - \sum_{\state' \in \combined[\states]} \sum_{\action' \in \processed[{\combined[\Act]}]} y^*_{\state',\action'} \combined[\Trans](\state',\action',\state) 
            \\ &= z^*_{\state} - \sum_{\state' \in \nnstates} \sum_{\action' \in \processed[{\combined[\Act]}]} y^*_{\state',\action'} \combined[\Trans](\state',\action',\state) \cdot \frac{\sum_{\action \in \processed[{\combined[\Act]}](\state)} y^*_{\state,\action}}{\sum_{\action \in \processed[{\combined[\Act]}](\state)} y^*_{\state,\action}}
            \\ &= z^*_{\state} - \sum_{\state' \in \nnstates} \sum_{\action' \in \processed[{\combined[\Act]}]} \combined[\Trans](\state',\action',\state) \cdot \processed[\sched](\state',\action') \cdot z^*_{\state'} 
            \\ &= z^*_{\state} - \sum_{\state' \in \nnstates} (\combined[\Trans]^{\processed[\sched]})_{\state',\state} \cdot z^*_{\state'} 
            ~.
        \end{align*}

        \item Let us now show that for $\state \in W$ it holds that $\sum_{\action \in \processed[{\combined[\Act]}]} y^*_{\state,\action} = \sum_{\action \in \processed[{\combined[\Act]}]} y'_{\state,\action}$.

        Observe that $\state \in W$ implies that all predecessors of $\state$ must also be in $W$, and hence, using the above, for $\state \in W$ we have 
        \begin{align*}
            \Iverson{\state = \combined[\state]} = z^*_{\state} - \sum_{\state' \in \nnstates} (\combined[\Trans]^{\processed[\sched]})_{\state',\state} \cdot z^*_{\state'} = z^*_{\state} - \sum_{\state' \in W} (\combined[\Trans]^{\processed[\sched]})_{\state',\state} \cdot z^*_{\state'}
            ~.
        \end{align*}

        Let $\textbf{e}_{\combined[\state]}$ denote the unit vector for the index corresponding to $\combined[\state]$ (according to the order we fixed for constructing $\combined[\Trans]^{\processed[\sched]}$).
        Let $(\combined[\Trans]^{\processed[\sched]})_{W,W}$ denote the restriction of $\combined[\Trans]^{\processed[\sched]}$ to $W$. 
        Observe that $(\combined[\Trans]^{\processed[\sched]})_{W,W}$ must be sub-stochastic since all $\state \in W$ are transient, and hence $(I-(\combined[\Trans]^{\processed[\sched]})_{W,W})^{-1} = \sum_{n=0}^{\infty} ((\combined[\Trans]^{\processed[\sched]})_{W,W})^n$ exists.
        Hence, we can rewrite $\Iverson{\state = \combined[\state]} = z^*_{\state} - \sum_{\state' \in W} (\combined[\Trans]^{\processed[\sched]})_{\state',\state} \cdot z^*_{\state'}$ as 
        \[
            \textbf{e}_{\combined[\state]}|_{W} = (z^*)^T|_{W} \cdot (I-(\combined[\Trans]^{\processed[\sched]})_{W,W})
        \]
        and further as
        \[
            (z^*)^T|_{W} = \textbf{e}_{\combined[\state]}|_{W} \cdot \sum_{n=0}^{\infty} ((\combined[\Trans]^{\processed[\sched]})_{W,W})^n
            ~,
        \]
        i.e., $z^*_{\state} = \sum_{i=0}^{\infty} ((\combined[\Trans]^{\processed[\sched]})^i)_{\combined[\state],\state}$.
        Thus, for $\state \in W$ we have
        \begin{align*}
            \sum_{\action \in \processed[{\combined[\Act]}](\state)} y^*_{\state,\action} 
            \overset{\textit{def.}}{=} z^*_{\state} 
            &= \sum_{n=0}^{\infty} ((\combined[\Trans]^{\processed[\sched]})^n)_{\combined[\state],\state}
            = \sum_{n=0}^{\infty} ((\combined[\Trans]^{\processed[\sched]})^n)_{\combined[\state],\state} 
            \cdot \overbrace{\sum_{\action \in \processed[{\combined[\Act]}](\state)} \processed[\sched](\state, \action)}^{=1} \\
            &= \sum_{\action \in \processed[{\combined[\Act]}](\state)} \sum_{n=0}^{\infty} ((\combined[\Trans]^{\processed[\sched]})^n)_{\combined[\state],\state} 
            \cdot \processed[\sched](\state, \action) 
            \overset{\textit{def.}}{=} \sum_{\action \in \processed[{\combined[\Act]}](\state)} y'_{\state,\action}
            ~.
        \end{align*}

        \item
        Finally, for $\state \in W \subseteq \nnstates$ and $\action \in \processed[{\combined[\Act]}](\state)$ we have
        \begin{align*}
            y'_{\state,\action} 
            &= \left(\sum_{n=0}^{\infty} ((\combined[\Trans]^{\processed[\sched]})^n)_{\combined[\state],\state} \right)
            \cdot \frac{y^*_{\state,\action}}{\sum_{\actiontwo \in \processed[{\combined[\Act]}]} y^*_{\state,\actiontwo}} && \text{// def.\ of $y'$ and $\processed[\sched]$} \\
            &= \left(\sum_{\actiontwo \in \processed[{\combined[\Act]}](\state)} y'_{\state,\actiontwo} \right)
            \cdot \frac{y^*_{\state,\action}}{\sum_{\actiontwo \in \processed[{\combined[\Act]}]} y^*_{\state,\actiontwo}} 
            && \text{// analogously to ``$\Rightarrow$''} \\
            &= \left(\sum_{\actiontwo \in \processed[{\combined[\Act]}](\state)} y'_{\state,\actiontwo} \right)
            \cdot \frac{y^*_{\state,\action}}{\sum_{\actiontwo \in \processed[{\combined[\Act]}]} y'_{\state,\actiontwo}} 
            = y^*_{\state,\action}
            && \text{// by the above}
        \end{align*}
    \end{itemize}

    \proofsubparagraph{Claim:} \emph{$\processed[\sched]$ satisfies the required properties.}
    
    Analogously to ``$\Rightarrow$'' we can show that for the vector $y'$ we constructed from $\processed[\sched]$, it holds that $\sum_{\state \in \combined[\states]} y'_{\state,\dagger}$ is the expected number of times we move from $\combined[\states]$ to $\sdead$ \emph{for the first time} under $\processed[\sched]$, i.e., $\Pr_{\combined[\state]}^{{\processed[\combined]}, \processed[\sched]}(\Finally \sdead)$, so
    \begin{align*}
        \Pr_{\combined[\state]}^{{\processed[\combined]}, \processed[\sched]}(\Finally \sdead) 
        = \sum_{\state \in \combined[\states]} y'_{\state,\dagger}
        &\geq \sum_{\state \in W} y'_{\state,\dagger} &&  \\
        &= \sum_{\state \in W} y^*_{\state,\dagger} && \text{// by the above} \\
        &= \sum_{\state \in \combined[\states]} y^*_{\state,\dagger} && \text{// def.\ of } W \\
        &= 1 && \text{// $y^*$ is feasible sol.}
    \end{align*}

    This also implies that $y'_{\state}=0$ for $\state \in \combined[\states] \setminus W$, since we would not reach $\sdead$ with probability 1 if we visited a state that cannot reach $\sdead$.

    Let $j=1,\ldots,\numconj$.
    Analogously to ``$\Rightarrow$'' we can show that \[
        \Expected_{\combined[\state]}^{{\processed[\combined]}, \processed[\sched]}(\rew^j) = \sum_{\state \in \combined[\states]} \rew^j(\state) y'_{\state} ~.
    \]
    From $y'_{\state}=0$ for $\state \in \combined[\states] \setminus W$ it follows further that
    \[
        \sum_{\state \in \combined[\states]} \rew^j(\state) y'_{\state} = \sum_{\state \in W} \rew^j(\state) y'_{\state} ~.
    \]
    Since $y'_{\state,\action} = y^*_{\state,\action}$ for $\state \in W$, $\action \in \processed[{\combined[\Act]}](\state)$, and all predecessors of $\state \in W$ must also be in $W$, it follows that \[
        y'_{\state} = \sum_{\state' \in W} \sum_{\action' \in \combined[\Act](\state')} \combined[\Trans](\state', \action', \state) \cdot y'_{\state', \action'} =  \sum_{\state' \in W} \sum_{\action' \in \combined[\Act](\state')} \combined[\Trans](\state', \action', \state) \cdot y^*_{\state', \action'} = y^*_{\state}
    \] 
    for $\state \in W$ and thus \[
        \sum_{\state \in W} \rew^j(\state) y'_{\state} = \sum_{\state \in W} \rew^j(\state) y^*_{\state}
        ~.
    \]
    By assumption, $\sum_{\state \in \combined[\states]} \rew^j(\state) y^*_{\state} = x_j$ and hence $\Expected_{\combined[\state]}^{{\processed[\combined]}, \processed[\sched]}(\rew^j) = x_j$ as required.
\end{proof}

\subsection{Proof of \cref{th:MOA_correctness}}
\input{appendix/moa_via_lp}

%% file: appendix/moa_via_lp.tex
\label{sec:moa_via_lp}

\MOACorrectness*

\begin{proof}
    Correctness follows from \cref{le:conj_LP} and \cref{th:moa_preproc_transfer-MR}. 

    For the runtime complexity:
    Observe that the linear real arithmetic encoding is not necessarily a linear program, whose standard form only allows $\leq,\geq$-constraints, due to the presence of $\approx_\epsilon$-, $>$-, and $\not\approx_{\epsilon}$-predicates in the encoding.
    However, we can solve the constructed linear real arithmetic encoding (\cref{fig:LP}) using successive calls to LP solvers, as outlined in \cref{alg:LP}. We show that this approach is correct and yields the desired runtime complexity in \cref{th:lra_via_lp}.
\end{proof}

\begin{algorithm}[t]
    \caption{Solving linear real arithmetic for MOA queries with comparison operators $\{>, \geq, \approx_{\epsilon}, \not\approx_{\epsilon} \mid \epsilon \in \Q_{\geq 0} \}$ via LPs} 
    \label{alg:LP}
    \Input{Linear real arithmetic encoding $L$ from \cref{fig:LP} with comparison operators $\{>, \geq, \approx_{\epsilon}, \not\approx_{\epsilon} \mid \epsilon \in \Q_{\geq 0} \}$}
    \Output{Does $L$ have a feasible optimal solution?
    }
    {$L' \gets$ \textsc{EliminateStrict}(\textsc{EliminateEquality}($L$))} \tcp*{See p.\ \pageref{rem:LP_elim}}
    {$D' \gets $ $\not\approx_{\epsilon}$-constraints of $L'$ with $\epsilon>0$\;}
    \For(\tcp*[f]{Check all possibilities for satisfying all $\not\approx$-constraints}){$J \subseteq D'$}{
        {$J^< \gets$ \textsc{EliminateStrict}($J$ with $x\not\approx_\epsilon y$ replaced by $x< y+\epsilon$) \;}
        {$(D' \setminus J)^> \gets$ \textsc{EliminateStrict}($D' \setminus J$ with $x\not\approx_\epsilon y$ repl.\ by $x>y -\epsilon$) \;}
        {$L'' \gets$ $L'$ with $J$ replaced by $J^< \wedge (D' \setminus J)^>$\;}
        {$C \gets$ $\leq,\geq$-constraints of $L''$ \;}
        {$d_1, \ldots, d_k \gets$ $\neq$-constraints of $L''$ \;}
        \For(\tcp*[f]{Check $\neq$-constraints one by one}){$i=1,\ldots,k$}{
            {$d_i^< \gets$ \textsc{EliminateStrict}($d_i$ with $\neq$ replaced by $<$)\;}
            {$d_i^> \gets$ \textsc{EliminateStrict}($d_i$ with $\neq$ replaced by $>$) \;}
            \tcp{Use standard LP solver ($C$, $d_i^<$, $d_i^>$ contain only $\leq, \geq$-constraints) }
            \If{$C \wedge d_i^<$ does not have feasible opt.\ sol.}{
                \If{$C \wedge d_i^>$ does not have feasible opt.\ sol.}{
                    \Return{False}
                }
            }
        }
        \Return{True}
    }
    \Return{False}
\end{algorithm}

\begin{theorem}
    \label{th:lra_via_lp}
    \cref{alg:LP} adheres to its input-output specification and can be implemented with worst-case running time of $\mathcal{O}(2^{\numconj_{\not\approx}} \cdot \textit{poly}(\numconj \cdot \numsum \cdot |\mdp| \cdot 2^{\numconj \cdot \numsum}))$ 
    where $\numconj$ is the number of predicates, $\numconj_{\not\approx}$ the number of predicates using $\not\approx_\epsilon$ with $\epsilon>0$, and $\numsum$ the number of probability operators per conjunct in the original \ConjRelReach property.
\end{theorem}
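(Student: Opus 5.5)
The statement has two parts: correctness of \cref{alg:LP} and its running time. I will prove them separately, arguing first that each rewriting step preserves satisfiability of the encoding $L$ from \cref{fig:LP}.

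\emph{Preprocessing.} \textsc{EliminateEquality} turns each $x \approx_\epsilon y$ into the pair $x \geq y-\epsilon \wedge x \leq y+\epsilon$; this is an exact logical equivalence. \textsc{EliminateStrict} removes strict inequalities by the standard trick for linear real arithmetic~\cite{dutertreFastLinearArithmetic2006,nalbachExtendingFundamental2021}. One introduces a fresh variable $\delta$, replaces every $x > y$ by $x \geq y + \delta$, and asks whether the maximal $\delta$ is positive (or whether the optimum is unbounded). Equivalently, one treats $\delta$ as an infinitesimal. I read ``has a feasible optimal solution'' in exactly this sense.

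\emph{Case split over $\not\approx_\epsilon$.} Each $x \not\approx_\epsilon y$ with $\epsilon>0$ is a disjunction $x < y-\epsilon \vee x > y+\epsilon$. Distributing the conjunction over these disjunctions gives exactly the $2^{\numconj_{\not\approx}}$ choices of $J \subseteq D'$ iterated over by the outer loop. So $L'$ is satisfiable iff some disjunct $L''$ is satisfiable. I will make sure the replacement in the algorithm matches this: $J$ should become $x<y-\epsilon$ and $D'\setminus J$ should become $x>y+\epsilon$.

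\emph{The remaining $\neq$-constraints.} Here I invoke the independence-of-negative-constraints result~\cite[Th.~1]{lassezIndependenceNegative1989}. Let $C$ be a convex polyhedron and $d_1,\dots,d_k$ hyperplane disequalities. Then $C \wedge \bigwedge_i d_i$ is satisfiable iff $C$ is satisfiable and, for each $i$, $C \not\subseteq \{d_i = 0\}$. Indeed, a polyhedron not contained in any of finitely many hyperplanes is not covered by their union, since each intersection with $C$ is a lower-dimensional subset of $C$. Moreover, $C \not\subseteq \{d_i=0\}$ holds iff $C\wedge d_i^<$ or $C \wedge d_i^>$ is feasible. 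So the inner loop is correct.

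I expect this step to be the main obstacle, in two places.
\begin{itemize}
\item The theorem concerns closed polyhedra, but $C$ already contains $\delta$-relaxed strict constraints. I would handle this by fixing $\delta$ to a small positive value, or by working with relatively open polyhedra. Lassez' result extends to convex sets: a convex set of affine dimension $d$ is not covered by finitely many lower-dimensional affine subspaces.
\item The inner loop as written never checks whether $C$ itself is feasible when $k=0$. I would add that feasibility check.
\end{itemize}

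\emph{Running time.} The outer loop has at most $2^{\numconj_{\not\approx}}$ iterations. Each iteration makes at most $2k+1 \leq 2\numconj+1$ LP calls. Each LP has $O(|\combined[\states]|\cdot|\processed[{\combined[\Act]}]|)$ variables and constraints, which the earlier estimate bounds by $\textit{poly}(\numconj\cdot\numsum\cdot|\mdp|\cdot 2^{\numconj\cdot\numsum})$. The LPs are solvable in polynomial time~\cite{khachiyanPolynomialAlgorithm1979}, and the eliminations add only linearly many constraints plus one variable. Together these give the stated bound.
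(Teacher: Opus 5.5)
Your proposal is correct and follows the paper's proof step by step. Both arguments rewrite $\approx_\epsilon$ as two non-strict inequalities, remove strict inequalities via a maximized slack variable, enumerate the $2^{\numconj_{\not\approx}}$ disjunct choices for the $\not\approx_\epsilon$-constraints, handle the $\neq$-constraints one at a time via Lassez's independence of negative constraints, and count polynomially many LPs of size $\textit{poly}(\numconj\cdot\numsum\cdot|\mdp|\cdot 2^{\numconj\cdot\numsum})$.

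Your additional care repairs genuine sloppiness in \cref{alg:LP} and in the paper's argument: the correct disjuncts $x<y-\epsilon$ and $x>y+\epsilon$, the missing feasibility check of $C$, a single shared slack variable, and the covering argument for non-closed convex sets. For your outer-loop reasoning to apply, you must also read the inner ``return false'' as ``continue with the next $J$''. As written, the algorithm terminates after the first choice of $J$.
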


\begin{proof}
    Correctness:
    \label{rem:LP_elim}
\begin{itemize}
    \item $\boldsymbol{\approx_{\epsilon}}$: 
    We can eliminate $\approx_{\epsilon}$-constraints by transforming them to a conjunction of a $\leq$-constraint and a $\geq$-constraint ($x \approx_\epsilon y \iff x \geq y-\epsilon \land x \leq y+\epsilon$).

    \item $\boldsymbol{<,>}$: 
    We can eliminate $<,>$-constraints by introducing a new variable $z$, replacing the constraint $x<y$ (or $x>y$) with $x\leq y-z$ (or $x\geq y+z$) and additionally adding the goal of maximizing $z$~(see, e.g., \cite{dutertreFastLinearArithmetic2006,nalbachExtendingFundamental2021}).
    If there exists an optimal solution to the resulting system of inequalities that assigns a positive value to $z$, then there must exist a feasible solution to the original problem, and vice versa.

    Repeatedly eliminating strict constraints yields multiple optimization objectives, i.e., we end up with a multi-objective LP.
    
    \item $\boldsymbol{\neq}$:
    Assume a linear real arithmetic encoding $L$ containing only $\neq, \leq, \geq$-constraints.
    Let $F$ be the set of solutions for the encoding without the $\neq$-constraints, then $F$ must be convex since the encoding without the $\neq$-constraints forms an LP.
    Observe that each $\neq$-constraint only excludes a hyperplane from $F$. 
    Assume the whole equation system $L$ cannot be satisfied but $F$ is non-empty, then $F$ must be contained in the union of the hyperplanes corresponding to the $\neq$-constraints.
    Assume $F$ is contained in more than a single such hyperplane, then the solution space would not be convex, which is a contradiction. 
    Hence, $F$ must be contained in a single hyperplane corresponding to one of the $\neq$-equations. 

    Thus, we can solve $L$ by first solving the equation system without the $\neq$-constraints and then checking the $\neq$-constraints one by one, see also~\cite[Th.~1]{lassezIndependenceNegative1989}. 
    In other words, we can solve a linear real arithmetic encoding containing only $\neq, \leq, \geq$-constraints by solving a \emph{linear} number of LPs.
    
    \item $\boldsymbol{\not\approx_\epsilon}$:
    For $\not\approx_{\epsilon}$-constraints with $\epsilon>0$ the trick from $\neq$ does not work since now a single $\not\approx_{\epsilon}$-constraint may exclude more than just a hyperplane and the union of the areas covered by the $\not\approx_{\epsilon}$-constraints may be convex.

    Instead, each $\not\approx_{\epsilon}$-constraint introduces a disjunction ($x \not\approx_\epsilon y \iff x < y-\epsilon \lor x>y+\epsilon$).
    Let $L$ be the original encoding, let $L^<$ be $L$ with the $\not\approx_{\epsilon}$-constraint replaced by the $<$-disjunct, and let $L^>$ be $L$ with the $\not\approx_{\epsilon}$-constraint replaced by the $>$-disjunct.
    Then $L$ has a solution if either $L^<$ or $L^>$ have a solution.

    Hence, we can solve a linear real arithmetic encoding $L$ containing only $\not\approx_{\epsilon}, \leq, \geq$-constraints by solving an \emph{exponential} number of LPs.
\end{itemize}

    Runtime: 
    If the query contains $\numconj_{\not\approx}$ constraint with $\not\approx_{\epsilon}$ with $\epsilon>0$, then \cref{alg:LP} solves up to $2^{\numconj_{\not\approx}}$ queries with only $=$, $\neq$ and $\leq$ constraints.
    If the query does not contain $\not\approx_{\epsilon}$ with $\epsilon>0$: 
    Let $\numconj_{\neq}$ be the number of $\neq$-constraints in the linear real arithmetic encoding, then \cref{alg:LP} solves up to $\numconj_{\neq}$ LPs with $=$- and $\leq$-constraints.
    For each LP, the number of constraints is $|\indexconjpartelt[\states]| + |\conjpartelt| + |\indexconjpartelt[\states]| + |\indexconjpartelt[\states]| \cdot |\processed[{\indexconjpartelt[\Act]}]| \in \mathcal{O}(\textit{poly}(|\indexconjpartelt[\states]| \cdot |\processed[{\indexconjpartelt[\Act]}]|))$.
    Observe 
    $|\indexconjpartelt[\states]| \cdot |\processed[{\indexconjpartelt[\Act]}]|
    \leq (\numconj \cdot \numsum \cdot |\states| \cdot 2^{\numconj \cdot \numsum} + 1) \cdot (|\Act| + 2)$, hence
    each constructed LP can be solved in time $\mathcal{O}(\textit{poly}(\numconj \cdot \numsum \cdot |\states| \cdot 2^{\numconj \cdot \numsum} \cdot |\Act|))$ since LPs can be solved in time polynomial in the number of constraints~\cite{khachiyanPolynomialAlgorithm1979}.
\end{proof}

%% file: appendix/conjunction_compl_proofs.tex
\section{Proofs for \cref{sec:conj_complexity}}
\label{app:conj_complexity_proofs}

\subsection{Proof of \cref{th:conj_special}}
\label{app:conj_approx-diseq_NP}

\conjSpecial*

We have already argued for membership in \PTIME and \EXPTIME in \cref{sec:conj_complexity}.
It remains to show that \ConjRelReach is strongly \NP-hard in cases $\mathit{(a)}$ and $\mathit{(b)}$. 
We do so by showing the following lemma:

\begin{lemma}
    \ConjRelReach is strongly \NP-hard even if $T_{1,1} = \ldots = T_{m,l}$ and this target set is absorbing.
\end{lemma}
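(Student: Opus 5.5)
The proof of \cref{th:conj_special} already announces the shape of the reduction: from 3SAT to a property of the form $\exists \sched_1,\ldots,\sched_n .\ \bigwedge_{j} \Pr^{\sched_j}_{\state}(\Finally T) \not\approx_{\epsilon} q_j$. I would instead use a single scheduler and several initial states, with one absorbing target $T=\{t\}$. The reason is that clause constraints must couple variables, and independent schedulers cannot couple anything. Given a 3CNF formula over $x_1,\ldots,x_N$ with clauses $C_1,\ldots,C_M$, the MDP has one state $v_i$ per variable with two actions, $\top$ (move to $t$) and $\bot$ (move to an absorbing sink $s_\bot$). Then, for every $\sched$, $p_i := \Pr^{\sched}_{v_i}(\Finally T)$ is an arbitrary value in $[0,1]$, namely the probability $\sched$ assigns to $\top$ at the first step from $v_i$.

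The first step is to force each $p_i$ to be nearly Boolean using the predicate $\Pr^{\sched}_{v_i}(\Finally T) \not\approx_{\epsilon} \tfrac12$ with $\epsilon = \tfrac12 - \delta$. This yields $p_i < \delta$ or $p_i > 1-\delta$, which induces a truth assignment. For each clause $C_j = \ell_1\vee\ell_2\vee\ell_3$ I add a fresh state $c_j$ with one action moving uniformly to three gadget states $g_{j,1},g_{j,2},g_{j,3}$. If $\ell_k = x_i$, then $g_{j,k}$ moves to $v_i$. If $\ell_k=\overline{x_i}$, I need probability $1-p_i$, so I add a copy $\overline{v_i}$ with actions swapped and the predicate $\Pr_{v_i}(\Finally T) + \Pr_{\overline{v_i}}(\Finally T) \approx_0 1$. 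Since memoryful schedulers can treat these states independently, this equality constraint is what couples them.

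The clause constraint is $\Pr^{\sched}_{c_j}(\Finally T) \not\approx_{\epsilon'} 0$ with $\epsilon' = \tfrac13-\delta'$, i.e.\ the average of the three literal values exceeds $\tfrac13-\delta'$. Choosing $\delta$ small (e.g.\ $\delta = 1/10$) ensures that an all-false clause contributes less than $\delta \le \tfrac13-\delta'$, while a true literal contributes more than $\tfrac{1-\delta}{3}$. The single negative side of $\not\approx$ is irrelevant because probabilities are nonnegative. All targets equal the absorbing set $\{t\}$, so the instance lies in cases \ref{item:conj_fixed-param} and \ref{item:conj_absorb}. All numbers are constants ($\tfrac13,\tfrac12$, fixed $\delta$), which gives strong \NP-hardness.

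The two directions are then routine. A satisfying assignment gives an MD scheduler with $p_i\in\{0,1\}$. Conversely, any witness (memoryful or not) yields $p_i$ values whose rounding satisfies every clause. The main obstacle is the point just addressed: memoryful schedulers decouple initial states (\cref{thm:whyComb}), so every intended correlation between the value used in a clause gadget and the variable's own Boolean constraint must come from paths actually passing through the same state $v_i$. Since a memoryful scheduler may behave differently at $v_i$ depending on whether it started at $v_i$ or at $c_j$, this correlation fails. I would therefore instead express clause constraints directly as weighted sums over the variable probabilities, $\tfrac13\sum_k \Pr^{\sched}_{u_k}(\Finally T)\not\approx_{\epsilon'}0$ (with $u_k\in\{v_i,\overline{v_i}\}$). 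This is permitted by the syntax of \ConjRelReach and uses exactly the same combinations $(v_i,\sched)$, which removes the gadget states entirely.
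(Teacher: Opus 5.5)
Your final construction, after you drop the gadget states, is correct and is essentially the paper's 3SAT reduction: each variable's probability of reaching the single absorbing target is forced to be nearly Boolean by a $\not\approx_\epsilon \tfrac12$ predicate, and each clause becomes a linear predicate over those probabilities. The only differences are cosmetic. You use one scheduler over $2N$ initial states, while the paper uses $N$ schedulers from the single state of a two-state MDP; these are interchangeable by \cref{thm:whyComb}. You handle negated literals with extra states $\overline{v_i}$ and an $\approx_0 1$ equality, while the paper uses coefficients $\pm 2$ with offset $-\tfrac12$ and bound $\geq -1$. Your opening justification is wrong, though: independent schedulers \emph{can} be coupled through weighted-sum predicates that mix them. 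That is exactly how the paper's clause conjuncts work, and by \cref{thm:whyComb} it is also what your final version does.
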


\begin{proof}
    We show strong \NP-hardness by reduction from 3SAT~\cite{karpReducibilityCombinatorial1972}.
    Given a 3SAT formula $\phi = \bigwedge_{j=1}^{M} (l_{j,1} \lor l_{j,2} \lor l_{j,3})$ over variables $x_1,\ldots,x_N$ with literals $l_{j,h} \in \{x_i, \overline{x_i} \mid i \in \{1,\ldots,N\}\}$ for $j \in \{1, \ldots, M\}$, $h \in \{1,2,3\}$.
    We construct the MDP $\mdp_\phi = (\state_\phi, \Act_\phi, \Trans_\phi)$, depicted in \cref{fig:conj_NP-hard}, with 
    \begin{itemize}
        \item $\state_\phi = \{s, t\}$,
        \item $\Act_\phi = \{\alpha, \beta\}$, and 
        \item $\Trans_\phi(s,\alpha,s) = 1$, $\Trans_\phi(s,\beta,t) = 1$, $\Trans_\phi(t,\beta,t) = 1$, and all other transition probabilities are 0.
    \end{itemize}
    We construct the following \ConjRelReach property with a single, absorbing target set $T = \{t\}$:
    \begin{align*}
        \phi' = \exists \sched_1, \ldots, \sched_N \in \Scheds[\mdp_\phi] .\ 
        & \bigwedge_{j=1}^{N}
        \Pr^{\sched_i}_{s}(\Finally T)
        \not\approx_{\nicefrac{1}{4}}
        \nicefrac{1}{2}
        \land 
        \bigwedge_{j=1}^{M} 
        \sum_{h=1}^{3} \alpha_{j,1} \left( \Pr^{\sched_{k_{j,1}}}_{s}(\Finally T) - \nicefrac{1}{2} \right) 
        \geq -1
    \end{align*}
    where 
    $\alpha_{j,h} = \begin{cases}
        2 & \text{if } l_{j,h} \text{ is positive} \\
        -2 & \text{otherwise}
    \end{cases}$
    and
    $k_{j,h}$ is the unique $i \in \{1,\ldots,N\}$ with $l_{j,h} \in \{x_i, \overline{x_i}\}$.
    We now show that $\phi$ is satisfiable iff $\phi'$ holds.

    `$\Rightarrow$':
    Assume $\phi$ is satisfiable, let $\mathcal{I} \colon \{x_1, \ldots, x_N\} \to \{0,1\}$ be a satisfying interpretation.
    For $i \in \{1,\ldots, N\}$, we construct a memoryless deterministic scheduler $\sched_i \in \Scheds[\mdp_\phi]$ by letting $\sched(s)(\beta) = \mathcal{I}(x_i)$ and $\sched(s)(\beta) = 1 - \mathcal{I}(x_i)$.
    Then, $\Pr^{\sched_i}_{s}(\Finally T) = \mathcal{I}(x_i) \not\approx_{\nicefrac{1}{4}} \nicefrac{1}{2}$ for $i \in \{1,\ldots, N\}$ and hence the first part of $\phi'$ holds.
    Further, for $j \in \{1, \ldots, M\}$ and $h \in \{1,2,3\}$, we have 
    \begin{align*}
        \alpha_{j,h} \left( \Pr^{\sched_{k_{j,h}}}_{s}(\Finally T) - \nicefrac{1}{2} \right)
        = \begin{cases}
            2 \cdot \left( 1 - \nicefrac{1}{2} \right) = 1 & \text{if } l_{j,h} = x_{k_{j,h}} \land \mathcal{I}(x_{k_{j,h}}) = 1 \\ 
            2 \cdot \left( 0 - \nicefrac{1}{2} \right) = -1 & \text{if } l_{j,h} = x_{k_{j,h}} \land \mathcal{I}(x_{k_{j,h}}) = 0 \\ 
            -2 \cdot \left( 1 - \nicefrac{1}{2} \right) = -1 & \text{if } l_{j,h} = \overline{x_{k_{j,h}}} \land \mathcal{I}(x_{k_{j,h}}) = 1 \\
            -2 \cdot \left( 0 - \nicefrac{1}{2} \right) = 1 & \text{if } l_{j,h} = \overline{x_{k_{j,h}}} \land \mathcal{I}(x_{k_{j,h}}) = 0
        \end{cases}
    \end{align*}
    Since $\mathcal{I}$ is a satisfying interpretation, for each $j\in \{1, \ldots, M\}$ there must exist at least one $h \in \{1,3\}$ with $l_{j,h}$ being positive, and hence $\sum_{h=1}^{3} \alpha_{j,1} \left( \Pr^{\sched_{k_{j,1}}}_{s}(\Finally T) - \nicefrac{1}{2} \right) \in \{-1, 1, 3\}$ and thus the second part of $\phi'$ holds.

    `$\Leftarrow$': 
    Assume $\phi'$ holds. Let $\sched_1, \ldots, \sched_N$ be witnesses for $\phi'$.
    Then for each $i \in \{1, \ldots, N\}$ it must hold that $\Pr^{\sched_i}_{s}(\Finally T) \not\approx_{\nicefrac{1}{4}} \nicefrac{1}{2}$, i.e., $\Pr^{\sched_i}_{s}(\Finally T) \in [0,\nicefrac{1}{4}) \cup (\nicefrac{3}{4}, 1]$.
    Hence, the following interpretation for $\phi$ is well-defined:
    \[
        \mathcal{I}(x_i) = \begin{cases}
            1 & \text{if } \Pr^{\sched_i}_s(\Finally T) > \nicefrac{3}{4} \\ 
            0 & \text{if } \Pr^{\sched_i}_s(\Finally T) < \nicefrac{1}{4}
        \end{cases}
    \]
    Let us now see that $\mathcal{I}$ is a satisfying interpretation for $\phi$.
    From the above observation on $\Pr^{\sched_i}_{s}(\Finally T)$ it follows that for $j \in \{1, \ldots, M\}$ and $h \in \{1,2,3\}$ we have $\Pr^{\sched_{k_{j,1}}}_{s}(\Finally T) - \nicefrac{1}{2} \in [-\nicefrac{1}{2}, -\nicefrac{1}{4}) \cup (\nicefrac{1}{4}, \nicefrac{1}{2}]$ and thus
    \begin{align*}
        \alpha_{j,h} \left( \Pr^{\sched_{k_{j,h}}}_{s}(\Finally T) - \nicefrac{1}{2} \right) 
        \in 
        \begin{cases}
            (\nicefrac{1}{2}, 1] & \text{if } l_{j,h} = x_{k_{j,h}} \land \mathcal{I}(x_{k_{j,h}}) = 1 \\ 
            [-1,-\nicefrac{1}{2}) & \text{if } l_{j,h} = x_{k_{j,h}} \land \mathcal{I}(x_{k_{j,h}}) = 0 \\ 
            [-1,-\nicefrac{1}{2}) & \text{if } l_{j,h} = \overline{x_{k_{j,h}}} \land \mathcal{I}(x_{k_{j,h}}) = 1 \\
            (\nicefrac{1}{2}, 1] & \text{if } l_{j,h} = \overline{x_{k_{j,h}}} \land \mathcal{I}(x_{k_{j,h}}) = 0
        \end{cases}
    \end{align*}
    By assumption, for $j \in \{1, \ldots, M\}$ it must hold that $\sum_{h=1}^{3} \alpha_{j,h} \left( \Pr^{\sched_{k_{j,h}}}_{s}(\Finally T) - \nicefrac{1}{2} \right) \geq -1$, and hence there must exists at least one $h \in \{1,2,3\}$ with $\alpha_{j,h} \left( \Pr^{\sched_{k_{j,h}}}_{s}(\Finally T) - \nicefrac{1}{2} \right) > \nicefrac{1}{2}$, i.e., with $l_{j,h} = x_{k_{j,h}} \land \mathcal{I}(x_{k_{j,h}}) = 1$ or $l_{j,h} = \overline{x_{k_{j,h}}} \land \mathcal{I}(x_{k_{j,h}}) = 0$.
    Hence, under $\mathcal{I}$, for each clause $j \in \{1,\ldots,M\}$ at least one literal holds and thus $\mathcal{I}$ is a satisfying interpretation for $\phi$.

    It remains to show that the above construction defines a pseudo-polynomial time transformation to \ConjRelReach.
    The constructed MDP has a fixed number of states and transitions, and all transition probabilities are 0 or 1.
    The property quantifies over $N$ schedulers and contains a conjunction of size $N$ as well as a conjunction of size $M$, where $N$ is the number of variables and $M$ the number of clauses in the original 3SAT formula. All coefficients and bounds in the property ($\nicefrac{1}{2}$, $\nicefrac{1}{4}$, $-1$) are fixed.
    Hence, the magnitude of the largest number occurring in the constructed \ConjRelReach instance is polynomial in the size of the 3SAT instance.
\end{proof} 

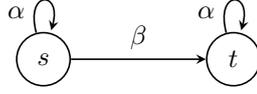
\begin{figure}
    \centering
    \begin{tikzpicture}[on grid,node distance=25mm and 25mm,semithick,>=stealth]
        \node[state] (s) {$s$};
        \node[state, right= of s] (t) {$t$};

        \path[->]
            (s) edge[loop above] node[left, pos=0.2] {$\alpha$} (s)
            (s) edge node[above] {$\beta$} (t)
            (t) edge[loop above] node[left, pos=0.2] {$\alpha$} (t);
    \end{tikzpicture}
    \caption{Illustration of the MDP construction for the reduction from 3SAT to \ConjRelReach.}
    \label{fig:conj_NP-hard}
\end{figure}

%% file: appendix/case-studies.tex
\section{Details on the Benchmarks for Probabilistic Hyperproperties (\cref{sec:hyperprob-benchmarks})}
\label{app:hyperprob-benchmarks}

The contents of this section are taken verbatim from~\cite{extendedVersion}.

\subparagraph*{\TA}
    A standard application of probabilistic hyperproperties
    is to check whether an implementation of modular exponentiation for RSA public-key encryption 
    has a side-channel \emph{timing leak} \cite{abrahamProbabilisticHyperproperties2020}. 
    Concretely, the setting is as follows: One thread computes $a^b \text{ mod } n$ for a given plaintext $a$, encryption key $b$, and modulus $n$ (all non-negative integers), 
    while a second (attacker) thread tries to infer the value of $b$ by keeping a counter $c$ to measure the time taken by the first thread.
    \cref{fig:modexp} shows the implementation of the modular exponentiation algorithm that we want to check for side-channels.

    If $b$ has $k$ bits, we can model the parallel execution of the two threads as an MDP with $2^k$ initial states $\Init$, each representing a different value of $b$.
    In our models, a (memoryful randomized) scheduler for the MDP corresponds to a thread scheduler.
    In contrast, in \cite{abrahamParameterSynthesis2020,andriushchenkoDeductiveController2023} the authors hard-coded fair thread schedulers in their models, and a (memoryless deterministic) scheduler for the MDP corresponds to a secret input. 

    The goal is to now check whether the implementation satisfies \emph{scheduler-specific probabilistic observational determinism} (SSPOD) \cite{ngoConfidentialityProbabilistic2013}, i.e., whether for any scheduling of the two threads that no information about the secret input (here, $b$) can be inferred by the publicly observable information (here, the time taken by the modular exponentiation thread).
    
    In \cite{abrahamProbabilisticHyperproperties2020}, the desired property is formulated in \HyperPCTL as follows:
    \[\forall \varsched_1 \forall \varsched_2 .\ \forall \varstate_1(\varsched_1) \forall \varstate_2(\varsched_2) .\ 
    (\init_{\varstate_1} \wedge \init_{\varstate_2}) 
    \implies 
    \bigwedge_{j=0}^{2k} \Prob(\Finally (c=j)_{\varstate_1}) = \Prob(\Finally (c=j)_{\varstate_2})\]
    where 
    $\init$ marks initial states for different values of the secret input $b$, 
    $b$ has $k$ bits, and 
    $c$ is the counter of the attacking side-channel thread.
    An MDP $\mdp$ satisfies the above \HyperPCTL formula iff it holds that
    \begin{align*}
        &\forall \sched_1, \sched_2 \in \Scheds .\ \bigwedge_{\state_1,\state_2 \in \Init} \bigwedge_{j=0}^{2k} \Pr^{\sched_1}_{\state_1}(\Finally (c=j)) = \Pr^{\sched_2}_{\state_2}(\Finally (c=j))
        ~.
    \end{align*}
    Since universal quantification distributes over conjunction, we can reformulate this to a conjunction of universally quantified relational reachability properties as follows:
    \begin{align*}
        \bigwedge_{\state_1,\state_2 \in \Init} \bigwedge_{j=0}^{2k} \forall \sched_1, \sched_2 .\ \Pr^{\sched_1}_{\state_1}(\Finally (c=j)) = \Pr^{\sched_2}_{\state_2}(\Finally (c=j))
        ~.
    \end{align*}

    In our experiments, we check the first conjunct of this property, i.e., we compare initial values $b=0$ and $b=1$ for $j=0$.
    We check the property for $M \in \{8,16,24,28,32\}$ where $M=2k$ where $k$ is the number of bits for the secret input $b$. 

\begin{figure}[t]
    \centering
    \begin{subfigure}[t]{0.45\textwidth}
    \centering
    \begin{lstlisting}[style=CStyle,tabsize=2,language=ML,basicstyle=\scriptsize,escapechar=/,backgroundcolor=\color{white},]
int mexp(a,b,n){
  d = 0; e = 1; i = k;
  while (i >= 0){
    i = i-1; d = d*2;
    e = (e*e) % n;
    if (b(i) = 1)
      d = d+1;
      e = (e*a) % n;
    }
} \end{lstlisting}
    \end{subfigure}%
    \quad
    \begin{subfigure}[t]{0.45\textwidth}
    \centering
    \begin{lstlisting}[style=CStyle,tabsize=2,language=ML,basicstyle=\scriptsize,escapechar=/,backgroundcolor=\color{white},]
  t = new Thread(mexp(a,b,n)); 
  c = 0; M = 2 * k;
  while (c < M & !t.stop) c++;   \end{lstlisting}
    \end{subfigure}
    \caption{
    \color{cavcolor}
    Modular exponentiation (left) and attacker thread (right) \cite{abrahamParameterSynthesis2020}.
    } 
    \label{fig:modexp}
\end{figure}

\subparagraph*{\PW} 
    Another example for an implementation with possible timing leaks is a careless implementation of string comparison for password verification where a timing leak may reveal information about the password \cite{abrahamParameterSynthesis2020}.
    As for \TA, we want check whether an attacker thread observing the time taken by the main thread (executing a string comparison algorithm) can infer information about the password.

    \cref{fig:string} displays the string comparison algorithm that we want to check for information-leaks.
\begin{figure}[t]
    \centering
    \begin{subfigure}[t]{0.5\textwidth}
    \centering
    \begin{lstlisting}[style=CStyle,tabsize=2,language=ML,basicstyle=\scriptsize, backgroundcolor=\color{white},
    %    ,escapechar=/
      ]
int str_cmp(char * r){
  char * s = 'Bg\$4\0';
  i = 0;
  while (s[i] != '\0'){
    i++;
    if (s[i]!=r[i]) return 0;
    }
    return 1;
}    \end{lstlisting}
    \end{subfigure}
    \caption{
    \color{cavcolor}
    String comparison \cite{abrahamParameterSynthesis2020}.
    }
    \label{fig:string}
\end{figure}

    We model this by mapping strings to integers and allowing $2^{8n}$ different input strings for a string length of $n$.
    As for \TA, our model differs from \cite{abrahamParameterSynthesis2020,andriushchenkoDeductiveController2023}: In our model, (memoryful randomized) MDP schedulers correspond to thread schedulers, while they hard-coded fair thread schedulers in their models, and a (memoryless deterministic) scheduler for the MDP corresponds to a secret input. 

    We check the same property as for \TA:
    In our experiments, we compare string inputs represented by the integers 0 and 1, for $j=0$.
    We check the property for $M \in \{2,4\}$ where $M=2n$ where $n$ is the length of the string.

\subparagraph*{\TS}
    Consider the following classic example of a simple insecure multi-threaded program~\cite{smithProbabilisticNoninterference2003} 
    \[ 
    th \colon \text{\bf{while }} h>0 \text{\bf{ do }} \{h \leftarrow h-1\};\; l \leftarrow 2 
    \quad \parallel \quad
    th' \colon l \leftarrow 1  
    \]
    where $h$ is a secret input and $l$ a public output.
    Intuitively, this program is not secure because the final value of $l$ allows to make a probabilistic inference on the initial value of the secret input $h$.
    For example, under a fair scheduler (that schedules all available threads with the same probability), the probability of outputting $l=2$ is much higher than the probability of outputting $l=1$.
    Formally, the program violates \emph{scheduler-specific probabilistic observational determinism} (SSPOD)~\cite{ngoConfidentialityProbabilistic2013}.
    SSPOD stipulates that for all schedulings of the threads 
    the probability of observing $l=1$ in the end should be the same as the probability of observing $l=2$. 
    \cite{abrahamProbabilisticHyperproperties2020} formulates the desired property in \HyperPCTL as follows:
    \[\forall \varsched .\ \forall \varstate_1(\varsched) \forall \varstate_2(\varsched) .\ 
    (\init_{\varstate_1} \wedge \init_{\varstate_2}) 
    \implies 
    \bigwedge_{j=1}^{2} \Prob(\Finally (l=j)_{\varstate_1}) = \Prob(\Finally (l=j)_{\varstate_2})\]
    where 
    $\init$ marks initial states for different values of the secret input $h$.
    
    This is equivalent to the following conjunction of universally quantified relational reachability properties:
    \begin{align*}
        \bigwedge_{s_1, s_2 \in \Init} \bigwedge_{j=1}^{2} \forall \sched .\ \Pr^{\sched}_{\state_1}(\Finally (l=j)) = \Pr^{\sched}_{\state_2}(\Finally (l=j))
        ~.
    \end{align*}
    where $\Init$ is the set of states representing initial states of the program for different values of $h$ (up to a certain bound).

    We compare the initial values of $h$ indicated in the table, for $j=1$.
    Note that this is actually equivalent to checking the full conjunction since $\Pr^\sched_s(\protect\Finally (l=1)) = 1 - \Pr^\sched_s(\protect\Finally (l=2))$.

\subparagraph*{\SD}    
    \cite{andriushchenkoDeductiveController2023} extend the notion of \emph{stochastic domination} \cite{bartheProbabilisticCouplings2020} to MDP states by defining that a state $s_1$ stochastically dominates a state $s_2$ w.r.t.\ a target set $T$ iff 
    \[ 
        \forall \sched .\ \Pr^{\sched}_{s_1}(\Finally T) \geq \Pr^{\sched}_{s_2}(\Finally T) ~,
    \]
    which is a natural universally quantified relational reachability property.
    In \cite{andriushchenkoDeductiveController2023}, this property is applied to various robot-maze problems with a fixed pair of initial locations each, checking whether the first location can guarantee a better reachability probability than the other location, no matter how the robot behaves.
    We check the same property (over memoryful randomized schedulers) on all mazes provided in \cite{andriushchenkoDeductiveController2023}.